\newcommand{\xhdr}[1]{\vspace{0.1mm}\noindent{{\bf #1.}}}
\newtheorem{ass}{Assumption}
\DeclareMathOperator*{\argmin}{arg\,min}
\newcommand{\One}{\boldsymbol{1}}
\newcommand{\bP}{\mathbb{P}}
\newcommand{\bE}{\mathbb{E}}
\newcommand{\cE}{\mathcal{E}}
\newcommand{\op}{\text{op}}
\newcommand{\Js}{{J*}}
\newcommand{\diag}{\text{diag}}
\begin{document}

\title{Sparse topic modeling via spectral decomposition and thresholding}

\author{\name Huy Tran \email huydtran@uchicago.edu \\
       \addr Department of Statistics\\
       The University of Chicago\\
       Chicago, IL 60637, USA
       \AND
       \name Yating Liu \email yatingliu@uchicago.edu \\
       \addr Department of Statistics\\
       The University of Chicago\\
       Chicago, IL 60637, USA
       \AND 
        \name Claire Donnat \email cdonnat@uchicago.edu \\
       \addr Department of Statistics\\
       The University of Chicago\\
       Chicago, IL 60637, USA}
\editor{NA}

\maketitle

\begin{abstract}
The probabilistic Latent Semantic Indexing model assumes that the expectation of the corpus matrix is low-rank and can be written as the product of a topic-word matrix and a word-document matrix. In this paper, we study the estimation of the topic-word matrix under the additional assumption that the ordered entries of its columns rapidly decay to zero. This sparsity assumption is motivated by the empirical observation that the word frequencies in a text often adhere to Zipf's law. We introduce a new spectral procedure for estimating the topic-word matrix that thresholds words based on their corpus frequencies, and show that its $\ell_1$-error rate under our sparsity assumption depends on the vocabulary size $p$ only via a logarithmic term. Our error bound is valid for all parameter regimes and in particular for the setting where $p$ is extremely large; this high-dimensional setting is commonly encountered but has not been adequately addressed in prior literature. Furthermore, our procedure also accommodates datasets that violate the \textit{separability} assumption, which is necessary for most prior approaches in topic modeling. Experiments with synthetic data confirm that our procedure is computationally fast and allows for consistent estimation of the topic-word matrix in a wide variety of parameter regimes. Our procedure also performs well relative to well-established methods when applied to a large corpus of research paper abstracts, as well as the analysis of single-cell and microbiome data where the same statistical model is relevant but the parameter regimes are vastly different. 
\end{abstract}

\begin{keywords}
topic models, Non-negative Matrix Factorization, high-dimensional statistics, $\ell_q$-sparsity, SCORE normalization, vertex hunting, separability, Archetype Analysis
\end{keywords}

\section{Introduction}\label{sec:intro}
Topic modeling has proven to be a useful tool for dimensionality reduction and exploratory analysis in natural language processing. Beyond text analysis, it has also been successfully applied in areas such as population genetics \citep{pritchard2000inference, bicego2012investigating}, social networks \citep{curiskis2020evaluation} and image analysis \citep{li2010building}. 

\subsection{The statistical model}\label{section: pLSI}
    In this paper, we focus on the probabilistic Latent Semantic Indexing (pLSI) model introduced in \cite{hofmann1999probabilistic}. This simple bag-of-words model involves three variables, namely topics (which are unobserved), words and documents. 
    
   Suppose we observe $n$ documents written using a vocabulary of $p$ words.  For each $1 \leq i \leq n$, let $N_i$ denote the length of document $i$. The corpus matrix $D \in \mathbb{R}^{p \times n}$, which is a sufficient statistic under the pLSI model and which records the empirical frequency of each word in each document, is defined by 
    \begin{equation*}
        D_{ji} = \frac{\text{count of word $j$ in document $i$}}{N_i} \quad \text{for all } 1 \leq i \leq n, 1 \leq j \leq p 
    \end{equation*}

    Let $\{D_{*i}: 1 \leq i \leq n\}$ denote the columns of $D$, each of which contains only non-negative entries that sum up to 1. The pLSI model specifies that the raw word counts for each document $\{N_iD_{*i}: 1 \leq i \leq n\}$ are independently generated, with
    \begin{equation}\label{eq:pLSI-Mult}
        N_i D_{*i}\sim \text{Multinomial}(N_i, [D_0]_{*i})
    \end{equation}
    for some matrix $D_0 \in \mathbb{R}^{p \times n}$ whose columns are $\{[D_0]_{*i}: 1 \leq i \leq n\}$. Here, the columns of $D_0$ specify how words are assigned to documents, and these columns are required to be probability vectors with non-negative entries summing up to 1. Note that \eqref{eq:pLSI-Mult} implies $\bE(D) = D_0$. If we let $Z := D - D_0$, we can write the observation model in a ``signal plus noise'' form:
    \begin{equation}
        D = D_0 + Z
    \end{equation}
    
    The pLSI model further assumes that, for some unobserved $K \in \mathbb{N}$ (which denotes the number of topics), we can factorize $D_0$ as 
    \begin{equation}\label{eq: pLSI-Bayes}
        \mathbb{E}(D) = D_0 = AW 
    \end{equation}
    for some matrices $A \in \mathbb{R}^{p \times K}$ and $W \in \mathbb{R}^{K \times n}$. Like $D_0$, the columns of $A$ and $W$ are required to be probability vectors, so that they can only contain non-negative entries that sum up to 1. $A$ assigns words to topics, while $W$ assigns topics to documents. In this paper, we focus more specifically on estimating the topic-word matrix $A$. 
    
    One can think of \eqref{eq: pLSI-Bayes} as equivalent to requiring that the following Bayes formula holds for any word $j$ and document $i$:
    \begin{equation}
        \mathbb{P}(\text{word } j | \text{ document } i) = \sum_{k = 1}^K \mathbb{P}(\text{word }j | \text{ topic }k) \cdot \mathbb{P}(\text{topic }k | \text{ document }i)
    \end{equation}
    
    In most applications, $K \ll \min(n,p)$ and thus \eqref{eq: pLSI-Bayes} impose a low-rank structure on $\mathbb{E}(D)$. We note that the number of topics $K$ plays a role similar to that of the number of principal components in principal component analysis. For technical reasons, we will assume throughout this paper that $K$ is fixed as $n$, $p$ and the document lengths $N_i$'s vary. This is reasonable if one expects \textit{a priori} that the number of topics covered by the corpus is small and bounded.

    \subsection{Related works and unaddressed issues}

    Before outlining our contributions in Section \ref{sec: contributions}, it is important to provide context by discussing previous works that are relevant to the estimation of $A$ under the pLSI model. In particular, we want to highlight some of the unaddressed issues from prior papers that our work aims to resolve.

    \subsubsection{The separability condition}\label{section: separability}

    We first present the definition of anchor words and the separability condition. 
    \begin{definition}[Anchor words and separability]
        We call word $j$ an anchor word for topic $k$ if row $j$ of $A$ has exactly one nonzero entry at column $k$. The separability condition is said to be satisfied if there exists at least one anchor word for each topic $k \in \{1, \dots, K\}$.  
    \end{definition}

    Observe that the decomposition $D_0 = AW$ in general may not be unique, but under the separability condition, $A$ is identifiable. The separability condition was first introduced in \cite{donoho2003does} to ensure uniqueness in the Non-negative Matrix Factorization (NMF) framework. The interpretation in our context is that, for each topic, there exist some words which act as unique signatures for that topic.

    The separability condition greatly simplifies the problem of estimating $A$, as one can identify the anchor words for each topic as a first step. Prior works exploiting anchor words mainly differ in how anchor words are used to estimate the remaining non-anchor rows of $A$. \cite{arora2012learning} start from the \textit{word co-occurrence matrix} $DD^T$ and apply a successive projection algorithm to rows of $DD^T$ to find one anchor word per topic. The matrix $DD^T$ is then re-arranged into four blocks where the top left $K \times K$ block corresponds to the anchor words identified, and $A$ is estimated by taking advantage of the special structure of this block partition. More recently, \cite{bing2020optimal} consider a matrix $B \in \mathbb{R}^{p \times K}$ obtained from $A$ via multiplication by diagonal matrices. Unlike $A$, all rows of $B$ sum up to 1, so anchor rows of $B$ are simply canonical basis vectors in $\mathbb{R}^K$. The non-anchor rows of $B$ are then obtained via regression given the anchor rows of $B$. The topic matrix $A$ can subsequently be recovered through an appropriate normalization of $B$.

    A major drawback of these methods is that they rely heavily on the separability assumption, which suffices for uniqueness of the decomposition \eqref{eq: pLSI-Bayes} but is far from necessary. 
This issue is related to the following question, which is of central importance in the NMF literature:
 given a collection of points $\{r_1, \dots, r_m\} \subseteq \mathbb{R}^{K-1}$ presumed to lie within 
    the convex hull of unobserved vertices $\{v_1^*, \dots, v_K^*\}$, when is recovery of these vertices possible? In the NMF context, separability means that each vertex coincides with a point in the observed point cloud, in which case we only need to identify which of the $r_i$'s correspond to simplex vertices. However, this is a very strong assumption and several efforts have been made to relax it. \cite{javadi2020nonnegative} show that vertex recovery is still possible under a uniqueness assumption that generalizes separability. \cite{ge2015intersecting} introduce the notion of \textit{subset separability} which is also much weaker than separability. We note that many of the separability-based methods proposed in topic modeling, such as those in \cite{arora2012learning}, \cite{bing2020fast} and \cite{bing2020optimal}, have no obvious extension if the separability assumption is relaxed. This may not be important if the given corpus contains many specialized words and the topics are sufficiently distinct (an example is a collection of research
    papers), but may matter more if the topics overlap significantly and the vocabulary is generic (for instance, a collection of high school English essays).

    \subsubsection{\texorpdfstring{The SVD-based approach in \cite{ke2022using}}{The SVD-based approach in Ke and Wang (2022)}}
    
    \cite{ke2022using} are the first to establish the minimax-optimal rate of $\sqrt{\frac{p}{nN}}$ for the  $\ell_1$-loss $\|\hat{A}-A\|_1 := \sum_{j=1}^p \sum_{k=1}^K |\hat{A}_{jk} - A_{jk}|$ where, for simplicity, all document lengths are  assumed to be equal to $N$.
Their procedure links topic estimation to the NMF setting discussed in the previous subsection and is summarized as follows. Let $M := \diag(n^{-1} D\mathbf{1}_n)$ where $\mathbf{1}_n := (1, 1, \dots, 1)^T \in \mathbb{R}^n$. Given $K$, the approach proposed in \cite{ke2022using} considers the first $K$ left singular vectors $\check{\xi}_1, \dots, \check{\xi}_K \in \mathbb{R}^p$ of $M^{-1/2}D$. Elementwise division of $\check{\xi}_2, \dots, \check{\xi}_K$ by $\check{\xi}_1$ (also known as \textit{SCORE normalization} \citep{jin2015fast}) yields a matrix $\check{R}\in \mathbb{R}^{p \times (K-1)}$, whose rows $\check{r}_1, \dots, \check{r}_p \in \mathbb{R}^{K-1}$ can be shown to form a point cloud contained in a $K$-vertex simplex (up to stochastic errors). Since this corresponds precisely to the NMF setup discussed in the previous subsection, the simplex vertices can now easily be recovered using a suitable \textit{vertex hunting} algorithm. Once these vertices are identified, $A$ can then be estimated via a series of normalizations.

    The work by \cite{ke2022using} is an important contribution that motivates several other methods for topic modeling, including ours. 
    However, this method was developed using strong assumptions on the parameter regimes and the behavior of word frequencies. 
    More specifically, Corollary 3.1 of \cite{ke2022using} states that the error upper bound $\sqrt{\frac{p \log n}{nN}}$ is only applicable if we assume $N > p^{4/3}$ or $p \leq N < p^{4/3}$ and $n \geq \max(Np^2, p^3, N^2p^5)$. As the vocabulary size $p$ is typically large, these are highly unrealistic assumptions on $(n, N, p)$. For example, the Associated Press (AP) dataset used in \cite{ke2022using} (a corpus of news articles frequently used for topic model evaluation) has $n =  2,134$ and $p = 7,000$. A typical AP article has between $300$ and $700$ words, so it is clear that none of the above assumptions holds. The error bound provided without these assumptions is $\frac{p^2}{N\sqrt{N}}\sqrt{\frac{p\log n}{nN}}$, which, when $p$ is large and grows with $n$, may not necessarily converge to zero. Several other works that claim to establish minimax-optimal rates also do so by assuming $N > p$; see Theorem 4.1 of \cite{wu2022sparse} and Remark 10 of \cite{bing2020fast}.
    
    In this paper, we do not seek to re-establish the rate $\sqrt{\frac{p}{nN}}$. Rather, we aim to provide a consistent error bound valid for all realistic parameter regimes (especially when $p > \max(n, N)$).
    We propose to resolve some of the outstanding issues of the estimator in \cite{ke2022using} by leveraging a sparsity structure that is often empirically observed in text documents, resulting in:
    \begin{enumerate}
        \item {\bf Improved error bounds: } We observe that even the minimax-optimal rate $\sqrt{\frac{p}{nN}}$ of \cite{ke2022using} scales significantly with $p$. As the number of documents $n$ increases, we can expect several previously unobserved words to be added to the corpus, whereas the average document length $N$ may not change by much. However, many of these words may occur rarely, so the effective dimension of the parameter space may be quite small compared to the observed vocabulary size. This motivates us to restrict the parameter space by imposing a suitable column-wise sparsity assumption on $A$, which enables an error bound that does not scale with $p$ except for log factors. 
        \item {\bf An increased signal-to-noise ratio:} The approach in \cite{ke2022using} may not be suitable if many words in the corpus occur with low frequency. If for each word $j$ we define $h_j := \sum_{k=1}^K A_{jk}$, the theoretical guarantees in \cite{ke2022using} require $\min_{1 \leq j \leq p} h_j \geq \frac{cK}{p}$ for some $c \in (0,1)$. Note that since the columns of $A$ sum up to 1, we always have $\frac{1}{p}\sum_{j=1}^p h_j = \frac{K}{p}$. Therefore, since $h_j$ roughly indicates the frequency of word $j$ in the corpus, this assumption restricts the frequencies of the least frequent words to be of the same order as the average frequency of all words. 
        

        Such a restrictive assumption is needed in \cite{ke2022using} because when many low-frequency words exist in the corpus, their procedure involves division by small and noisy numbers. This is a problem with their \textit{pre-SVD normalization} step where $D$ is pre-multiplied by the $p\times p$ diagonal matrix $M^{-1/2}$, as the diagonal entries of $M:= \diag(n^{-1} D\mathbf{1}_n)$ corresponding to infrequent words are usually small. This is also an issue with their elementwise division step, thus leading to higher errors from infrequent words in the point cloud obtained from their procedure (see Figures ~\ref{fig:point-cloud-nlp} and ~\ref{fig:point-cloud-mic} for illustration). Although we also use SCORE normalization \citep{jin2015fast}, our removal of infrequent words leads to a point cloud with a higher signal-to-noise ratio. 
    \end{enumerate}
    \subsubsection{Sparse topic modeling approaches}
        To our knowledge, \cite{bing2020optimal} and \cite{wu2022sparse} are the only two prior works that, like ours, impose additional sparsity constraints on $A$. However, the sparsity assumptions proposed in these papers are not appropriate for dealing with large $p$; rather, they are more suitable for dealing with large $K$.
        \begin{enumerate}
            \item \cite{bing2020optimal} assume that $A$ is elementwise sparse, in the sense that the total number of nonzero entries of $A$ (denoted as $\|A\|_0$) is small. Their proposed procedure is then shown to satisfy the error upper bound 
            \begin{equation}\label{eq:Bing-rate}
                \|\hat{A}- A\|_1 \lesssim K\sqrt{\frac{\|A\|_0 \log(p \vee n)}{nN}} 
            \end{equation}
            We note here that $\|A\|_0$ can still be very large. Indeed, let $\tilde{p}$ denote the number of words whose corresponding rows in $A$ are \textit{not} entirely zero. Technically we can have $p > \tilde{p}$, but words corresponding to zero rows of $A$ are not observed with probability one, so $\tilde{p}$ covers the entire set of all distinct words observed in the corpus. We have
            \begin{equation}\label{eq:Bing-sparsity}
                \tilde{p} \leq \|A\|_0 \leq Kp
            \end{equation}
            
            In fact, one can see that their error bound depends on $p$ from the error decomposition $\|\hat{A} - A\|_1 \lesssim \text{I} + \text{II} + \text{III}$ in Theorem 2 of \cite{bing2020optimal}. For example, $\text{I} = \frac{K}{\underline{\gamma}}\sqrt{\frac{p \log (n\vee p)}{nN}} + \frac{pK\log (p\vee n)}{\underline{\gamma}nN}$ for some constant $\underline{\gamma}$. This, together with \eqref{eq:Bing-sparsity}, shows that the bound $\eqref{eq:Bing-rate}$ is not very different from the rate $\sqrt{\frac{p}{nN}}$ in \cite{ke2022using}, except for possibly better dependence on $K$.
            Moreover, their theoretical results depend on several strong assumptions on the frequency of anchor words selected by their procedure. In contrast, our procedure is less affected by the frequency of anchor words, both in theory and in practice. 
            
            \item \cite{wu2022sparse} assume that each row of $A$ has at most $s_A$ nonzero entries. Since $A$ has $K$ columns, this sparsity assumption is only useful if $K$ is large. Theorem 4.1 of \cite{wu2022sparse} then shows that their proposed estimator of $A$ satisfies 
            \begin{equation}\label{eq: Wu-bound}
                \|\hat{A} - A\|_1 \lesssim K\sqrt{\frac{s_A\log n}{nN}}
            \end{equation}
            However, upon close examination of their proof, the $\ell_1$ bound they achieve is actually $K\sqrt{\frac{\|A\|_0\log n}{nN}}$ (similar to \eqref{eq:Bing-rate}) so \eqref{eq: Wu-bound} is only possible by assuming that $p = O(1)$ and using $\|A\|_0 \leq p s_A$. Furthermore, their result assumes $N^{3/4} \geq p$ which, as we have noted in our discussion of \cite{ke2022using}, is highly restrictive. 
            
         \end{enumerate}
         In comparison with these two papers, our sparsity assumption is more compatible with the ``large $p$'' setting, and we do not assume $p = O(1)$ as in \cite{wu2022sparse}. 
    \subsection{Our contributions}\label{sec: contributions}
        We summarize the main contributions of this paper below.
        \begin{itemize}
            \item We propose a new spectral procedure (Definition \ref{actual-procedure}) for estimating $A$. This procedure takes into account the observation that, in most text datasets, the vocabulary size $p$ is often large but many words occur very infrequently in the corpus. When $K$ is unknown, a new estimator of $K$ is also proposed (see Lemma \ref{lem: K-hat}). 
            \item We introduce a new column-wise $\ell_q$-sparsity assumption (Assumption \ref{ass:sparsity}) for $A$. This assumption is motivated by Zipf's law \citep{zipf2013psycho} and links a word's frequency of occurrence in a topic to its rank. Our proposed procedure is then shown to be adaptive to the unknown sparsity level $s$ in the $\ell_q$-sparsity definition \eqref{eq: lq-sparsity-def}. 
            \item We provide an error bound for our procedure using the $\ell_1$ loss $\|\hat{A}-A\|_1$ in Theorem \ref{theorem: main-result}. Under our sparsity assumption \eqref{eq: lq-sparsity-def}, our error bound is shown to be valid for all parameter regimes and only depends on $p$ via weak factors. The common pre-processing step of removing infrequent words is incorporated into our procedure and accounted for in our analysis.
            \item Finally, in Section \ref{sec: relax}, we show that our theoretical results may still hold when the separability assumption is relaxed if we choose a suitable vertex hunting procedure for non-separable point clouds in Definition \ref{actual-procedure}.
        \end{itemize}

    Extensive experiments with synthetic datasets to confirm the effectiveness of our estimation procedure under a wide variety of parameter regimes are presented in Section \ref{sec:experiments}. Furthermore, we also demonstrate the usefulness of our method for text analysis, as well as for other applications where the pLSI model is also relevant, in Section \ref{sec: real-data}. 
        
    \subsection{Notations}\label{section: notations}
    For any set $S$, let $|S|$ denote its cardinality, and let $S^c$ denote its complement if it is clear in context with respect to which superset. For any $k \in \mathbb{N}$, let $[k]$ denote the index set $\{1, \dots, k\}$. We use $\mathbf{1}_d$ to denote the vector in $\mathbb{R}^d$ with all entries equal to 1. For a general vector $v \in \mathbb{R}^d$, let $\|v\|_r$ denote the vector $\ell_r$ norm, for $r =0, 1, \dots, \infty$, and let $\diag(v)$ denote the $d \times d$ diagonal matrix with diagonal entries equal to entries of $v$. For any $a, b \in \mathbb{R}$, let $a \vee b := \max(a, b)$ and $a \wedge b := \min(a,b)$. 
    

    Let $I_m$ denote the $m \times m$ identity matrix. For a general matrix $Q \in \mathbb{R}^{m \times l}$ and $r = 0, 1, \dots, \infty$, let $\|Q\|_r$ denote the vector $\ell_r$-norm of $Q$ if one treats $Q$ as a vector. Let $\|Q\|_F$ and $\|Q\|_\text{op}$ denote the Frobenius (i.e. $\|Q\|_F = \|Q\|_2$) and operator norms of $Q$ respectively. For any index $j \in [m]$ and $i \in [l]$, let $Q_{ji}$ or $Q(j,i)$ denote the $(j,i)$-entry of $Q$. For index sets $J \subseteq [m]$ and $I \subseteq [l]$, let $Q_{JI}$ denote the submatrix of $Q$ obtained by selecting only rows in $J$ and columns in $I$ (in particular, either $J$ or $I$ can be a single index). Also, let $Q_{J*}$ denote the submatrix of $Q$ obtained by selecting rows in $J$ and \textit{all} columns of $Q$; $Q_{*I}$ is similarly defined. This means $Q_{j*}$ and $Q_{*i}$ denote the $j^\text{th}$ row and $i^{th}$ column of $Q$ respectively. For an integer $k \leq m \wedge l$, let $\gamma_k(Q)$ denote the $k^\text{th}$ largest singular value of $Q$, and if $Q$ is a square matrix then, if applicable, let $\lambda_k(Q)$ denote the $k^\text{th}$ largest eigenvalue of $Q$. If $m = l$, then $\text{tr}(Q)$ denotes the trace of $Q$. 
    
    Let $C, c >0$ denote absolute constants that may depend on $K$ and $q$; we assume that $K$ and $q$ are fixed, unobserved constants. Let $C^*, c^* > 0$ denote numerical constants that do not depend on the unobserved quantities like $K$ and $q$ (this only matters when we discuss the estimation of $K$). The constants $C, c , C^*, c^*$ may change from line to line. 

    In our paper, for ease of presentation, we assume $N_1 = \dots = N_n = N$. Our results also hold if we assume the document lengths satisfy $\max_{i \in [n]} N_i \leq C^* \min_{i\in [n]} N_i$ (i.e. if $N_1 \asymp \dots \asymp N_n$), in which case $N = \frac{1}{n}\sum_{i=1}^n N_i$ denotes the average document length.

\section{Our procedure for estimating \texorpdfstring{$A$}{A} and its theoretical properties}\label{sec:sec1}
For simplicity, we will first assume separability in order to explain our procedure. A discussion of possible relaxations of this condition will be deferred to Section \ref{sec: relax}. 
\subsection{The oracle procedure to estimate \texorpdfstring{$A$}{A} given \texorpdfstring{$D_0$}{D0}}
Our oracle procedure concerns how $A$ can be estimated if the non-stochastic matrix $D_0$, rather than $D$, is observed. Let $J \subseteq [p]$ be an arbitrary collection of words in our vocabulary. We first need the definition of a vertex hunting procedure, which is relevant to the NMF setup discussed in Section \ref{section: separability}. 

\begin{definition}[Vertex hunting] Given $K$, a vertex hunting procedure is a function that takes a collection of points in $\mathbb{R}^{K-1}$ and returns $K$ points in $\mathbb{R}^{K-1}$. 
\end{definition}

\begin{remark}{\rm A good vertex hunting procedure should return the vertices of the smallest $K$-simplex containing the given point cloud. Throughout the paper, we will use $\mathcal{V}(\cdot)$ to denote such a procedure.}
\end{remark}

The following definition of an ideal point cloud is based on the separability assumption. Any reasonable vertex hunting procedure should be able to successfully recover the simplex vertices from an ideal point cloud. 

\begin{definition}[Ideal point cloud]\label{def: ideal-pt-cloud} Given $K$, an ideal point cloud is a collection of points in $\mathbb{R}^{K-1}$ contained in the simplex defined by $K$ vertices, such that the $K$ vertices themselves belong to the point cloud. 
\end{definition}

We are now ready to define the oracle procedure. 

\begin{definition}[Oracle procedure]\label{def: oracle-procedure} Given inputs $K$, $D_0$, vertex hunting procedure $\mathcal{V}(\cdot)$ and a set of words $J \subseteq [p]$, the oracle procedure returns $\tilde{A} \in \mathbb{R}^{p \times K}$ defined as follows: 
\begin{enumerate}
    \item (SVD) Perform SVD on $[D_0]_{J*}$ to obtain $\Xi = [\xi_1, \dots, \xi_K] \in \mathbb{R}^{|J|\times K}$ containing the first $K$ left singular vectors of $[D_0]_{J*}$. 
    \item (Elementwise division) Divide $\xi_2, \dots, \xi_K$ elementwise by $\xi_1$ to obtain $R \in \mathbb{R}^{|J| \times (K-1)}$. This means $R_{jk} = \xi_{k+1}(j) / \xi_1(j)$, for $k = 1, \dots, K-1$ and $j \in J$.
    \item (Vertex hunting) Treat the rows $\{r_j: j \in J\}$ of $R$ as a point cloud in $\mathbb{R}^{K - 1}$. Apply the vertex hunting procedure $\mathcal{V}(\cdot)$ on this point cloud to obtain vertices $v_1^*, \dots, v_K^*$. 
    \item (Recovery of $\Pi$) For each $j \in J$, solve for $\pi_j \in \mathbb{R}^K$ from the linear equation 
    \begin{equation}\label{eq: Pi-estimation-8}
        \begin{pmatrix}
            1 & \dots & 1 \\ v_1^* & \dots & v_K^*
        \end{pmatrix} \pi _j = \begin{pmatrix}
            1 \\ r_j
        \end{pmatrix}
    \end{equation}
    In other words, $\pi_j$ satisfies $\sum_{k=1}^K \pi_j(k) = 1$ and $ r_j = \sum_{k=1}^K \pi_j(k) v_k^*$, for each $j \in J$. Let $\Pi \in \mathbb{R}^{|J|\times K}$ be the matrix whose rows are $\{\pi_j: j \in J\}$.
    \item (Normalization) Normalize the columns of $\diag(\xi_1) \cdot \Pi \in \mathbb{R}^{|J| \times K}$ so that the entries of each column sum up to 1. This yields $\tilde{A}_{J*}$. Set $\tilde{A}_{J^c*} = 0$ to obtain $\tilde{A}$.  
\end{enumerate}
\end{definition}

    Our oracle procedure makes use of the SCORE normalization idea which was originally proposed for network data analysis \citep{jin2015fast}. The elementwise division step (Step 2) is the most important step, as it provides a connection between singular vectors of $D_0$ (or associated variables) and the NMF setup described in Section \ref{section: separability}. The words in $J$ are represented by the point cloud $\{r_j: j \in J\}$, which can be shown to be contained entirely in some $K$-vertex simplex. If the simplex vertices are identifiable and the vertex hunting procedure is successful in recovering them in Step 3, then \eqref{eq: Pi-estimation-8} allows us to exactly recover the probabilistic weights $\{\pi_j: j \in J\}$ associated with each word in $J$, which are connected to $A$ via the relation 
    \begin{equation}\label{eq: relation-9}
        \diag(\xi_1) \cdot \Pi = A_{J*} \cdot \diag(V_1)
    \end{equation}
    for some vector $V_1 \in \mathbb{R}^K$ containing only positive entries. This explains the column normalization step (Step 5), which essentially reverses the elementwise division step. For more details, we refer the reader to the proof of Lemma \ref{lem:B2} in the appendix.

Based on the relation \eqref{eq: relation-9}, we can show the following result.
\begin{lemma}\label{lem: tilde-A}
    Suppose the set $J$ contains at least one anchor word for each topic $k \in [K]$, and the vertex hunting procedure $\mathcal{V}(\cdot)$ can successfully recover the simplex vertices from any ideal point cloud. The oracle procedure in Definition \ref{def: oracle-procedure} then returns $\tilde{A}$ satisfying $\tilde{A}_{J^c*} = 0$ and
    \begin{equation}\label{eq: A-tilde-relation}
        \tilde{A}_{J*} = A_{J*}\cdot \diag(\|A_{J1}\|_1^{-1}, \dots, \|A_{JK}\|_1^{-1})
    \end{equation}
\end{lemma}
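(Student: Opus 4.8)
The plan is to reduce the entire procedure to an exact linear-algebra identity, driven by the single observation that the left singular vectors produced in Step~1 span the same column space as $A_{J*}$. Since $J$ contains at least one anchor word for each topic, the corresponding rows of $A_{J*}$ are positive multiples of the standard basis vectors $e_1, \dots, e_K$, so $A_{J*}$ has full column rank $K$; assuming $W$ has full row rank $K$ (the usual nondegeneracy of the topic-document matrix), the factorization $[D_0]_{J*} = A_{J*} W$ has rank exactly $K$ and its column space equals that of $A_{J*}$. Hence there is an invertible $H \in \mathbb{R}^{K\times K}$ with $\Xi = A_{J*} H$, i.e. $\xi_1(j) = \sum_{l=1}^K A_{jl} H_{l1}$ and $\xi_{k+1}(j) = \sum_{l=1}^K A_{jl} H_{l,k+1}$ for each $j \in J$. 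This $H$ is the bookkeeping device that threads through every remaining step.

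Next I would analyze the elementwise division of Step~2. Writing $R_{jk} = \xi_{k+1}(j)/\xi_1(j)$ and substituting the expressions above, a word $j$ that is an anchor for topic $m$ (so $A_{jl}=0$ for $l\neq m$) collapses to $r_j = (H_{m2}/H_{m1}, \dots, H_{mK}/H_{m1}) =: v_m^*$, a quantity depending only on the topic $m$. For a general $j$, setting $\pi_j(l) := A_{jl} H_{l1}/\xi_1(j)$ one checks directly that $\sum_l \pi_j(l) = 1$ and $r_j = \sum_{l=1}^K \pi_j(l)\, v_l^*$. Because every topic has an anchor word in $J$, each $v_m^*$ literally appears in the point cloud $\{r_j : j\in J\}$, and the $\pi_j(l)$ are nonnegative, so the cloud lies in the simplex with vertices $v_1^*,\dots,v_K^*$ and contains those vertices. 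This is precisely an ideal point cloud in the sense of Definition~\ref{def: ideal-pt-cloud}, so by hypothesis the vertex hunting procedure $\mathcal{V}(\cdot)$ returns exactly $v_1^*,\dots,v_K^*$ in Step~3.

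For Step~4, the coefficient matrix in \eqref{eq: Pi-estimation-8} has $m$-th column $(1, v_m^*)^\top = H_{m1}^{-1}(H_{m1}, \dots, H_{mK})^\top$, so it equals $H^\top \diag(H_{11}^{-1}, \dots, H_{K1}^{-1})$, which is invertible since $H$ is invertible and the $H_{m1}$ are nonzero. The recovered $\pi_j$ is therefore unique and must coincide with the weights $\pi_j(l) = A_{jl} H_{l1}/\xi_1(j)$ found above, giving $\Pi_{jl} = A_{jl} H_{l1}/\xi_1(j)$. Premultiplying by $\diag(\xi_1)$ cancels the $\xi_1(j)$ factor and yields $\diag(\xi_1)\,\Pi = A_{J*}\,\diag(H_{11},\dots,H_{K1})$, which is exactly relation \eqref{eq: relation-9} with $V_1 = (H_{11},\dots,H_{K1})$. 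Normalizing each column to sum to one in Step~5 divides column $l$ by $H_{l1}\sum_{j\in J} A_{jl} = H_{l1}\|A_{Jl}\|_1$, so the factor $H_{l1}$ cancels and the surviving constant is $\|A_{Jl}\|_1^{-1}$, producing \eqref{eq: A-tilde-relation}; the setting $\tilde{A}_{J^c*}=0$ is immediate from the procedure.

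The main obstacle is not the algebra but justifying the positivity that makes every step well defined, namely that $\xi_1(j) > 0$ for all $j \in J$ and $H_{l1} > 0$ for all $l$. Since $[D_0]_{J*}$ is entrywise nonnegative, a Perron--Frobenius argument applied to $[D_0]_{J*}[D_0]_{J*}^\top$ lets one choose the leading left singular vector $\xi_1$ to be (strictly) positive; positivity of $\xi_1$ combined with $\xi_1 = A_{J*} H_{*1}$ and the anchor-word rows then forces $H_{l1} > 0$ for each $l$. This positivity is exactly what guarantees no division by zero in Step~2, that the weights $\pi_j(l)$ are genuinely nonnegative so the cloud sits inside the simplex, and that the column sums $H_{l1}\|A_{Jl}\|_1$ are strictly positive so Step~5 is legitimate. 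I would isolate this positivity claim (essentially the content underlying \eqref{eq: relation-9}, as in Lemma~\ref{lem:B2}) as the one place requiring genuine care, with the remainder following as deterministic identities.
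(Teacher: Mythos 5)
Your proposal is correct and follows essentially the same route as the paper's proof of Lemma \ref{lem:B2}: your transfer matrix $H$ is the paper's $V=(\Xi^T A_{J*})^{-1}$, your vertices $v_m^*$ and weights $\pi_j(l)=A_{jl}H_{l1}/\xi_1(j)$ reproduce the paper's factorization $[\One_J,R]=\Pi Q^T$ with $\Pi=[\diag(\xi_1)]^{-1}A_{J*}\diag(V_1)$, and the remaining steps (ideal point cloud via anchor words, exact vertex recovery, inversion of the linear system, cancellation of $H_{l1}$ under column normalization) match exactly. The only cosmetic differences are that you obtain full column rank of $A_{J*}$ directly from the anchor rows and positivity of $H_{l1}$ directly from $\xi_1(j)=A_{jm}H_{m1}$ at an anchor word, whereas the paper routes these facts through Lemma \ref{lem:B1} and the Perron eigenvector of $\Theta=\Sigma_W A_{J*}^TA_{J*}$; both are valid.
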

The proof of Lemma \ref{lem: tilde-A} is identical to that of Lemma \ref{lem:B2} provided in Appendix B. The sole difference is that in Lemma \ref{lem:B2}, the set $J$ is chosen as in \eqref{J-def}  and we use Assumption \ref{ass: separability}. 

\begin{remark}{\rm
    Our oracle procedure differs from that of \cite{ke2022using} in two important ways. First, note that Step 1 only requires SVD to be performed on a submatrix of $D_0$. In general, we want the set $J$ to contain words that occur with sufficiently high frequencies in the corpus so that the point cloud generated from our procedure has a higher signal-to-noise ratio. When $p$ is large, we can often expect the corpus to contain many infrequently occurring words whose corresponding rows in $A$ should be estimated as zero. Our oracle procedure yields $\tilde{A}$ which is a good oracle approximation of $A$ if $\|A_{J^c*}\|_1$ is small, as in that case the diagonal matrix in \eqref{eq: A-tilde-relation} is close to the identity matrix. 

    Second, note that we consider the SVD of a submatrix of $D_0$ and not $M_0^{-1/2}D_0$ as in \cite{ke2022using}, where $M_0:= \diag(n^{-1}D_0\mathbf{1}_n)$. This simplifies some parts of our theoretical analysis and allows us to obtain error bounds that depend less strongly on $p$ (see Appendix \ref{section: C}). }
\end{remark}
\subsection{Estimation procedure for \texorpdfstring{$A$}{A} given \texorpdfstring{$D$}{D}}
    Our procedure to estimate $A$ below is designed to closely approximate the oracle procedure. Here we first assume $K$ is known. The estimation of $K$ is deferred to Section \ref{sec: est-of-K}, and the choice of the vertex hunting procedure will be discussed in conjunction with identifiability assumptions on $A$. 
    \begin{definition}[Estimation procedure for $A$]\label{actual-procedure} Given inputs $K$, observation matrix $D$ and vertex hunting procedure $\mathcal{V}(\cdot)$, our estimation procedure returns $\hat{A}$ defined as follows: 
    \begin{enumerate}
        \item (Thresholding) Let $M := \diag(n^{-1}D\mathbf{1}_n)$ and $p_n := p \vee n$. Compute the set of words
        \begin{equation}\label{J-def}
            J := \left\{j \in [p]: M(j, j) \geq \alpha \sqrt{\frac{\log p_n}{nN}}\right\}
        \end{equation}
        Here, $\alpha$ is a user-specified universal constant (see Remark \ref{rem: choice-of-alpha}). 
        \item (Spectral decomposition) Compute the first $K$ eigenvectors $\hat{\xi}_1, \dots, \hat{\xi}_K \in \mathbb{R}^{|J|}$ of the submatrix $G_{JJ}$ of the $p \times p$ matrix $G$, where 
        \begin{equation}\label{def-of-G}
            G:= DD^T - \frac{n}{N} M
        \end{equation}
        Here, we assume all entries of $\hat{\xi}_1$ are of the same sign, in which case we can choose $\hat{\xi}_1$ to have all positive entries. If some entries of $\hat{\xi}_1$ are negative, choose $\hat{\xi}_1$ such that the majority of entries are positive, and apply Remark \ref{rem: xi1-positive}.
        \item (Elementwise division) Divide $\hat{\xi}_2, \dots, \hat{\xi}_K$ elementwise by $\hat{\xi}_1$ to obtain $\hat{R} \in \mathbb{R}^{|J| \times (K-1)}$, with rows $\{\hat{r}_j: j \in J\}$. This means $\hat{R}_{jk} = \hat{\xi}_{k+1}(j)/\hat{\xi}_1(j)$, for $k \in [K-1]$ and $j \in J$. 
        \item (Vertex hunting) Treat the rows of $\hat{R}$ as a point cloud in $\mathbb{R}^{K-1}$. Apply the vertex hunting procedure $\mathcal{V}(\cdot)$ to this point cloud to obtain vertices $\hat{v}_1^*, \dots, \hat{v}_K^*$.  
        \item (Estimation of $\Pi$) For each $j \in J$, solve for $\hat{\pi}_j^\diamond \in \mathbb{R}^K$ from 
        \begin{equation}
            \begin{pmatrix}
                1 & \dots & 1 \\ \hat{v}_1^* & \dots & \hat{v}_K^*
            \end{pmatrix} \hat{\pi}_j^\diamond = \begin{pmatrix}
                1 \\ \hat{r}_j
            \end{pmatrix}
        \end{equation}
        Obtain $\hat{\pi}_j$ from $\hat{\pi}_j^\diamond$ by first setting any negative entries to 0 and then normalizing so that the entries of $\hat{\pi}_j$ sum up to 1. Let $\hat{\Pi}$ be the matrix whose rows are $\{\hat{\pi}_j: j \in J\}$.
        \item (Normalization) Normalize all columns of $\diag(\hat{\xi}_1) \cdot \hat{\Pi}$ so that they have unit $\ell_1$-norm. This yields $\hat{A}_{J^*}$. Set all entries of $\hat{A}_{J^c*}$ to zero to obtain $\hat{A}$. 
    \end{enumerate}
    \end{definition}
    As Steps 3-5 are also based on the SCORE normalization idea \citep{jin2015fast}, we call this procedure the \textit{Thresholded Topic-SCORE} (TTS). However, Step 1, Step 2 and Step 6 contain significant differences when compared with Topic-SCORE in \cite{ke2022using}. 
    \begin{remark}[Choice of $\alpha$]{\rm \label{rem: choice-of-alpha}
        The set $J$ in \eqref{J-def} is chosen by examining the row sums of the observation matrix $D$, which indicate how frequently the words occur in the corpus. In \eqref{J-def}, $\alpha$ is meant to be a universal constant and thus does not affect our error rates, which are not optimized over constants. In our theoretical discussion, we choose $\alpha = 8$ for convenience, but for most datasets this value of $\alpha$ may result in too many words not meeting the threshold. 

        In practice, a good choice of $\alpha$ is important for obtaining a good estimator of $A$. Based on our experiments, we recommend a smaller value of $\alpha$, such as $\alpha = 0.005$. This choice of $\alpha$ should produce reasonable results for commonly observed values of $(n, N, p)$. Based on what we observe from experiments, if $n \in [1000, 5000], N \in [300, 700], p \in [5000, 10000]$, we can typically expect around 10-40\% of words to be removed.} 
    \end{remark}
    \begin{remark}[Signs of $\hat{\xi}_1$'s entries]\label{rem: xi1-positive} {\rm In the oracle procedure, $\xi_1$ is the first left singular vector of $[D_0]_{J*}$ and so by Perron's theorem, the entries of $\xi_1$ are all positive. In Step 2, $\hat{\xi}_1$ is the first eigenvector of $G_{JJ}$ which is not necessary a Perron matrix, so $\hat{\xi}_1$ technically may contain negative entries. Any word $j$ for which $\hat{\xi}_1(j)$ is negative should have corresponding rows of $A$ set to zero after Step 2, and then in Step 3 we form the point cloud by computing $\hat{\xi}_{k+1}(j)/\hat{\xi}_1(j)$ for $k \in [K-1]$ and $j \in J$ with $\hat{\xi}_1(j) > 0$ only. 

    In our theoretical analysis as well as in practice, however, this scenario will not happen with high probability. This is because $G$ is chosen so that $\max_{j \in J}|\hat{\xi}_1(j) - \xi_1(j)|$ is small. Since any word $j$ that meets our threshold occurs with sufficiently high frequency, $\xi_1(j)$ will also be sufficiently large for any $j \in J$, which implies $|\hat{\xi}_1(j) - \xi_1(j)| \ll \xi_1(j)$ and thus $\hat{\xi}_1(j) \geq \xi_1(j)/2$ for all $j \in J$. See Lemmas \ref{lem:D8} and \ref{lem: D.9} in the appendix.}         
    \end{remark}
    
    The set $J$ as defined in \eqref{J-def} is data-dependent. It is quite useful to note that $J$ can be approximated by the non-stochastic sets \eqref{Jpm-def} with high probability. The proof of the lemma below can be found in Theorem \ref{lem:A3}(b) in the appendix. 
    \begin{lemma}\label{lem: J-sandwiched}
        Let $M_0 := \diag(n^{-1}D_0\mathbf{1}_n)$, and let 
        \begin{equation}\label{Jpm-def}
            J_\pm := \left\{j \in [p]: M_0(j,j) > \alpha_\pm \alpha \sqrt{\frac{\log p_n}{nN}}\right\}
        \end{equation}
        where $\alpha$ is from the definition of $J$ in \eqref{J-def} and $\alpha_- > 1$ and $\alpha_+ \in (0,1)$ are some suitably chosen constants depending on $\alpha$ (for example if $\alpha = 8$, we can let $\alpha_+ = \frac{1}{2}$, $\alpha_- = 2$). 
        Then the event $\mathcal{E} := \{J_- \subseteq J \subseteq J_+\}$
        occurs with probability at least $1 - o(p_n^{-1})$. 
    \end{lemma}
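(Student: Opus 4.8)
The plan is to reduce everything to a uniform, one-word-at-a-time concentration of the empirical frequency $M(j,j)=\tfrac1n\sum_{i=1}^n D_{ji}$ around its mean $M_0(j,j)=\tfrac1n\sum_{i=1}^n [D_0]_{ji}$, and then to use the gap between $\alpha_+$ and $\alpha_-$ as a ``buffer zone'' around the threshold $\tau:=\alpha\sqrt{\tfrac{\log p_n}{nN}}$, so that only words whose population frequency is safely below $\alpha_+\tau$ or safely above $\alpha_-\tau$ ever need to be controlled (words with $M_0(j,j)\in(\alpha_+\tau,\alpha_-\tau]$ lie in $J_+\setminus J_-$ and are unconstrained). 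The first step is to express the deviation as a normalized sum of independent bounded variables. Under the multinomial model, for fixed $j$ the slot-level indicators $B_{jim}:=\mathbbm{1}[\text{the }m\text{-th draw of document }i\text{ is word }j]$, for $i\in[n]$, $m\in[N]$, are independent $\mathrm{Bernoulli}([D_0]_{ji})$, and $nN\big(M(j,j)-M_0(j,j)\big)=\sum_{i=1}^n\sum_{m=1}^N (B_{jim}-[D_0]_{ji})=:S_j$. Hence $S_j$ is a sum of $nN$ independent, centered, $[-1,1]$-valued variables with variance proxy $\sigma_j^2:=\sum_{i,m}[D_0]_{ji}(1-[D_0]_{ji})\le nN\,M_0(j,j)$, so Bernstein's inequality $\bP(|S_j|\ge t)\le 2\exp\!\big(-\tfrac{t^2/2}{\sigma_j^2+t/3}\big)$ applies directly (words with $M_0(j,j)=0$ give $S_j\equiv0$ and are trivial).

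I would then split $\cE^c$ into the two inclusion failures and bound each by a union over words. For $J\subseteq J_+$, the failure is that some word with $M_0(j,j)\le\alpha_+\tau$ has $M(j,j)\ge\tau$, which forces the upward deviation $S_j\ge(1-\alpha_+)\tau\,nN=(1-\alpha_+)\alpha\sqrt{nN\log p_n}$ against a small variance proxy $\sigma_j^2\le\alpha_+\alpha\sqrt{nN\log p_n}$; since the deviation and the variance proxy are both of order $\sqrt{nN\log p_n}$, Bernstein yields an exponent of order $\sqrt{nN\log p_n}$, i.e. a super-polynomially small per-word probability (using the mild $nN\gtrsim\log p_n$ so that $\sqrt{nN\log p_n}\gg\log p_n$). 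For $J_-\subseteq J$, the failure is that some word with $M_0(j,j)>\alpha_-\tau$ has $M(j,j)<\tau$, forcing the downward deviation $-S_j>(\alpha_--1)\tau\,nN=(\alpha_--1)\alpha\sqrt{nN\log p_n}$. This is the delicate direction, because a high-frequency word can have variance proxy as large as $\sigma_j^2\le nN\,M_0(j,j)\le nN$; using $\sigma_j^2\le nN$ and $nN\gtrsim\log p_n$ (so the bounded-difference term $t/3$ is negligible next to $nN$ in the denominator), Bernstein gives an exponent of at least $\tfrac{(\alpha_--1)^2\alpha^2}{2}\log p_n$, hence only the polynomial per-word bound $p_n^{-(\alpha_--1)^2\alpha^2/2}$.

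Finally I would take a union bound over the at most $p\le p_n$ words in each direction. The super-polynomial contribution from the $J\subseteq J_+$ direction is harmless, so the overall rate is dictated by the $J_-\subseteq J$ direction: the total failure probability is at most $p_n\cdot p_n^{-(\alpha_--1)^2\alpha^2/2}=p_n^{\,1-(\alpha_--1)^2\alpha^2/2}$, which is $o(p_n^{-1})$ as soon as $(\alpha_--1)^2\alpha^2>4$, i.e. $(\alpha_--1)\alpha>2$; this is exactly what pins down the admissible constants $\alpha_\pm$ in terms of $\alpha$ ($\alpha_+\in(0,1)$ is otherwise free). With the suggested values $\alpha=8$, $\alpha_-=2$, $\alpha_+=\tfrac12$ the exponent is $32$, yielding $p_n^{-31}=o(p_n^{-1})$ as claimed. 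I expect the main obstacle to be precisely this high-frequency lower-inclusion case: it is the only direction that produces a merely polynomial tail, so one must track the interplay between the variance term and the bounded-difference term in Bernstein's inequality and choose $\alpha$ and $\alpha_-$ large enough to clear the union bound, whereas the remaining direction and the low-frequency regime are comparatively routine.
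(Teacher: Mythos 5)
Your argument is correct and follows essentially the same route as the paper's proof (Theorem \ref{lem:A3}(a)--(b)): write $M(j,j)-M_0(j,j)$ as a normalized sum of $nN$ independent slot-level indicators, apply a per-word exponential inequality, use the buffer between $\alpha_+\alpha\sqrt{\log p_n/(nN)}$ and $\alpha_-\alpha\sqrt{\log p_n/(nN)}$ so that only words safely outside the threshold band need to be controlled, and union bound over at most $p\le p_n$ words, arriving at the same binding condition $(\alpha_--1)\alpha>2$. The only difference is that the paper uses Hoeffding's inequality, $\bP(|M(j,j)-M_0(j,j)|\ge t)\le 2\exp(-nNt^2/2)$, which treats both inclusions symmetrically and therefore also imposes $(1-\alpha_+)\alpha>2$, whereas your Bernstein bound with variance proxy $nN\,M_0(j,j)$ gives a super-polynomial tail for the $J\subseteq J_+$ direction and leaves $\alpha_+$ essentially free --- a harmless refinement, with the one caveat that this extra gain requires $nN/\log p_n\to\infty$ (a standing assumption of the paper), not merely $nN\gtrsim\log p_n$ as you wrote.
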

    The following lemma bounds the size of $J$, and is obtained by bounding $|J_+|$ and using Lemma \ref{lem: J-sandwiched}.  
    \begin{lemma}[Size of $J$] With probability at least $1 - o(p_n^{-1})$, 
        \begin{equation}\label{J-size-bound}
        |J| \leq \left(\frac{K}{\alpha \alpha_+}\sqrt{\frac{Nn}{\log p_n}}\right) \wedge p 
    \end{equation}
    \end{lemma}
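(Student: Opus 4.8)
The plan is to reduce the probabilistic statement to a deterministic bound on $|J_+|$ and then invoke Lemma \ref{lem: J-sandwiched}. On the event $\mathcal{E} = \{J_- \subseteq J \subseteq J_+\}$, which holds with probability at least $1 - o(p_n^{-1})$, we have $J \subseteq J_+$ and hence $|J| \leq |J_+|$; since $J \subseteq [p]$ also gives $|J| \leq p$ trivially, it suffices to establish the \emph{deterministic} inequality $|J_+| \leq \frac{K}{\alpha\alpha_+}\sqrt{\frac{Nn}{\log p_n}}$. The claimed bound then follows by taking the minimum with $p$, and the probability $1 - o(p_n^{-1})$ is inherited directly from $\mathcal{E}$.

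To bound $|J_+|$, I would use a simple first-moment (Markov-type) argument. Write $\tau := \alpha\alpha_+\sqrt{\frac{\log p_n}{nN}}$, so that every $j \in J_+$ satisfies $M_0(j,j) > \tau$ by definition \eqref{Jpm-def}. The key structural fact is that the diagonal entries of $M_0$ carry small total mass. Indeed, since $D_0 = AW$ with the columns of $A$ and $W$ being probability vectors,
\begin{equation*}
    M_0(j,j) = \frac{1}{n}\sum_{i=1}^n [D_0]_{ji} = \sum_{k=1}^K A_{jk}\cdot\frac{1}{n}\sum_{i=1}^n W_{ki} \leq \sum_{k=1}^K A_{jk} =: h_j,
\end{equation*}
where the inequality uses $\frac{1}{n}\sum_{i=1}^n W_{ki} \in [0,1]$. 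Summing $h_j$ over all words and using that each of the $K$ columns of $A$ sums to $1$ gives $\sum_{j=1}^p h_j = \sum_{k=1}^K\sum_{j=1}^p A_{jk} = K$. Combining these,
\begin{equation*}
    |J_+|\cdot\tau < \sum_{j\in J_+} M_0(j,j) \leq \sum_{j\in J_+} h_j \leq \sum_{j=1}^p h_j = K,
\end{equation*}
so that $|J_+| < K/\tau = \frac{K}{\alpha\alpha_+}\sqrt{\frac{Nn}{\log p_n}}$, as desired.

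This argument is essentially a counting bound, so there is no real obstacle once Lemma \ref{lem: J-sandwiched} is in hand; the only genuine content, namely the concentration of the data-dependent threshold set $J$ around the deterministic sets $J_\pm$, has already been absorbed into that lemma. I would also note that the factor $K$ is in fact slack: using the sharper identity $\sum_{j=1}^p M_0(j,j) = \frac{1}{n}\sum_{i=1}^n\sum_{j=1}^p [D_0]_{ji} = 1$ in place of the bound $\sum_j h_j = K$ yields the stronger conclusion $|J_+| < \frac{1}{\alpha\alpha_+}\sqrt{\frac{Nn}{\log p_n}}$, which immediately implies the stated inequality since $K \geq 1$. The only point requiring minor care is the bookkeeping of the intersection with $p$ and confirming that the high-probability event is exactly the one supplied by Lemma \ref{lem: J-sandwiched}.
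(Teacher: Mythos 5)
Your proof is correct and follows essentially the same route as the paper's (Theorem \ref{lem:A3}(d)): reduce to bounding $|J_+|$ via $M_0(j,j) \leq h_j$ and $\sum_j h_j = K$, then pass to $|J|$ on the event $\mathcal{E}$ from Lemma \ref{lem: J-sandwiched}. Your closing observation that $\sum_j M_0(j,j) = 1$ yields the sharper constant $\frac{1}{\alpha\alpha_+}$ in place of $\frac{K}{\alpha\alpha_+}$ is a valid minor improvement, but does not change the argument.
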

    Our procedure requires the eigenvalue decomposition of a symmetric $|J|\times |J|$ matrix. The bound \eqref{J-size-bound} can be significantly smaller than $\min(n,p)$ if $nN \ll p^2$ and $N \ll n$ (ignoring weak factors), which are reasonable assumptions for many text datasets. We can therefore expect the eigenvalue decomposition step in our procedure to be more computationally scalable than the SVD step (on a $p \times n $ matrix) in \cite{ke2022using}.

    \subsection{\texorpdfstring{Error bounds for $\hat{A}$ under separability}{Error bounds under separability}}
    We first discuss our theoretical results under separability, which is assumed in all of our proofs in the appendix. We begin by listing the assumptions underlying our analysis. 
    
    \begin{ass}[$A$ and $W$ are well-conditioned]\label{ass:1} Let $\Sigma_W := n^{-1}WW^T$. For some constant $c \in (0,1)$,
        \begin{equation}\label{eq:AW-well-conditioned}
            \sigma_K(A) \geq c\sqrt{K} \quad \text{and} \quad   \sigma_K(\Sigma_W) \geq c
        \end{equation}
    \end{ass}
    \begin{ass}[The topic-topic correlation matrix is regular]\label{ass:2} The entries of $A^TA$ satisfy the following for some constant $c > 0$:
        \begin{equation}\label{eq: topic-topic-correlation}
            \min_{1 \leq k,l \leq K} A^TA(k,l) \geq c
        \end{equation}        
    \end{ass}
    \begin{ass}[Separability]\label{ass: separability}
        Each topic $k \in [K]$ has at least one associated anchor word $j$ belonging to the set $J_-$ defined in \eqref{Jpm-def}. 
    \end{ass}
    \begin{ass}[Vertex hunting efficiency]\label{ass:VH} Given $K$ and an ideal point cloud defined in Definition \ref{def: ideal-pt-cloud}, the vertex hunting function $\mathcal{V}(\cdot)$ recovers the $K$ vertices correctly. Furthermore, whenever $\mathcal{V}(\cdot)$ is given as inputs two point clouds $\{x_1, \dots, x_m\}$ and $\{x_1', \dots, x_m'\}$, the outputs $\{v_1, \dots, v_K\}$ and $\{v_1', \dots, v_K'\}$ satisfy for some absolute constant $C > 0$ (up to a label permutation)
    \begin{equation}
        \max_{k \in [K]}\|v_k - v_k'\|_2 \leq C \max_{j \in [m]}\|x_j - x_j'\|_2
    \end{equation}
    \end{ass}
    \begin{ass}[Column-wise $\ell_q$-sparsity]\label{ass:sparsity} Let the entries of each column $A_{*k}$ of $A$ be ordered as $A_{(1)k} \geq \dots \geq A_{(p)k}$. For some $q \in (0,1)$ and $s > 0$, the columns of $A$ satisfy 
        \begin{equation}\label{eq: lq-sparsity-def}
            \max_{k \in [K]} \left(\max_{j\in [p]} j A_{(j)k}^q \right) \leq s
        \end{equation}   
    Here, we assume that $q$ is a fixed constant, whereas $s$ is allowed to grow with $n$. 
    \end{ass}

    \begin{remark}{\rm We justify why Assumptions \ref{ass:1}, \ref{ass:2} and \ref{ass: separability} are reasonable below.
        \begin{enumerate}
            \item Equation \eqref{eq:AW-well-conditioned} assumes the topic vectors in $A$ are not too correlated. The assumption on $W$ in \eqref{eq:AW-well-conditioned} is necessary even when $W$ is known, as its role is similar to that of the design matrix in the regression setting. Note that since the columns of $A$ and $W$ sum up to 1, we always have $\sigma_1(A) \leq \sqrt{K}$ and $\sigma_1(\Sigma_W) \leq 1$ (see Lemma \ref{lem:B1}(a) in the appendix). 
            \item The matrix $A^TA \in \mathbb{R}^{K \times K}$ can be thought of as the topic-topic correlation matrix, since its entries are inner products of the columns of $A$. Therefore, \eqref{eq: topic-topic-correlation} is especially true if the $K$ topics are related to one another. However, even if the corpus covers unrelated topics, we expect all columns of $A$ to assign significant weights to grammatical function words (such as `and', `the' in English) and filler words, which occur frequently in all documents regardless of the topics involved. 
            \item In light of Lemma \ref{lem: J-sandwiched}, Assumption \ref{ass: separability} requires that each topic has at least one anchor word that occurs in the corpus frequently enough so that it is included in $J$. Such an assumption on the frequency of anchor words is also commonly seen in other works that exploit the separability condition, and Assumption \ref{ass: separability} is not strong since the threshold level of order $\sqrt{\frac{\log p_n}{nN}}$ in the definition of $J_-$ is quite low. For comparison, \cite{bing2020optimal} makes the same assumption but with the threshold level of order $\frac{\log p_n}{N}$, which may be higher than ours if the number of documents $n$ far exceeds the average document length $N$. 
        \end{enumerate}}
    \end{remark}
   \begin{remark}[Vertex hunting for separable point clouds]\rm{\cite{ke2022using} mentions two vertex hunting algorithms which are suitable for separable point clouds, namely Successive Projection (SP) \citep{araujo2001successive} and Sketched Vertex Search (SVS) \citep{jin2017estimating}.
    
    Given a point cloud $r_1, \dots, r_m$, SP starts by finding the point $r_j$ whose Euclidean norm is the largest and sets this as the first estimated vertex $\hat{v}_1$. Then, for each $2 \leq k \leq K$, we can obtain $\hat{v}_k$ from $\hat{v}_1, \dots, \hat{v}_{k-1}$ by setting $\hat{v}_k$ as the point $r_j$ that maximizes $\|(I-P_{k-1})r_j\|_2$, where $P_{k-1}$ denotes the projection matrix on the linear span of $\hat{v}_1, \dots, \hat{v}_{k-1}$. SP can be shown to satisfy Assumption \ref{ass:VH} when the volume of the true simplex is lower bounded by a constant \citep{gillis2013fast}, which is a simple consequence of Theorem \ref{lem:B1}(f) in the appendix.
    
    On the other hand, SVS starts by applying $k$-means clustering on the point cloud $\{r_1, \dots, r_m\}$ to obtain cluster centers $\hat{c}_1, \dots, \hat{c}_L$, where $L$ is a tuning parameter that is much larger than $K$. These clusters are meant to reduce the noise levels in the point cloud. Next, SVS exhaustively searches for all simplexes whose $K$ vertices are located on these cluster centers, in order to find the simplex $S$ such that the maximum distance from any $\hat{c}_l$ to $S$ is minimized. In comparison to SP, SVS is more robust to noise in the point cloud but is computationally much slower if $K$ is not small. SVS satisfies Assumption \ref{ass:VH} under mild regularity conditions \citep{jin2017estimating}. 

    Note that these vertex hunting algorithms are only meant for separable point clouds, as the simplex vertices they produce are designed to belong to the convex hull of the point cloud. For more implementation details of SVS and SP, we refer the reader to Section A of \cite{ke2022using}. 
    }   
    \end{remark}
    
    \begin{remark}[$\ell_q$-sparsity]\label{rem: lq-sparsity}{\rm To our knowledge, our work is the first to consider the $\ell_q$-sparsity assumption \eqref{eq: lq-sparsity-def} in the topic modeling context, although similar assumptions have been adopted in other statistical settings such as sparse PCA and sparse covariance estimation (see for example \cite{ma2013sparse} and \cite{cai2012optimal}). \eqref{eq: lq-sparsity-def} imposes an assumption on the decay rate of the ordered entries of the columns of $A$, but does not restrict how small (or large) the smallest (or largest, assuming $s \geq 1$) entries of $A$'s columns can be. Thus, our theoretical results are valid even in the presence of severe word frequency heterogeneity. 

    Note that if columns $A_{*k}$ has $s$ nonzero entries, then we always have $\max_{j \in [p]} j A_{(j)k}^q \leq s$. However, in light of \eqref{eq:Bing-sparsity} where we observe that $\|A\|_0 \geq \tilde{p}$, there exists at least one column of $A$ with at least $\lfloor\tilde{p}/K\rfloor$ nonzero entries, and so $s$ in \eqref{eq: lq-sparsity-def} cannot be much smaller than $p$ if we impose hard sparsity ($q = 0$) on \textit{all} columns of $A$. Therefore, the $\ell_q$-sparsity assumption \eqref{eq: lq-sparsity-def} gives us more flexibility as it allows for the possibility that most entries of $A$ are small but nonzero. When $q \approx 0$, we can approximate the assumption of hard sparsity on all columns of $A$, whereas when $q$ is close to 1, then \eqref{eq: lq-sparsity-def} with $s = O(1)$ corresponds to Zipf's law, which is the empirical observation that word frequency in text data is often inversely proportional to word rank. 

    The restriction that $q \in (0,1)$ is primarily due to the fact that we use the $\ell_1$ loss $\|\hat{A}-A\|_1$. Since the columns of $A$ sum up to 1, the columns of $A$ already satisfy $\ell_q$-sparsity with $q = s = 1$, but this alone is not sufficient to control the error term $\|A_{J^c}\|_1$ resulting from our thresholding step. }
    \end{remark}

    We are now ready to discuss our main theoretical results. Let $\hat{\Xi} = [\hat{\xi}_1, \dots, \hat{\xi}_K] \in \mathbb{R}^{|J| \times K}$ contains the first $K$ eigenvectors of $G_{JJ}$ where $G$ is defined as in \eqref{def-of-G}. Recall its oracle counterpart $\Xi = [\xi_1, \dots, \xi_K] \in \mathbb{R}^{|J|\times K}$ which contains the first $K$ left singular vectors of $[D_0]_{J*}$. Let $\{\Xi_j: j \in J\}$ and $\{\hat{\Xi}_j: j \in J\}$ denote the rows of $\Xi$ and $\hat{\Xi}$ respectively. 

    \begin{lemma}[Row-wise error bounds for $\hat{\Xi}$] For all $j \in [p]$, let $h_j := \sum_{k=1}^K A_{jk}$. With probability $1 - o(p_n^{-1})$, there exist $\omega \in \{\pm 1\}$ and a $(K-1)\times (K-1)$ orthonormal matrix $\Omega^* $ such that, if we define $\Omega := \diag(\omega, \Omega^*) \in \mathbb{R}^{K \times K}$, we have 
    \begin{equation}\label{result:row-wise-eigenvector}
        \|\Omega \hat{\Xi}_j - \Xi_j\|_2 \leq  C\sqrt{\frac{h_j \log p_n}{nN}} \quad \text{for all } j \in J 
    \end{equation}
    \end{lemma}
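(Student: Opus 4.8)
The plan is to recognize the claim as a row-wise (two-to-infinity) eigenvector perturbation bound for the symmetric matrix $G_{JJ}$ against its noiseless population version, and to obtain the sharp $h_j$-dependence from the delocalization of the singular vectors rather than from a crude operator-norm estimate. First I would identify the population object. Writing $Z = D - D_0$ and using that the columns of $D$ are independent normalized multinomials with $\mathrm{Var}(Z_{ji}) \le [D_0]_{ji}/N$, a direct second-moment computation gives $\bE(ZZ^T) = \frac nN M_0 - \frac1N D_0 D_0^T$, so that
\begin{equation*}
\bE(DD^T) = \Big(1-\tfrac1N\Big)D_0 D_0^T + \tfrac nN M_0, \qquad \bE(G) = \bE(DD^T) - \tfrac nN\bE(M) = \Big(1-\tfrac1N\Big)D_0 D_0^T .
\end{equation*}
This is precisely why $G$ subtracts $\frac nN M$: the diagonal debiasing cancels the spurious $\frac nN M_0$ mass and leaves an exactly rank-$K$ population matrix. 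Restricting to $J$, the matrix $\bE(G_{JJ}) = (1-\frac1N)[D_0]_{J*}[D_0]_{J*}^T$ shares the eigenvectors $\Xi$ with the left singular vectors of $[D_0]_{J*}$, with eigenvalues $(1-\frac1N)\lambda_k$, where $\lambda_k = \gamma_k([D_0]_{J*})^2$. Using $[D_0]_{J*} = A_{J*}W$, Assumption \ref{ass:1}, and Assumption \ref{ass: separability} (so that the anchor rows keep $A_{J*}$ well-conditioned), I would show $\lambda_k \asymp n$ for all $k \in [K]$, and use Assumption \ref{ass:2} — the entrywise lower bound on $A^TA$ forces a dominant Perron eigenvalue — to produce the two gaps that matter, $\lambda_1 - \lambda_2 \gtrsim n$ and $\lambda_K \gtrsim n$. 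These gaps are exactly what yields the block aligner $\Omega = \diag(\omega, \Omega^*)$: the leading eigenvector is isolated and matched up to a sign $\omega$, whereas $\{\xi_2,\dots,\xi_K\}$ need only be matched as a subspace (their eigenvalues may coincide), hence up to an orthonormal $\Omega^*$.

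Next I would expand the eigenvectors to first order. With $E := G_{JJ} - \bE(G_{JJ})$, $\bar\Lambda := (1-\frac1N)\diag(\lambda_1,\dots,\lambda_K)$, and $O$ the $\sin\Theta$ aligner, the standard subspace expansion reads
\begin{equation*}
\hat\Xi O - \Xi = (I - \Xi\Xi^T)\,E\,\Xi\,\bar\Lambda^{-1} + (\text{remainder}).
\end{equation*}
Writing $[D_0]_{J*} = \Xi\Sigma V^T$ with $\Sigma = \diag(\lambda_1^{1/2},\dots,\lambda_K^{1/2})$ and decomposing $E$ into the cross term $Z_{J*}[D_0]_{J*}^T + [D_0]_{J*}Z_{J*}^T$, the centered quadratic term, and the diagonal correction $-\frac nN(M-M_0)_{JJ}$, the cross term collapses after projection to $(I-\Xi\Xi^T)Z_{J*}V\Sigma^{-1}$, so that the row-$j$ leading contribution is essentially $Z_{j*}V\Sigma^{-1}$. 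Two features make this small. First, the analysis sees the projected noise $E\Xi$ rather than $\|E\|_{\op}$, which already produces an aggregate rate far below the naive Davis--Kahan estimate. Second, and crucially for the per-row sharpness, the right singular vectors are delocalized: from $V = W^T A_{J*}^T\Xi\Sigma^{-1}$ together with $\|W_{*i}\|_2 \le 1$ and the eigenvalue lower bounds, one gets $\max_i\|V_{i*}\|_2 \lesssim n^{-1/2}$. Since the noise is independent across documents, this incoherence gives
\begin{equation*}
\bE\|Z_{j*}V\|_2^2 \;\lesssim\; \frac1{nN}\sum_i [D_0]_{ji} \;=\; \frac{M_0(j,j)}{N} \;\lesssim\; \frac{h_j}{N},
\end{equation*}
whence $\|Z_{j*}V\Sigma^{-1}\|_2 \lesssim \lambda_K^{-1/2}\|Z_{j*}V\|_2 \lesssim \sqrt{h_j/(nN)}$ in expectation, matching the target up to the high-probability inflation.

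To convert this variance estimate into the stated uniform bound I would apply a Bernstein inequality to each coordinate of $Z_{j*}V$ — the summands $Z_{ji}V_{i*}$ are independent, bounded by $\|V_{i*}\|_\infty \lesssim n^{-1/2}$, with the variance proxy above — and union-bound over the at most $p_n$ words in $J$, which is where the factor $\log p_n$ appears. The same mechanism disposes of the remaining pieces of $E$: the diagonal correction contributes only $-\frac1N\sum_i Z_{ji}$ to row $j$, and the centered quadratic row is shown, again via Bernstein and incoherence, to carry an extra power of $N^{-1/2}$ or $\lambda_K^{-1/2}$ and hence to be of smaller order than $\sqrt{h_j\log p_n/(nN)}$. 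Throughout, Lemma \ref{lem: J-sandwiched} lets me replace the data-dependent $J$ by the deterministic sandwich $J_- \subseteq J \subseteq J_+$, so that every $j \in J$ obeys a uniform frequency lower bound; this is what keeps the incoherence, the gaps, and the positivity of $\hat\xi_1(j)$ (Remark \ref{rem: xi1-positive}) uniform over $J$.

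The hard part is the remainder rather than the leading term. A naive bound $\|(\hat\Xi O - \Xi)_{j*}\|_2 \lesssim \lambda_K^{-1}\|E\|_{\op}\|\Xi_{j*}\|_2$ only gives order $\sqrt{h_j/N}$, which is too large by a factor $\sqrt{n/\log p_n}$; the entire row-wise gain comes from the leading term seeing $E\Xi$ and from the incoherence of $V$, so I must show that the second- and higher-order terms genuinely sit below this order. I would close this with a leave-one-out argument: for each $j \in J$, introduce eigenvectors $\hat\Xi^{(j)}$ of the matrix obtained by zeroing out the $j$-th row and column of $E$, exploit the independence of row $j$ of $E$ from $\hat\Xi^{(j)}$ to control the dependent cross terms, and bootstrap the resulting inequality for $\max_{j \in J}\|(\hat\Xi O - \Xi)_{j*}\|_2$. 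This self-consistency succeeds because the spectral ratio $\lambda_K^{-1}\|E\|_{\op}$, which a matrix-Bernstein bound controls by a quantity of order $N^{-1/2}$ up to logarithmic factors, is strictly less than one, so the remainder is a vanishing fraction of the target and the fixed-point inequality resolves in favor of the leading order.
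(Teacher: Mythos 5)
Your overall setup is right — the identity $\bE(G_{JJ})=(1-\tfrac1N)[D_0]_{J*}[D_0]_{J*}^T$, the eigenvalue scales $\lambda_k\asymp n$ with gap $\lambda_1-\lambda_2\gtrsim n$ explaining the block form of $\Omega$, and the reduction to the deterministic sandwich $J_-\subseteq J\subseteq J_+$ all match the paper. But the route you take for the row-wise bound is far heavier than necessary, and your justification for needing it rests on a miscalculation. The paper's proof (Lemma \ref{lem:D8}) is a direct application of the row-wise perturbation bound of Lemma F.1 of \cite{ke2022using} (restated as Lemma \ref{lem:D1}), which gives $\|(\hat\Xi O-\Xi)_{j*}\|_2\lesssim n^{-1}\bigl(\|E\|_\op\|\Xi_{j*}\|_2+\sqrt K\|E_{j*}\|_2\bigr)$. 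The key elementary fact you do not use is $\|\Xi_{j*}\|_2=\|A_{j*}V\|_2\le\sigma_K^{-1}(A_{J*})\|A_{j*}\|_1\le Ch_j$ (Lemma \ref{lem: Lemma D.7} area); since $h_j\le K$, the ``crude'' term $n^{-1}\|E\|_\op\|\Xi_{j*}\|_2\lesssim h_j\sqrt{\log p_n/(nN)}\le\sqrt K\sqrt{h_j\log p_n/(nN)}$ is \emph{already} of the target order, contrary to your claim that it overshoots by $\sqrt{n/\log p_n}$. The only genuinely row-dependent work is then the Bernstein-type bound $\|(G-G_0)_{jJ}\|_2\lesssim K\sqrt{nh_j\log p_n/N}$, obtained from the decomposition $G-G_0=ZD_0^T+D_0Z^T+(ZZ^T-\bE ZZ^T)+\tfrac nN(M_0-M)$ over the nonrandom set $J_+$. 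No first-order expansion, no delocalization of right singular vectors, and no leave-one-out is needed.

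Beyond being an over-complication, your remainder control has a concrete flaw: the leave-one-out you describe zeroes out the $j$-th \emph{row} and column of $E$ and then ``exploits the independence of row $j$ of $E$ from $\hat\Xi^{(j)}$.'' Under the pLSI model the independent units are documents (columns of $Z$), not words (rows): within each document the coordinates $Z_{ji}$ and $Z_{li}$ come from the same multinomial draw and are negatively correlated, so $Z_{j*}$ is \emph{not} independent of the remaining rows, and the fixed-point/bootstrapping step of the Abbe--Fan--Wang--Zhong-style argument does not go through as stated. A document-wise leave-one-out would preserve independence but does not directly yield the per-word bound you need. Your variance computation for the leading term ($\bE\|Z_{j*}V\|_2^2\lesssim h_j/N$ via incoherence of the right singular vectors) is correct and would recover the right rate for that term, but without a valid remainder argument the proof is incomplete — whereas the paper's cited lemma handles leading term and remainder simultaneously.
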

    The proof can be found in Lemma \ref{lem:D8} and is an application of the well-known Davis-Kahan theorem (more specifically, we need to use the row-wise perturbation version of the theorem as proven in Lemma F.1 of \cite{ke2022using}). We note here that the bound \eqref{result:row-wise-eigenvector} depends on $p$ only via the log term, and the $h_j$'s, which indicates how frequently one may encounter word $j$ in the corpus, determines the magnitude of the bound \eqref{result:row-wise-eigenvector}. 

    As a consequence of the above lemma, one can provide error bounds for the point cloud obtained from our procedure. Again, recall that $\{r_j: j \in J\}$ is the oracle point cloud from Step 3 of Definition \ref{def: oracle-procedure}, and $\{\hat{r}_j: j \in J\}$ is the point cloud from Step 4 of Definition \ref{actual-procedure}. 
    \begin{corollary}[Error bounds for the point cloud]\label{corollary:pt-cloud-error} With probability $1 - o(p_n^{-1})$, there exists a $(K-1) \times (K-1)$ orthonormal matrix $\Omega^*$ such that 
    \begin{equation}\label{eq:pt-cloud-error}
        \max_{j \in J} \|\Omega^* \hat{r}_j - r_j\|_2 \leq C\left(\frac{\log p_n}{nN}\right)^{1/4}
    \end{equation}
        \end{corollary}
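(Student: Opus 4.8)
The plan is to reduce the corollary to a per-word ratio perturbation and then exploit the frequency threshold defining $J$ to extract the $\frac14$ exponent. Fix $j\in J$ and collect the last $K-1$ coordinates of the rows into $\tau_j := (\xi_2(j),\dots,\xi_K(j))$ and $\hat\tau_j := (\hat\xi_2(j),\dots,\hat\xi_K(j))$, so that $r_j = \tau_j/\xi_1(j)$ and $\hat r_j = \hat\tau_j/\hat\xi_1(j)$. With $\Omega = \diag(\omega,\Omega^*)$ as in the preceding lemma, the single inequality \eqref{result:row-wise-eigenvector} splits (on the same event) into a first-coordinate bound $|\omega\hat\xi_1(j)-\xi_1(j)|\leq C\sqrt{h_j\log p_n/(nN)}$ and a tail bound $\|\Omega^*\hat\tau_j - \tau_j\|_2\leq C\sqrt{h_j\log p_n/(nN)}$, both uniform over $j\in J$. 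To absorb the sign ambiguity I would take the orthonormal matrix in the corollary to be $\omega\,\Omega^*$ (still orthonormal since $\omega\in\{\pm1\}$); using $\omega^2=1$ one checks $(\omega\Omega^*)\hat r_j = (\Omega^*\hat\tau_j)/(\omega\hat\xi_1(j))$, so after this relabelling the task becomes bounding $\|a/u - b/t\|_2$ with $a:=\Omega^*\hat\tau_j$, $b:=\tau_j$, $u:=\omega\hat\xi_1(j)$ and $t:=\xi_1(j)>0$, where $\|a-b\|_2$ and $|u-t|$ are both at most $C\sqrt{h_j\log p_n/(nN)}$.

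Two structural facts about the oracle quantities are needed, and both come from the analysis of the oracle point cloud (Lemma \ref{lem:B1} together with the identity \eqref{eq: relation-9}). First, $t=\xi_1(j)\gtrsim h_j$ for every $j\in J$: indeed \eqref{eq: relation-9} gives $\xi_1(j)=\sum_{k}A_{jk}V_1(k)$, and the entries of $V_1$ are bounded below by a positive constant, so $\xi_1(j)\geq (\min_k V_1(k))\,h_j\gtrsim h_j$. Second, $\|r_j\|_2 = \|b\|_2/t$ is uniformly bounded, since the oracle point cloud lies in a simplex whose vertices have bounded norm; thus $\|b\|_2\leq C\,t$. I would then combine these with the threshold itself: on the event $\mathcal{E}=\{J_-\subseteq J\subseteq J_+\}$ of Lemma \ref{lem: J-sandwiched}, every $j\in J$ satisfies $h_j\geq M_0(j,j) > \alpha_+\alpha\sqrt{\log p_n/(nN)}$, where the inequality $M_0(j,j)=\sum_k A_{jk}\bar W_k\leq h_j$ uses that $\bar W_k := n^{-1}\sum_i W_{ki}$ forms a probability vector. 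Consequently $|u-t|/t\lesssim h_j^{-1/2}\sqrt{\log p_n/(nN)}\lesssim (\log p_n/(nN))^{1/4}\to 0$, so $u\geq t/2 \gtrsim h_j>0$ uniformly over $j\in J$ with high probability; this is precisely the statement in Remark \ref{rem: xi1-positive} that $\hat\xi_1(j)$ does not vanish for retained words.

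With these in hand the ratio perturbation is routine: writing $\frac{a}{u}-\frac{b}{t}=\frac{a-b}{u}+b\,\frac{t-u}{ut}$ and taking norms,
\begin{equation*}
    \Big\|\tfrac{a}{u}-\tfrac{b}{t}\Big\|_2 \leq \frac{\|a-b\|_2}{|u|} + \frac{\|b\|_2}{t}\cdot\frac{|t-u|}{|u|} \lesssim \frac{1}{h_j}\sqrt{\frac{h_j\log p_n}{nN}} = \frac{1}{\sqrt{h_j}}\sqrt{\frac{\log p_n}{nN}},
\end{equation*}
where I used $|u|\geq t/2\gtrsim h_j$ and $\|b\|_2/t\leq C$. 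Finally I would maximize over $j\in J$: the right-hand side is decreasing in $h_j$, hence largest for the least frequent retained word, for which $h_j\asymp\sqrt{\log p_n/(nN)}$ by the threshold. Substituting this worst case gives
\begin{equation*}
    \max_{j\in J}\|\Omega^*\hat r_j - r_j\|_2 \lesssim \Big(\tfrac{\log p_n}{nN}\Big)^{-1/4}\sqrt{\tfrac{\log p_n}{nN}} = \Big(\tfrac{\log p_n}{nN}\Big)^{1/4},
\end{equation*}
the claimed bound, valid on the intersection of the two high-probability events (still of probability $1-o(p_n^{-1})$).

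I expect the main obstacle to be the two oracle facts $\xi_1(j)\gtrsim h_j$ and $\|r_j\|_2\lesssim 1$ rather than the perturbation algebra: showing that the entries of the leading singular vector track the word frequencies $h_j$ (and that $V_1$ is bounded away from $0$) is where the structure $[D_0]_{J*}=A_{J*}W$ and the well-conditioning Assumptions \ref{ass:1}--\ref{ass:2} genuinely enter. Conceptually, the one new phenomenon is the $\frac14$ exponent: it arises entirely from the interaction between the $h_j^{-1/2}$ blow-up of the ratio error and the fact that thresholding retains words only down to frequency $\asymp\sqrt{\log p_n/(nN)}$, so a uniformly better ($\frac12$-power) bound over $J$ is unattainable without further discarding the borderline words.
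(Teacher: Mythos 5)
Your proposal is correct and follows essentially the same route as the paper's proof (Lemma \ref{lem: D.9}): the same ratio-perturbation decomposition, the same three inputs ($\xi_1(j)\gtrsim h_j$ from Lemma \ref{lem:B1}(e), $\|r_j\|_2\leq C$ from the bounded simplex, and $\min_{j\in J}h_j\gtrsim\sqrt{\log p_n/(nN)}$ on the event $\mathcal{E}$), yielding first the intermediate bound $C\sqrt{\log p_n/(h_j nN)}$ and then the $1/4$-power rate. The only cosmetic difference is that you absorb the sign $\omega$ into the orthonormal matrix, whereas the paper argues that one may take $\omega=1$ with high probability; both are valid.
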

    The proof can be found in Lemma \ref{lem: D.9}. To elaborate further on \eqref{eq:pt-cloud-error}, we can show that with high probability, 
    \begin{equation}\label{eq: pt-cloud-error2}
        \|\Omega^* \hat{r}_j - r_j\|_2 \leq C\sqrt{\frac{\log p_n}{h_jnN}} \quad \text{ for all } j \in J
    \end{equation}
    Observe that unlike \eqref{result:row-wise-eigenvector}, the bound \eqref{eq: pt-cloud-error2} is inversely proportional to $\sqrt{h_j}$ due to the fact that the point cloud is obtained from the elementwise division step. Since we do not restrict how small $\min_{1 \leq j \leq p}h_j$ can be, the error bound \eqref{eq: pt-cloud-error2} may be uncontrollable without appropriate thresholding of infrequent words. However, with the choice of $J$ as in \eqref{J-def}, one can show $\min_{j \in J} h_j \geq c\sqrt{\frac{\log p_n}{nN}}$ with high probability, which when combined with \eqref{eq: pt-cloud-error2} leads to \eqref{eq:pt-cloud-error}. 

    From \eqref{eq: pt-cloud-error2}, we can also obtain bounds on how much the probabilistic weights $\{\hat{\pi}_j: j \in J\}$ from Step 5 of Definition \ref{actual-procedure} deviate from the oracle weights $\{\pi_j: j \in J\}$ from Step 4 of Definition \ref{def: oracle-procedure}). The proof of the following corollary can be found in Lemma \ref{lem: pi-error} of the appendix. 
    \begin{corollary}[Error bounds for $\hat{\Pi}$] With probability $1 - o(p_n^{-1})$, 
    \begin{equation}\label{eq: pi-error}
        \max_{j \in J}\|\hat{\pi}_j - \pi_j\|_1 \leq C\left(\frac{\log p_n}{nN}\right)^{1/4}
    \end{equation}
    \end{corollary}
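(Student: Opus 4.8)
The plan is to treat this as a perturbation analysis of the two barycentric-coordinate linear systems that define $\pi_j$ and $\hat{\pi}_j^\diamond$, followed by a short argument that the truncation-and-normalization step turning $\hat{\pi}_j^\diamond$ into $\hat{\pi}_j$ cannot inflate the error. The first observation I would exploit is that barycentric coordinates are invariant under a common rigid rotation of the point together with the vertices: although the estimated cloud $\{\hat{r}_j\}$ and vertices $\{\hat{v}_k^*\}$ live in a frame rotated by $\Omega^*$ relative to the oracle objects, the weights $\hat{\pi}_j^\diamond$ are unchanged if I replace the pair $(\hat{r}_j, \hat{v}_k^*)$ by $(\Omega^*\hat{r}_j, \Omega^*\hat{v}_k^*) =: (\tilde{r}_j, \tilde{v}_k)$. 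This is precisely why no rotation appears in the statement, and it lets me compare $\hat{\pi}_j^\diamond$ with $\pi_j$ in a single common frame. In that frame, Corollary \ref{corollary:pt-cloud-error} gives $\max_{j\in J}\|\tilde{r}_j - r_j\|_2 \le \delta$ with $\delta := C(\log p_n / nN)^{1/4}$; invoking the Lipschitz vertex-hunting property of Assumption \ref{ass:VH} together with the rotation-equivariance of $\mathcal{V}(\cdot)$ (which holds for the SP and SVS procedures, since they use only Euclidean norms, projections, and distances), the vertices then obey $\max_{k\in[K]}\|\tilde{v}_k - v_k^*\|_2 \le C\delta$.

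Next I would set up the two $K\times K$ systems $B\pi_j = (1, r_j)^T$ and $\tilde{B}\hat{\pi}_j^\diamond = (1, \tilde{r}_j)^T$, where $B$ and $\tilde{B}$ stack a row of ones on top of the oracle and estimated vertices respectively. Using the identity $\tilde{B}^{-1} - B^{-1} = -\tilde{B}^{-1}(\tilde{B} - B)B^{-1}$ and $B^{-1}(1,r_j)^T = \pi_j$, I can write $\hat{\pi}_j^\diamond - \pi_j = \tilde{B}^{-1}(0, \tilde{r}_j - r_j)^T - \tilde{B}^{-1}(\tilde{B} - B)\pi_j$, whence $\|\hat{\pi}_j^\diamond - \pi_j\|_2 \le \|\tilde{B}^{-1}\|_{\op}\big(\|\tilde{r}_j - r_j\|_2 + \|\tilde{B} - B\|_{\op}\,\|\pi_j\|_2\big)$. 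Since $K$ is fixed, $\|\tilde{B} - B\|_{\op} \le C\max_k\|\tilde{v}_k - v_k^*\|_2 \le C\delta$, and $\|\pi_j\|_2 \le \|\pi_j\|_1 = 1$; equivalence of $\ell_1$ and $\ell_2$ norms on $\mathbb{R}^K$ then converts the resulting $\ell_2$ bound into $\|\hat{\pi}_j^\diamond - \pi_j\|_1 \le C\delta$. The one genuinely load-bearing quantity here is $\|\tilde{B}^{-1}\|_{\op} = 1/\sigma_K(\tilde{B})$, so I must certify that $\sigma_K(\tilde{B})$ is bounded below uniformly. This is where the non-degeneracy of the oracle simplex enters: the volume lower bound for the oracle simplex (Theorem \ref{lem:B1}(f)) yields $\sigma_K(B) \ge c$, and Weyl's inequality gives $\sigma_K(\tilde{B}) \ge \sigma_K(B) - \|\tilde{B}-B\|_{\op} \ge c - C\delta \ge c/2$ once $nN \gtrsim \log p_n$ makes $\delta$ small. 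I expect this step — certifying that the estimated simplex stays non-degenerate so that its inverse does not blow up — to be the main obstacle.

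Finally I would account for the post-processing. Because $r_j$ lies inside the simplex, $\pi_j$ is already a genuine probability vector ($\pi_j \ge 0$ entrywise and $\|\pi_j\|_1 = 1$). Truncating the negative entries of $\hat{\pi}_j^\diamond$ is entrywise, and for each coordinate $|\max(\hat{\pi}_j^\diamond(k),0) - \pi_j(k)| \le |\hat{\pi}_j^\diamond(k) - \pi_j(k)|$ precisely because $\pi_j(k) \ge 0$; summing gives $\|\hat{\pi}_j^+ - \pi_j\|_1 \le \|\hat{\pi}_j^\diamond - \pi_j\|_1 \le C\delta$ for the truncated vector $\hat{\pi}_j^+$. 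Consequently $\big|\,\|\hat{\pi}_j^+\|_1 - 1\,\big| = \big|\,\|\hat{\pi}_j^+\|_1 - \|\pi_j\|_1\,\big| \le C\delta$, so the normalizing constant sits within $C\delta$ of $1$ and is bounded away from $0$. Writing $\hat{\pi}_j = \hat{\pi}_j^+/\|\hat{\pi}_j^+\|_1$, the triangle inequality splits $\|\hat{\pi}_j - \pi_j\|_1 \le \|\hat{\pi}_j - \hat{\pi}_j^+\|_1 + \|\hat{\pi}_j^+ - \pi_j\|_1$, and the first term equals $\big|1 - \|\hat{\pi}_j^+\|_1\big| \le C\delta$, so normalization merely doubles the constant. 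Taking the maximum over $j \in J$ and absorbing constants yields \eqref{eq: pi-error}; in the complementary regime where $\delta$ is not small, the bound is trivial since $\|\hat{\pi}_j - \pi_j\|_1 \le 2$ for any two probability vectors, so it suffices to enlarge $C$.
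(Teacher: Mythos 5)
Your argument is correct and follows essentially the same route as the paper's proof of Lemma \ref{lem: pi-error}: both compare the two barycentric linear systems in a common rotated frame, control $\|\tilde{B}-B\|_{\op}$ via Assumption \ref{ass:VH} and Corollary \ref{corollary:pt-cloud-error}, certify $\sigma_K(\tilde{B})\geq c/2$ by Weyl's inequality from the lower bound $\sigma_K(Q)\geq c$ in Lemma \ref{lem:B1}(f), and show the truncation-plus-normalization step at most doubles the constant. Your explicit remark that the reduction to a common frame relies on rotation-equivariance of the vertex hunting output is a point the paper leaves implicit, but it does not change the substance of the argument.
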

    Note that while $\{\Xi_j: j \in J\}$ and $\{r_j: j \in J\}$ can be recovered only up to an orthonormal transformation $\Omega^*$, the bound \eqref{eq: pi-error} does not depend on $\Omega^*$. We also note that the bounds \eqref{result:row-wise-eigenvector}, \eqref{eq:pt-cloud-error} and \eqref{eq: pi-error} are derived without using the $\ell_q$-sparsity assumption (Assumption \ref{ass:sparsity}).  

    The next theorem is our main result, which provides the error rate for estimating $A$ using the $\ell_1$ loss $\|\hat{A}-A\|_1$. Recall the definition of $\tilde{A}$ in Lemma \ref{lem: tilde-A}. 

    \begin{theorem}[Estimation error for $\hat{A}$]\label{theorem: main-result} Suppose Assumptions 1-4 are satisfied. Then with probability $1- o(p_n^{-1})$, 
    \begin{equation}\label{eq: est-error}
        \|\hat{A}_{J*} - \tilde{A}_{J*}\|_1 \leq C\left(\frac{\log p_n}{nN}\right)^{1/4}
    \end{equation}
    If we further assume the $\ell_q$-sparsity assumption (Assumption \ref{ass:sparsity}) and $s\left(\frac{\log p_n}{nN}\right)^{\frac{1-q}{2}} = o(1)$, we also have with probability $1- o(p_n^{-1})$,
    \begin{equation}\label{eq: approx-error}
        \|\tilde{A}_{J*} - A_{J*}\|_1 = \|A_{J^c*}\|_1 \leq Cs\left(\frac{\log p_n}{nN}\right)^{\frac{1-q}{2}}
    \end{equation}
    and therefore with probability $1- o(p_n^{-1})$, 
    \begin{equation}\label{eq: final-est-error}
        \|\hat{A}-A\|_1 \leq C\left[\left(\frac{\log p_n}{nN}\right)^{1/4} + s\left(\frac{\log p_n}{nN}\right)^{\frac{1-q}{2}}\right] 
    \end{equation}
    for some constant $C$ that may depend on $K$ and $q$. 
    \end{theorem}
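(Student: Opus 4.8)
The plan is to split the total $\ell_1$ error into a \emph{stochastic} part living on the retained words $J$ and a deterministic \emph{bias} part coming from the words thresholded away, and then to recombine them by the triangle inequality. Since Step 6 of Definition \ref{actual-procedure} sets $\hat{A}_{J^c*}=0$ and Lemma \ref{lem: tilde-A} gives $\tilde{A}_{J^c*}=0$,
\begin{equation*}
\|\hat{A}-A\|_1 = \|\hat{A}_{J*}-A_{J*}\|_1 + \|A_{J^c*}\|_1 \leq \|\hat{A}_{J*}-\tilde{A}_{J*}\|_1 + \|\tilde{A}_{J*}-A_{J*}\|_1 + \|A_{J^c*}\|_1 .
\end{equation*}
I would first verify the equality asserted in \eqref{eq: approx-error}, namely $\|\tilde{A}_{J*}-A_{J*}\|_1 = \|A_{J^c*}\|_1$: by \eqref{eq: A-tilde-relation} the $k$-th column contributes $\sum_{j\in J}A_{jk}\,\bigl|\|A_{Jk}\|_1^{-1}-1\bigr| = \bigl|1-\|A_{Jk}\|_1\bigr| = \|A_{J^c k}\|_1$, since each column of $A$ sums to one. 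Thus \eqref{eq: final-est-error} follows immediately from \eqref{eq: est-error} and \eqref{eq: approx-error} (the extra $\|A_{J^c*}\|_1$ is absorbed into the constant), and the real work is to prove those two bounds.

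For the bias bound \eqref{eq: approx-error} I would pass from the data-dependent $J$ to the deterministic set $J_-$ of Lemma \ref{lem: J-sandwiched}: on the event $\mathcal{E}=\{J_-\subseteq J\subseteq J_+\}$, which holds with probability $1-o(p_n^{-1})$, we have $J^c\subseteq J_-^c$, so $\|A_{J^c*}\|_1\leq\|A_{J_-^c*}\|_1$. The key link is $M_0(j,j)\asymp h_j$: writing $M_0(j,j)=\sum_k A_{jk}\bar{W}_k$ with $\bar{W}_k:=n^{-1}\sum_i W_{ki}$, Assumption \ref{ass:1} forces $\bar{W}_k\geq[\Sigma_W]_{kk}\geq\sigma_K(\Sigma_W)\geq c$ (the first inequality because $W_{ki}\in[0,1]$ gives $W_{ki}^2\leq W_{ki}$), whence $c\,h_j\leq M_0(j,j)\leq h_j$. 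Consequently every $j\in J_-^c$ obeys $h_j\lesssim\sqrt{\log p_n/(nN)}=:t$, and since $A_{jk}\leq h_j$ the thresholded words contribute only entries of size at most $t$ in each column. I would then bound each column by the standard weak-$\ell_q$ tail estimate: under \eqref{eq: lq-sparsity-def} the ordered entries satisfy $A_{(j)k}\leq(s/j)^{1/q}$, so $\sum_{j:\,A_{jk}\leq t}A_{jk}\leq\sum_{j\geq s t^{-q}}(s/j)^{1/q}\lesssim\frac{q}{1-q}\,s\,t^{1-q}$ (using $1/q>1$). Summing over the $K$ columns gives $\|A_{J_-^c*}\|_1\lesssim s\,(\log p_n/(nN))^{(1-q)/2}$, which is \eqref{eq: approx-error}; the hypothesis $s(\log p_n/(nN))^{(1-q)/2}=o(1)$ ensures the tail index $s t^{-q}$ exceeds one and that this bias vanishes.

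The stochastic bound \eqref{eq: est-error} is where I expect the main difficulty. Both $\hat{A}_{Jk}$ and $\tilde{A}_{Jk}$ are the $\ell_1$-normalizations of the $k$-th columns of $\diag(\hat{\xi}_1)\hat{\Pi}$ and $\diag(\xi_1)\Pi$, so with $\hat{b}_{jk}:=\hat{\xi}_1(j)\hat{\pi}_j(k)$ and $b_{jk}:=\xi_1(j)\pi_j(k)=V_1(k)A_{jk}$ (from \eqref{eq: relation-9}) I would apply the normalization-stability inequality $\bigl\|u/\|u\|_1-v/\|v\|_1\bigr\|_1\leq 2\|u-v\|_1/\|v\|_1$ for nonnegative $u,v$. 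Decomposing $\hat{b}_{jk}-b_{jk}=\xi_1(j)(\hat{\pi}_j(k)-\pi_j(k))+(\hat{\xi}_1(j)-\xi_1(j))\hat{\pi}_j(k)$ reduces everything to the per-word bounds already in hand: \eqref{result:row-wise-eigenvector} controls $|\hat{\xi}_1(j)-\xi_1(j)|\lesssim\sqrt{h_j\log p_n/(nN)}$, while \eqref{eq: pt-cloud-error2}--\eqref{eq: pi-error} give $\|\hat{\pi}_j-\pi_j\|_1\lesssim\sqrt{\log p_n/(h_j nN)}$. Crucially, invoking $\xi_1(j)\asymp\sqrt{h_j}$ (which follows from the Perron--Frobenius structure of $[D_0]_{J*}$ underlying \eqref{eq: relation-9}) makes each weighted summand $\xi_1(j)\|\hat{\pi}_j-\pi_j\|_1\asymp\sqrt{\log p_n/(nN)}$ \emph{uniform} in $j$; summing the $|J|$ of them and dividing by the denominator $\|b_k\|_1=V_1(k)\|A_{Jk}\|_1$ is what produces the rate $(\log p_n/(nN))^{1/4}$.

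The delicate point, and the step I expect to be the main obstacle, is showing that the normalization denominators $\|b_k\|_1$ (and their empirical counterparts) are bounded below by a quantity of order $|J|\,(\log p_n/(nN))^{1/4}$, uniformly in $k$. This needs (i) the lower bound $\min_{j\in J}h_j\gtrsim\sqrt{\log p_n/(nN)}$ induced by the threshold \eqref{J-def}, so that $\sum_{j\in J}\xi_1(j)\asymp\sum_{j\in J}\sqrt{h_j}\gtrsim|J|\,(\log p_n/(nN))^{1/4}$, and (ii) a \emph{balance} among the $\|b_k\|_1=V_1(k)\|A_{Jk}\|_1$, i.e.\ that no column mass is much smaller than the average $K^{-1}\sum_{j\in J}\xi_1(j)$, which I would extract from the well-conditioning Assumptions \ref{ass:1}--\ref{ass:2} together with separability (Assumption \ref{ass: separability}) guaranteeing a frequent anchor word in $J$ for each topic. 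Once this lower bound is secured, the weighted sum over $J$ divided by the denominator collapses to the uniform point-cloud rate, and a symmetric argument handles the $\hat{\xi}_1$-perturbation term, completing \eqref{eq: est-error}.
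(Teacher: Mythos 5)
Your overall architecture matches the paper's: the same three-way decomposition, the same identity $\|\tilde{A}_{J*}-A_{J*}\|_1=\|A_{J^c*}\|_1$, the same weak-$\ell_q$ tail bound on $J_-^c$ via $M_0(j,j)\asymp h_j$, and the same reduction of \eqref{eq: est-error} to the unnormalized quantities $\diag(\hat{\xi}_1)\hat{\Pi}$ versus $\diag(\xi_1)\Pi$ followed by a normalization-stability inequality. However, there is a genuine error in the step you yourself flag as the crux. You invoke $\xi_1(j)\asymp\sqrt{h_j}$, but this scaling belongs to the pre-SVD-normalized setting of \cite{ke2022using} (where one takes singular vectors of $M^{-1/2}D$); in the present procedure the SVD is performed on $[D_0]_{J*}$ directly, and since $\xi_1=A_{J*}V_1$ with $V_1(k)\asymp K^{-1/2}$ (Lemma \ref{lem:B1}(e)), the correct scaling is $\xi_1(j)\asymp h_j/\sqrt{K}$. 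This wrong premise propagates into your claimed denominator bound: you set out to show $\|b_k\|_1\gtrsim |J|\left(\frac{\log p_n}{nN}\right)^{1/4}$, but in fact $\|b_k\|_1=V_1(k)\|A_{Jk}\|_1\leq V_1(k)\leq C/\sqrt{K}=O(1)$, while $|J|\left(\frac{\log p_n}{nN}\right)^{1/4}$ can diverge (e.g.\ when $|J|\asymp\sqrt{nN/\log p_n}$). So the "delicate point" you identify is a false statement, and the program as outlined cannot be completed.

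The repair is both correct and easier than what you attempted. With $\xi_1(j)\asymp h_j$, each row of the unnormalized difference obeys $\|\hat{\xi}_1(j)\hat{\pi}_j-\xi_1(j)\pi_j\|_1\leq C\left[h_j\|\hat{\pi}_j-\pi_j\|_1+|\hat{\xi}_1(j)-\xi_1(j)|\right]\leq Ch_j\left(\frac{\log p_n}{nN}\right)^{1/4}$, where the second term $\sqrt{h_j\log p_n/(nN)}$ is absorbed using $h_j\geq c\sqrt{\log p_n/(nN)}$ on $\cE$; summing over $j\in J$ with $\sum_jh_j=K$ already gives the rate $\left(\frac{\log p_n}{nN}\right)^{1/4}$ for the numerator, with no dependence on $|J|$. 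The denominators then only need a \emph{constant} lower bound, which follows from $\|\tilde{A}^\diamond_{Jk}\|_1=V_1(k)\|A_{Jk}\|_1\geq c$ (using $\min_kV_1(k)\geq c/\sqrt{K}$ and $\|A_{Jk}\|_1=1-\|A_{J^ck}\|_1\geq 1-o(1)$, the latter requiring the sparsity assumption you already established for \eqref{eq: approx-error}), together with $\|\hat{A}^\diamond_{Jk}\|_1\geq\|\tilde{A}^\diamond_{Jk}\|_1-o(1)$. This is exactly how the paper's Appendix \ref{appendix: E} proceeds; no balance condition among the column masses beyond these constants is needed.
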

    The proof of the above statements can be found in Appendix \ref{appendix: E}. 
    \begin{remark}
        {\rm The bounds \eqref{eq: est-error} and \eqref{eq: approx-error} can be interpreted as the estimation error and the approximation error respectively for using an estimator of $A$ whose row support is contained in the set $J$. Note that the approximation error \eqref{eq: approx-error} is smaller if $q$ is closer to $0$; here we assume $s$ does not grow too quickly relative to $nN$. In the most favorable setting where $s = O(1)$ and $0 < q < 1/2$ (strong sparsity regime), the aggregate error \eqref{eq: final-est-error} is of the order $\left(\frac{\log p_n}{nN}\right)^{1/4}$, which clearly converges to zero as $nN \to \infty$. On the other hand, if $s \geq 1$ and $1/2 < q < 1$ (weak sparsity regime), the bound \eqref{eq: final-est-error} is dominated by the term $s \left(\frac{\log p_n}{nN}\right)^{\frac{1-q}{2}}$.}
    \end{remark}
    \begin{remark}{\rm
        We note again that the bound \eqref{eq: final-est-error}, which does not depend on $p$ except for log terms, is valid for all parameter regimes and in particular for the high-dimensional setting where $p \gg \max(n, N)$. This justifies the use of our method for many text datasets where the number of unique words observed across all documents is extremely large. Also, the bound \eqref{eq: final-est-error} does not depend on $\max_{j \in [p]}h_j$ or $\min_{j \in [p]}h_j$ and is thus completely unaffected by variations in word frequencies. In these regards, our result improves upon the theoretical guarantees presented in prior works such as \cite{ke2022using}, \cite{bing2020fast}, \cite{arora2012learning} and \cite{wu2022sparse}.}  
    \end{remark}
\subsection{Relaxation of the separability condition}\label{sec: relax}

    Our main result (Theorem \ref{theorem: main-result}) may also hold under alternative identifiability assumptions on $A$ if we use a suitable vertex hunting procedure that is effective even for non-separable point clouds. Recall $v_1^*, \dots, v_K^*$ are the simplex vertices from the oracle point cloud $\{r_j: j \in J\}$ in Definition \ref{def: oracle-procedure} and $\hat{v}_1^*, \dots, \hat{v}_K^*$ are the estimated vertices based on the point cloud $\{\hat{r}_j: j \in J\}$ in Definition \ref{actual-procedure}. The assumptions we made concerning separability and vertex hunting efficiency, namely Assumptions \ref{ass: separability} and \ref{ass:VH}, 
    are only useful in our analysis insofar as they allow the following bound to hold with high probability:
    \begin{equation}\label{eq: 26}
        \max_{k \in [K]}\|\hat{v}_k^* - v_k^*\|_2 \leq \max_{j\in J}\|\hat{r}_j - r_j\|_2
    \end{equation}

            \begin{figure}[h!]
    \centering
    \includegraphics[width=0.7\textwidth]{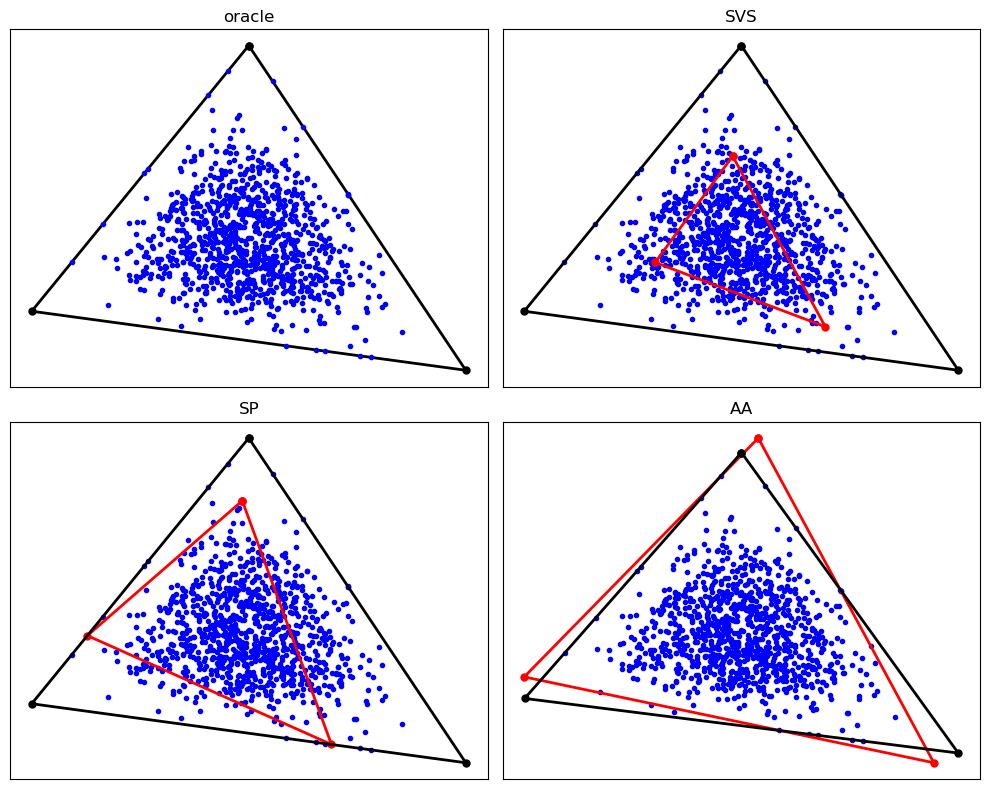}
    \caption{Top left: a non-separable point cloud (blue) contained in a simplex (black) with 3 vertices. Top right: Estimated vertices from SVS (red). Bottom left: Estimated vertices from SP (red). Bottom right: Estimated vertices from AA (red).}
    \label{fig:toy_example}
\end{figure}

However, this bound may also hold if we adopt the identifiability assumption and the Archetype Analysis (AA) vertex hunting procedure proposed in \cite{javadi2020nonnegative}. Figure \ref{fig:toy_example} provides an example of a non-separable point cloud where AA recovers the simplex vertices much more effectively than SP and SVS, which only search for possible vertices within the point cloud itself or its convex hull. Appendix \ref{appendix: Javadi} summarizes important results from \cite{javadi2020nonnegative} that are relevant to our paper. In our estimation procedure for $A$, once we obtain the matrix $\hat{R}$ whose rows are $\{\hat{r}_j: j \in J\}$ from Step 3 of Definition \ref{actual-procedure}, the estimated simplex vertices can be obtained via AA by solving the following minimization problem: 

    \begin{equation}\label{javadi-VH-main-text}
    \text{minimize } \mathscr{D}(V; \hat{R}) \text{ over $V$ s.t. } \mathscr{D}(\hat{r}_j; V) \leq \delta^2 \text{ for all } j \in J
    \end{equation}

Here the rows of $V$ represent the simplex vertices; see Appendix \ref{appendix: Javadi} for the definition of the distance function $\mathscr{D}(\cdot, \cdot)$. The main theoretical result of \cite{javadi2020nonnegative} (Theorem \ref{theorem: F2}) is that the AA algorithm is robust to noise in the point cloud under certain conditions. In particular, if we replace Assumptions \ref{ass: separability} and \ref{ass:VH} by the following assumptions: 
    \begin{enumerate}[(i)]
        \item The matrix $R$ from Step 2 of the oracle procedure (Definition \ref{def: oracle-procedure})
        satisfies $\alpha$-uniqueness for some absolute constant $\alpha > 0$. Here, $\alpha$-uniqueness (described in Definition \ref{def: alpha-uniqueness}) is an identifiability assumption on the simplex vertices that is more general than separability. 
        \item The convex hull of the rows of $R$ contains a $(K-1)$-dimensional ball of radius $\mu > 0$
        \item The vertex hunting procedure $\mathcal{V}(\cdot)$ is defined by \eqref{javadi-VH-main-text} with $\delta \asymp \left(\frac{\log p_n}{nN}\right)^{1/4}$. This value of $\delta$ is chosen based on Corollary \ref{corollary:pt-cloud-error} and Theorem \ref{theorem: F2}. 
    \end{enumerate}
    then, in light of Theorem \ref{theorem: F2}, \eqref{eq: 26} continues to hold and our main result, Theorem \ref{theorem: main-result}, remains valid. Alternatively, if we do not wish to use the $\alpha$-uniqueness condition for identifiability, we can also assume that the distance from the oracle simplex vertices $\{v_1^*, \dots, v_K^*\}$ to the convex hull of the oracle point cloud $\{r_j: j \in J\}$ is not larger than $\delta$. In light of Theorem \ref{theorem: F3}, this assumption can also be used to obtain \eqref{eq: 26}.  

    Beside from \cite{javadi2020nonnegative}, \cite{ge2015intersecting} also discusses an alternative identifiability assumption called \textit{subset separability}. This notion can be illustrated by the point cloud in Figure \ref{fig:toy_example} (top left), with $K = 3$. The point cloud (in blue) is contained in a triangle but is not separable as none of the triangle's vertices belongs to the point cloud. However, each edge of the triangle contains several blue points and thus can clearly be identified from the point cloud. The vertices can then be identified by taking intersections of the edges. \cite{ge2015intersecting} also provides a vertex hunting procedure which, under subset separability and additional regularity assumptions, can also be shown to be robust to noise in the point cloud, in the sense of \eqref{eq: 26}. 

        In terms of computation, \cite{javadi2020nonnegative} describes two algorithms to solve the following Lagrangian variant of \eqref{javadi-VH-main-text}: 
    \begin{equation}\label{javadi-Lagrangian-objective}
        \hat{V}_\lambda = \argmin_{V}[\mathscr{D}(\hat{R}; V) + \lambda \mathscr{D}(V;\hat{R})]
    \end{equation}
     Note that the objective function in \eqref{javadi-Lagrangian-objective} is non-convex and thus may have multiple minima. While AA may significantly reduce statistical error in the vertex hunting step when separability is not applicable, the trade-off is that its computational cost may be higher than that of the SP algorithm for separable point clouds.

    \subsection{Estimation of \texorpdfstring{$K$}{K}}\label{sec: est-of-K}
    Our discussion so far assumes $K$ is known. When $K$ needs to be estimated, it is natural to examine the spectrum of any matrix that should be of rank $K$ under the pLSI model.  

    Recall the definition of $G$ in \eqref{def-of-G}, and define $G_0 := \left(1-\frac{1}{N}\right)D_0D_0^T$. From Lemma \ref{lem: D3-GG0}, with probability $1 - o(p_n^{-1})$ we have (here $C^*$ is a numerical constant not dependent on unobserved constants but may depend on the choice of $\alpha$)
    \begin{equation}\label{eq:31}
        \|(G-G_0)_{JJ}\|_\text{op} \leq C^*K\sqrt{K} \sqrt{\frac{n\log p_n}{N}}
    \end{equation}
    Furthermore one can show $[G_0]_{JJ}$ has rank $K$ with high probability. By a simple application of Weyl's inequality, we then obtain the estimator \eqref{K-hat-def} for $K$. 

    \begin{lemma}\label{lem: K-hat}
            Let $g_n$ be a quantity satisfying 
    \begin{equation}\label{eq: g_n-def-main-text}
        c\sqrt{\frac{nN}{\log p_n}}\geq g_n \geq C^*K\sqrt{K}
    \end{equation}
    where $C^*$ in \eqref{eq: g_n-def-main-text} is the constant from \eqref{eq:31} and $c$ is another constant that may depend on $K$. If
    \begin{equation}\label{K-hat-def}
        \hat{K} := \max\left\{k:\lambda_k(G_{JJ}) > g_n\sqrt{\frac{n\log p_n}{N}}\right\}
    \end{equation}
    then $\hat{K} = K$ with probability $1 - o(p_n^{-1})$. 
    \end{lemma}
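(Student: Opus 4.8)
The plan is to prove $\hat K = K$ by comparing the eigenvalues of the observed matrix $G_{JJ}$ to those of its noiseless counterpart $[G_0]_{JJ}$ via Weyl's inequality, combining the perturbation bound \eqref{eq:31} with a lower bound on the $K$-th eigenvalue of $[G_0]_{JJ}$. Write $G_{JJ} = [G_0]_{JJ} + E$ with $E := (G - G_0)_{JJ}$. Since all three matrices are symmetric, Weyl's inequality yields $|\lambda_k(G_{JJ}) - \lambda_k([G_0]_{JJ})| \leq \|E\|_\op$ for every $k$, and by \eqref{eq:31} the perturbation obeys $\|E\|_\op \leq C^* K\sqrt{K}\sqrt{\frac{n\log p_n}{N}}$ on an event of probability $1 - o(p_n^{-1})$.

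The crucial structural input is that $[G_0]_{JJ} = \left(1-\frac{1}{N}\right)[D_0]_{J*}[D_0]_{J*}^T = \left(1-\frac{1}{N}\right)A_{J*}WW^TA_{J*}^T$ has rank exactly $K$ with a well-separated spectrum. Using Assumption \ref{ass:1}, so that $\gamma_K(W)^2 = n\,\lambda_K(\Sigma_W) \gtrsim n$, together with the separability Assumption \ref{ass: separability}, which guarantees that the anchor rows lying in $J_- \subseteq J$ keep $A_{J*}$ at full column rank with $\gamma_K(A_{J*})$ bounded below by a constant depending on $K$, the submultiplicativity bound $\gamma_K(A_{J*}W) \geq \gamma_K(A_{J*})\gamma_K(W)$ gives $\lambda_K([G_0]_{JJ}) \geq c' n$ for some constant $c'$ depending on $K$, while $\lambda_k([G_0]_{JJ}) = 0$ for all $k > K$. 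This is the quantitative form of the claim (preceding the lemma) that $[G_0]_{JJ}$ has rank $K$.

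With these two ingredients the conclusion reduces to checking that the threshold $\tau := g_n\sqrt{\frac{n\log p_n}{N}}$ separates the two groups of eigenvalues. For $k > K$, Weyl gives $\lambda_k(G_{JJ}) \leq \|E\|_\op \leq C^*K\sqrt{K}\sqrt{\frac{n\log p_n}{N}} \leq g_n\sqrt{\frac{n\log p_n}{N}} = \tau$, where the last step uses the lower constraint $g_n \geq C^*K\sqrt{K}$; hence no index above $K$ satisfies the defining inequality, so $\hat K \leq K$. For $k \leq K$, Weyl gives $\lambda_k(G_{JJ}) \geq \lambda_K([G_0]_{JJ}) - \|E\|_\op \geq c'n - C^*K\sqrt{K}\sqrt{\frac{n\log p_n}{N}}$. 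The upper constraint $g_n \leq c\sqrt{\frac{nN}{\log p_n}}$ forces $\tau \leq cn$; moreover the mere satisfiability of \eqref{eq: g_n-def-main-text} requires $C^*K\sqrt{K} \leq c\sqrt{nN/\log p_n}$, i.e. $nN \gtrsim \log p_n$ with constant depending on $K$, which makes $C^*K\sqrt{K}\sqrt{\frac{n\log p_n}{N}}$ a small fraction of $c'n$. Choosing $c$ small relative to $c'$ then gives $\lambda_k(G_{JJ}) > \tau$, so $\hat K \geq K$. Combining the two bounds yields $\hat K = K$ on the stated event.

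I expect the main obstacle to be the lower bound $\lambda_K([G_0]_{JJ}) \gtrsim n$: one must verify that restricting the rows of $D_0$ to the thresholded set $J$ neither collapses the rank nor shrinks the smallest nonzero singular value, which is precisely where separability (anchor words lie in $J_- \subseteq J$) and the well-conditioning of $A$ and $\Sigma_W$ enter. Once this is in hand, the role of the two-sided constraint on $g_n$ in \eqref{eq: g_n-def-main-text} becomes transparent: its lower bound places $\tau$ above the noise floor $\|E\|_\op$, handling the indices $k > K$, while its upper bound keeps $\tau$ below the signal scale $c'n$, handling the indices $k \leq K$, so that the threshold cleanly isolates the $K$ informative eigenvalues.
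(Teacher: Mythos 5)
Your argument is the same as the paper's (Corollary \ref{cor:D4}): decompose $G_{JJ}=[G_0]_{JJ}+E$, apply Weyl's inequality with the operator-norm bound \eqref{eq:31} to get $\lambda_{K+1}(G_{JJ})\leq\|E\|_\op\leq g_n\sqrt{n\log p_n/N}$ (hence $\hat K\leq K$), and combine a lower bound $\lambda_K([G_0]_{JJ})\gtrsim Kn$ with the upper constraint on $g_n$ in \eqref{eq: g_n-def-main-text} to get $\lambda_K(G_{JJ})>g_n\sqrt{n\log p_n/N}$ (hence $\hat K\geq K$). The one place where your justification would not go through as written is the step you yourself flag as the main obstacle: you attribute the bound $\gamma_K(A_{J*})\gtrsim 1$ to separability, i.e.\ to the presence of anchor rows of $A$ inside $J_-\subseteq J$. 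Anchor words in $J_-$ are only guaranteed to have row mass $h_j\gtrsim\sqrt{\log p_n/(nN)}$, which tends to zero, so the anchor rows alone cannot deliver a constant lower bound on the smallest singular value. The paper instead obtains $\sigma_K(A_{J*})\geq c\sqrt{K}$ on the event $\mathcal{E}$ (Theorem \ref{lem:A3}(e)) from Assumption \ref{ass:1} together with Weyl's inequality for singular values and the fact that thresholding discards only rows of small aggregate mass, $\|A_{J^c*}\|_F=o(1)$; separability plays no role in this step. With that substitution your proof coincides with the paper's.
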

    The proof can be found in Corollary \ref{cor:D4} of the appendix. In \eqref{eq: g_n-def-main-text}, the quantity $g_n$ needs to be chosen to override the term $C^* K\sqrt{K}$ but cannot converge to $+\infty$ too quickly. Without any prior information on $K$, one can choose $g_n$ to be a quantity that slowly converges to $+\infty$, such as $g_n = 8\log p_n$. If one has prior knowledge on an upper bound for $K$ (for example if $K \leq 30$), the quantity $g_n$ can be determined more specifically. 

    The estimator \eqref{K-hat-def} is based on the bound \eqref{eq:31}, which depends on $K$ and so we need to assume $g_n \geq C^*K\sqrt{K}$. However, one can also show that with probability at least $1 - \frac{1}{n+p}$, 
    \begin{equation}\label{klopp-K-hat-bound}
        \|(D-D_0)_{J*}\|_\op \leq 4\sqrt{\frac{n \log (n+p)}{N}} 
    \end{equation}
    (see Lemma 4 of \cite{klopp2021assigning}). This bound does not depend on $K$. Under similar assumptions on $\sigma_K(A)$ and $\sigma_K(W)$, we can consider the following estimator
    \begin{equation}\label{klopp-K-hat}
        \hat{K}' := \max\left\{k: \sigma_k(D_{J*}) > 4\sqrt{\frac{n \log(p+n)}{N}}\right\}
    \end{equation}
    and also show that, based on \eqref{klopp-K-hat-bound}, $\hat{K}' = K$ with high probability. The advantage of \eqref{K-hat-def} over \eqref{klopp-K-hat} is computational: both Step 2 of Definition \ref{actual-procedure} and \eqref{K-hat-def} use the eigendecomposition of $G_{JJ}$, whereas \eqref{klopp-K-hat} requires us to additionally perform SVD on $D_{J*}$.

    \begin{figure}[h!]
    \centering
    \includegraphics[width=0.9\textwidth]{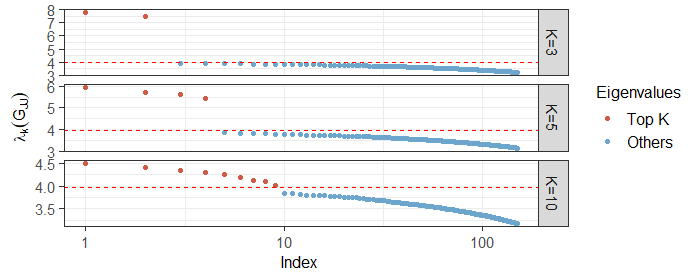}
    \caption{Scree plots of the eigenvalues of $G_{JJ}$ for three synthetic datasets, with $K \in \{3, 5, 10\}$, $n=N = 500$ and $p=5,000$. The x-axis is log-scaled. The red dots represent the largest $K$ eigenvalues (excluding the largest one), while the blue dots represent all other eigenvalues.}
    \label{fig:Estimation K}
\end{figure}

    There are many choices of the quantity $g_n$ that may satisfy \eqref{eq: g_n-def-main-text} when $nN$ is sufficiently large. In practice, the estimation of $K$ may be sensitive to the choice of the eigenvalue cutoff, and moreover real datasets may not always adhere to our assumptions. As Lemma \ref{lem: K-hat} suggests the spectrum of $G_{JJ}$ is useful for estimating $K$, we note that it is often possible to determine the eigenvalue cutoff by inspecting the scree plot of $G_{JJ}$'s eigenvalues. Figure \ref{fig:Estimation K} displays the scree plots for several synthetic datasets with different values of $K$. In some situations, the top $K$ eigenvalues of $G_{JJ}$ are separated from the other eigenvalues by a discernible gap, thus helping one to visually determine $K$. When such a gap is unavailable, one can use the Kneedle algorithm \citep{satopaa2011finding} to find the point of maximum curvature of the scree plot; this is a common technique to determine the number of principal components in principal component analysis.

\section{Experiments with synthetic data}\label{sec:experiments}
In this section, we assess the empirical performance of our estimator through a series of synthetic experiments\footnote{The code for our method and all the experiments presented in this section can be found on Github at the following link: \url{https://github.com/yatingliu2548/topic-modeling}}. The controlled environment provided by these experiments allows us to better understand the behavior of our method in different parameter regimes. 

Throughout this section, we benchmark our estimator's performance against the following well-established methods: (a) Latent Dirichlet Allocation \citep{blei2003latent}; (b) the anchor word recovery (AWR) approach in \cite{arora2012learning}, a procedure based on the non-negative factorization of the second-order moment $DD^T$; (c) the Topic-SCORE procedure in \cite{ke2022using}; and (d) the Sparse Topic Model solver proposed in \cite{bing2020optimal}. We note the following regarding the procedure in \cite{bing2020optimal}:
\begin{itemize}
    \item This procedure removes infrequently occurring words in the same manner as ours, but with the threshold $\alpha \sqrt{\frac{\log p_n}{nN}}$ in \eqref{J-def} replaced by $\frac{7\log p_n}{nN}$. This threshold is lower than ours if $\frac{\log p_n}{nN}$ is sufficiently small. In practice, however, the constant 7 used in their threshold is quite large and thus leads to excessive thresholding in some of our datasets, especially when the word frequencies decay according to Zipf's law. 
    \item This procedure requires a list of anchor words for each topic $k \in [K]$ as input, rather than just the number of topics $K$. We therefore need to estimate a partition of anchor words using a special procedure which is included in their original implementation. Clearly, whether the anchor words are estimated and partitioned correctly has an impact on the overall estimation of $A$. 
\end{itemize}
We therefore caution the reader that these factors put the Sparse Topic Model solver of \cite{bing2020optimal} at a comparative disadvantage in our experiments.\\


\xhdr{Data generation mechanism}
For simplicity, we ensure all documents are of the same length $N$. For each experiment, we create a document-to-topic matrix $W \in \mathbb{R}^{K \times n}$ by independently drawing the columns $W_{*i} \in \mathbb{R}^K, i = 1, \dots, n$ from the Dirichlet distribution with parameter $\boldsymbol{\alpha}_{W} = \mathbf{1}_K$.  We generate the matrix $A \in \mathbb{R}^{p\times K}$ either without anchor words or with 5 anchor words per topic, in which case whenever word $j$ is an anchor word for topic $k$, we set $A_{jk} = \delta_\text{anchor}$ where $\delta_\text{anchor} \in \{0.0001, 0.001, 0.01\}$. In order to mimic the behavior of real text data, the entries of column $k$ of $A$ corresponding to non-anchor words are then chosen such that they decay according to Zipf's law. This means for each column $k$ of $A$, we ensure that the frequency $f_{(j)}$ of the $j^\text{th}$ most frequent non-anchor word follows the pattern
\begin{equation}\label{eq:zipf}
    f_{(j)} \propto \frac{1}{( j + b_{\text{zipf}})^{a_{\text{zipf}}}}
\end{equation} 
where $a_\text{zipf} = 1, b_\text{zipf} = 2.7$. Each column of $A$ is subsequently normalized to unit $\ell_1$-norm. The pattern \eqref{eq:zipf} has indeed been empirically shown to hold approximatively for word frequencies in real datasets; see \cite{zipf2013psycho} and \cite{piantadosi2014zipf}. Figure \ref{fig:zipfs} in Appendix \ref{appendix:experiment} illustrates the distribution of word frequencies generated under our data generation mechanism.  

Having specified both $A$ and $W$, the observation matrix $D$ is then generated according to the pLSI model described in Section \ref{section: pLSI}. We fit our method and the four benchmarks while varying the values of $n, p, N,$ and $K$. In all of our experiments, unless otherwise specified, the constant $\alpha$ in the threshold  \eqref{J-def} is fixed at $\alpha=0.005$. We evaluate the estimation error of all methods relative to the true underlying $A$ by computing the $\ell_1$ loss per topic 
   $$ \mathcal{L}_1(\hat{A}, A) = \min_{\Pi \in \mathcal{P}} \frac{1}{K}\| \hat{A}\Pi - A \|_1 $$
   where $\mathcal{P}$ denotes the set of all $K \times K$ permutation matrices. \\

\xhdr{Varying $(p, N, K)$} We first provide a snapshot of our method's relative performance in different parameter regimes by fixing $n = 500$ and varying $(p,N, K)$. Here we specify 5 anchor words per topic and set the anchor word frequency to $\delta_\text{anchor} = 10^{-3}$. The median $ \mathcal{L}_1(\hat{A}, A)$-errors over 50 trials are plotted in Figure \ref{fig:l1-loss-all}. As Figure \ref{fig:l1-loss-all} shows, our method (in blue) outperforms all other methods in most parameter regimes considered here.
Interestingly, 
the estimation errors of AWR and LDA often appear constant as a function of document length $N$. As $N$ increases, the errors from both Topic-SCORE and our method display a clearer pattern of consistency relative to AWR and LDA; this observation is also made by \cite{ke2022using} in a similar experimental setup. However, our method's errors decay to zero much faster than all other benchmarks when the vocabulary size is large ($p \in \{5000, 10000\}$). 

\begin{figure}[ht!]
    \centering
    \includegraphics[width=\textwidth]{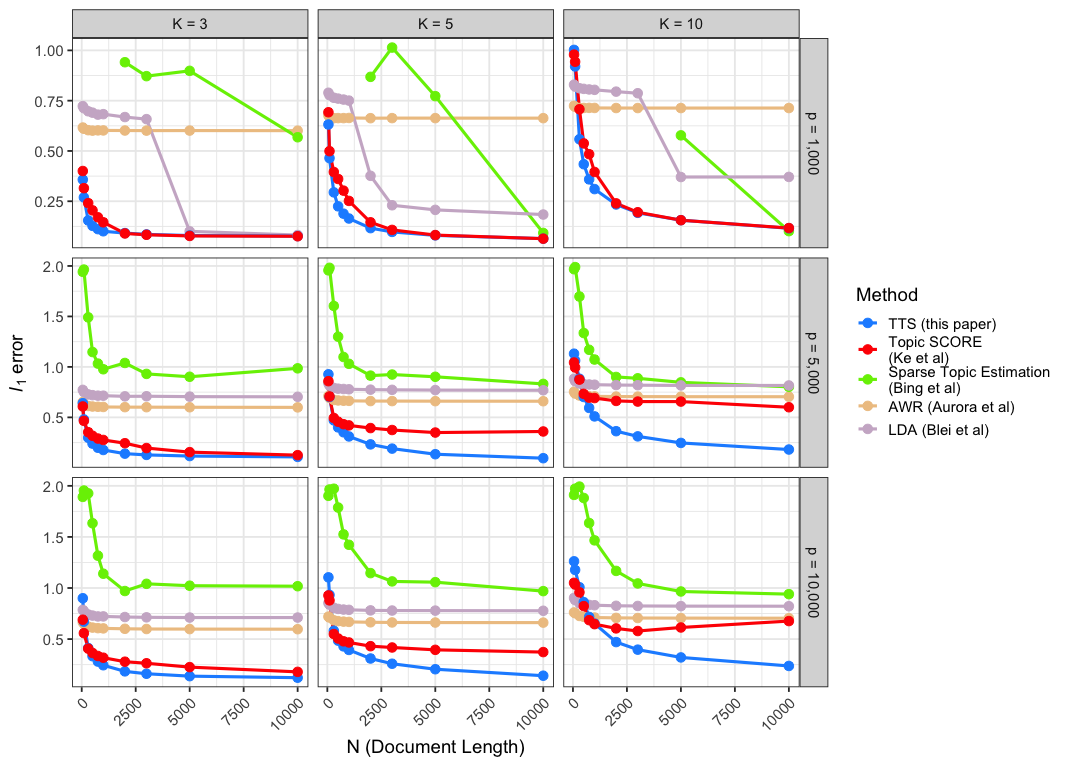} \caption{Median $\mathcal{L}_1(\hat{A}, A)$ errors for all methods based on 50 independent trials. Here the number of documents is fixed ($n = 500$). In each panel, the errors are plotted as a function of document length $N$ (log-scaled on the x-axis). The panels display results for different values of $(p,K)$, as specified by row and column labels. }
    \label{fig:l1-loss-all}
\end{figure}
We note that in these experiments, the approach proposed by \cite{bing2020optimal} does not perform very well. In particular, for small $p$ and small $N$, the number of topics returned by this method is smaller than the expected number of topics $K$, which prevents us from comparing its results with all four other methods. On inspection, we find that this is due to over-thresholding of the vocabulary, which leaves too few words to reliably estimate the matrix $A$. To provide a fair comparison with \cite{bing2020optimal}, we also compare all five methods using the data generation mechanism proposed 
in \cite{bing2020optimal}. This means that the non-anchor entries of each column of $A$ no longer display the Zipf's law pattern \eqref{eq:zipf}, but instead are generated from a Uniform distribution. We note that this data generation mechanism ensures all the non-anchor words for each topic are of roughly equal frequency and is thus also favorable to Topic-SCORE \citep{ke2022using}, which assumes $\min_{j\in [p]} h_j \geq c\bar{h}$ where $\bar{h}:= \frac{1}{p} \sum_{j=1}^p h_j$. The results are displayed in Figure~\ref{fig:enter-label} of Appendix \ref{appendix:experiment}. Under this uniform data generation mechanism, our method (with $\alpha = 0.005$) 
displays identical performance relative to Topic-SCORE, and both SCORE-based methods still perform well relative to other benchmarks in most parameter regimes. 
As expected, we also find that fewer words are removed by thresholding, in comparison with the Zipf's law setting where our $\ell_q$-sparsity assumption \eqref{eq: lq-sparsity-def} is more likely to hold with small $s$ and many more words occur infrequently. These experiments empirically suggest that 1) TTS improves upon the performance of Topic-SCORE when the columns of $A$ exhibit a Zipf's law (or $\ell_q$-sparsity) decay pattern, and 2) our procedure's performance remains reasonable and is similar to that of Topic-SCORE when the $\ell_q$-sparsity assumption \eqref{eq: lq-sparsity-def} is violated. \\

\xhdr{Varying the number of documents $n$} We now focus on the effect of varying $n$ on the estimation error.  Fixing this time $N=500$ and $p=10,000$, the $\mathcal{L}_1(\hat{A}, A)$-errors are presented in Figure~\ref{fig:function-of-n} with $K=5$ and $K=10$. Our method (in blue) consistently outperforms other methods and also displays a clear trend of consistency as $n$ increases. When $K$ increases, the estimation problem becomes more difficult due to the larger number of parameters, and so more documents are needed to achieve a reasonable performance. Nonetheless, our method still performs well when $K = 10$ and $n$ is reasonably large, whereas the error from Topic-SCORE decays to zero very slowly with this larger value of $K$.\\


\begin{figure}[ht!]
    \centering
    \includegraphics[width=\textwidth]{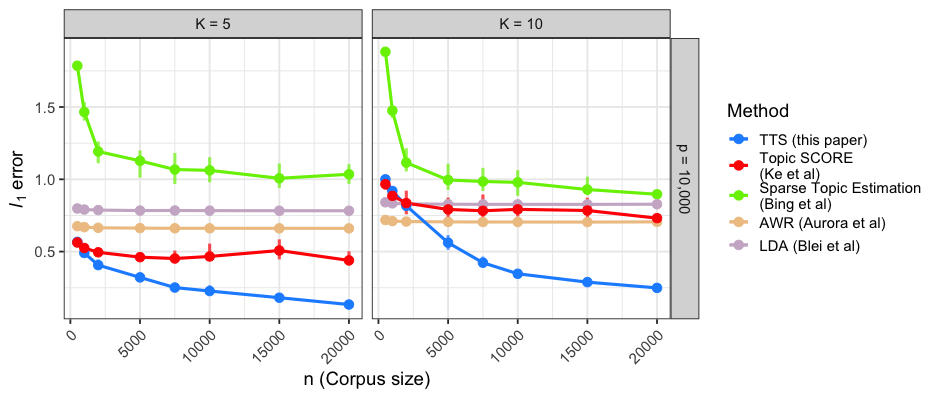}
    \caption{$\mathcal{L}_1(\hat{A}, A)$-errors from all methods as a function of $n$, for $K \in \{5, 10\}$ with $p$ and $N$ fixed. Vertical error bars centered about the median errors indicate the errors' interquartile ranges computed based on $50$ independent trials.}
    \label{fig:function-of-n}
\end{figure}

\xhdr{Varying the dictionary size $p$} Figure \ref{fig:function-of-p} shows how the $\mathcal{L}_1(\hat{A}, A)$-errors vary as the vocabulary size $p$ increases, with $N = 500$, $K = 5$ and $n \in \{1000, 5000, 10000\}$. We do not include the errors from the procedure in \cite{bing2020optimal} as they are higher than those of LDA. As expected, the errors for all methods increase with the dictionary size $p$. However, our method mostly outperforms the other benchmarks, even in some high-dimensional parameter regimes where $p > \max(n, N)$. The performance of Topic-SCORE only converges to ours when $p$ is too large relative to $n$, a setting which is challenging for all methods.

Additionally, our method also outperforms most other benchmarks in terms of computational runtime when $p$ is large. We provide in Figure~\ref{fig:time} of Appendix \ref{appendix:experiment} a visualization of how the runtimes for all methods scale with $p$. Our method's runtime is similar to that of AWR and is consistently better than that of Topic-SCORE, primarily due to our thresholding of infrequent words before performing eigendecomposition.\\ 

\begin{figure}[ht!]
\begin{subfigure}[t]{0.48\textwidth}
      \centering
    \includegraphics[width=\textwidth]{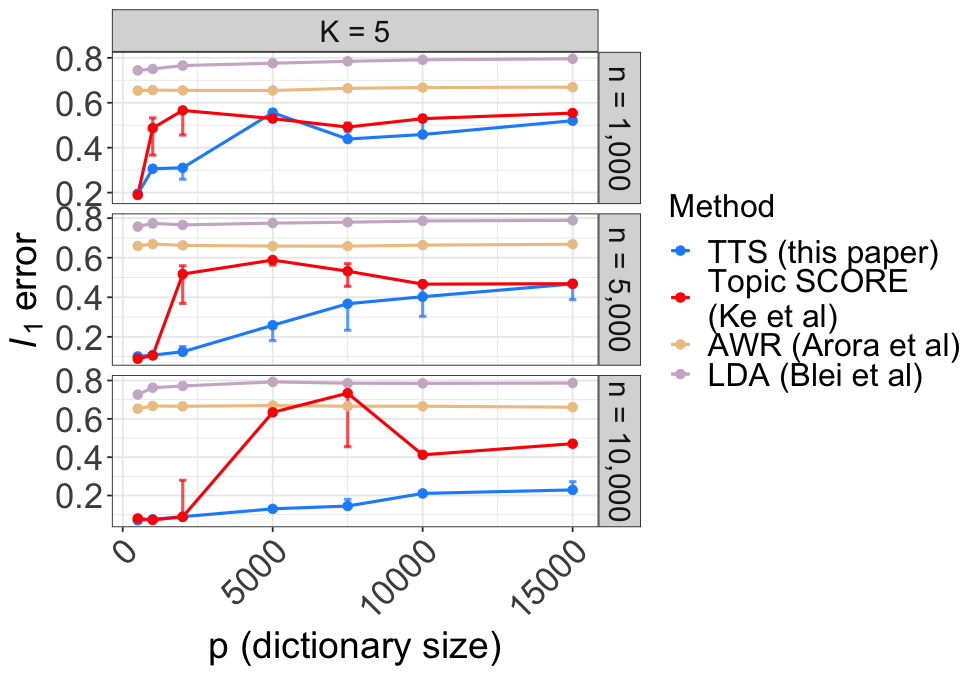}
    \caption{$\mathcal{L}_1(\hat{A}, A)$-errors as a function of $p$, with $K =5$ and $N = 500$. Results are obtained based on $15$ independent trials.}
    \label{fig:function-of-p}  
\end{subfigure}
\hspace{0.5cm}
\begin{subfigure}[t]{0.48\textwidth}
    \centering
    \includegraphics[width=\textwidth, height=5.2cm]{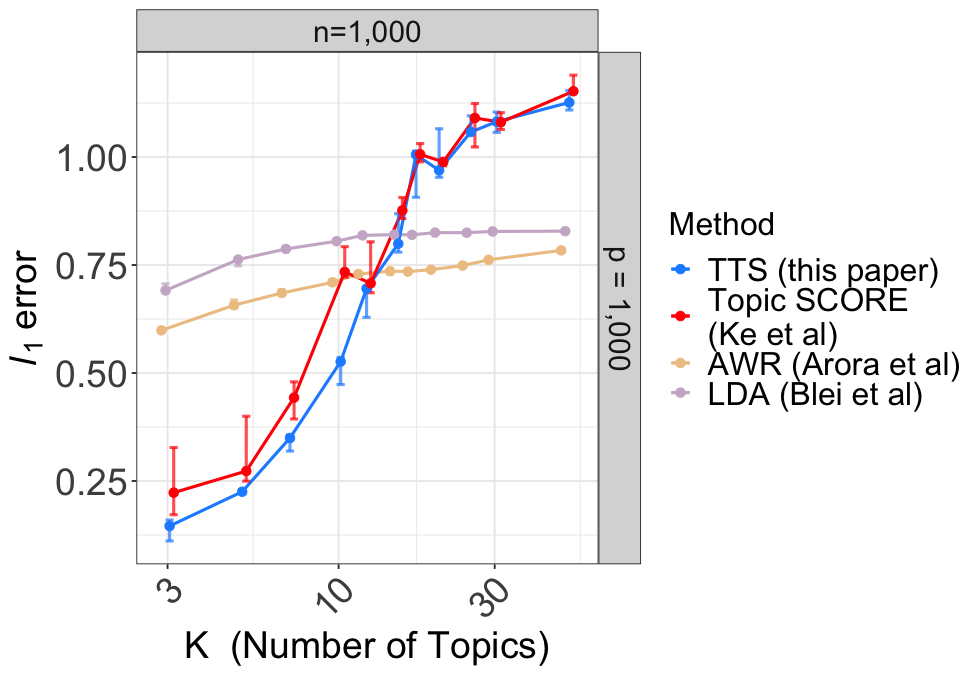}
    \caption{$\mathcal{L}_1(\hat{A}, A)$-errors as a function of $K$, with $n= p =10^3$ and $N=500$. Results are obtained based on $50$ independent trials.}
    \label{fig:function_of_K}
\end{subfigure}
\caption{$\mathcal{L}_1(\hat{A}, A)$-errors as a function of the dictionary size $p$ (left) and the number of topics $K$ (right). Vertical bars around median errors indicate interquartile ranges.}
\end{figure}

 \xhdr{Varying the number of topics $K$} Figure \ref{fig:function_of_K} shows how the $\mathcal{L}_1(\hat{A}, A)$-errors vary as $K$ increases, with $n = p = 1000$ and $N = 500$. The main observation here is that LDA and AWR may be preferable to our method if $K$ is \textit{a priori} known to be large while the dataset we possess is relatively small. As Figure \ref{fig:function_of_K} illustrates, the SCORE-based methods perform worse than LDA and AWR when $K > 15$, but this is because the number of documents is quite small in this experiment ($n = 1000$). If $n$ and $N$ are large enough, one can expect our method to accommodate a larger number of topics; see Figure \ref{fig:function-of-n} for an illustration.\\

\xhdr{Relaxation of the separability assumption} Section \ref{sec: relax} suggests that the vertex hunting algorithm from \cite{javadi2020nonnegative} may reduce the vertex hunting error in some situations when separability fails to hold. Figure \ref{fig:VHMethod} compares the overall $\mathcal{L}_1(\hat{A}, A)$-errors as a function of $n$ when we use Successive Projection (SP), Sketched Vertex Analysis (SVS) and Archetype Analysis (AA) in the vertex hunting step of TTS. As expected, when there are no anchor words, using the AA algorithm rather than SP/SVS can significantly improve the estimation of $\hat{A}$, especially when $K$ is large. Again, this is because SP and SVS are not designed for non-separable point clouds and also perform better with small $K$. In fact, the AA algorithm also often works well under separability, since the $\alpha$-uniqueness condition in \cite{javadi2020nonnegative} is satisfied. The main trade-off for this stronger statistical performance is the computational cost of solving the non-convex optimization problem required by AA. Nonetheless, the fact that our method accommodates non-separable datasets makes TTS more widely applicable compared to methods based on anchor words identification, such as those proposed in \cite{bing2020optimal} and \cite{arora2012learning}. \\

\begin{figure}
    \centering
\includegraphics[width=1.0\textwidth]{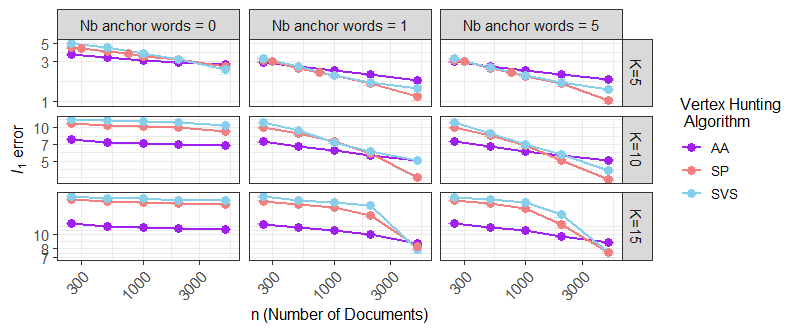}
    \caption{$\mathcal{L}_1(\hat{A}, A)$-errors as a function of $n$ when we use three different vertex hunting algorithms in the vertex hunting step of TTS. Here, $p  = 10^4$ and $N = 500$ are fixed, and $K \in \{5,10, 15\}$. The number of topics per document is either 0, 1 or 5. Results are averaged over 50 independent experiments.}
    \label{fig:VHMethod}
\end{figure}


\xhdr{The importance of appropriate thresholding} Figure \ref{fig:alpha_vs_l1} shows how the $\mathcal{L}_1(\hat{A}, A)$-error varies as the threshold level in \eqref{J-def} increases from zero, and Figure \ref{fig:alpha_vs_percent} shows the corresponding percentage of words removed. For this dataset, the performance of our method when $\alpha = 0$ (no thresholding) is not too different from Topic-SCORE. As the threshold level increases, infrequent words that contribute noise to the point cloud are removed, thus leading to an improvement in the estimation of $A_{J*}$. However, an excessively high threshold means we set too many rows of $A$ to zero, and so the error from estimating $A_{J^c*}$ becomes higher. This explains the pattern observed in Figure \ref{fig:alpha_vs_l1}, which demonstrates the importance of choosing a balanced threshold in our procedure. 

\begin{figure}[ht!]
    \centering
         \begin{subfigure}[t]{0.5\textwidth}
\centering
    \includegraphics[width=\textwidth]{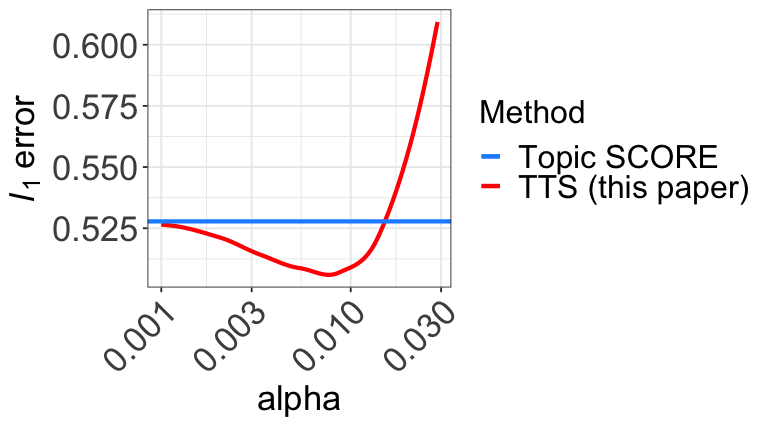}
    \captionsetup{width=0.8\textwidth}
         \caption{ Average $\mathcal{L}_1(\hat{A}, A)$-error as a function of the threshold parameter $\alpha$. }
         \label{fig:alpha_vs_l1}
     \end{subfigure}
\begin{subfigure}[t]{0.48\textwidth}
\centering
    \includegraphics[width=0.8\textwidth, height=4.2cm]{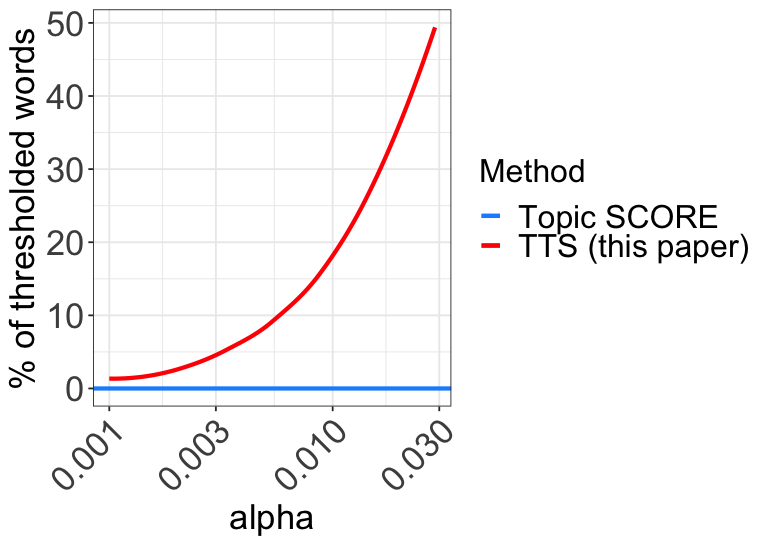}
    \captionsetup{width=0.8\textwidth}
         \caption{Corresponding percentage of the words discarded by thresholding as a function of $\alpha$.}
         \label{fig:alpha_vs_percent}
     \end{subfigure}
    \caption{$\mathcal{L}_1(\hat{A}, A)$-error averaged over $20$ independent trials and the percentage of words removed as $\alpha$ increases, for a synthetic dataset with $p = 5000, n= N = 500$.}
    \label{fig:thresholds}
\end{figure}

As we mentioned, the universal parameter $\alpha$ should be independent of $(p,n,N, K)$. Our recommended value of $\alpha = 0.005$ is obtained based on numerous such experiments with synthetic data where we vary the values of $(p,n, N, K)$. This choice of $\alpha$ also works well in all real data applications of Section \ref{sec: real-data}, where several parameter regimes are involved.\\

\xhdr{Additional experiments and conclusion} We also evaluate the impact of other aspects of the data generation mechanism on our estimator's performance. We find that changing $\delta_{\text{anchor}}$, which controls the frequency of anchor words, does not significantly impact the overall performance of TTS. This is an advantage of SCORE-based methods over methods that rely on anchor words identification, which are often affected by the frequency of anchor words both in theory and in practice. Additionally, when we increase the parameter $a_\text{zipf}$ in \eqref{eq:zipf}, we find that our estimator's performance improves significantly. This is not surprising as a larger $a_\text{zipf}$ means the ordered entries of $A$'s columns decay to zero faster, and our theoretical results also show that a strong sparsity regime (when $q$ is close to 0 in Assumption \ref{ass:sparsity}) is favorable to our method. Further details about these experiments are deferred to Appendix \ref{appendix:experiment}. Finally,  we check the performance of our method on 
 a set of semi-synthetic experiments based on the Associated Press dataset 
(included in the R package \texttt{tm} \citep{feinerer2015package}), thereby allowing us to test a different data generating mechanism. 
 The results are also presented in Appendix \ref{appendix:experiment}.

Overall, we have illustrated that our method {\it(a) performs well in a wide variety of parameter regimes, and notably in the high-dimensional setting where $p$ is large}, and {\it(b) performs well even if our sparsity assumption is violated} (see the discussion on the uniform data generation mechanism, and also note that we use a weak sparsity regime with $a_\text{zipf} \approx 1$ in most of our experiments). This makes our method applicable to the vast majority of real-world text datasets, which often are high-dimensional and exhibit Zipf's law decay. However, alternative methods such as LDA and AWR may still be competitive in some settings, especially when the pLSI model fails to hold or if the number of documents $n$ and the document length $N$ are unusually small relative to the number of topics $K$.

\section{Practical applications in text analysis and beyond}\label{sec: real-data}

In this section, we deploy our method on real-world datasets. Given the results of the previous section, we focus here on the comparison of our method with Topic-SCORE \citep{ke2022using} and LDA \citep{blei2003latent}. 

Real datasets seldom have ground truth for $A$, and some may even lack an obvious choice for the number of topics $K$. Consequently, in this section we evaluate the estimators' performance using, when appropriate, the following metrics:
\begin{enumerate}[(a)]
    \item {\it Topic Resolution} as a measure of topic consistency. We fit each estimator on two disjoint halves of the data and report the cosine similarity between estimated topics (after an appropriate permutation of the columns of $A$). Mathematically, letting $\hat{A}^{(i)}, i \in \{1, 2\}$ denote the estimated topic-word matrices obtained for each half of the data, we define the ``average topic resolution'' $\eta$ as the mean cosine similarity (a classical similarity metric in natural language processing) between aligned topics:

    \begin{equation}\label{equation: topic resolution}
        \eta =  \max_{\sigma \in \Pi_K} \frac{1}{K}\sum_{k=1}^K \frac{\hat{A}^{(1)\top}_{*k} \hat{A}^{(2)}_{*
        \sigma(k)}}{\| \hat{A}^{(1)}_{*k}\|_2\| \hat{A}^{(2)}_{*\sigma(k)}\|_2 },
    \end{equation}
        where $\Pi_K$ denotes the set of all permutations of $[K]$.
    Thus, higher resolution indicates better-defined and more consistent topic vectors (although this does not necessarily mean better $\ell_1$-error).
    \item {\textit{Multiscale Topic Refinement and Coherence} (\cite{fukuyama2021multiscale}):} In the absence of an obvious number of topics $K$,  we fit the method for multiple values of $K$ and analyze the resulting topic hierarchy to check the stability of our estimator. 
    We follow in particular the methodology of \cite{fukuyama2021multiscale}, which was developed to guide the choice of an appropriate number of topics $K$ for LDA \citep{blei2003latent} by investigating the relationships among topics of increasing granularity. Given a hierarchy of topics, the method evaluates which topics consistently appear, constantly split, or are merely transient. 
    We use these tools here (and its associated package \texttt{alto} \citep{fukuyama2021multiscale}) to analyze our estimator. The method of \cite{fukuyama2021multiscale} starts by computing the alignment of topics across the hierarchy using the transport distance: for each $K$, this method computes how the mass of topic $j \in \{1, \cdots, K\}$ is split amongst the $K+1$ topics at the next level of the hierarchy. We refer the reader to the original work by \cite{fukuyama2021multiscale} for a more detailed explanation of topic transport alignment. Once the relationships between consecutive topic models have been established, the method of \cite{fukuyama2021multiscale} allows visualization of (a) topic refinement (i.e., whether topics increase in granularity, as indicated by a small number of ancestors in the hierarchy; or conversely, whether topics are perpetually recombined from one level of the hierarchy to the next); and (b) topic coherence (whether a topic appears across multiple values of $K$). We choose here to favour methods with improved topic coherence and topic refinement, since there are markers of topic stability.
\end{enumerate}
We explore the comparison between our method, LDA and Topic-SCORE under diverse parameter  regimes (with varying $n$, $N$ and $p$).

\subsection{Research articles (high \texorpdfstring{$p$}{p}, high \texorpdfstring{$n$}{n}, low \texorpdfstring{$N$}{N})}
For our first experiment, we consider a corpus of 20,140 research abstracts belonging to (at least) one of four categories: Computer Science, Mathematics, Physics and Statistics\footnote{The data is available on Kaggle at \hyperlink{https://www.kaggle.com/datasets/blessondensil294/topic-modeling-for-research-articles/code?resource=download }{this link}. Although the original data set comprises six topics (with the addition of Quantitative Biology and Finance), due to the low representation of these last two topics ($<4\%$ of the data), we drop them from our analysis.}. 
After pre-processing of the data (including the removal of standard stop words, numbers, and punctuation), our dataset involves a dictionary of size $p=81,649$  and $n=20,140$ documents with an average document size of $N=157$ words.

 \par We first evaluate the topic consistency of all methods in estimating the topic-word matrix $A$ using the mean topic resolution defined in equation \eqref{equation: topic resolution}. Table~\ref{tab:nlp topic resolution} displays  the average topic resolution over 25 random splits of the data.
        \begin{table}[ht]
            \centering
            \begin{tabular}{c|cc}
            \hline
               Methods  & Average Topic Resolution($\eta$) & Interquantile range \\
               \hline LDA (Blei et al)& 0.304 &(0.270,0.330) \\
                 TTS (this paper)& 0.332 &(0.310,0.360) \\
                 Topic-SCORE (Ke et al) &0.145& (0.093,0.179)\\
               \hline
            \end{tabular}
            \caption{Average Topic Resolution on research article data. The interquartile range for the average topic resolution was computed over 25 random splits of the data.}
            \label{tab:nlp topic resolution}
        \end{table}
As highlighted in the introductory paragraph to this section, topic resolution can be taken as an indicator of the stability of the estimator of $\hat{A}$ between two separate portions of the data. A method that produces higher topic resolution with a narrower interquartile range indicates a more stable estimation of the topic-word matrix $A$. As shown in Table \ref{tab:nlp topic resolution}, our approach consistently outperforms LDA and Topic-SCORE on this metric; it offers the highest average topic resolution score. Topic-SCORE's performance exhibits more significant fluctuations, as indicated by its larger interquartile range. 
\par Taking a closer look at the estimation of $A$, we consider the 10 most representative words generated by each of the three methods for every topic (obtained by selecting the top 10 largest entries in each column of $\hat{A}$). The results are presented in Tables \ref{table: common words ours}, \ref{table: common words LDA}, and \ref{table: common words tracy}. For the topics of Computer Science and Statistics, the top 10 most representative words produced by our method agree with 70\% of LDA's most representative words in the corresponding topics. There is much less agreement for the topic of Physics, but upon closer inspection we find that some of the words produced by our method in that category (such as `magnetic', `energy') are more indicative of the topic of Physics, whereas all of the top 10 words for Physics produced by LDA are generic words that can appear in other categories. 

In contrast, the results of Topic-SCORE (Table \ref{table: common words tracy}) seem to diverge substantially from those of LDA and our method. It appears that the top 10 most representative words for Physics, Mathematics and Statistics from Topic-SCORE are dominated by infrequently occuring words and foreign words; the foreign words can be traced back to a few rare abstracts written in English and followed by a foreign language translation. This supports our hypothesis that Topic-SCORE amplifies the effects of infrequent words, unless significant \textit{ad hoc} data pre-processing (removal or merger of rare words, and removal of documents with significant numbers of rare words) is applied. 

\begin{table}[ht]
\centering
\begin{tabular}{c|l}
  \hline
 & Top 10 most representative words per topic  \\ 
  \hline
Computer Science & ``learning" ``network"  ``networks" ``model"    "can"      ``neural"  \\ &
 ``deep"     ``using"    "models"   ``data"   \\ 
 
  Physics & ``model"    ``can"      ``system"   ``field"    ``energy"   ``systems" \\ &
 ``magnetic" ``models"   ``using"    ``phase"   
 \\ 
  Mathematics & ``problem"   ``can"       ``algorithm" ``show"      ``method"   
 ``paper" \\&    ``results"   ``also"      ``time"      ``using"   \\ 
  Statistics & ``data"     ``model"    ``can"      ``learning" ``using"    ``models"  \\&
 ``method"   ``approach" ``based"    ``paper"  \\ 
   \hline
\end{tabular}
\caption{Most common words found by our method}
\label{table: common words ours}
\end{table}
\begin{table}[ht]
\centering
\begin{tabular}{c|l}
  \hline
 & Top 10 most representative words per topic   \\ 
  \hline
Computer Science & ``data"     ``network"  ``learning" ``networks" ``can"      ``model"   
 \\ &``using"    ``new"      ``paper"    ``based"    \\ 
 
  Physics & ``show"        ``data"        ``analysis"    ``two"         ``can"        
 ``problem" \\&    ``results"     ``field"       ``system"      ``performance"
 \\ 
  Mathematics & ``can"      ``used"     ``models"   ``using"    ``model"    ``paper"   
 \\& ``number"   ``method"   ``proposed" ``approach"  \\ 
  Statistics &  ``model"    ``results"  ``show"     ``can"      ``learning" ``method"  
\\& ``using"    ``based"    ``data"     ``also" \\ 
   \hline
\end{tabular}
\caption{Most common words found by LDA}
\label{table: common words LDA}
\end{table}
\begin{table}[ht]
\centering
\begin{tabular}{c|l}
  \hline
 & Top 10 most representative words per topic   \\ 
  \hline
Computer Science & ``data"     ``can"      ``model"    ``using"    "learning" ``show"    
\\& ``results"  ``method"   ``paper"    ``also"     \\ 
 
  Physics & ``della"      ``quantum"    ``theory"      ``del"         ``year"       
 ``teoria"  \\&    ``quantistica" ``per"         ``nel"        ``delle"  
 \\ 
  Mathematics &  ``die"         ``der"         ``collectors"  ``problem"     ``able"       
 ``assumptions" ``coupon"   \\&   ``wir"         ``based"      ``one"        \\ 
  Statistics & ``der"             ``und"             ``music"          
 ``automatischen"   ``learning"        ``sheet"          
\\&``die"             ``musikverfolgung" ``deep"   
``algorithms"  \\ 
   \hline
\end{tabular}
\caption{Most common words found by Topic SCORE}
\label{table: common words tracy}
\end{table}

 In order to further investigate the performance gap between TTS and Topic-SCORE, we visualize the point cloud from both methods in Figure \ref{fig:point-cloud-nlp}. As expected, we observe that the Topic-SCORE point cloud is severely stretched by a set of low-frequency words that include several foreign words.
Again, with the presence of many rare words in the dataset, 
the lack of thresholding and the use of the pre-SVD multiplication step in Topic-SCORE contribute to a significant distortion of the point cloud. In comparison, the thresholding approach we adopt yields a more compact point cloud. As demonstrated in Figure \ref{fig:point-cloud-nlp}b, our method effectively recaptures the essential vertices of the point cloud simplex.  A closer look at the words surrounding each vertex, as shown in Figure \ref{fig:sub2nlppointcloud}, allows us to easily identify which simplex vertex belongs to which topic (Physics, Math, Computer Science and Statistics when moving in the anticlockwise direction). Under this ``large $p$'' regime and in the presence of a myriad of rare words that may introduce significant noise, our method not only distinguishes words effectively but also clusters them into well-defined topics. 
\begin{figure}[htbp]
    \centering
    \begin{subfigure}[b]{0.45\textwidth}
        \includegraphics[width=\textwidth]{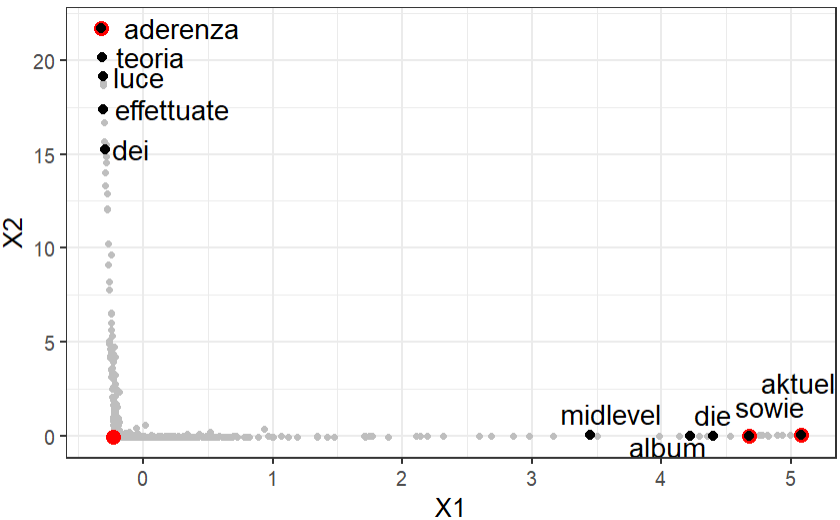}
        \caption{Point cloud for $K=4$ from Topic-SCORE}
        \label{fig:sub1nlppointcloud}
    \end{subfigure}
    \hfill
    \begin{subfigure}[b]{0.45\textwidth}
        \includegraphics[width=\textwidth]{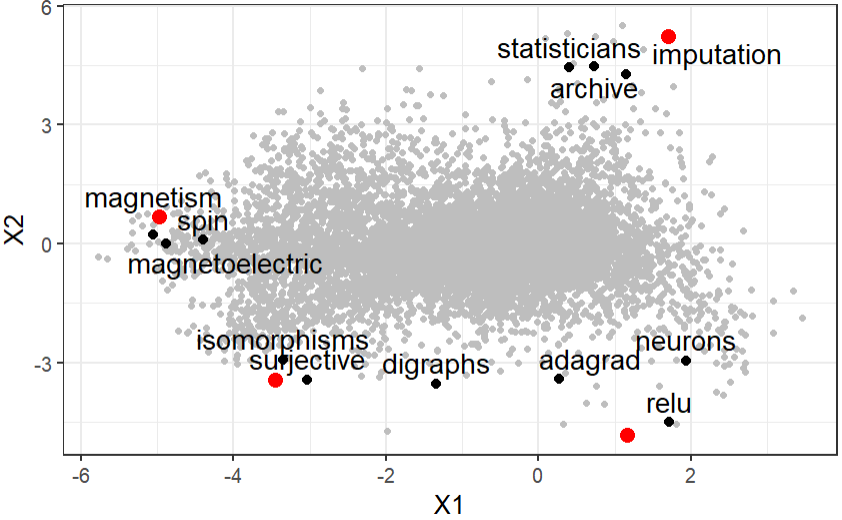}
        \caption{Point cloud for $K=4$ from our method}
        \label{fig:sub2nlppointcloud}
    \end{subfigure}
    \caption{Comparison of the 3-dimensional point clouds from TTS (right) and Topic-SCORE (left), projected on the first two axes for visualization. Estimated vertices are colored red, and the point clouds are represented by gray dots. Most  outlying words in Topic-SCORE's point cloud are thresholded away by TTS, thus contributing to higher point cloud stability for our method.}
    \label{fig:point-cloud-nlp}
\end{figure}

\par We note that this dataset comes with manually curated topic labels for each document. As a final verification, we analyze the performance of the different methods when used for recovering the ground truth labels for each document. Having estimated $A$, it is quite natural in light of the pLSI model to perform regression of $D$ against $\hat{A}$ in order to yield an estimator of $W$. To this end, we use the estimation procedure for $W$ in \cite{ke2022using}, where the problem of estimating $W$ given $\hat{A}$ is reduced to a weighted constrained linear regression problem:
\begin{equation}\label{est-procedure-for-W-Ke}
    \forall i \in [n], \quad  \hat{W}_{*i} = \text{argmin}_{\omega \in [0,1]^{K}} \frac{1}{p} \sum_{j=1}^p \frac{1}{M_{jj}}( D_{ji} - \sum_{k=1}^K\hat{A}_{jk}\omega_{ki})^2
\end{equation}
We strongly emphasize that the aim of this experiment is to evaluate the estimation of $A$, and we do not claim here that our method provides state-of-the-art results in the estimation of $W$. Other potentially better estimation procedures are available for $W$, many of which do not require estimating $A$ first. Rather, as topic labels are available for this dataset, we use this simple estimation procedure for $W$ via $\hat{A}$ as another way of comparing the quality of $\hat{A}$ obtained from TTS, Topic-SCORE and LDA. Since the $\hat{W}$ obtained from \eqref{est-procedure-for-W-Ke} depends on $\hat{A}$ as input, it stands to reason that a better estimation procedure for $A$ may be reflected in a better agreement between $\hat{W}$ and the provided topic labels for each document, if we use \eqref{est-procedure-for-W-Ke} to estimate $W$. 

Let $y_{ki} = 1$ if document $i$ is labeled as belonging to topic $k$ (and $y_{ki}=0$ otherwise). We compute the average $l_1$ distance $ \mathcal{D}(\hat W, y)$ and cosine similarity $ S_k$ between the permuted matrix $\hat W$ and the provided labels $y$ for each topic $k$ as follows: 
    \begin{equation}
        \mathcal{D}(\hat W, y):=\min_{\sigma \in \Pi_K}\frac{1}{nK}\sum_{ki}|\hat W_{\sigma(k)i}-y_{ki}|, \quad S_k =\max_{\sigma\in \Pi_K} \frac{\sum_{i=1}^n\hat{W}_{\sigma(k)i}y_{ki}}{\|\hat{W}_{\sigma(k)*}\|_2 \|y_{k*}\|_2}
    \end{equation}
  Here, a smaller value of  the $l_1$ distance or a larger value of the cosine similarity score between $y$ and $\hat{W}$ indicate greater alignment with the provided topic labels.  The results are displayed in Table~\ref{table: summary of nlp}.

\begin{table}[ht]
\centering
\begin{tabular}{c|ccccc|c}
  \hline
 Methods & $S_{\mathsf{CS}}$  & $S_{\mathsf{Phys}}$& $S_{\mathsf{Math}}$& $S_{\mathsf{Stat}}$ & $\bar S$ &
$\mathcal{D}(\hat W, y)$ \\ 
  \hline
LDA(Blei et al) &0.671& 0.576&0.534& 0.493 &0.569& 0.403\\ 
 
 TTS(this paper) &0.610& 0.748&0.636&0.494&0.622&0.305\\ 
 
 Topic SCORE(Ke et al) &0.670& 0.545&0.588&0.373&0.544& 0.348\\

   \hline
\end{tabular}
\caption{The evaluation of $\hat W$ obtained via estimating $A$ first by using the three methods. $\bar S$ is the average cosine similarity across all $K$ topics}
\label{table: summary of nlp}
\end{table}

Table \ref{table: summary of nlp} indicates that our method improves the estimation of $W$ overall and provides the best topic alignment on average, when using \eqref{est-procedure-for-W-Ke} to estimate $W$. This suggests that our procedure yields a more accurate estimator of $A$.  

\subsection{Single cell analysis (low \texorpdfstring{$p$}{p}, high \texorpdfstring{$n$}{n}, low \texorpdfstring{$N$}{N})}

In this subsection, we consider a different application area for our methodology: the analysis of single-cell data. We revisit the mouse spleen dataset presented by \cite{goltsev2018deep}. This dataset consists of a set of images from both healthy and diseased mouse spleens. Each sample undergoes staining with 30 different antibodies via the CODEX process, as detailed in \cite{goltsev2018deep}. In \cite{chen2020modeling}, each spleen sample is divided into a set of non-overlapping Voronoi bins, and the count of immune cell types is recorded in each bin.  In this framework, each bin can be viewed as a document and cell types correspond to words. It is of interest to determine appropriate groupings of cell types (topics), as this may help one study the interactions between cells. 

\par Since this dataset does not come with ground-truth labels, we sample two disjoint sets of  size $n=10,000$ out of the 100,840 Voronoi tessellations across all spleen samples (where 10,000 is a number chosen to be large enough to ensure a ``high $n$'' regime while still allowing all methods to have reasonable computational runtimes). On the contrary, there are only 24 different cell types ($p=24$), while the average ``document'' length is $N=11.2$ with an interquartile range of $(6, 16)$.  While \cite{chen2020modeling} focus on evaluating estimators of the matrix $W$, here we repurpose the use of this dataset to study our estimator of $A$.
In this dataset, the precise number of topics $K$ is unknown. We thus apply the three methods for different values of $K$ and use the metrics introduced at the beginning of this section (topic resolution, topic coherence and refinement) to compare the three methods. The results are presented in Figures~\ref{fig:res-spleen} and \ref{fig:spleen-refinement}.\\

\xhdr{Discussion of the results}  Due to the structured nature of this dataset, all methods perform remarkably well, exhibiting an average topic similarity above 0.95. Going into more details, we see that our method outperforms Topic-SCORE in terms of topic resolution. In particular, Topic-SCORE (in red) appears to have more variable performance, as reflected in its larger interquartile ranges and its jittery resolution as a function of $K$. Interestingly, in this specific instance, LDA seems to score higher on topic resolution (although we again emphasize that all methods perform very well on this metric). Additionally, Figure~\ref{fig:refinement-spleen-TTS}  shows the refinement and coherence of the topics for our method as $K$ increases, in contrast to those of LDA in Figure~\ref{fig:refinement-spleen-lda}. In this data example, our method seems to provide topics with higher refinement (fewer ancestors per topic) and higher coherence (note in particular the stability of topic 1, 2, and 18) compared to LDA. In Figure \ref{fig:refinement-spleen-lda}, it can be observed that topics 1, 2, and 18 are dispersed across different branches within the refinement plot as $K$ varies.

\begin{figure}[htbp!]
    \centering
    \includegraphics[width=\textwidth]{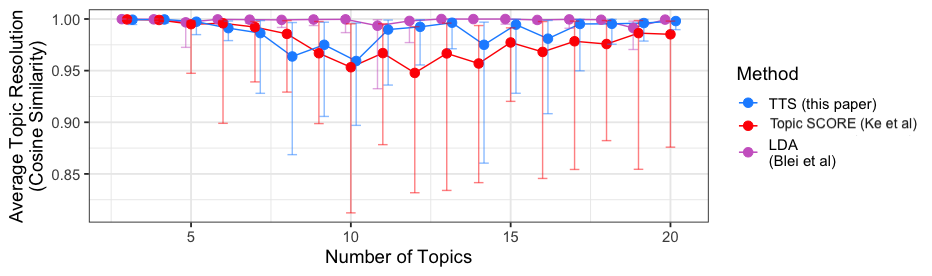}
    \caption{Median Topic Resolution as a function of $K$ on the Mouse Spleen Data \citep{goltsev2018deep, chen2020modeling}. Vertical error bars represent the interquartile range for the average topic resolution scores over 25 trials.}
    \label{fig:res-spleen}
\end{figure}


\begin{figure}[htbp]
    \centering
    \begin{subfigure}[b]{0.48\textwidth}
    \includegraphics[width=\textwidth]{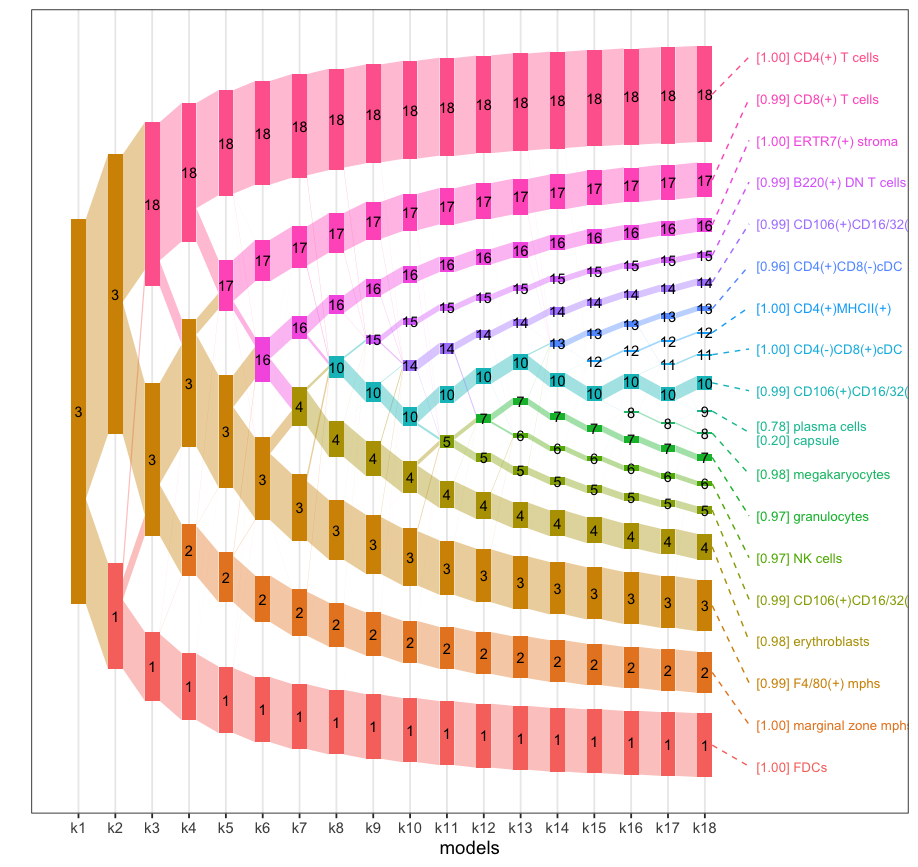}
    \caption{Topic refinement for our method as $K$ varies, provided by the package \texttt{alto} \citep{fukuyama2021multiscale}.}
    \label{fig:refinement-spleen-TTS}
    \end{subfigure}
    \hspace{0.2cm}
    \begin{subfigure}[b]{0.48\textwidth}
    \includegraphics[width=\textwidth]{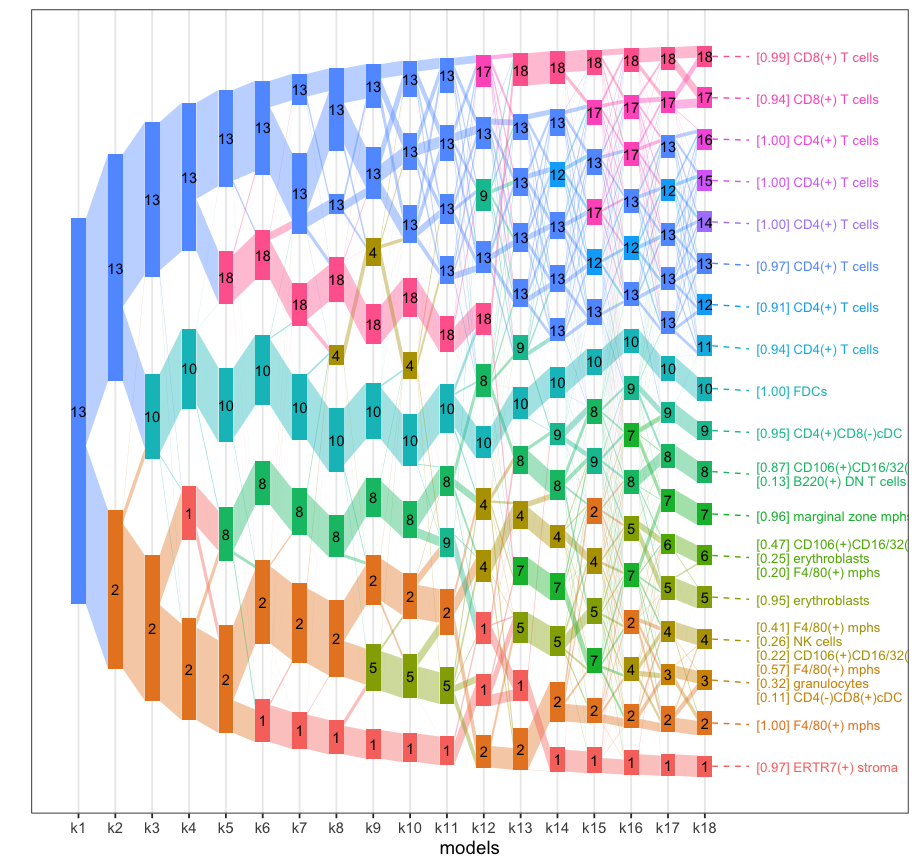}
    \caption{Topic refinement for LDA \citep{blei2003latent} as $K$ varies, provided by the package \texttt{alto} \citep{fukuyama2021multiscale}.}
    \label{fig:refinement-spleen-lda}
    \end{subfigure}
    \caption{Comparison of the refinement and coherence of topics recovered using our method (left) and LDA (right).}
    \label{fig:spleen-refinement}
\end{figure}



\subsection{Microbiome examples (low \texorpdfstring{$p$}{p}, low \texorpdfstring{$n$}{n}, high \texorpdfstring{$N$}{N})}

We finish our discussion with an application of our method to microbiome data analysis. In particular, we reanalyze two datasets that have been previously analyzed through topic modeling: the colon dataset of \cite{yachida2019metagenomic} and the vaginal microbiome example of \cite{callahan2017replication}, which was re-analyzed in \cite{fukuyama2021multiscale} using LDA. Microbiome data are represented in the form of a count matrix. In this matrix, each column corresponds to a different sample, while each row represents various taxa of bacteria. The entries within the matrix represent the abundance of each bacteria in a given sample. Taking samples to be documents and bacteria as words, topic modeling offers an interesting way of exploring communities of bacteria (``topics'') \citep{sankaran2019latent}. For the sake of conciseness, we present the results here for the colon dataset of \cite{yachida2019metagenomic}, and refere the reader to Appendix \ref{app: real data experiment} for the results on the other dataset.

\par After pre-processing and eliminating species with a relative abundance below 0.001\%, this dataset contains microbiome counts for $p=541$ distinct taxa from $n=503$ samples.  In contrast, the length of each ``document'' is extremely high, with around $N=43$ million bacteria per sample. We test all three methods for different values of $K$ and display the average topic resolution in Figure~\ref{fig:resolution-microbiome-cosine}. On this metric, our method exhibits significantly better results than both LDA and  Topic SCORE for up to 15 topics. After 15 topics, LDA outperforms all SCORE-based methods in terms of topic resolutions. However, this comes at a much higher computational cost: while each of the SCORE methods in this example could be fitted in under a minute, each of the LDA fits took on the order of tens of minutes. Note that LDA's high topic resolution could also be due to the higher weight of the prior in the estimation of the topic-word matrix $A$, which, due to the relatively small size of the dataset, could have a stabilizing effect on estimation. On the other hand, the performance of Topic-SCORE quickly drops to 0.65 as $K$ increases, before reaching a plateau at around $K\approx 10$. By contrast, for small $K$, our method exhibits a resolution up to 40\% higher than Topic Score (for $K=10$) before also decreasing as the number of topics increases.

\begin{figure}[htbp!]
    \centering
    \includegraphics[width=0.8\textwidth]{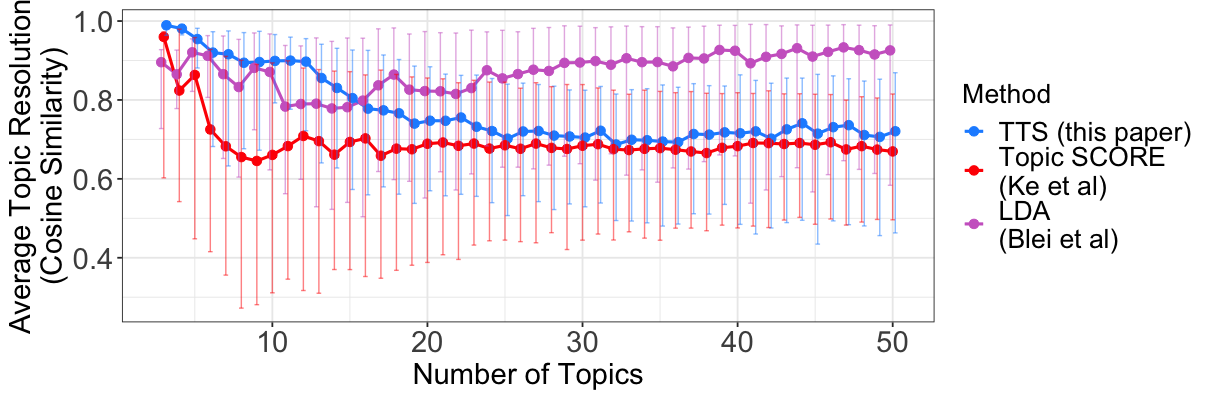}
    \caption{Topic resolution (measured by the average cosine similarity between halves of the data) of our method (in blue) and Topic-SCORE (red) on the microbiome dataset of \cite{yachida2019metagenomic}. Topic resolution is averaged over 25 random splits of the data. }
    \label{fig:resolution-microbiome-cosine}
\end{figure}

\par To understand the gap in performance between Topic-SCORE and our method, we again visualize the point clouds obtained by both methods. The visualization can be found in Figure~\ref{fig:point-cloud-mic} in  Appendix~\ref{app: real data experiment}. Similarly to our first example with text analysis, we observe that the point cloud of Topic-SCORE is heavily distorted; in contrast, ours is more compact. 

\section{Conclusion and future works} 
In this paper, we introduce \textit{Thresholded Topic-SCORE} (TTS), a new estimation procedure for the word-topic matrix $A$ that is based on eigenvalue decomposition and thresholding. Our procedure is shown to perform well under the column-wise $\ell_q$-sparsity assumption \eqref{eq: lq-sparsity-def}, which is exhibited by many real-world text datasets but to our knowledge has not been considered in prior works. TTS also accommodates non-separable data, simply by adopting a suitable vertex hunting algorithm such as Archetype Analysis from \cite{javadi2020nonnegative}. Empirical results show that our method is competitive for a diverse range of parameter regimes, especially in the ``large $p$'' setting where many words occur infrequently in the corpus. Overall, TTS is a compelling alternative to existing methods when $p$ is large and the number of topics $K$ is relatively small. 

Based on this paper, some potential research directions can be suggested. First, the estimation of $W$ is also of interest in applications, and the minimax-optimal $\ell_1$-error rate for estimating $W$ has been established (see for example \cite{klopp2021assigning}). However, the problem of estimating $W$ essentially involves $n$ independent sub-problems (one for each column of $W$), and consequently the minimax-optimal error rate for $W$ scales significantly with $n$. One may consider imposing additional structural assumptions on how the $n$ documents are related to one another, in order to design an estimation procedure whose error decays to zero as $n \to \infty$. 

Second, the minimax-optimal $\ell_1$-error rate given the $\ell_q$-assumption \eqref{eq: lq-sparsity-def} remains an open problem. The minimax lower bound arguments of \cite{ke2022using} are not directly applicable, as they are based on constructing hypotheses of $A$ whose columns contain entries that are of roughly equal magnitudes (i.e. the columns do not exhibit $\ell_q$-decay). 

Third, our proposed method only makes use of the word counts for the documents, as the underlying pLSI model disregards the order of words in a document. Other language models, such as the multi-gram topic model, make use of word orders. An extension of our method using tensor factorization may be possible in this setting, as the corpus is stored using a multi-way tensor \citep{zheng2016topic}.

\acks{The authors would like to acknowledge support for this project
from the National Science Foundation (NSF grant IIS-2238616). This work was completed in part with resources provided by the University of Chicago’s Research Computing Center. }

\bibliography{citation} 

\newpage

\begin{appendix}
All proofs in this appendix make use of notations described in Section \ref{section: notations}. Assumptions 1-4 (which include separability) are assumed in Appendix A-D, whereas the sparsity assumption (Assumption \ref{ass:sparsity}) is further imposed in Appendix E. 
\section{Properties of the set \texorpdfstring{$J$}{J}}
    \begin{lemma}[Weak sparsity of $A$]\label{lem:A1} Order the $\ell_2$ row norms of $A$ so that 
    $$\|A_{(1)*}\|_2 \geq \dots \geq \|A_{(p)*}\|_2$$
    Then the matrix $A$ satisfies $\max_{j\in [p]}j\|A_{(j)*}\|_2 \leq K$.
    \end{lemma}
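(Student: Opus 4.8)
The plan is to reduce the statement to a one-line mass-accounting argument, exploiting that the columns of $A$ are probability vectors. First I would pass from the $\ell_2$ row norms to the $\ell_1$ row norms. Because every entry of $A$ is nonnegative, each row satisfies
\begin{equation*}
    \|A_{j*}\|_2 \leq \|A_{j*}\|_1 = \sum_{k=1}^K A_{jk}.
\end{equation*}
This is the only place where nonnegativity is used, and it is what makes the bound clean: the $\ell_2$ norm of a row is controlled by the row's total mass.

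Next I would sum this inequality over all rows and interchange the order of summation. Since each column $A_{*k}$ is a probability vector, $\sum_{j=1}^p A_{jk} = 1$ for every $k$, so
\begin{equation*}
    \sum_{j=1}^p \|A_{j*}\|_2 \;\leq\; \sum_{j=1}^p \sum_{k=1}^K A_{jk} \;=\; \sum_{k=1}^K \sum_{j=1}^p A_{jk} \;=\; \sum_{k=1}^K 1 \;=\; K.
\end{equation*}
Thus the sum of all (ordered or unordered) $\ell_2$ row norms is at most $K$.

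Finally I would invoke the elementary fact that for any nonincreasing nonnegative sequence $b_{(1)} \geq \dots \geq b_{(p)} \geq 0$ one has $j\, b_{(j)} \leq \sum_{i=1}^j b_{(i)} \leq \sum_{i=1}^p b_{(i)}$, since $b_{(j)}$ is the smallest among $b_{(1)}, \dots, b_{(j)}$. Applying this with $b_{(j)} = \|A_{(j)*}\|_2$ and using the total-mass bound from the previous step yields $j\,\|A_{(j)*}\|_2 \leq K$ for every $j \in [p]$, which is exactly the claim. There is no real obstacle here; the proof is entirely elementary, and the only point requiring care is that the $\ell_2 \leq \ell_1$ comparison on rows relies on the entries of $A$ being nonnegative — without that, the clean telescoping into the column sums would fail.
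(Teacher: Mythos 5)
Your proof is correct and follows essentially the same argument as the paper's: bound $j\|A_{(j)*}\|_2$ by the sum of all $\ell_2$ row norms, pass to $\ell_1$ row norms, and use that the column sums equal $1$ to get the total mass $K$. The only trivial quibble is that $\|v\|_2 \leq \|v\|_1$ holds for any vector regardless of sign; nonnegativity is really what gives $\|A_{j*}\|_1 = \sum_k A_{jk}$ and the column-sum identity, but this does not affect the validity of your argument.
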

    \begin{proof}
        Observe that for any $j \in [p]$,  
        $$j\|A_{(j)*}\|_2 \leq \sum_{l=1}^p \|A_{l*}\|_2  \leq \sum_{l=1}^p \|A_{l*}\|_1 = K$$
        since $A$ contains only non-negative entries and each column sums up to $1$.
    \end{proof}

    \begin{lemma}\label{lem:A2}
        If $M_0 := \text{diag}(n^{-1}D_0\One_n)$ and $h_j := \|A_{j*}\|_1$, then for any $j \in [p]$, 
        $$\sigma_K(\Sigma_W)h_j \leq M_0(j,j) \leq h_j$$
    \end{lemma}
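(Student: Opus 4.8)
The plan is to compute $M_0(j,j)$ explicitly in terms of $A$ and $W$ and then reduce both inequalities to a single scalar comparison. Writing $D_0 = AW$ and letting $a_j := A_{j*}^\top \in \mathbb{R}^K$ denote the $j$-th row of $A$ viewed as a column vector, I would first observe that
$$M_0(j,j) = \frac{1}{n}(D_0\One_n)_j = \frac{1}{n}\, a_j^\top (W\One_n) = a_j^\top w, \qquad \text{where } w := \tfrac{1}{n}W\One_n \in \mathbb{R}^K,$$
so that $M_0(j,j) = \sum_{k=1}^K A_{jk}\, w_k$. Two facts about $w$ drive the argument: first, $w$ is a probability vector, since $w_k \ge 0$ and $\sum_k w_k = \frac{1}{n}\sum_i \sum_k W_{ki} = 1$ because each column of $W$ sums to one; second, there is the identity $w = \Sigma_W \One_K$, which follows from $W^\top \One_K = \One_n$ (again because columns of $W$ sum to one), giving $\Sigma_W \One_K = \frac1n W W^\top \One_K = \frac1n W \One_n = w$.

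Given this representation, the upper bound is immediate: each $w_k \le \sum_l w_l = 1$ by the first fact, and $A_{jk}\ge 0$, so $M_0(j,j) = \sum_k A_{jk} w_k \le \sum_k A_{jk} = h_j$. For the lower bound, since $A_{jk}\ge 0$ it suffices to prove the coordinatewise estimate $w_k \ge \sigma_K(\Sigma_W)$ for every $k$, because then $M_0(j,j) = \sum_k A_{jk} w_k \ge \sigma_K(\Sigma_W)\sum_k A_{jk} = \sigma_K(\Sigma_W)\, h_j$. This coordinatewise bound is the crux of the proof.

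To establish $w_k \ge \sigma_K(\Sigma_W)$, I would combine the identity $w = \Sigma_W\One_K$ with the nonnegativity of the entries of $\Sigma_W$: since $(\Sigma_W)_{kl} = \frac1n\sum_i W_{ki}W_{li}\ge 0$, dropping the off-diagonal terms in $w_k = (\Sigma_W\One_K)_k = \sum_l (\Sigma_W)_{kl}$ yields $w_k \ge (\Sigma_W)_{kk}$. Finally, because $\Sigma_W$ is symmetric positive semidefinite, the diagonal entry $(\Sigma_W)_{kk} = e_k^\top \Sigma_W e_k$ is a Rayleigh quotient at the unit vector $e_k$ and is therefore bounded below by the smallest eigenvalue, which (for a symmetric PSD matrix) equals $\sigma_K(\Sigma_W)$; hence $w_k \ge (\Sigma_W)_{kk}\ge \sigma_K(\Sigma_W)$, completing the argument. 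The only genuinely nonobvious step is recognizing that the nonnegativity of $\Sigma_W$ inherited from $W$ is exactly what lets one pass from the row sum $w_k$ down to the diagonal entry $(\Sigma_W)_{kk}$; the remaining manipulations are routine bookkeeping.
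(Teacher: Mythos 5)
Your proof is correct and follows essentially the same route as the paper's: both compute $M_0(j,j)=\sum_k A_{jk}\bigl(\tfrac{1}{n}\sum_i W_{ki}\bigr)$ and reduce the claim to $\sigma_K(\Sigma_W)\le \tfrac{1}{n}\sum_i W_{ki}\le 1$, passing through the diagonal entry $(\Sigma_W)_{kk}$ and the Rayleigh-quotient bound $\sigma_K(\Sigma_W)\le(\Sigma_W)_{kk}$. The only (immaterial) difference is the middle step: you get $(\Sigma_W)_{kk}\le w_k$ from the identity $w=\Sigma_W\mathbf{1}_K$ together with the nonnegativity of $\Sigma_W$'s entries, whereas the paper uses the pointwise inequality $W_{ki}^2\le W_{ki}$.
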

    \begin{proof}
        Note that 
        $$M_0(j,j) = \frac{1}{n}\sum_{i=1}^n [D_0]_{ji} = \frac{1}{n} \sum_{i=1}^n \sum_{k=1}^K A_{jk}W_{ki} = \sum_{k=1}^K A_{jk}\left(\frac{1}{n}\sum_{i=1}^n W_{ki}\right)$$
        and observe that $h_j := \sum_{k=1}^K A_{jk}$ and for each $k \in [K]$ (recall $\Sigma_W := \frac{1}{n}WW^T$),
        $$\sigma_K(\Sigma_W) \leq \Sigma_{W}(k,k) = \frac{1}{n}\sum_{i=1}^n W_{ki}^2\leq \frac{1}{n}\sum_{i=1}^n W_{ki} \leq 1$$
    \end{proof}
    \begin{theorem}\label{lem:A3}
        Define $p_n := p\vee n$, $\tau_n := \sqrt \frac{\log p_n}{Nn}$. Let 
        $$J := \left\{j \in [p]: M(j,j) > \alpha \tau_n\right\}
        , \quad \quad J_\pm := \{j \in [p]: M_0(j,j) > \alpha_\pm \alpha \tau_n\} $$
        for some suitably chosen $\alpha > 0$ and $0 < \alpha_+ < 1 < \alpha_-$. The following statements hold:
        \begin{enumerate}[(a)]
            \item For a fixed $j \in [p]$, we have 
            $$\bP(|M(j,j) - M_0(j,j)| \geq t) \leq 2\exp\left(-nNt^2/2\right) $$
            \item The event 
            $$\cE := \{J_- \subseteq J \subseteq J_+\}$$ occurs with probability at least $1-o(p_n^{-1})$. Here, we can select $\alpha = 8, \alpha_+ = \frac{1}{2}, \alpha_- = 2$. Note that this implies that on $\cE$, $\min_{j \in J} h_j > \alpha_+\alpha \tau_n$.
            \item We have  
            $$\|A_{J_-^c*}\|_F^2 \leq \left[2K(\beta\tau_n)^{-1}\wedge p\right](\beta\tau_n)^2 = o(1) $$ where $\beta := \frac{\alpha_-\alpha}{\sigma_K(\Sigma_W)} = \frac{16}{\sigma_K(\Sigma_W)} \leq C$, and the same is true of $\|A_{J^c*}\|_F^2$ on event $\cE$.
            \item $|J_+| \leq \frac{K\tau_n^{-1}}{\alpha\alpha_+} \wedge p = \frac{K\tau_n^{-1}}{4} \wedge p$, and the same is true of $|J|$ on event $\cE$. 
            \item $\sigma_K(A_{J*}) > c\sqrt K$ for some absolute constant $c > 0$ on event $\mathcal{E}$. This implies $|J| \geq K$ and $[D_0]_{J*}$ have rank $K$ on $\cE$. 
            \item There exists an absolute constant $c\in (0,1)$ such that the entries of $A_{J*}^TA_{J*}$ are all greater than $c$ on event $\cE$. 
        \end{enumerate}
    \end{theorem}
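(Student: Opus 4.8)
The plan is to exploit the additive decomposition of the Gram matrix $A^TA$ induced by the row partition $[p] = J \cup J^c$, combined with Assumption \ref{ass:2} and the Frobenius-norm control from part (c). The key observation is that deleting the rows indexed by $J^c$ only decreases each entry of the (nonnegative) Gram matrix, and the amount lost is governed by $\|A_{J^c*}\|_F^2$, which part (c) has already shown to be $o(1)$ on $\cE$.

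First I would split the summation defining each entry of $A^TA$ over rows. For any $k, l \in [K]$,
$$A^TA(k,l) = \sum_{j=1}^p A_{jk}A_{jl} = A_{J*}^TA_{J*}(k,l) + A_{J^c*}^TA_{J^c*}(k,l),$$
so that $A_{J*}^TA_{J*}(k,l) = A^TA(k,l) - A_{J^c*}^TA_{J^c*}(k,l)$. Assumption \ref{ass:2} then supplies the lower bound $A^TA(k,l) \geq c$ for some constant $c > 0$, uniformly over $k,l$, so it only remains to show that the correction term is negligible.

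Next I would bound the correction term by Cauchy--Schwarz. Since all entries of $A$ are nonnegative,
$$0 \leq A_{J^c*}^TA_{J^c*}(k,l) = \sum_{j \in J^c} A_{jk}A_{jl} \leq \|(A_{J^c*})_{*k}\|_2\,\|(A_{J^c*})_{*l}\|_2 \leq \|A_{J^c*}\|_F^2,$$
where the last inequality uses that each column $\ell_2$-norm of the submatrix $A_{J^c*}$ is at most its Frobenius norm. On the event $\cE$, part (c) gives $\|A_{J^c*}\|_F^2 = o(1)$, so combining the two bounds yields $A_{J*}^TA_{J*}(k,l) \geq c - o(1)$ for every $k,l \in [K]$.

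The only delicate point is that this conclusion is inherently asymptotic: the $o(1)$ estimate in part (c) requires $nN$ to be sufficiently large. Since $K$ is fixed, the $o(1)$ term is uniform over the finitely many pairs $(k,l)$, so once $nN$ exceeds a threshold determined by the constants in part (c) and Assumption \ref{ass:2}, the correction is at most $c/2$, and we obtain $A_{J*}^TA_{J*}(k,l) \geq c/2 =: c'$ for all $k,l$, which is the desired absolute constant in $(0,1)$. Thus the argument reduces entirely to the already-established Frobenius bound, and no essential new difficulty arises beyond tracking this asymptotic threshold.
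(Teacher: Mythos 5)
Your argument for part (f) is correct and is essentially identical to the paper's own proof of that part: both decompose $A^TA = A_{J*}^TA_{J*} + A_{J^c*}^TA_{J^c*}$, invoke Assumption \ref{ass:2} for the lower bound $A^TA(k,l) \geq c$, control the correction term via Cauchy--Schwarz by $\|A_{J^ck}\|_2\|A_{J^cl}\|_2 \leq \|A_{J^c*}\|_F^2$, and then use the $o(1)$ Frobenius bound on event $\cE$ to conclude $A_{J*}^TA_{J*}(k,l) \geq c/2$ once $nN$ is large. Your remark about uniformity over the finitely many pairs $(k,l)$ (since $K$ is fixed) is a fair, if minor, point that the paper leaves implicit.

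The genuine gap is one of scope: the statement to be proven is the full Theorem \ref{lem:A3}, comprising six claims, and your proposal addresses only part (f) while explicitly taking part (c) as ``already shown.'' Parts (a) and (b) require concentration arguments (Hoeffding's inequality applied to the centered one-hot word indicators, followed by union bounds to establish the sandwich $J_- \subseteq J \subseteq J_+$ and the constraints $(\alpha_- - 1)\alpha > 2$, $(1-\alpha_+)\alpha > 2$ on the constants); part (c) requires the weak-sparsity bound $j\|A_{(j)*}\|_2 \leq K$ from Lemma \ref{lem:A1} together with the integral comparison $\sum_j \min(K^2 j^{-2}, \beta^2\tau_n^2) \leq 2K\beta\tau_n$; part (d) is a counting argument from $\sum_j h_j = K$; and part (e) needs Weyl's inequality for singular values plus Sylvester's rank inequality. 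None of these appear in your proposal, and since your proof of (f) is predicated on (c), the argument as written is incomplete as a proof of the theorem. If the intent was to prove only part (f) conditional on the earlier parts, then what you have matches the paper; otherwise you need to supply the remaining five arguments.
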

    \begin{proof}
        \begin{enumerate}[(a)]
            \item Denote $Z := D - D_0$. We introduce the set of $p$-dimensional one-hot vectors 
            $$\{T_{im}: 1 \leq i \leq n, 1 \leq m \leq N \}$$ 
            for each word in the dataset; note that $T_{im} \sim \text{Multinomial}(1, [D_0]_{*i})$ and these one-hot vectors are mutually independent. It follows that each column of $Z$ satisfies 
            \begin{equation}\label{eq:Z-expression} 
                [Z]_{*i} = \frac{1}{N} \sum_{m=1}^{N}(T_{im} - \bE[T_{im}])   
            \end{equation}
            
            Note that for a given $j \in [p]$: 
            \begin{equation}\label{eq:M1}
                M(j,j) - M_0(j,j) = \frac{1}{n}\sum_{i=1}^n Z_{ji} = \frac{1}{nN}\sum_{i=1}^n \sum_{m=1}^{N}(T_{im}(j)-\bE[T_{im}(j)])
            \end{equation}
            and since $|T_{im}(j) - \bE[T_{im}(j)]| \leq 1$, we can apply Hoeffding's inequality to conclude
            $$\bP(|M(j,j) - M_0(j,j)| \geq t) \leq 2\exp\left(-nNt^2/2\right) $$
            \item  Note that $\alpha_- > 1$. We have
            \begin{align*}
\bP(J_- \not\subseteq J) &= \bP\left(\cup_{j\in J_-}\left\{M(j,j)\leq \alpha \tau_n\right\}\right) \\ 
&\leq \sum_{j\in J_-}\bP\left(M(j,j) \leq \alpha \tau_n\right)\\ 
&\leq \sum_{j\in J_-}\bP\left(M(j,j)-M_0(j,j) \leq \alpha \tau_n - \alpha_- \alpha \tau_n\right)\\
&\leq \sum_{j\in J_-}\bP\left(|M(j,j) -M_0(j,j)|\geq (\alpha_--1)\alpha\tau_n\right)\\ 
&\leq\sum_{j\in J_-}2\exp\left(-Nn(\alpha_--1)^2\alpha^2\tau_n^2/2\right) \\ 
&\leq 2p_n^{1-(\alpha_--1)^2\alpha^2/2}
\end{align*}
where in the last step we used $|J_-| \leq p$. We want to choose $1 - \frac{(\alpha_- - 1)^2\alpha^2}{2} < -1$ or equivalently $(\alpha_- - 1)\alpha > 2$. 

    Note that $0 < \alpha_+ < 1$. We further have
        \begin{align*}
\bP(J \not\subseteq J_+) &= \bP\left(\cup_{j\in J_+^c}\left\{M(j,j) > \alpha \tau_n\right\}\right) \\
&\leq \sum_{j\in J_+^c}\bP\left(M(j,j) - M_0(j,j) > (\alpha - \alpha \alpha_+)\tau_n\right)  \\
&\leq \sum_{j\in J_+^c}\bP\left(|M(j,j)-M_0(j,j)| > (1-\alpha_+)\alpha \sqrt{\frac{\log p_n}{nN}}\right)\\ 
&\leq \sum_{j\in J_+^c} 2\exp\left(-\frac{Nn(1-\alpha_+)^2\alpha^2\log p_n}{2Nn}\right) \\ 
&\leq 2p_n^{1-(1-\alpha_+)^2\alpha^2/2}\end{align*}
Again, we want to choose $1 - \frac{(1-\alpha_+)^2\alpha^2}{2} < -1$ or equivalently $(1-\alpha_+) \alpha > 2$. A suitable choice is $\alpha = 8, \alpha_+ = \frac{1}{2}, \alpha_- = 2$.
\item From Lemma \ref{lem:A2}, we have 
$$ M_0(j,j) \geq \sigma_K(\Sigma_W) \|A_{j*}\|_1 \geq \sigma_K(\Sigma_W) \|A_{j*}\|_2$$
and so if we define $$L := \{j \in [p]: \|A_{j*}\|_2 > \beta \tau_n\} \quad \text{where } \beta := \frac{\alpha_-\alpha}{\sigma_K(\Sigma_W)} = \frac{16}{\sigma_K(\Sigma_W)}$$ then $L \subseteq J_-$ by definition of $J_-$, and thus $\|A_{J^c_-*}\|_F \leq \|A_{L^c*}\|_F$. Now, if we order the $\ell_2$ row norms $\|A_{(1)*}\|_2 \geq \dots \geq \|A_{(p)*}\|_2$ and apply Lemma \ref{lem:A1},
\begin{align*}
\|A_{L^c*}\|_F^2 &= \sum_{j\not\in L}\|A_{j*}\|_2^2 = \sum_{j\not\in L}\min(\|A_{j*}\|_2^2, \beta^2\tau_n^2) \\
&\leq \sum_{j=1}^p \min(\|A_{(j)*}\|_2^2, \beta^2\tau_n^2) \leq \sum_{j=1}^p \min\left(\frac{K^2}{j^2}, \beta^2\tau_n^2\right)\\ 
&\leq \int_0^\infty \min(\beta^2\tau_n^2, K^2t^{-2})dt
\end{align*}
Let $t_0$ satisfies $\beta^2\tau_n^2 = K^2t^{-2}$ or $t_0 = \frac{K}{\beta\tau_n}$. We continue: 
\begin{align*}
\|A_{L^c*}\|_F^2 &\leq  t_0 \beta^2\tau_n^2 + K^2\int_{t_0}^\infty t^{-2}dt \\ 
&= t_0\beta^2\tau_n^2 + K^2t_0^{-1} = 2t_0\beta^2\tau_n^2 \\
&= 2K\beta\tau_n = 2K\beta \sqrt{\frac{\log p_n}{Nn}} = o(1)
\end{align*}
given our assumption that $\sigma_K(\Sigma_W) > c$ for some absolute constant $c > 0$. Moreover, it is also clear from the definition of $L$ that 
$$\|A_{L^c*}\|_F^2 \leq p(\beta\tau_n)^2 $$
\item For all $j \in J_+ := \{j \in [p]: M_0(j,j) > \alpha\alpha_+\tau_n\}$, note that $h_j \geq M_0(j,j) > \alpha_+ \alpha \tau_n $. Then observe that 
$$K = \sum_{j=1}^p h_j \geq \sum_{j \in J_+} h_j \geq |J_+|\alpha\alpha_+\tau_n = 4|J_+|\tau_n$$
\item Here we use the assumption that $\sigma_K(A) > c\sqrt K$ for some absolute constant $c > 0$. Observe that by Weyl's inequality for singular values, on event $\cE$ we have
\begin{align*}
    \sigma_K(A_{J*}) &\geq \sigma_K(A) - \|A_{J^c*}\|_\op \\ 
    &\geq \sigma_K(A) - \|A_{J^c*}\|_F\\ 
    &\geq c\sqrt{K} - o(1) \geq c\sqrt K/2
\end{align*} when $nN$ is sufficiently large, since in part (c) we have shown $\|A_{J_-^c*}\|_F \leq C\left(\frac{\log p_n}{Nn}\right)^{1/4}$. Hence, $A_{J*}$ has rank $K$ on $\cE$, and by Sylvester's rank inequality, 
$$K = \text{rank}(A_{J*}) + \text{rank}(W) - K \leq \text{rank}([D_0]_{J*}) \leq \text{rank}(A_{J*}) = K $$
\item Here we use the assumption that the entries of $A^TA$ are bounded below by an absolute constant. For any $k, l \in [K]$, since $A^TA = A_{J*}^TA_{J*} + A_{J^c*}^TA_{J^c*}$, on event $\cE$ the $(k,l)$-entry of $A_{J*}^TA_{J*}$ satisfies 
\begin{align*}
    (A_{J*}^TA_{J*})(k,l) &= (A^TA)(k,l) - \sum_{j \not \in J} A_{jk}A_{jl} \\ 
    &\geq c - \|A_{J^ck}\|_2 \|A_{J^cl}\|_2 \\ 
    &\geq c - \|A_{J^c*}\|_F^2 = c-o(1) \geq c/2
\end{align*}
when $nN$ is sufficiently large.
        \end{enumerate}
            \end{proof}
    
\section{Properties of unobserved quantities}
    \begin{lemma}\label{lem:B1}
        The following statements are true: 
        \begin{enumerate}[(a)]
            \item $\sigma_1(A) \leq \sqrt{K}$ and $\sigma_1(\Sigma_W)\leq 1$, where $\Sigma_W := \frac{1}{n}WW^T$.
            \item If $\Xi \in \mathbb{R}^{|J|\times K}$ contains the first $K$ left singular vectors of $[D_0]_{J*}$, then $\Xi^T A_{J*}$ is invertible. If $V := (\Xi^TA_{J*})^{-1} \in \mathbb{R}^{K \times K}$ then $V$ satisfies the following: 
            \begin{enumerate}[(i)]
                \item $\Xi = A_{J*}V$
                \item The singular values of $V$ are the inverses of the singular values of $A_{J*}$
                \item The columns $V_1, \dots, V_K$ of $V$ are eigenvectors of the matrix $\Theta := \Sigma_W A_{J*}^TA_{J*}$, associated with the eigenvalues
                $$\lambda_k(\Theta) = \frac{\sigma_k^2([D_0]_{J*})}{n}\quad\text{ for } 1 \leq k \leq K$$
            \end{enumerate}
            \item The matrix $\Theta_0 := \Sigma_W A^TA \in \mathbb{R}^{K\times K}$ satisfies the following: 
            \begin{enumerate}[(i)]
                \item The entries of $\Theta_0$ are all positive and bounded below by an absolute constant $c_1 > 0$.
                \item The gap between its first two eigenvalues is bounded below by an absolute constant $c_2 > 0$.
                \item The entries of the unit-norm leading positive eigenvector of $\Theta_0$ are all bounded below by an absolute constant $c_3 > 0$.
            \end{enumerate}
        \item On event $\cE$, the results of part (c) also apply to $\Theta$, possibly with smaller absolute constants $c_1, c_2, c_3 > 0$. 
         \item There exist absolute constants $c, C > 0$ such that on $\cE$, the entries of the first column of $V$ satisfy
         $$ \frac{c}{\sqrt{K}} \leq \min_{k\in [K]} V_1(k) \leq \max_{k \in [K]}V_1(k) \leq \frac{C}{\sqrt{K}}$$
         and if $\xi_1, \dots, \xi_K$ are the columns of $\Xi$, then for any $j \in J$, its first column satisfies
         $$\frac{ch_j}{\sqrt{K}} \leq \xi_1(j) \leq \frac{Ch_j}{\sqrt{K}}$$
        \item Let $Q \in \mathbb{R}^{K \times K}$ be defined by $Q^T = [\text{\diag}(V_1)]^{-1}V$, and note that the entries of the first row of $Q$ are all equal to 1. If $v_1^*, \dots, v_K^* \in \mathbb{R}^{K-1}$ are defined by the relation 
        $$ Q = \begin{pmatrix} 1 & \dots & 1 \\ v_1^* &\dots &v_K^*\end{pmatrix}$$
        then we have
        $$c \leq \sigma_K(Q) \leq \sigma_1(Q) \leq C$$ 
        Consequently, $v_1^*, \dots, v_K^*$ are affinely independent (which means the simplex defined by their convex hull is non-degenerate) and $\max_{k\in [K]} \|v_k^*\|_2 \leq C$. 
        \end{enumerate}
    \end{lemma}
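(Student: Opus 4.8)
The plan is to read off both singular value bounds directly from the factorization $Q^\top = [\diag(V_1)]^{-1} V$, treating $[\diag(V_1)]^{-1}$ and $V$ as two well-conditioned factors whose extreme singular values are already controlled by earlier parts of the lemma. Since $\sigma_k(Q) = \sigma_k(Q^\top)$ for every $k$, it suffices to bound $\sigma_1$ and $\sigma_K$ of the product, for which I will use the standard submultiplicative inequalities $\sigma_1(AB) \le \sigma_1(A)\,\sigma_1(B)$ and $\sigma_{\min}(AB) \ge \sigma_{\min}(A)\,\sigma_{\min}(B)$. The first factor $[\diag(V_1)]^{-1}$ is a positive diagonal matrix, so its singular values are exactly $1/V_1(k)$; the second factor $V$ has singular values equal to the reciprocals of those of $A_{J*}$ by part (b)(ii).

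For the numerical inputs I will invoke part (e) of this lemma, which gives on $\cE$ the two-sided bound $c/\sqrt K \le V_1(k) \le C/\sqrt K$ for all $k$; hence $\sigma_1([\diag(V_1)]^{-1}) = 1/\min_k V_1(k) \le \sqrt K / c$ and $\sigma_{\min}([\diag(V_1)]^{-1}) = 1/\max_k V_1(k) \ge \sqrt K / C$. For $V$, part (b)(ii) yields $\sigma_1(V) = 1/\sigma_K(A_{J*})$ and $\sigma_K(V) = 1/\sigma_1(A_{J*})$, and I then bound $\sigma_K(A_{J*}) \ge c\sqrt K$ on $\cE$ via Theorem~\ref{lem:A3}(e), together with $\sigma_1(A_{J*}) \le \sigma_1(A) \le \sqrt K$ from part (a) and the fact that deleting rows of $A$ cannot increase its operator norm. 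Combining these gives $\sigma_1(Q) \le (\sqrt K / c)\cdot (1/(c\sqrt K)) = C$ and $\sigma_K(Q) \ge (\sqrt K / C)\cdot(1/\sqrt K) = c$, which is the claimed bound.

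The two consequences then follow quickly. The lower bound $\sigma_K(Q) \ge c > 0$ shows $Q$ is invertible, so its $K$ columns (each having first entry $1$ and remaining entries $v_k^*$) are linearly independent in $\mathbb{R}^K$; linear independence of these lifted vectors is precisely affine independence of $v_1^*, \dots, v_K^*$ in $\mathbb{R}^{K-1}$, so their convex hull is a non-degenerate simplex. For the norm bound, each $v_k^*$ is a subvector of the $k$-th column $Q_{*k}$, whence $\|v_k^*\|_2 \le \|Q_{*k}\|_2 = \|Q e_k\|_2 \le \sigma_1(Q) \le C$ uniformly in $k$.

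The computation is essentially bookkeeping once the right inputs are assembled; the only point demanding care is the direction of the submultiplicative inequality for the smallest singular value, and the observation that $[\diag(V_1)]^{-1}$ is well-conditioned precisely because part (e) pins every entry of $V_1$ to the common order $1/\sqrt K$. Were the entries of $V_1$ allowed to differ wildly in magnitude, the diagonal rescaling would inflate the condition number of $Q$ and the two-sided bound would break; thus part (e) is the real engine of this result, and the rest is a matter of chaining the singular-value estimates.
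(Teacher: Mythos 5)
Your argument for part (f) is correct and is essentially the paper's own proof written out in more detail: the paper's one-line bounds $\sigma_K(Q) \geq \sigma_K(V)/\max_{k}V_1(k)$ and $\sigma_1(Q) \leq \sigma_1(V)/\min_k V_1(k)$ are exactly the submultiplicative inequalities you chain through the factorization $Q^T = [\diag(V_1)]^{-1}V$, and your handling of the two consequences (invertibility of $Q$ giving affine independence of the lifted columns, and $\|v_k^*\|_2 \leq \|Qe_k\|_2 \leq \sigma_1(Q)$) matches the intended reasoning. Your closing remark that part (e) is "the real engine" is also accurate.

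The gap is one of coverage: the statement is the entire six-part lemma, and you have proven only part (f), taking parts (a), (b)(ii), and (e) as given inputs and not addressing (c) or (d) at all. Those omitted parts carry most of the actual work. Part (b) requires deriving $\Xi = A_{J*}V$ from the SVD of $[D_0]_{J*} = A_{J*}W$ and identifying the columns of $V$ as eigenvectors of $\Theta = \Sigma_W A_{J*}^T A_{J*}$. Parts (c) and (d) are the genuinely delicate ones: establishing that the spectral gap of $\Theta_0$ and the entries of its leading eigenvector are bounded below by \emph{absolute} constants requires a compactness/subsequence argument combined with Perron's theorem (and, for (d), a continuity-of-eigenvector argument on a compact subset of positive matrices, since $\Theta$ depends on the random set $J$ and the $\sin\theta$ theorem is unavailable for these non-symmetric matrices). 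Part (e) itself, which you lean on, needs $V_1/\|V_1\|_2$ identified as the leading positive eigenvector of $\Theta$ together with the singular-value bounds on $V$. Without these, the constants you feed into the part (f) computation are unsupported, so the proposal as written does not constitute a proof of the lemma.
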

\begin{proof}
    \begin{enumerate}[(a)]
        \item The $k^\text{th}$ diagonal entry of $A^TA$ is $\|A_{*k}\|_2^2 \leq \|A_{*k}\|_1 = 1$, so $\text{tr}(A^TA) \leq K$ which implies $\sigma_1(A) \leq \sqrt{K}$. Similarly, 
        $$\sigma_1(\Sigma_W) = \frac{\sigma_1(W^TW)}{n} \leq \frac{\text{tr}(W^TW)}{n} = \frac{\sum_{i=1}^n \|W_{*i}\|_2^2}{n} \leq \frac{\sum_{i=1}^n \|W_{*i}\|_1}{n} = 1  $$
        \item  By singular value decomposition, we have 
        $$[D_0]_\Js  = \Xi \Lambda B^T$$
        where $\Lambda = \diag(\sigma_1, \dots, \sigma_K) \in \mathbb{R}^{K \times K}$ contains the singular values of $[D_0]_{J*}$ and $B \in \mathbb{R}^{n\times K}$ contains its right singular vectors. Here $\Xi^T \Xi = B^TB = I_K$. Then 
        $$\Xi = \Xi \Lambda B^T B\Lambda^{-1} = [D_0]_\Js B\Lambda^{-1} = A_{J*}WB\Lambda^{-1}$$
        
        If we let $V = WB\Lambda^{-1} \in \mathbb{R}^{K \times K}$, then $\Xi = A_{J*} V$. Furthermore, since 
        $$\Xi^T \Xi = \Xi^T A_{J*} V = I_K $$
        we can see that $V$ can be defined as the inverse of $\Xi^T A_{J*}$, thus proving (i). Also, since 
        $$\Xi^T \Xi = V^TA_\Js^T A_\Js V = I_K $$
        we have $VV^TA^T_{J*}A_\Js VV^T = VV^T$, which implies $VV^T = (A_{J*}^TA_{J*})^{-1}$ and (ii) follows. 

        Now let $\xi_1, \dots, \xi_K$ be the columns of $\Xi$, and let $V_1, \dots, V_K$ be the columns of $V$. Note that $\xi_k = A_{J*} V_k$. Since $[D_0]_{J*}[D_0]_{J*}^T \xi_k = \sigma_k^2\xi_k$ and $\Sigma_W := \frac{1}{n} WW^T$, we have 
        $$A_{J*}\Sigma_W A_{J*}^T A_{J*} V_k = A_{J*}\Sigma_W A_{\Js}^T\xi_k = \frac{1}{n} [D_0]_{J*}[D_0]_{J*}^T\xi_k =  \frac{\sigma_k^2}{n}\xi_k = \frac{\sigma_k^2}{n}A_{J*}V_k$$

        Multiplying both sides by $(A_{J*}^TA_{J*})^{-1}A_{J*}^T$ on the left, we have 
        \begin{equation*}
            \Sigma_W A_{J*}^TA_{J*}V_k = \frac{\sigma_k^2}{n} V_k
        \end{equation*}
        and so $V_1, \dots, V_K$ are eigenvectors (not necessarily orthonormal) of $\Theta := \Sigma_W A_{J*}^TA_{J*}$, associated with eigenvalues $\sigma^2_k/n$ for $k = 1, \dots, K$. This proves (iii). 
        \item For any $1 \leq k, l \leq K$, (i) follows from our assumptions:
            \begin{align*}
            \Theta_0(k,l) &= \sum_{s=1}^K \Sigma_W(k,s)\cdot (A^TA)(s,l)\\ 
            &\geq \min_{t,u\in [K]}(A^TA)(t,u) \cdot \sum_{s=1}^K \Sigma_W(k,s) \\ 
            &\geq \min_{t,u\in [K]}(A^TA)(t,u)\cdot \Sigma_W(k,k) \\ 
            &\geq \min_{t,u\in [K]}(A^TA)(t,u)\cdot \sigma_K(\Sigma_W) > c
        \end{align*}
        Let $\gamma(\Theta_0) := \lambda_1(\Theta_0) - \lambda_2(\Theta_0) \geq 0$ denote the gap between the first two eigenvalues of $\Theta_0$. The proof of (ii) is an asymptotic argument. If we consider a sequence $\{\Theta_0^{(n)}\}$ that varies with $n$ as $n \to \infty$,  then (ii) follows if we can establish that 
        $$\liminf_{n\to \infty} \gamma(\Theta_0^{(n)}) > 0$$
        Assume to the contrary that $\liminf_{n\to \infty} \gamma(\Theta_0^{(n)}) = 0$. Then there exists a subsequence $\{\Theta_0^{(n_m)}\}_{m=1}^\infty$ such that the gap between the first two eigenvalues decays to zero. Since $$\|\Theta_0^{(n)}\|_\op \leq \|\Sigma_W^{(n)}\|_\op\|A^{(n)}\|_\op ^2 \leq K$$
        and $K$ is fixed as $n$ varies, there must exist a further subsequence that converges to some matrix $\Theta_0^{(\infty)}$. By part (i), this matrix $\Theta_0^{(\infty)}$ has entries that are bounded below by some absolute constant $c > 0$, and yet its first two eigenvalues are equal (by eigenvalue continuity). By Perron's theorem (see Section 8.2 of \cite{horn2012matrix} for a reference), such a matrix $\Theta_0^{(\infty)}$ cannot exist.

        (iii) is also proven in a similar manner. Let $\eta_0^{(n)} \in \mathbb{R}^K$ denote the leading unit-norm positive eigenvector of $\Theta_0^{(n)}$; its entries are all positive by Perron's theorem. Suppose there exists some $k \in [K]$ such that 
        $$\liminf_{n\to \infty} \eta_0^{(n)}(k) = 0$$
        Note that the mapping from a matrix in $\mathbb{R}^{K \times K}$ with strictly positive entries to its leading unit-norm positive eigenvector is continuous (this will be further elaborated in part (d)). Again, this implies that there exists a subsequence $\{\Theta_0^{(n_m)}\}$ that converges to some $\Theta_0^{(\infty)}$ having strictly positive entries, and yet its leading eigenvector contains a zero entry. This contradicts Perron's theorem. 
        \item In light of Theorem \ref{lem:A3}(f) which shows $A_{J*}^TA_{J*}$ has entries bounded below by $c>0$ on $\cE$, (i) is proven similarly as in part (c). 

        We will first show (iii). Note that we refrain from applying the asymptotic arguments of part (c) directly to $\Theta$ since, unlike $\Theta_0$, $\Theta$ depends on $J$ which is random. Also, the $\sin\theta$ theorem is not applicable to eigenvectors of $\Theta$ and $\Theta_0$ as these matrices are not symmetric. Hence, we opt for the approach presented below. 
        
        Define the open domain 
        $$E = \{\Psi \in \mathbb{R}^{K \times K}: \Psi(k,l) > 0 \text{ for all } k, l \in [K] \} $$
        and define $\mathbf{f}: E \to \mathbb{R}^K$ as the function mapping a matrix in $E$ to its leading unit-norm positive eigenvector. Also, fix $\Psi_0 \in E$ and $1 \leq k,l \leq K$. For any real-valued $t$ in a neighborhood of zero, consider the function 
        $$f_{kl}^{\Psi_0}(t) := \mathbf{f}(\Psi_0 + t\mathbf{e}_k\mathbf{e}_l^T)$$
        where $\mathbf{e}_k$ and $\mathbf{e}_l$ are the $k^\text{th}$ and $l^\text{th}$ canonical basis vectors of $\mathbb{R}^k$ respectively. 
        
        Since the algebraic multiplicity of the first eigenvalue of any matrix in $E$ is 1 (Perron's theorem), by Theorem 2 of \cite{greenbaum2020first}, for any $\Psi_0 \in E$ and any $k,l \in [K]$, the function $f_{kl}^{\Psi_0}(\cdot)$ is continuously differentiable around 0 (more specifically, one can write $f_{kl}^{\Psi_0}(t) = \frac{x(t)}{\|x(t)\|_2}$ for some eigenvector function $x(t)$ that is analytic in a neighborhood of 0). Therefore, the function $\mathbf{f}$ itself is continuously differentiable on $E$, and we can define its derivative $\mathbf{f}'(\Psi)$ as a matrix in $\mathbb{R}^{K^2 \times K}$ containing all the partial derivatives of $\mathbf{f}$ at $\Psi \in E$. Since these partial derivatives are all continuous, $\mathbf{f}': E \to \mathbb{R}^{K^2 \times K}$ is a continuous function. 
        
        Now if $c > 0$ is an absolute constant such that all the entries of $\Theta_0$ and $\Theta$ are greater than $c$ (the latter on event $\cE$), then $\Theta$ and $\Theta_0$ belong to the set 
        $$E' = \{\Psi \in \mathbb{R}^{K \times K}: \Psi(k,l) \geq c \text{ for all } k,l \in [K] \text{ and } \|\Psi\|_\op \leq K\} $$
        which is a compact subset of $E$. Let $\eta$ and $\eta_0$ be the unit-norm positive first eigenvectors of $\Theta$ and $\Theta_0$ respectively. On event $\cE$, by Theorem 9.19 of \cite{rudin1976principles}, 
        \begin{align*}
            \|\eta-\eta_0\|_2 = \|\mathbf{f}(\Theta) - \mathbf{f}(\Theta_0)\|_2 &\leq \max_{\Psi \in E'} \|\mathbf{f}'(\Psi)\|_\op \|\Theta-\Theta_0\|_F  \\ 
            &\leq C\|\Sigma_W\|_\op \|A^TA - A_{J*}^TA_{J*}\|_F \\ 
            &= C \|\Sigma_W\|_\op\|A_{J^c*}^TA_{J^c*}\|_F \leq C \|A_{J^c*}\|_F^2 = o(1)
        \end{align*}
        where we note that $A^TA = A_{J*}^TA_{J*} + A_{J^c*}^TA_{J^c*}$. Hence, for any $k \in [K]$, 
        $$ \eta(k) \geq \eta_0(k) - \|\eta - \eta_0\|_2 \geq c - o(1) > c/2$$
        if $nN$ is sufficiently large. We have shown $\min_k \eta(k) \geq c/2 > 0$ on $\cE$. 

        As for (ii), we have shown in (b)(iii) for $\Theta$ (and the proof is similar for $\Theta_0$) that
        $$\lambda_k(\Theta) = \frac{\sigma_k^2([D_0]_{J*})}{n}, \quad \lambda_k(\Theta_0) = \frac{\sigma_k^2(D_0)}{n} $$
        Note that since $\|A\|_\op \leq \sqrt{K}$ and $\|W\|_\op \leq \sqrt{n}$, 
        $$\max[\sigma_k([D_0]_{J*}), \sigma_k(D_0)] \leq \|A\|_\op \|W\|_\op \leq \sqrt{Kn} $$
        and by Weyl's inequality for singular values (which can be applied after appending zero rows to the matrix $A_{J*}$ so as to match the dimension of $A$),
        \begin{align*}
            |\lambda_k(\Theta) - \lambda_k(\Theta_0)| &\leq \frac{|\sigma_k([D_0]_{J*}) - \sigma_k(D_0)||\sigma_k([D_0]_{J*}) + \sigma_k(D_0)|}{n}\\ 
            &\leq \frac{\|A_{J^c*}\|_\op\|W\|_\op(2\sqrt{Kn})}{n} \\ 
            &\leq 2\sqrt{K}\|A_{J^c*}\|_\op= o(1)
        \end{align*}
        on event $\cE$, so 
        \begin{align*}
            |\lambda_1(\Theta) - \lambda_2(\Theta)| &\geq |\lambda_1(\Theta_0) - \lambda_2(\Theta_0)| - o(1)\\ 
 &\geq c - o(1) \geq c/2
        \end{align*}
        if $nN$ is sufficiently large, for some absolute constant $c > 0$. 
        \item Since we assume $\sigma_K(A) \geq c\sqrt{K}$ for some $c \in (0,1)$, 
        $$\max_{k\in [K]}V_1(k) \leq \|V_1\|_2 \leq \sigma_1(V) = \sigma_K^{-1}(A_{J*}) \leq \sigma_K^{-1}(A) \leq \frac{C}{\sqrt{K}} $$
        and since 
        $$\|V_1\|_2 \geq \sigma_K(V) = \sigma_1^{-1}(A_{J*}) \geq \sigma_1^{-1}(A) \geq \frac{1}{\sqrt{K}}$$ and $\frac{1}{\|V_1\|_2} V_1$ is the unit-norm leading positive eigenvector of $\Theta$, on event $\cE$ we have
        $$\min_{k \in [K]}V_1(k) = \|V_1\|_2 \min_{k \in [K]}\left\{\frac{V_1(k)}{\|V_1\|_2}\right\} \geq \frac{c}{\sqrt{K}}  $$
        Since $\xi_1 = A_{J*}V_1$, it follows that on event $\cE$, for any $j \in J$, 
        $$\frac{ch_j}{\sqrt{K}}\leq \xi_1(j) \leq \frac{Ch_j}{\sqrt{K}} $$
        \item It can be seen by the definition of $Q$ that 
        $$\sigma_K(Q) \geq \frac{\sigma_K(V)}{\max_{k\in [K]}V_1(k)} \geq c > 0 $$
        and
        $$\sigma_1(Q) \leq \frac{\sigma_1(V)}{\min_{k\in [K]}V_1(k)} \leq C$$
        for some $c, C >0$. Thus, $\max_{k\in [K]}\|v_k^*\|_2 \leq C$ and $Q$ has independent columns, which implies $v_1^*, \dots, v_K^*$ are affinely independent. 
    \end{enumerate}
\end{proof}
\begin{lemma}\label{lem:B2}
    Let $A_{J1}, \dots, A_{JK}$ be the columns of $A_{J*}$. Under Assumption \ref{ass:VH} on the vertex hunting function $\mathcal{V}(\cdot)$, the oracle procedure in Definition \eqref{def: oracle-procedure} returns 
    \begin{equation}\label{eq:oracle}
        \tilde{A}_{J*} = A_{J*} \cdot \diag(\|A_{J1}\|_1^{-1}, \dots, \|A_{JK}\|_1^{-1})
    \end{equation}
on event $\cE$. 
\end{lemma}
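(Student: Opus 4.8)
The plan is to trace the five steps of the oracle procedure (Definition \ref{def: oracle-procedure}) applied to $[D_0]_{J*}$, using the structural facts already established in Lemma \ref{lem:B1}. The starting point is the identity $\Xi = A_{J*}V$ from Lemma \ref{lem:B1}(b), where $V = (\Xi^T A_{J*})^{-1}$, together with the matrix $Q^T = [\diag(V_1)]^{-1}V$ whose first row is identically $1$ and whose remaining rows encode the vertices via $Q = \begin{pmatrix} 1 & \cdots & 1 \\ v_1^* & \cdots & v_K^* \end{pmatrix}$. Reading off entries gives $\xi_k(j) = \sum_{l=1}^K A_{jl}V_k(l)$ and $v_l^*(k) = V_{k+1}(l)/V_1(l)$ for $k \in [K-1]$, which is all the algebra the proof needs.

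First I would introduce the candidate barycentric weights
\[
\pi_j(l) := \frac{A_{jl}V_1(l)}{\xi_1(j)}, \qquad l \in [K],\; j \in J,
\]
and check that they form genuine convex weights. Non-negativity uses $A_{jl}\ge 0$ together with the positivity of $V_1(l)$ and $\xi_1(j)$ on $\cE$ from Lemma \ref{lem:B1}(e); summation to one is immediate because $\sum_l A_{jl}V_1(l) = \xi_1(j)$. A direct substitution then yields $\sum_{l} \pi_j(l)v_l^*(k) = \xi_{k+1}(j)/\xi_1(j) = r_j(k)$, so each point $r_j$ lies in the simplex with vertices $v_1^*, \dots, v_K^*$.

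The crux is arguing that the oracle point cloud $\{r_j : j\in J\}$ is \emph{ideal} in the sense of Definition \ref{def: ideal-pt-cloud}, so that Assumption \ref{ass:VH} can be applied. Containment in the simplex is the previous step; to place the vertices themselves inside the cloud I would invoke separability. Since $J_- \subseteq J$ on $\cE$ (Theorem \ref{lem:A3}(b)) and Assumption \ref{ass: separability} supplies an anchor word $j_k \in J_-$ for each topic $k$, the row $A_{j_k *}$ has a single nonzero entry, in column $k$; the weights above then collapse to $\pi_{j_k} = \mathbf{e}_k$, whence $r_{j_k} = v_k^*$. Thus every vertex appears in the point cloud, the cloud is ideal, and Assumption \ref{ass:VH} guarantees that the vertex hunting step returns exactly $v_1^*, \dots, v_K^*$ (up to a label permutation). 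Affine independence of the vertices, which follows from $\sigma_K(Q) \ge c > 0$ in Lemma \ref{lem:B1}(f), makes the barycentric representation unique, so solving the linear system in Step 4 recovers precisely $\pi_j$; hence $\Pi$ has rows $\{\pi_j : j \in J\}$.

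It then remains to unwind the last two steps. Forming $\diag(\xi_1)\cdot\Pi$ cancels the denominator in $\pi_j$ and gives the clean relation $\diag(\xi_1)\cdot\Pi = A_{J*}\diag(V_1)$, namely \eqref{eq: relation-9}. The $l$-th column of the right-hand side is $V_1(l)A_{Jl}$, whose entries are non-negative with $\ell_1$-norm $V_1(l)\|A_{Jl}\|_1$; the column normalization of Step 5 therefore divides out exactly $V_1(l)\|A_{Jl}\|_1$, producing $A_{Jl}/\|A_{Jl}\|_1$ and hence the claimed $\tilde A_{J*} = A_{J*}\diag(\|A_{J1}\|_1^{-1},\dots,\|A_{JK}\|_1^{-1})$. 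I expect the only genuinely delicate point to be the verification that the point cloud is ideal: it is there that the positivity conclusions of Lemma \ref{lem:B1}(e) and the anchor-word containment $J_- \subseteq J$ must be combined, while everything else amounts to bookkeeping of the SCORE normalization.
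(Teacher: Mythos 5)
Your proposal is correct and follows essentially the same route as the paper's proof: it uses $\Xi = A_{J*}V$ to write $[\mathbf{1}_J, R] = \Pi Q^T$ with $\Pi = [\diag(\xi_1)]^{-1}A_{J*}\diag(V_1)$, verifies the point cloud is ideal via anchor words in $J_-\subseteq J$ on $\cE$, and unwinds the normalization to get \eqref{eq:oracle}. Your explicit verification that the barycentric solve is well-posed (via $\sigma_K(Q)\ge c$) is a detail the paper leaves implicit, but the substance is identical.
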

\begin{proof}
    Note that from Lemma \ref{lem:B1}(b)(i), we have $\Xi = A_{J*}V$. Let $\One_J$ be the vector of size $|J|$ with entries all equal to 1. Now, by the definition of $R$, 
    $$[\One_J, R] = [\text{diag}(\xi_1)]^{-1} \Xi = [\text{diag}(\xi_1)]^{-1} A_{J*}V$$
    Recall from Lemma \ref{lem:B1}(f) the definition $Q^T := [\text{diag}(V_1)]^{-1}V = \begin{pmatrix}
        1 & \dots & 1 \\ v_1^* & \dots & v_K^*
    \end{pmatrix}^T$. Then
    \begin{equation}\label{eq:Pi-1}
        [\One_J, R] = [\text{diag}(\xi_1)]^{-1}A_{J*}\cdot\text{diag}(V_1) Q^T  = \Pi \begin{pmatrix}
        1 & \dots & 1 \\ v_1^* & \dots & v_K^*
    \end{pmatrix}^T
    \end{equation}
    where $\Pi$ is defined as follows:
    \begin{equation}\label{eq: Pi-def}
        \Pi := [\text{diag}(\xi_1)]^{-1}A_{J*}\cdot\text{diag}(V_1) \in \mathbb{R}^{|J| \times K}
    \end{equation}
    
    From \eqref{eq:Pi-1} and \eqref{eq: Pi-def}, we can see that $\Pi$ contains only non-negative entries and the rows of $\Pi$ sum up to 1. This means the rows of $R$ (the point cloud) lie inside the convex hull of simplex vertices $\{v_1^*, \dots, v_K^*\} \subseteq \mathbb{R}^{K-1}$. 

    By Assumption \ref{ass: separability}, for each topic there exists at least an anchor word for that topic in the set $J$ on event $\mathcal{E}$. This means that the point cloud contains at least one point on each vertex $v_1^*, \dots , v_K^*$. By Assumption \ref{ass:VH}, the vertex hunting procedure $\mathcal{V}(\cdot)$ returns precisely the vertices $v_1^*, \dots, v_K^*$. Now let $\{\pi_j: j \in J\} \subseteq{R}^K$ denote the rows of $\Pi$. From taking the transpose of \eqref{eq:Pi-1}, $\Pi$ is then estimated correctly by solving
    $$\begin{pmatrix}
            1 & \dots & 1 \\ v_1^* &\dots & v_K^*            
        \end{pmatrix} \pi_j = \begin{pmatrix}
            1 \\ r_j
        \end{pmatrix} $$
    Now, by the definition of $\Pi$ in \eqref{eq: Pi-def}, 
    \begin{equation}\label{lem:B2-result}
        \diag(\xi_1)\cdot\Pi = A_{J*}\cdot \diag(V_1)
    \end{equation}
    and thus \eqref{eq:oracle} follows if we normalize $\diag(\xi_1)\cdot\Pi$ to ensure its columns sum up to 1. 
\end{proof}

\section{Concentration inequalities involving \texorpdfstring{$Z = D - D_0$}{Z}}\label{section: C}
\begin{remark}\label{rem: C1}{\rm 
    This section contains all the concentration inequalities necessary for our analysis, and is comparable to Section E in the appendix of \cite{ke2022using}. 
    
    Lemma \ref{lem:E1} and Lemma \ref{lem:E2} are similar to Lemmas E.1 and E.2 of \cite{ke2022using} in that they are simple applications of Bernstein's inequality. However, it is crucial to note that our results are applicable even when $\min_{j \in [p]} h_j$ is extremely small, as we only restrict our attention to $j \in J_+$ (where $J_+$ is defined in Section A). In contrast, Lemmas E.1 and E.2 of \cite{ke2022using} require $\min_{j\in [p]} h_j \geq cK/p$ (or at least $\min_{j \in [p]} h_j \gg (Nn)^{-1}\log n$). 

    Lemma \ref{lem: C4} in our paper is based on standard techniques for deriving concentration inequalities for U-statistics. Our results here can be compared to Lemmas E.3-E.6 of \cite{ke2022using}, which use a truncation argument and the fact that the product of two sub-Gaussian variables is sub-exponential. Our bounds do not depend on $p$ except for log factors and are applicable to all parameter regimes (in particular when $ p \gg n \vee N$), whereas the bounds in Lemmas E.3-E.6 \cite{ke2022using} depend heavily on $p$ and $\min_{j \in [p]} h_j$.}
\end{remark}
\begin{lemma}[Bernstein's inequality] Let $X_1, \dots, X_n$ be independent random variables with $\mathbb{E}(X_i) = 0$ and $\text{Var}(X_i) \leq \sigma_i^2$ for all $i$. Let $\sigma^2 := n^{-1}\sum_{i=1}^n \sigma_i^2$. Then for any $t > 0$, 
$$ \bP\left(n^{-1}\left|\sum_{i=1}^n X_i\right| \geq t \right) \leq 2 \exp\left(-\frac{nt^2/2}{\sigma^2 + bt/3}\right)$$
    
\end{lemma}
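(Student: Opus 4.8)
The plan is to prove this by the classical exponential-moment (Chernoff) method, making explicit the boundedness hypothesis $|X_i|\le b$ almost surely that implicitly accompanies Bernstein's inequality (the constant $b$ in the stated bound refers to this uniform bound). Write $S_n := \sum_{i=1}^n X_i$. Since the hypotheses $\bE(X_i)=0$, $\mathrm{Var}(X_i)\le\sigma_i^2$, and $|X_i|\le b$ are all preserved under the sign flip $X_i \mapsto -X_i$, it suffices to bound the one-sided tail $\bP(S_n \ge nt)$; doubling the resulting estimate then yields the two-sided absolute-value version.

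First I would apply Markov's inequality to $e^{\lambda S_n}$ for a free parameter $\lambda > 0$ and use independence to factor the moment generating function:
\[
\bP(S_n \ge nt) \le e^{-\lambda n t}\,\bE\!\left[e^{\lambda S_n}\right] = e^{-\lambda n t}\prod_{i=1}^n \bE\!\left[e^{\lambda X_i}\right].
\]
The central step, which I expect to be the main obstacle, is the per-term bound on $\bE[e^{\lambda X_i}]$. Expanding $e^{\lambda X_i} = 1 + \lambda X_i + \sum_{k\ge 2}\frac{\lambda^k X_i^k}{k!}$ and taking expectations, the linear term vanishes because $\bE(X_i)=0$. For $k \ge 2$ I would use boundedness to write $\bE|X_i|^k \le \bE(X_i^2)\,b^{k-2} \le \sigma_i^2 b^{k-2}$, and then control the tail sum via the elementary factorial bound $k! \ge 2\cdot 3^{k-2}$:
\[
\bE\!\left[e^{\lambda X_i}\right] \le 1 + \sigma_i^2 \sum_{k\ge 2}\frac{\lambda^k b^{k-2}}{k!} \le 1 + \frac{\sigma_i^2\lambda^2/2}{1 - b\lambda/3} \le \exp\!\left(\frac{\sigma_i^2\lambda^2/2}{1-b\lambda/3}\right),
\]
valid for $0 < \lambda < 3/b$, where the penultimate inequality sums the geometric series $\sum_{j\ge 0}(b\lambda/3)^j = (1-b\lambda/3)^{-1}$ and the last uses $1+x\le e^x$. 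Verifying this geometric collapse and the factorial lower bound is the only genuinely delicate bookkeeping in the argument.

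Taking the product over $i$ turns the sum of variances into $n\sigma^2$ by the definition $\sigma^2 = n^{-1}\sum_i\sigma_i^2$, so that $\bP(S_n \ge nt) \le \exp\!\big(-\lambda nt + \tfrac{n\sigma^2\lambda^2/2}{1-b\lambda/3}\big)$. Finally I would optimize the exponent over $\lambda$. Rather than solving the exact first-order condition, the near-optimal explicit choice $\lambda = t/(\sigma^2 + bt/3)$, which one checks lies in $(0, 3/b)$ for every $t>0$, gives $b\lambda/3 = bt/(3\sigma^2+bt)$, hence $1 - b\lambda/3 = 3\sigma^2/(3\sigma^2+bt)$; substituting and simplifying collapses the exponent exactly to $-\,nt^2/\big(2(\sigma^2+bt/3)\big)$. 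Doubling to account for both tails yields the claimed bound. Everything beyond the moment generating function estimate is routine Chernoff optimization.
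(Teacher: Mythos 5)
Your proof is correct, and it fills a genuine gap in the statement itself: the constant $b$ in the displayed bound is never defined in the lemma, and you rightly identified that the implicit hypothesis is $|X_i|\le b$ almost surely (this is exactly how the lemma is invoked in the paper, e.g. with $b=1$ for centered indicator-type variables). The paper states this lemma without proof, treating it as the classical result, so there is no internal argument to compare line by line; the closest analogue is inside the proof of Lemma \ref{lem: C4}, where the authors derive a Bernstein-type bound from scratch via the Bennett route: they bound $\bE(e^{sQ_{im}})\le e^{s^2\sigma^2 G(s)}$ with $G(x)=(e^x-1-x)/x^2$ increasing, optimize $s=\log(1+t/(\cdot))$ to obtain the Bennett function $H(x)=(1+x)\log(1+x)-x$, and then pass to the Bernstein form using $H(x)\ge \frac{3x^2}{6+2x}$. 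Your route instead bounds the moment generating function by $\exp\bigl(\tfrac{\sigma_i^2\lambda^2/2}{1-b\lambda/3}\bigr)$ via the factorial estimate $k!\ge 2\cdot 3^{k-2}$ and a geometric series, then plugs in the near-optimal $\lambda = t/(\sigma^2+bt/3)$; I verified that this choice indeed satisfies $b\lambda/3<1$ and that the exponent collapses exactly to $-\,nt^2/\bigl(2(\sigma^2+bt/3)\bigr)$, so the constants match the statement precisely. The trade-off between the two standard routes: the Bennett approach yields the sharper Poisson-type tail as an intermediate (which the paper's Lemma \ref{lem: C4} exploits before weakening to Bernstein), while your direct Chernoff computation is shorter and avoids the auxiliary inequality for $H$; both require the same boundedness assumption, and both give the same final bound.
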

\begin{lemma}\label{lem:E1} Denote $\tilde{h}_j := h_j \wedge 1$. With probability at least $1 - o(p_n^{-1})$, 
\begin{equation}\label{eq: M-error}
|M(j,j) - M_0(j,j)| \leq C^* \sqrt{\frac{\tilde{h}_j \log p_n}{nN}} \quad \text{ for all } j \in J_+
\end{equation}
\end{lemma}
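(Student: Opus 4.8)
The plan is to apply the stated Bernstein inequality to the sum representation of $M(j,j) - M_0(j,j)$ already derived in the proof of Theorem~\ref{lem:A3}(a), but this time tracking the variance rather than merely the almost-sure bound. Recall from \eqref{eq:M1} that, with the independent one-hot vectors $\{T_{im}\}$ satisfying $T_{im} \sim \text{Multinomial}(1, [D_0]_{*i})$,
\begin{equation*}
    M(j,j) - M_0(j,j) = \frac{1}{nN}\sum_{i=1}^n \sum_{m=1}^{N}\big(T_{im}(j)-\bE[T_{im}(j)]\big).
\end{equation*}
This is a normalized sum of $nN$ independent, mean-zero terms, each bounded in absolute value by $1$ (so the Bernstein boundedness parameter is $b = 1$), and each of variance $\text{Var}(T_{im}(j)) = [D_0]_{ji}(1-[D_0]_{ji}) \leq [D_0]_{ji}$.

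First I would bound the average variance. Averaging over the $nN$ terms and using $\frac{1}{n}\sum_{i=1}^n [D_0]_{ji} = M_0(j,j)$ gives
\begin{equation*}
    \sigma^2 := \frac{1}{nN}\sum_{i=1}^n\sum_{m=1}^N \text{Var}(T_{im}(j)) \leq \frac{1}{n}\sum_{i=1}^n [D_0]_{ji} = M_0(j,j) \leq \tilde{h}_j,
\end{equation*}
where the final inequality uses $M_0(j,j) \leq h_j$ from Lemma~\ref{lem:A2} together with the trivial bound $M_0(j,j) \leq 1$, so that $M_0(j,j) \leq h_j \wedge 1 = \tilde{h}_j$. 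Applying Bernstein with $t = t_j := C^*\sqrt{\tilde{h}_j \log p_n / (nN)}$ then yields
\begin{equation*}
    \bP\big(|M(j,j) - M_0(j,j)| \geq t_j\big) \leq 2\exp\left(-\frac{nN t_j^2 / 2}{\sigma^2 + t_j/3}\right).
\end{equation*}

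The crux is to show the boundedness term $t_j/3$ is negligible compared to the variance proxy $\tilde{h}_j$ in the denominator; this is exactly where the restriction $j \in J_+$ enters. For such $j$ we have $\tilde{h}_j \geq M_0(j,j) > \alpha\alpha_+\tau_n$, hence $t_j / \tilde{h}_j = C^*\tau_n / \sqrt{\tilde{h}_j} \leq (C^*/\sqrt{\alpha\alpha_+})\sqrt{\tau_n} = o(1)$ as $nN \to \infty$. Consequently $\sigma^2 + t_j/3 \leq \tilde{h}_j(1 + o(1))$, and since $nN t_j^2 = (C^*)^2 \tilde{h}_j \log p_n$, the exponent is at least $\tfrac{1}{2}(C^*)^2 \log p_n (1 - o(1))$, giving a per-$j$ failure probability of at most $2 p_n^{-(C^*)^2(1-o(1))/2}$.

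Finally I would union-bound over $j \in J_+$. Since $|J_+| \leq p \leq p_n$, the total failure probability is at most $2 p_n^{1 - (C^*)^2(1-o(1))/2}$, which is $o(p_n^{-1})$ as soon as $(C^*)^2 > 4$; any fixed $C^* > 2$ (a numerical constant independent of the unobserved $K, q$) suffices. The main obstacle is not any single estimate but the interplay in the last two steps: obtaining the variance-adapted rate $\sqrt{\tilde{h}_j}$ rather than the cruder Hoeffding rate of Theorem~\ref{lem:A3}(a) hinges on the boundedness term being dominated, which fails for globally infrequent words and is rescued precisely by confining attention to $J_+$ (cf. Remark~\ref{rem: C1}).
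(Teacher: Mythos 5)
Your proposal is correct and follows essentially the same route as the paper: both use the one-hot decomposition \eqref{eq:M1}, bound the variance proxy by $\tilde{h}_j$ (you via $\sigma^2 \leq M_0(j,j) \leq \tilde h_j$, the paper via $\mathrm{Var}(T_{im}(j)) \leq D_0(j,i) \wedge 1$), apply Bernstein's inequality, and use $\tilde h_j \gtrsim \tau_n \gg \log p_n/(nN)$ for $j \in J_+$ to ensure the variance term dominates the boundedness term before union-bounding over $J_+$. The only cosmetic difference is that the paper phrases the last step as a case split on whether $\tilde h_j \geq \log p_n/(nN)$, while you argue directly that $t_j/3 = o(\tilde h_j)$; these are equivalent.
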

\begin{proof}
    Similar to \eqref{eq:M1}, for a fixed $j \in J_+$ we have
    $$M(j,j) - M_0(j,j) = \frac{1}{n}\sum_{i=1}^n Z_{ji} = \frac{1}{nN}\sum_{i=1}^n \sum_{m=1}^{N}(T_{im}(j)-\bE[T_{im}(j)])$$
Note that since $T_{im}(j) \sim \text{Bernoulli}(D_0(j,i))$, $|T_{im}(j)- \bE[T_{im}(j)]| \leq 1$ and 
\begin{equation}\label{eq:h_j}
    \text{Var}(T_{im}(j)) \leq D_0(j,i) = \sum_{k=1}^K A_{jk}W_{ki} \leq \sum_{k=1}^KA_{jk} = h_j
\end{equation}
(and also $\text{Var}(T_{im}(j)) \leq 1$). We apply Bernstein's inequality to conclude for any $t > 0$:
$$\bP\left(|M(j,j) - M_0(j,j)| \geq t\right) \leq 2\exp\left(-\frac{nNt^2/2}{\tilde{h}_j + t/3}\right) $$
One can choose $t = C^*\sqrt{\frac{\tilde{h}_j \log p_n}{nN}}$ or $t = \frac{C^*\log p_n}{nN}$ depending on whether $\tilde{h}_j \geq \frac{\log p_n}{nN}$ holds. Thus with probability at least $1- o(p_n^{-2})$, 
\begin{align*}
    |M(j,j) - M_0(j,j)| &\leq C^*\max\left(\sqrt{\frac{\tilde{h}_j\log p_n}{nN}}, \frac{\log p_n}{nN}\right) \\ 
    &\leq C^*\sqrt{\frac{\tilde{h}_j \log p_n}{nN}}
\end{align*}
since if $j\in J_+$, then $\tilde{h}_j > \alpha_+\alpha \sqrt{\frac{\log p_n}{nN}} \geq \frac{c^*\log p_n}{nN}$ when $nN$ is sufficiently large so that $\frac{\log p_n}{nN} \leq 1$. We then take union bound over $j \in J_+$. 
\end{proof}
\begin{lemma}\label{lem:E2} Denote $\{Z_j: j \in J_+\} \subseteq \mathbb{R}^n$ as the rows of $Z$ in $J_+$, and $\{W_k: k \in [K]\} \subseteq \mathbb{R}^n$ as the rows of $W$. With probability at least $1 - o(p_n^{-1})$, 
\begin{equation}\label{eq:E.2}
    |Z_j^T W_k| \leq C^*\sqrt{\frac{n\tilde{h}_j\log p_n}{N}} \quad \text{for all } j \in J_+ \text{ and } k\in [K]
\end{equation}
\end{lemma}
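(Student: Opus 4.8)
The plan is to express $Z_j^T W_k$ as a sum of independent, mean-zero, bounded summands and then apply the Bernstein inequality stated above, exploiting the restriction $j \in J_+$ to bound the variance proxy from below. Using the one-hot representation $[Z]_{*i} = \frac1N\sum_{m=1}^N (T_{im} - \bE[T_{im}])$ from \eqref{eq:Z-expression}, I would write
\[
Z_j^T W_k = \sum_{i=1}^n W_{ki} Z_{ji} = \frac1N \sum_{i=1}^n \sum_{m=1}^N W_{ki}\bigl(T_{im}(j) - \bE[T_{im}(j)]\bigr) =: \frac1N \sum_{i=1}^n\sum_{m=1}^N Y_{im},
\]
where the $Y_{im}$ are independent across both $i$ and $m$ (documents are independent, and within a document the $N$ draws $T_{im}$ are independent) and satisfy $\bE[Y_{im}]=0$.

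Next I would record the two ingredients Bernstein needs. Since $W_{ki}\in[0,1]$ and $|T_{im}(j)-\bE[T_{im}(j)]|\le 1$, each summand obeys $|Y_{im}|\le 1$. For the variance, $\text{Var}(T_{im}(j)) = D_0(j,i)(1-D_0(j,i)) \le D_0(j,i) \le \tilde h_j$ (using $D_0(j,i)=\sum_l A_{jl}W_{li}\le h_j$ together with $D_0(j,i)\le 1$, exactly as in \eqref{eq:h_j}), so that $\text{Var}(Y_{im}) \le W_{ki}^2 \tilde h_j$ and, summing, $\sum_{i,m}\text{Var}(Y_{im}) \le \tilde h_j N\sum_i W_{ki}^2 \le \tilde h_j Nn$ because $\sum_i W_{ki}^2 = n\Sigma_W(k,k)\le n$. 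Applying Bernstein to $\sum_{i,m}Y_{im}$ with deviation $s = C^*\sqrt{Nn\tilde h_j\log p_n}$ (so that $s/N$ matches the target bound on $|Z_j^T W_k|$) produces a tail exponent of order $s^2/(\tilde h_j Nn + s/3)$.

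The crucial step, and the main obstacle, is to verify that the variance term $\tilde h_j Nn$ dominates the Bernstein correction $s/3$ in the denominator; this is precisely what fails for arbitrarily infrequent words and what forces the restriction to $J_+$ (the very distinction emphasized in Remark \ref{rem: C1}). For $j\in J_+$ one has $\tilde h_j \ge M_0(j,j) > \alpha_+\alpha\tau_n \asymp \sqrt{\log p_n/(Nn)}$ by Lemma \ref{lem:A2} and the definition of $J_+$ in Theorem \ref{lem:A3}, whence $\frac{s/3}{\tilde h_j Nn} \asymp \sqrt{\log p_n/(Nn\tilde h_j)} = o(1)$ as $nN\to\infty$. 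Consequently the denominator is at most $2\tilde h_j Nn$, the exponent is at least $\tfrac{C^{*2}}{4}\log p_n$, and choosing $C^*$ large enough makes the per-pair failure probability $o(p_n^{-2})$. A union bound over the deterministic set $J_+$, with $|J_+|\le p$, and over the $K$ rows $k\in[K]$ then yields the claimed uniform bound \eqref{eq:E.2} with probability $1-o(p_n^{-1})$.
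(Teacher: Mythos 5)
Your proposal is correct and follows essentially the same route as the paper: express $Z_j^TW_k$ via the one-hot representation as a normalized sum of independent, mean-zero, bounded terms, bound the variance through $\mathrm{Var}(T_{im}(j))\le D_0(j,i)\le \tilde h_j$, apply Bernstein, and use $\tilde h_j \gtrsim \sqrt{\log p_n/(nN)}$ for $j\in J_+$ to make the variance term dominate the Bernstein correction before taking a union bound over $J_+\times[K]$. The only cosmetic difference is your normalization of the summands (and the slightly sharper use of $\sum_i W_{ki}^2\le n$ in place of $W_{ki}^2\le 1$), which does not change the argument.
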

\begin{proof}
    Note that 
    \begin{equation}\label{eq:small_z}
    Z_{ji} = \frac{1}{N}\sum_{m=1}^{N} (T_{im}(j) - \bE[T_{im}(j)])  
    \end{equation}
    and so for any $j \in J_+$ and $ k \in [K]$,
    $$Z_j^TW_k = \sum_{i=1}^nZ_{ji}W_{ki} = \frac{1}{nN}\sum_{i=1}^n\sum_{m=1}^{N} nW_{ki}(T_{im}(j)-\mathbb{E}[T_{im}(j)])$$
    We note that $|nW_{ki}(T_{im}(j) - \mathbb{E}[T_{im}(j)])| \leq n$ and $\text{Var}[nW_{ki}(T_{im}(j)-\bE[T_{im}(j)])] \leq n^2 \tilde{h}_j $, so by Bernstein inequality, for any $t > 0$, 
    $$\bP(|Z_j^TW_k| > t) \leq 2\exp\left(-\frac{nNt^2/2}{n^2\tilde{h}_j + nt/3}\right) $$
    We can let $t = C^*\sqrt{\frac{n\tilde{h}_j\log p_n}{N}}$ and again by noting that $\tilde{h}_j \geq \alpha_+\alpha \sqrt{\frac{\log p_n}{nN}} \geq c^* \frac{\log p_n}{nN}$ if $j \in J_+$, we obtain \eqref{eq:E.2}.  
    
\end{proof}
\begin{lemma}\label{lem: C4}
    With probability at least $1-o(p_n^{-1})$, 
    \begin{equation}\label{eq:7}
        |Z_j^TZ_l - \bE(Z_j^TZ_l)| \leq C^*\sqrt{\frac{n\tilde{h}_j\tilde{h}_l \log p_n}{N}} \quad \text{ for all } j,l \in J_+ \text{ with } j \neq l
    \end{equation}
    \begin{equation}\label{eq:8}
        |Z_j^TZ_j - \bE(Z_j^TZ_j)| \leq C^*\sqrt{\frac{n\tilde{h}_j^2\log p_n}{N}} + \frac{C^*}{N}\sqrt{\frac{n\tilde{h}_j \log p_n}{N}} \text{ for all } j \in J_+
    \end{equation}
\end{lemma}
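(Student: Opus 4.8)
The plan is to prove \eqref{eq:7} and \eqref{eq:8} by expanding each product $Z_j^T Z_l$ over the independent word draws and isolating the degenerate U-statistic that drives the fluctuations. Writing $a_{im} := T_{im}(j) - \mathbb{E}[T_{im}(j)]$ and $b_{im} := T_{im}(l) - \mathbb{E}[T_{im}(l)]$ as in \eqref{eq:small_z}, so that $Z_{ji} = N^{-1}\sum_m a_{im}$ and $Z_{li} = N^{-1}\sum_m b_{im}$, I would split
\[
Z_j^T Z_l = \frac{1}{N^2}\sum_{i=1}^n \sum_{m} a_{im}b_{im} \;+\; \frac{1}{N^2}\sum_{i=1}^n \sum_{m\neq m'} a_{im}b_{im'} \;=:\; S_1 + S_2 .
\]
Since distinct word draws $T_{im},T_{im'}$ (whether $m\neq m'$ within a document or across documents) are independent and centered, we have $\mathbb{E}[S_2]=0$ and $\mathbb{E}[Z_j^T Z_l]=\mathbb{E}[S_1]$, so it suffices to control $S_1-\mathbb{E}[S_1]$ and $S_2$ separately.

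For $S_1$ I would apply Bernstein's inequality to the $nN$ independent mean-zero summands $a_{im}b_{im}-\mathbb{E}[a_{im}b_{im}]$. These are bounded by $2$, and a short moment computation using $T_{im}(j)T_{im}(l)=0$ for $j\neq l$ together with $\mathbb{E}[a_{im}^2]\le D_0(j,i)\le \tilde h_j$ (exactly as in \eqref{eq:h_j}) gives $\operatorname{Var}(a_{im}b_{im})\lesssim \tilde h_j\tilde h_l$ in the off-diagonal case and $\operatorname{Var}(a_{im}^2)\le \mathbb{E}[a_{im}^4]\le \tilde h_j$ in the diagonal case. With the $N^{-2}$ prefactor this yields a contribution of order $N^{-3/2}\sqrt{n\tilde h_j\tilde h_l\log p_n}$ when $j\neq l$, and of order $N^{-1}\sqrt{N^{-1}n\tilde h_j\log p_n}$ when $j=l$ — which is precisely the extra second term appearing in \eqref{eq:8}. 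Restricting to $j,l\in J_+$, where $\tilde h_j,\tilde h_l\ge \alpha_+\alpha\sqrt{\log p_n/(nN)}$, lets me absorb the additive $\log p_n$ Bernstein correction into the main term, in the same way as in Lemmas \ref{lem:E1} and \ref{lem:E2}.

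The crux is $S_2$, which is a sum over the $n$ independent documents of the degenerate second-order U-statistics $V_i=\sum_{m\neq m'}a_{im}b_{im'}$. Here I would invoke the standard machinery for U-statistic concentration: decouple the two copies of the draws (replacing $b_{im'}$ by an independent copy), condition on the $a$-draws so that the inner sum becomes a sum of independent terms amenable to Bernstein, and then control the data-dependent variance proxy $\sum_i(\sum_m a_{im})^2\lesssim nN\tilde h_j$ and the maximal block $\max_i|\sum_m a_{im}|$ by their own Bernstein bounds. The dominant Gaussian term comes out of order $N^{-1}\sqrt{n\tilde h_j\tilde h_l\log p_n}$, which is already below the claimed $\sqrt{n\tilde h_j\tilde h_l\log p_n/N}$. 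I expect the main obstacle to be verifying that the lower-order U-statistic correction terms (coming from the boundedness of $a_{im}$ and the tail of $\max_i|\sum_m a_{im}|$) remain dominated by the stated bound uniformly across \emph{all} parameter regimes, and in particular when $N\gg n$; this is exactly where the $J_+$-thresholding lower bounds $\tilde h_j,\tilde h_l\gtrsim\sqrt{\log p_n/(nN)}$ are used decisively, mirroring how they rescue Lemmas \ref{lem:E1}–\ref{lem:E2} from the small-$h_j$ regime.

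Finally I would take a union bound over the at most $|J_+|^2\le p_n^2$ pairs $(j,l)$, tuning the Bernstein constants so that each pair fails with probability $o(p_n^{-3})$, which yields the claimed $1-o(p_n^{-1})$ guarantee. The diagonal statement \eqref{eq:8} then follows from the same two-part decomposition specialized to $l=j$, with the distinctive second term originating from the $S_1$ bound and the first term from the $S_2$ analysis above.
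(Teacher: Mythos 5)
Your decomposition of $Z_j^TZ_l$ into the diagonal part $S_1$ and the off-diagonal degenerate part $S_2$ is exactly the paper's decomposition (there written as $\frac{n}{N}V_1+\frac{N-1}{N}V_2$), and your treatment of $S_1$ — Bernstein on the $nN$ independent products, using $T_{im}(j)T_{im}(l)=0$ for $j\neq l$ to get variance proxy $\tilde h_j\tilde h_l$, and $\mathrm{Var}(X_{im}^2(j))\le \tilde h_j$ on the diagonal to produce the extra term in \eqref{eq:8} — coincides with the paper's. Where you genuinely diverge is the U-statistic part. The paper does not decouple: it writes $V_2$ as an average over permutations of $\{1,\dots,N\}$ of sums of $N'=\lfloor N/2\rfloor$ independent products per document (Hoeffding's blocking trick), pushes the exponential moment through this average by convexity of $e^x$, and runs a Bernstein-type Chernoff bound on each block with per-product variance $\tilde h_j\tilde h_l$, the $J_+$ lower bound on $\tilde h_j\tilde h_l$ absorbing the additive correction; this is entirely self-contained and needs no auxiliary high-probability events. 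Your route — decoupling plus conditional Bernstein — is also standard and, as you observe, would in principle yield the sharper scale $N^{-1}\sqrt{n\tilde h_j\tilde h_l\log p_n}$ (the blocking trick gives up a factor of $\sqrt{N}$ relative to the true variance of the degenerate part), but it requires separate concentration bounds for $\sum_i(\sum_m a_{im})^2$ and $\max_i|\sum_m a_{im}|$, which you flag but do not carry out. Since only the weaker rate $\sqrt{n\tilde h_j\tilde h_l\log p_n/N}$ is claimed, the paper's argument is the more economical one; both approaches deliver the stated bounds after the union bound over $J_+$.
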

\begin{proof}
    Denote $X_{im}(j) := T_{im}(j) - \mathbb{E}[T_{im}(j)]$. Fix $j, l \in J_+$. By \eqref{eq:small_z}, note that 
    \begin{align*}
    Z_j^T Z_l &= \sum_{i=1}^n Z_{ji}Z_{li} =  \frac{1}{N^2}\sum_{i=1}^n  \sum_{m=1}^{N}\sum_{s=1}^{N} X_{im}(j) X_{is}(l) \\ 
    &= \frac{1}{N^2}\sum_{i=1}^n \sum_{m=1}^{N} X_{im}(j)X_{im}(l) + \frac{1}{N^2} \sum_{i=1}^n \sum_{\substack{1 \leq m, s \leq N \\ m\neq s}}X_{im}(j)X_{is}(l)\\ 
    &= \frac{n}{N}V_1 + \frac{N-1}{N}V_2 
    \end{align*}
    where we define 
    $$V_1 := \frac{1}{nN}\sum_{i=1}^n \sum_{m=1}^{N} X_{im}(j)X_{im}(l)$$
    $$V_2 := \frac{1}{N(N-1)}\sum_{i=1}^n \sum_{\substack{1 \leq m, s \leq N \\ m\neq s}}X_{im}(j)X_{is}(l) $$
    Note that $\bE(V_2) = 0$, and we need an upper bound on $|V_1 -\bE(V_1)|$ and  $|V_2|$. We will deal with $V_2$ first. Define $S_{N}$ as the set of permutations on $\{1, \dots, N\}$ and $N' := \lfloor N / 2\rfloor$. Also define 
    $$W_i(X_{i1} ,\dots, X_{iN}) := \frac{1}{N'} \sum_{m=1}^{N'} X_{i, 2m-1}(j) X_{i, 2m}(l) $$
    Then by symmetry (note that the inner sum over $m,s$ in the definition of $V_2$ has $N(N-1)$ summands),
    $$V_2 = \frac{\sum_{i=1}^n\sum_{\pi \in S_{N}}W_i(X_{i,\pi(1)}, \dots, X_{i,\pi(N)})}{N!} $$
    Define, for a given $\pi \in S_{N}$,  
    $$Q_\pi := \sum_{i=1}^nN' W_i(X_{\pi(1)},\dots, X_{\pi(N)}) $$
    so that $N' V_2  = \frac{1}{N!}\sum_{\pi\in S_{N}}Q_\pi$.
    For arbitrary $t, s > 0$, by Markov's inequality and the convexity of the exponential function, 
    $$\bP(N'V_2 \geq t) \leq e^{-st} \bE(e^{sN'V_2}) \leq e^{-st}\frac{\sum_{\pi\in S_{N}}\bE(e^{sQ_\pi})}{N!} $$
    Also, define $Q = Q_\pi$ for $\pi$ being the identity permutation. Observe that 
    $$Q = \sum_{i=1}^n \sum_{m=1}^{N'}Q_{im} \quad \text{where } Q_{im} = X_{i,2m-1}(j) X_{i,2m}(l)$$ so $Q$ is a (double) summation of mutually independent variables. We have $|Q_{im}| \leq 1$, $\bE(Q_{im}) = 0$ and $\bE(Q^2_{im}) \leq \tilde{h}_j\tilde{h}_l$. The rest of the proof for $V_2$ is similar to the standard proof for the usual Bernstein's inequality and one can skip to the conclusion \eqref{eq:Bern}. 

    If we denote $G(x) = \frac{e^x - 1-  x}{x^2}$, observe $G(x)$ is increasing. Hence, 
    \begin{align*}
    \bE(e^{sQ_{im}}) &= \bE\left(1+ sQ_{im} + \frac{s^2Q_{im}^2}{2} +\dots \right)  \\ 
    &= \bE[1 + s^2Q_{im}^2G(sQ_{im})] \\
    &\leq \bE[1 + s^2Q_{im}^2G(s)] \\ 
    &\leq 1 + s^2\tilde{h}_j\tilde{h}_lG(s) \leq e^{s^2\tilde{h}_j\tilde{h}_lG(s)}
    \end{align*}
    Hence, 
    $$e^{-st}\bE(e^{sQ}) = \exp(-st + N'n\tilde{h}_j\tilde{h}_l s^2G(s)) $$
    Since this bound is applicable to all the other $Q_\pi$ and not just $\pi$ being equal to the identity permutation, we have 
    $$\bP(N'V_2 \geq t) \leq \exp(-st + N'n \tilde{h}_j \tilde{h}_ls^2G(s)) = \exp\left(-st + N'n\tilde{h}_j\tilde{h}_l(e^s - 1 - s)\right) $$
    Now we choose $s = \log \left(1 + \frac{t}{N'n\tilde{h}_j\tilde{h}_l}\right) > 0$. Then 
    \begin{align*}
        \bP(N'V_2 \geq t) &\leq \exp\left[-t\log \left(1 + \frac{t}{N'n\tilde{h}_j\tilde{h}_l}\right) + N' n\tilde{h}_j\tilde{h}_l\left(\frac{t}{N'n\tilde{h}_j\tilde{h}_l} - \log\left(1 + \frac{t}{N'n\tilde{h}_j\tilde{h}_l}\right)\right)\right] \\  
        &= \exp\left[-N'n\tilde{h}_j\tilde{h}_lH\left(\frac{t}{N'n\tilde{h}_j\tilde{h}_l}\right)\right]
    \end{align*}
    where we define the function $H(x)= (1+x)\log (1+x) - x$. Note that we have the inequality
    $$H(x) \geq \frac{3x^2}{6+2x} $$
    for all $x > 0$. Hence, 
    $$\bP\left(N'V_2 \geq t\right) \leq \exp\left(-\frac{t^2/2}{N'n\tilde{h}_j\tilde{h}_l + t/3}\right) $$
    or by rescaling, 
    \begin{equation}\label{eq:Bern}
    \bP(N'V_2 \geq N'nt) \leq \exp\left(-\frac{N'nt^2/2}{\tilde{h}_j\tilde{h}_l +t/3}\right)  
    \end{equation}
    We can choose $t^2 = \frac{C^*\tilde{h}_j\tilde{h}_l}{N'n}\log p_n$ and note that $\tilde{h}_j\tilde{h}_l \geq (\alpha_+ \alpha)^2\frac{\log p_n}{nN}$ if $j,l \in J_+$. Hence, with probability $1-o(p_n^{-1})$ (even after taking union bound over $j, l \in J_+$),
    $$|V_2| \leq C^*\sqrt{\frac{n\tilde{h}_j\tilde{h}_l\log p_n}{N}} $$
    As for $V_1$, we can just apply the usual Bernstein's inequality. Let $\mu_{ij} = \bE[T_{im}(j)] = [D_0]_{ji}$ and define $\mu_{il}$ similarly; note $\mu_{ij} \leq \tilde{h}_j$. Since $X_{im}(j) = T_{im}(j) - \mu_{ij}$,
    \begin{equation}\label{eq:XX}
    X_{im}(j) X_{im}(l) = T_{im}(j) T_{im}(l) - \mu_{ij} T_{im}(l)- \mu_{il}T_{im}(j) + \mu_{ij}\mu_{il}  
    \end{equation}
    If $j \neq l$ then $T_{im}(j)T_{im}(l) = 0$ and so 
    \begin{align*}
        \text{Var}[X_{im}(j) X_{im}(l)] &= \text{Var}\left[\mu_{ij}T_{im}(l) + \mu_{il}T_{im}(j)\right] \\
        &\leq \bE[\mu_{ij}T_{im}(l) + \mu_{il}T_{im}(j)]^2 \\ 
        &= \mu_{ij}^2 \mu_{il} + \mu_{il}^2\mu_{ij} = \mu_{ij}\mu_{il}(\mu_{ij}+\mu_{il}) \\ 
        &\leq \mu_{ij}\mu_{il} \leq \tilde{h}_j \tilde{h}_l
    \end{align*}
    since $\mu_{ij} + \mu_{il} \leq 1$. Hence, by Bernstein's inequality,
    $$\bP\left(|V_1 -\bE(V_1)| \geq t\right) \leq 2\exp\left(-\frac{-nNt^2/2}{\tilde{h}_j\tilde{h}_l + t/3}\right)$$
    which is similar to \eqref{eq:Bern}, so we obtain with probability $1-o(p_n^{-1})$ that 
    $$\frac{n}{N}|V_1 - \bE(V_1)| \leq \frac{C^*}{N}\sqrt{\frac{n\tilde{h}_j\tilde{h}_l\log p_n}{N}} \leq C^*\sqrt{\frac{n\tilde{h}_j\tilde{h}_l\log p_n}{N}}$$
    and \eqref{eq:7} is proven. 

    If $j=l$ then since $T_{im}^2(j) = T_{im}(j)$, \eqref{eq:XX} leads to
    $$X^2_{im}(j) = T_{im}(j)(1-2\mu_{ij}) + \mu_{ij}^2 $$
    and since $|1-2\mu_{ij}| \leq 1$ and $\text{Var}(T_{im}(j)) = \mu_{ij}(1-\mu_{ij})$,
    $$\text{Var}[X^2_{im}(j)] \leq \mu_{ij} \leq \tilde{h}_j $$
    and so we obtain \eqref{eq:8} since with probability $1 - o(p_n^{-1})$
    $$\frac{n}{N}|V_1-\bE(V_1)| \leq \frac{C^*}{N}\sqrt{\frac{n\tilde{h}_j \log p_n}{N}} $$
\end{proof}
\begin{corollary} With probability $1-o(p_n^{-1})$, the following statements hold:
\begin{equation}\label{eq:C.4-1}
\|[ZW_k]_{J_+}\|_2 \leq C^* \sqrt{\frac{nK \log p_n}{N}} \text{ for all } 1 \leq k \leq K  
\end{equation}
\begin{equation}\label{eq:C.4-2}
\|[ZZ^T - \bE(ZZ^T)]_{jJ_+}\|_2 \leq C^* \sqrt{\frac{n\tilde{h}_jK\log p_n}{N}} \text{ for all } j \in J_+    
\end{equation}
\begin{equation}\label{eq:C.4-3}
    \|[ZZ^T - \bE(ZZ^T)]_{J_+J_+}\|_F \leq C^*K\sqrt{\frac{n\log p_n}{N}}
\end{equation}
\end{corollary}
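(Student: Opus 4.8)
The plan is to derive all three displays by aggregating over the index set $J_+$ the pointwise concentration bounds already established in Lemmas \ref{lem:E2} and \ref{lem: C4}, using throughout the normalization identity $\sum_{j=1}^p h_j = K$ (valid because each column of $A$ sums to one) together with $\tilde h_j = h_j \wedge 1 \leq h_j$, so that $\sum_{j \in J_+}\tilde h_j \leq K$. Each bound should be read as a deterministic consequence on the intersection of the high-probability events of those two lemmas; since we combine only $O(1)$ events (recall $K$ is fixed), that intersection still has probability $1-o(p_n^{-1})$, which is the probability claimed.

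For \eqref{eq:C.4-1}, I would first observe that the $j$-th entry of $ZW_k$ equals $Z_j^T W_k$, so $\|[ZW_k]_{J_+}\|_2^2 = \sum_{j\in J_+}(Z_j^T W_k)^2$. Applying \eqref{eq:E.2} entrywise and factoring out $\frac{n\log p_n}{N}$ leaves $\sum_{j\in J_+}\tilde h_j \leq K$, and taking square roots gives the stated $\sqrt{nK\log p_n/N}$ rate.

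For \eqref{eq:C.4-2}, the entries of the row are $Z_j^T Z_l - \bE(Z_j^T Z_l)$ for $l \in J_+$, so I split the squared norm into the off-diagonal terms ($l\neq j$) and the single diagonal term ($l=j$). The off-diagonal terms are controlled by \eqref{eq:7}; squaring and factoring yields $\frac{n\tilde h_j\log p_n}{N}\sum_{l\neq j}\tilde h_l \leq \frac{n\tilde h_j K\log p_n}{N}$. The diagonal term requires the two-piece bound \eqref{eq:8}, and this is the one point I would flag as needing care: after squaring, one must dominate both $\frac{n\tilde h_j^2\log p_n}{N}$ and the auxiliary $\frac{1}{N^2}\cdot\frac{n\tilde h_j\log p_n}{N}$ piece. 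Using $\tilde h_j \leq 1$ on the former and $N \geq 1$ on the latter shows the squared diagonal term is also $O(n\tilde h_j\log p_n/N)$, which is absorbed into the off-diagonal bound (using $K\geq 1$). Collecting and taking square roots gives \eqref{eq:C.4-2}.

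Finally, \eqref{eq:C.4-3} follows by summing the squared row bounds, $\|[ZZ^T-\bE(ZZ^T)]_{J_+J_+}\|_F^2 = \sum_{j\in J_+}\|[ZZ^T-\bE(ZZ^T)]_{jJ_+}\|_2^2$, where each summand is the squared form of \eqref{eq:C.4-2}, namely $O(n\tilde h_j K\log p_n/N)$. Factoring out $\frac{nK\log p_n}{N}$ and invoking $\sum_{j\in J_+}\tilde h_j \leq K$ one last time produces $O(nK^2\log p_n/N)$, whose square root is the claimed $K\sqrt{n\log p_n/N}$. The argument is essentially bookkeeping; the only genuine content is the repeated use of the $\sum_j h_j = K$ normalization to convert per-word bounds into aggregate bounds depending on $p$ only through $\log p_n$, which is precisely the improvement over \cite{ke2022using} highlighted in Remark \ref{rem: C1}.
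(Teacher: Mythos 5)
Your proposal is correct and follows essentially the same route as the paper, which likewise obtains all three displays by squaring the entrywise bounds \eqref{eq:E.2}, \eqref{eq:7} and \eqref{eq:8}, summing over $J_+$, and invoking $\sum_{j \in J_+}\tilde h_j \leq \sum_{j=1}^p h_j = K$. Your explicit handling of the diagonal term in \eqref{eq:C.4-2} (absorbing both pieces of \eqref{eq:8} via $\tilde h_j \leq 1$ and $N \geq 1$) is a detail the paper leaves implicit, but the argument is the same.
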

\begin{proof}
    This follows from \eqref{eq:E.2}, \eqref{eq:7} and \eqref{eq:8} after squaring the error bounds and summing them up. We note that $\sum_{j \in J_+} \tilde{h}_j \leq \sum_{j=1}^p h_j =  K$.
\end{proof}

\section{Estimation errors for singular vectors and the point cloud}
We will use the following theorem (a row-wise perturbation bound for eigenvectors) from \cite{ke2022using}. 
\begin{lemma}[Lemma F.1 of \cite{ke2022using}]\label{lem:D1}
    Let $B_0$ and $B$ be $m \times m$ symmetric matrices with $\text{rank}(B_0) = K$, and assume $B_0$ is positive semi-definite. For $1 \leq k \leq K$, let $\delta_k^0$ and $\delta_k$ be the $k^\text{th}$ largest eigenvalues of $B_0$ and $B$ respectively, and let $u_k^0$ and $u_k$ be the $k^\text{th}$ eigenvectors of $B_0$ and $B$. Fix $1 \leq s \leq k \leq K$. If for some $c \in (0,1)$, suppose (by default, if $s = 1$ then $\delta_{s-1}^0 - \delta_s^0 = \infty$)
    $$\min(\delta_{s-1}^0 - \delta_s^0, \delta_k^0 - \delta_{k+1}^0, \min_{l \in [K]} \delta_l^0) \geq c\|B_0\|_\op, \quad \|B-B_0\|_\op \leq (c/3)\|B_0\|_\op$$
    Write $U_0 = [u_s^0, \dots, u_k^0], U = [u_s, \dots u_k]$ and $\Xi = [u_1^0, \dots, u_K^0]$. There exists an orthonormal matrix $O$ such that for all $1 \leq j \leq p$, 
    $$\|(UO - U_0)_{j*}\|_2 \leq \frac{5}{c\|B_0\|_\op}(\|B-B_0\|_\op\|\Xi_{j*}\|_2 + \sqrt{K}\|(B-B_0)_{j*}\|_2) $$
\end{lemma}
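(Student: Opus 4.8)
The plan is to treat this as a standard row-wise (rather than operator-norm) eigenvector perturbation bound and to convert the usual subspace ($\sin\Theta$) estimate into a per-row estimate by starting from the exact eigenvector equation for $B$. Write $E := B - B_0$ and use the rank-$K$ spectral decomposition $B_0 = \Xi D_0 \Xi^T$ with $D_0 = \diag(\delta_1^0, \dots, \delta_K^0)$, and let $\Lambda = \diag(\delta_s, \dots, \delta_k)$ collect the eigenvalues of $B$ attached to the block $U = [u_s, \dots, u_k]$, so that $BU = U\Lambda$. Note that $U_0 = \Xi S$, where $S$ is the $K \times (k-s+1)$ selection matrix picking out columns $s$ through $k$.

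First I would record the scalar facts that make the block well-separated and $\Lambda$ invertible. The gap hypotheses $\delta_{s-1}^0 - \delta_s^0 \ge c\|B_0\|_\op$ and $\delta_k^0 - \delta_{k+1}^0 \ge c\|B_0\|_\op$, together with $\|E\|_\op \le (c/3)\|B_0\|_\op$ and Weyl's inequality, guarantee that the block $[s,k]$ is spectrally isolated and that every block eigenvalue satisfies $\delta_l \ge \min_l \delta_l^0 - \|E\|_\op \ge (2c/3)\|B_0\|_\op$, whence $\|\Lambda^{-1}\|_\op \le \tfrac{3}{2c\|B_0\|_\op}$. A Davis--Kahan / Procrustes argument then produces an orthonormal $O$ with $\|UO - U_0\|_\op \lesssim \|E\|_\op/(c\|B_0\|_\op)$, and controls the off-block inner products $\langle u_i^0, u_l\rangle$ (for $i \notin [s,k]$, $l \in [s,k]$) at the same order.

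The core step is to rewrite one row. From $BU=U\Lambda$ and $B_0 = \Xi D_0\Xi^T$ I get $U = \Xi\,(D_0 \Xi^T U \Lambda^{-1}) + E U \Lambda^{-1}$; multiplying on the right by $O$ and extracting row $j$ gives
$$(UO - U_0)_{j*} = \Xi_{j*}\,\underbrace{\big(D_0 \Xi^T U \Lambda^{-1} O - S\big)}_{=:\,G} + (E U \Lambda^{-1} O)_{j*}.$$
The second term is pure noise: its norm is at most $\|E_{j*}\|_2\,\|\Lambda^{-1}\|_\op \lesssim \|(B-B_0)_{j*}\|_2/(c\|B_0\|_\op)$, which after absorbing a crude $\sqrt K$ slack matches the second term in the claimed bound. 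For the first I would use $\|\Xi_{j*} G\|_2 \le \|\Xi_{j*}\|_2\,\|G\|_\op$, so the whole argument reduces to showing $\|G\|_\op \lesssim \|E\|_\op/(c\|B_0\|_\op)$.

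Bounding $\|G\|_\op$ is the main obstacle and the technical heart of the proof. One must show $D_0 \Xi^T U \Lambda^{-1} O$ is close to $S$: the factor $D_0$ contributes a scale $\|B_0\|_\op$ and $\Lambda^{-1}$ contributes $1/\|B_0\|_\op$, so these scales cancel, and the residual closeness of order $\|E\|_\op/(c\|B_0\|_\op)$ comes from two sources that must be combined carefully. On the diagonal block $i,l\in[s,k]$, the ratio $\delta_i^0/\delta_l$ is within $O(\|E\|_\op/\|B_0\|_\op)$ of $1$ by Weyl while the Gram matrix $U_0^T U$ is within $O(\|E\|_\op/(c\|B_0\|_\op))$ of $O^T$ by Procrustes; on the off-diagonal blocks the inner products $\langle u_i^0, u_l\rangle$ are themselves of order $\|E\|_\op/(c\|B_0\|_\op)$ by Davis--Kahan, while the ratios $\delta_i^0/\delta_l \le 3/(2c)$ stay bounded. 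Assembling these two contributions yields $\|G\|_\op \lesssim \|E\|_\op/(c\|B_0\|_\op)$, and combining the two row-wise terms while being deliberately generous with numerical constants (to reach the stated factor $5/(c\|B_0\|_\op)$ and the conservative $\sqrt K$) completes the proof.
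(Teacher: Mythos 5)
This statement is not proved in the paper at all: it is imported verbatim as Lemma F.1 of \cite{ke2022using} and used as a black box (in Lemma \ref{lem:D8}), so there is no in-paper proof to compare against. Your reconstruction follows the standard route for row-wise eigenvector perturbation bounds --- and, as far as the source goes, it is essentially the same strategy as in Ke and Wang's own supplement: start from $BU=U\Lambda$, write $U=\Xi D_0\Xi^TU\Lambda^{-1}+EU\Lambda^{-1}$, peel off the row-$j$ noise term $\|E_{j*}\|_2\|\Lambda^{-1}\|_\op$, and reduce everything to an operator-norm bound on $G=D_0\Xi^TU\Lambda^{-1}O-S$. The skeleton is sound and the second term of your decomposition is handled correctly (your bound $\tfrac{3}{2c\|B_0\|_\op}\|(B-B_0)_{j*}\|_2$ is even sharper than the stated $\sqrt{K}$ term).

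The one place where the sketch is genuinely thin is the bound $\|G\|_\op\lesssim\|E\|_\op/(c\|B_0\|_\op)$, which you correctly identify as the heart of the matter but do not fully close. Two specific issues: (i) for the off-block rows $i\notin[s,k]$, the naive product of the bounds $\delta_i^0/\delta_l\le 3/(2c)$ and $|\langle u_i^0,u_l\rangle|\le\|E\|_\op/((2c/3)\|B_0\|_\op)$ yields $O(\|E\|_\op/(c^2\|B_0\|_\op))$, not the claimed $O(\|E\|_\op/(c\|B_0\|_\op))$; recovering the linear dependence on $1/c$ requires an extra algebraic step such as using $\delta_i^0\le\delta_l+|\delta_i^0-\delta_l|$ to split $\delta_i^0/(\delta_l|\delta_l-\delta_i^0|)$ into $1/\delta_l+1/|\delta_l-\delta_i^0|$. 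A similar $1/c^2$ appears in your diagonal-block term $D_0^{(\mathrm{block})}U_0^T(UO-U_0)O^T\Lambda^{-1}O$. (ii) Passing from entrywise bounds on $G$ to $\|G\|_\op$ costs a factor up to $K$, which must be reconciled with the stated constant $5/(c\|B_0\|_\op)$ that carries no $K$ in front of $\|E\|_\op\|\Xi_{j*}\|_2$. Neither issue affects how the lemma is used here, since every downstream application absorbs constants depending on $K$ and $c$ into a generic $C$, but a self-contained proof would have to resolve both.
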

If we define
$$G := DD^T - \frac{n}{N}M$$ 
$$G_0 := \left(1-\frac{1}{N}\right) D_0D_0^T$$
then the above lemma can be applied to the submatrices $G_{JJ}$ and $[G_0]_{JJ}$ (see Lemma \ref{lem:D8}). 
\begin{lemma}
    With probability $1 - o(p_n^{-1})$, we have
    $$ \|(G-G_0)_{J_+J_+}\|_\op \leq C^*K\sqrt{\frac{nK\log p_n}{N}} $$
    and for any $j \in J_+$, row $j$ of $(G-G_0)_{J_+J_+}$ has $\ell_2$ norm satisfying
    $$\|(G-G_0)_{jJ_+}\|_2 \leq C^* K\sqrt{\frac{nh_j\log p_n}{N}}$$
\end{lemma}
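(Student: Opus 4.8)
The plan is to first establish the exact algebraic identity
$$G - G_0 = D_0 Z^T + Z D_0^T + \left(ZZ^T - \bE(ZZ^T)\right) - \frac{1}{N}\diag(Z\One_n),$$
which rests on the observation that the correction $-\frac{n}{N}M$ in the definition of $G$ is precisely what debiases the diagonal of $DD^T$. To see this I would write $D = D_0 + Z$ and $M = M_0 + \diag(n^{-1}Z\One_n)$, expand $DD^T = D_0D_0^T + D_0Z^T + ZD_0^T + ZZ^T$, and then compute $\bE(ZZ^T)$ entrywise using the multinomial structure already recorded in the proof of Lemma~\ref{lem: C4}: the off-diagonal entries equal $-\frac1N (D_0D_0^T)(j,l)$ while the diagonal entries equal $\frac1N\sum_i [D_0]_{ji}(1-[D_0]_{ji})$. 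Combining these with $G_0 = (1-\frac1N)D_0D_0^T$ shows $\bE(G) = G_0$ and yields the displayed decomposition. Verifying this cancellation carefully is the main conceptual step; everything afterwards is a termwise application of the concentration bounds from Appendix~C.

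For the operator-norm bound I would control the four terms separately on the index block $J_+ \times J_+$. The key is to route the mixed terms through the factorization $D_0 = AW$: since $(D_0 Z^T)_{J_+J_+} = A_{J_+*}(WZ^T)_{*J_+}$ and the rows of $(WZ^T)_{*J_+}$ are exactly the vectors $[ZW_k]_{J_+}$ bounded in \eqref{eq:C.4-1}, I obtain $\|(WZ^T)_{*J_+}\|_\op \le \|(WZ^T)_{*J_+}\|_F \le C^*K\sqrt{n\log p_n/N}$; together with $\|A_{J_+*}\|_\op \le \|A\|_\op \le \sqrt K$ from Lemma~\ref{lem:B1}(a) this gives $\|(D_0Z^T)_{J_+J_+}\|_\op \le C^*K\sqrt{nK\log p_n/N}$, and the same for its transpose $ZD_0^T$. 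The centered term $ZZ^T - \bE(ZZ^T)$ is handled directly by \eqref{eq:C.4-3}, which already bounds its Frobenius, hence operator, norm by $C^*K\sqrt{n\log p_n/N}$, and the diagonal term contributes at most $\max_{j\in J_+}\frac1N|\sum_i Z_{ji}|$, which is negligible by Lemma~\ref{lem:E1}. The triangle inequality then yields the stated operator-norm bound, with the two mixed terms dominant.

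For the row-wise bound I would bound the four pieces of $(G-G_0)_{jJ_+}$ for fixed $j \in J_+$. The term $(ZD_0^T)_{jJ_+}$ factors as $A_{J_+*}a$ with $a_k := Z_j^TW_k$; by \eqref{eq:E.2} we have $\|a\|_2 \le C^*\sqrt K\sqrt{n\tilde{h}_j\log p_n/N}$, so $\|(ZD_0^T)_{jJ_+}\|_2 \le \sqrt K\|a\|_2 \le C^*K\sqrt{nh_j\log p_n/N}$ after using $\tilde{h}_j \le h_j$. The term $(ZZ^T - \bE(ZZ^T))_{jJ_+}$ is bounded by \eqref{eq:C.4-2} and the diagonal term again by Lemma~\ref{lem:E1}; both are dominated. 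The remaining piece $(D_0Z^T)_{jJ_+} = \sum_k A_{jk}[ZW_k]_{J_+}$ is the one requiring a little care: \eqref{eq:C.4-1} gives $\|(D_0Z^T)_{jJ_+}\|_2 \le h_j \max_k\|[ZW_k]_{J_+}\|_2 \le C^*\sqrt K\,h_j\sqrt{n\log p_n/N}$, and I would absorb the extra factor using $h_j \le K$ (each $A_{jk}\le 1$), so that $\sqrt K\,h_j \le K\sqrt{h_j}$. Summing the four bounds gives the target.

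The main obstacle, as noted, is the bookkeeping in the first step: correctly computing $\bE(ZZ^T)$ and confirming that the $-\frac{n}{N}M$ term cancels precisely the diagonal bias. The rest is a routine, if slightly fiddly, matter of distributing the already-established concentration inequalities and keeping track of whether each term produces a factor of $h_j$, $\sqrt{h_j}$, or $\tilde{h}_j$.
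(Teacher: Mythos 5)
Your proposal is correct and follows essentially the same route as the paper: the identity $G-G_0 = ZD_0^T + D_0Z^T + (ZZ^T-\bE[ZZ^T]) + \frac{n}{N}(M_0-M)$ (your $-\frac{1}{N}\diag(Z\One_n)$ is exactly $\frac{n}{N}(M_0-M)$), followed by termwise application of Lemma \ref{lem:E1}, \eqref{eq:E.2}, \eqref{eq:C.4-1}--\eqref{eq:C.4-3}, with the same $h_j \le K$ absorption step for the $(D_0Z^T)_{jJ_+}$ piece. The only cosmetic difference is that you bound the mixed term as an operator norm of the product $A_{J_+*}(WZ^T)_{*J_+}$ where the paper sums the rank-one contributions $\sum_k A_k(ZW_k)^T$; these are equivalent.
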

\begin{proof}
    From basic properties of the multinomial distribution, we can show that
    $$\bE(ZZ^T) = \sum_{i=1}^n \text{Cov}(Z_{*i}) = \sum_{i=1}^n \text{Cov}(D_{*i}) =  \frac{n}{N}M_0 - \frac{1}{N}D_0D_0^T$$
    and therefore 
    \begin{align*}
        G - G_0 &= DD^T - \frac{n}{N}M - \left(1-\frac{1}{N}\right)D_0D_0^T \\ 
                &= (D_0 + Z)(D_0 + Z)^T - \frac{n}{N}M - \left(1-\frac{1}{N}\right)D_0D_0^T \\ 
                &= ZD_0^T + D_0Z^T + ZZ^T - \frac{n}{N}M + \frac{1}{N}D_0D_0^T \\ 
                &= ZD_0^T + D_0Z^T + (ZZ^T - \bE[ZZ^T]) + \frac{n}{N}(M_0-M) \\ 
    \end{align*}
    and so we can write $(G-G_0)_{J_+J_+} = E_1 + E_2 + E_3$ where 
    $$E_1 := (ZD_0^T + D_0Z^T)_{J_+J_+}$$ 
    $$E_2 := (ZZ^T-\bE[ZZ^T])_{J_+J_+} $$
    $$E_3 := \frac{n}{N}(M_0-M)_{J_+J_+} $$
    We can deal with $E_3$ first. From \eqref{eq: M-error}, with probability $1-o(p_n^{-1})$ we have 
    $$\|E_3\|_\op \leq \frac{C^*n}{N}\sqrt{\frac{(\max_{j\in J_+} \tilde{h}_j )\log p_n}{nN}} \leq \frac{C^*}{N}\sqrt{\frac{n\log p_n}{N}}$$
    and for any $j \in J_+$,
    $$\|[E_3]_{j*}\|_2 = \frac{n}{N}|M(j,j) - M_0(j,j)| \leq \frac{C^*}{N}\sqrt{\frac{n\tilde{h}_j\log p_n}{N}} $$
    From \eqref{eq:C.4-2} and \eqref{eq:C.4-3}, with probability $1-o(p_n^{-1})$
    $$\|E_2\|_\op \leq \|E_2\|_F \leq C^*K\sqrt{\frac{n \log p_n}{N}} $$
    $$\|[E_2]_{j*}\|_2 \leq C^*\sqrt{\frac{n\tilde{h}_j K \log p_n}{N}}$$
    If we denote $A_1, \dots, A_K$ as the columns of $A$ and $W_1, \dots, W_k$ as the rows of $W$, 
    $$D_0 Z^T = \sum_{k=1}^K A_k(ZW_k)^T $$
    and so from \eqref{eq:C.4-1} and the fact that $\sum_{k=1}^K \|A_k\|_2 \leq \sum_{k=1}^K \|A_k\|_1 \leq K$, 
    $$\|E_1\|_\op \leq 2\|[D_0Z^T]_{J_+J_+}\|_\op \leq 2\sum_{k=1}^K \|A_k\|_2 \|Z_{J_+*}W_k\|_2\leq C^*K\sqrt{\frac{nK\log p_n}{N}}$$
    Let $Z_1, \dots, Z_p$ denote the rows of $Z$. From \eqref{eq:E.2}, \eqref{eq:C.4-1} and the fact that $\sum_{k=1}^K A_k(j) = h_j$ and $h_j \leq K$, for any $j \in J_+$:
    \begin{align*}
        \|[E_1]_{j*}\|_2 &\leq \sum_{k=1}^K A_k(j) \|Z_{J_+*}W_k\|_2 + \sum_{k=1}^K |Z_j^T W_k|\|A_k\|_2\\ 
        &\leq C^*h_j \sqrt{\frac{nK\log p_n}{N}} + C^*K \sqrt{\frac{n\tilde{h}_j \log p_n}{N}} \\ 
        &\leq C^*K\sqrt{\frac{nh_j\log p_n}{N}}
    \end{align*}
    Since the bounds for $\|E_1\|_\op$ and $\|[E_1]_{j*}\|_2$ dominate those for $E_2$ and $E_3$, our result follows. 
    
\end{proof}

\begin{lemma}\label{lem: D3-GG0}
    With probability $1-o(p_n^{-1})$, we also have 
    \begin{equation}\label{eq:GJJ-op}
        \|(G-G_0)_{JJ}\|_\op  \leq C^*K\sqrt{\frac{nK\log p_n}{N}}
    \end{equation}
    and for any $j \in J$, 
    \begin{equation}\label{eq:GjJ-2}
        \|(G-G_0)_{jJ}\|_2 \leq C^*K\sqrt{\frac{nh_j \log p_n}{N}} 
    \end{equation}
\end{lemma}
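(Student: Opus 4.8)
The plan is to derive both bounds directly from the corresponding bounds over $J_+$ established in the immediately preceding lemma, by exploiting the containment $J \subseteq J_+$, which holds with high probability. First I would invoke Lemma~\ref{lem: J-sandwiched} (equivalently Theorem~\ref{lem:A3}(b)), which guarantees that the event $\cE = \{J_- \subseteq J \subseteq J_+\}$ occurs with probability $1 - o(p_n^{-1})$. On $\cE$ we have in particular $J \subseteq J_+$, so that $(G-G_0)_{JJ}$ is a principal submatrix of $(G-G_0)_{J_+J_+}$, and for each $j \in J$ the row vector $(G-G_0)_{jJ}$ is a subvector of $(G-G_0)_{jJ_+}$.

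The key structural observation is that $G = DD^T - \tfrac{n}{N}M$ and $G_0 = (1-\tfrac{1}{N})D_0D_0^T$ are both symmetric, since $DD^T$ and $D_0D_0^T$ are symmetric and $M$ is diagonal; hence $G-G_0$ is symmetric. Restricting to a principal submatrix cannot increase the operator norm: writing $P$ for the selection matrix that picks out the rows indexed by $J$ from those indexed by $J_+$, one has $(G-G_0)_{JJ} = P(G-G_0)_{J_+J_+}P^T$ with $\|P\|_\op = 1$, so $\|(G-G_0)_{JJ}\|_\op \leq \|(G-G_0)_{J_+J_+}\|_\op$. Likewise $\|(G-G_0)_{jJ}\|_2 \leq \|(G-G_0)_{jJ_+}\|_2$, since deleting coordinates only decreases the $\ell_2$ norm. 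Combining these monotonicity facts with the bounds of the preceding lemma — which apply to every $j \in J$ because, on $\cE$, such $j$ also lies in $J_+$ — yields \eqref{eq:GJJ-op} and \eqref{eq:GjJ-2} on the intersection of $\cE$ with the high-probability event of that lemma.

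Finally, I would control the failure probability by a union bound: the event of the preceding lemma and the event $\cE$ each fail with probability $o(p_n^{-1})$, so their intersection holds with probability $1 - o(p_n^{-1})$, giving the stated conclusion. There is essentially no difficult step here, as the whole argument is a monotonicity-under-restriction transfer. The only point requiring care is to ensure that the two high-probability events are intersected correctly, and that it is the containment $J \subseteq J_+$ (rather than $J_- \subseteq J$) that is invoked, so that the submatrix and subvector comparisons run in the direction that preserves the inequalities.
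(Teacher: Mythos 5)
Your proposal is correct and takes essentially the same route as the paper: the paper's proof likewise observes that on the event $J \subseteq J_+$ (which holds with probability $1-o(p_n^{-1})$) the matrix $(G-G_0)_{JJ}$ is a submatrix of $(G-G_0)_{J_+J_+}$, so the bounds transfer by monotonicity of the operator and $\ell_2$ norms under restriction, and it notes — as you implicitly do by working through the deterministic set $J_+$ — that one cannot apply the concentration argument directly to the random, $Z$-dependent set $J$.
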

\begin{proof} This is simply a consequence of the previous lemma and the fact that $J \subseteq J_+$ with probability $1-o(p_n^{-1})$, which implies that $(G-G_0)_{JJ}$ is a submatrix of $(G-G_0)_{J_+J_+}$. Note that we refrain from applying the argument of the previous lemma directly to $(G-G_0)_{JJ}$, since $J$ and $Z$ are not independent (whereas $J_+$ is a non-random index set).
\end{proof}

\begin{corollary}\label{cor:D4}
    Let $g_n$ be a quantity satisfying 
    \begin{equation}\label{eq: g_n-def}
        c\sqrt{\frac{nN}{\log p_n}}\geq g_n \geq C^*K\sqrt{K}
    \end{equation}
    where $C^*$ in \eqref{eq: g_n-def} is the constant from \eqref{eq:GJJ-op} and $c$ is another constant to be determined. If
    $$\hat{K} := \max\left\{k:\lambda_k(G_{JJ}) > g_n\sqrt{\frac{n\log p_n}{N}}\right\}$$
    then $\hat{K} = K$ with probability $1 - o(p_n^{-1})$. 
\end{corollary}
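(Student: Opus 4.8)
The plan is to compare the spectrum of $G_{JJ}$ with that of its deterministic counterpart $[G_0]_{JJ}$ using Weyl's inequality, feeding in the operator-norm bound \eqref{eq:GJJ-op} already established in Lemma \ref{lem: D3-GG0}. Throughout I would work on the intersection of the event $\cE$ with the high-probability event of Lemma \ref{lem: D3-GG0}; both hold with probability $1 - o(p_n^{-1})$, so their intersection does too. On this intersection everything below is deterministic.

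First I would pin down the spectrum of $[G_0]_{JJ} = (1-\tfrac1N)[D_0]_\Js[D_0]_\Js^T$. By Theorem \ref{lem:A3}(e), $[D_0]_\Js$ has rank exactly $K$ on $\cE$, so $[G_0]_{JJ}$ has rank $K$ and in particular $\lambda_{K+1}([G_0]_{JJ}) = 0$. For the $K$-th eigenvalue I would use $[D_0]_\Js = A_{J*}W$ together with the identity $\lambda_k(\Theta) = \sigma_k^2([D_0]_\Js)/n$ from Lemma \ref{lem:B1}(b)(iii), where $\Theta = \Sigma_W A_\Js^T A_\Js$, to write $\lambda_K([G_0]_{JJ}) = (1-\tfrac1N)\,n\,\lambda_K(\Theta)$. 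Since $\Theta$ shares its eigenvalues with $(A_\Js\Sigma_W^{1/2})^T(A_\Js\Sigma_W^{1/2})$, the submultiplicativity of the smallest singular value gives $\lambda_K(\Theta) = \sigma_K^2(A_\Js\Sigma_W^{1/2}) \geq \sigma_K^2(A_\Js)\,\sigma_K(\Sigma_W)$. Plugging in $\sigma_K(A_\Js) \geq c\sqrt K/2$ (Theorem \ref{lem:A3}(e)) and $\sigma_K(\Sigma_W) \geq c$ (Assumption \ref{ass:1}) yields $\lambda_K(\Theta) \geq c'K$, hence $\lambda_K([G_0]_{JJ}) \geq \tfrac12 c'nK$ for $N \geq 2$.

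With these two facts the remainder is a matching of orders against the threshold $\tau := g_n\sqrt{n\log p_n/N}$. Writing $\epsilon := \|(G-G_0)_{JJ}\|_\op \leq C^*K\sqrt K\,\sqrt{n\log p_n/N}$ from \eqref{eq:GJJ-op}, Weyl's inequality gives $|\lambda_k(G_{JJ}) - \lambda_k([G_0]_{JJ})| \leq \epsilon$ for all $k$. For the upper index, $\lambda_{K+1}(G_{JJ}) \leq 0 + \epsilon = C^*K\sqrt K\,\sqrt{n\log p_n/N} \leq g_n\sqrt{n\log p_n/N} = \tau$, the last step being exactly the constraint $g_n \geq C^*K\sqrt K$ in \eqref{eq: g_n-def}; since eigenvalues are non-increasing, no index $k > K$ lies in the defining set of $\hat K$. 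For the lower index, $\lambda_k(G_{JJ}) \geq \lambda_K([G_0]_{JJ}) - \epsilon \geq \tfrac12 c'nK - \epsilon$ for all $k \leq K$, and because $\epsilon/(nK) = C^*\sqrt K\,\sqrt{\log p_n/(nN)} \to 0$ this is at least $\tfrac14 c'nK$ once $nN$ is large. Finally, the upper constraint in \eqref{eq: g_n-def} gives $\tau = g_n\sqrt{n\log p_n/N} \leq c\sqrt{nN/\log p_n}\cdot\sqrt{n\log p_n/N} = cn$, so choosing the free constant $c$ small enough (e.g. $c \leq c'K/8$) forces $\tau < \lambda_K(G_{JJ})$, placing $K$ in the set. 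Combining the two parts gives $\hat K = K$.

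The main obstacle I anticipate is the clean lower bound $\lambda_K([G_0]_{JJ}) \gtrsim nK$: the identity of Lemma \ref{lem:B1}(b)(iii) and the product inequality for smallest singular values do the work, but one must be careful that it is the $K$-th (smallest nonzero) eigenvalue being controlled, that the rank is genuinely $K$ so that $\lambda_{K+1}([G_0]_{JJ}) = 0$, and that the constant $c$ in the upper bound on $g_n$ is chosen \emph{after} $c'$ so that the threshold $\tau = O(n)$ stays strictly below the signal eigenvalues of order $nK$ while still dominating the noise floor $\epsilon$. The rest — Weyl's inequality and the separation of the $O(\sqrt{n\log p_n/N})$ perturbation from both the $O(n)$ signal and the threshold — is routine.
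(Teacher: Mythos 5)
Your proposal is correct and follows essentially the same route as the paper's proof: Weyl's inequality applied with the operator-norm bound \eqref{eq:GJJ-op}, the rank-$K$ property of $[G_0]_{JJ}$ from Theorem \ref{lem:A3}(e) to kill $\lambda_{K+1}$, and a lower bound $\lambda_K([G_0]_{JJ}) \gtrsim nK$ from $\sigma_K(A_{J*})$ and $\sigma_K(\Sigma_W)$, with the two constraints on $g_n$ separating threshold from noise and from signal. Your detour through $\Theta$ and Lemma \ref{lem:B1}(b)(iii) for the lower bound is just a reparametrization of the paper's direct computation $\lambda_K([G_0]_{JJ}) \geq (1-\tfrac1N)\sigma_K^2(A_{J*})\sigma_K^2(W) \geq cKn$, so there is no substantive difference.
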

\begin{proof}
    We have shown in Lemma A.3(e) that $[G_0]_{JJ}$ has rank $K$ on $\mathcal{E}$. By Weyl's inequality, 
    $$ \lambda_{K+1}(G_{JJ}) \leq \|(G-G_0)_{JJ}\|_\op \leq C^*K\sqrt{K}\sqrt{\frac{n\log p_n}{N}} \leq g_n \sqrt{\frac{n\log p_n}{N}}$$
    This implies $\hat{K} \leq K$. On the other hand, again by Weyl's inequality, 
    $$|\lambda_K(G_{JJ}) - \lambda_K([G_0]_{JJ})| \leq C^*K\sqrt{K} \sqrt{\frac{n \log p_n}{N}} \leq g_n\sqrt{\frac{n\log p_n}{N}}$$
    and since $G_0 := \left(1-\frac{1}{N}\right) D_0D_0^T$, by our assumption that $\sigma_K(A) \geq c\sqrt{K}$ and $\sigma_K(\Sigma_W) \geq c$,
    $$\lambda_K([G_0]_{JJ}) \geq \left(1-\frac{1}{N}\right)\sigma_K^2(A_{J*}) \sigma_K^2(W) \geq cKn > 2g_n \sqrt{\frac{n \log p_n}{N}}$$
    when $nN$ is sufficiently large and $c$ in \eqref{eq: g_n-def} is chosen appropriately. Hence, 
    $$\lambda_K(G_{JJ}) \geq  \lambda_K([G_0]_{JJ}) - |\lambda_K(G_{JJ}) - \lambda_K([G_0]_{JJ})| > g_n \sqrt{\frac{n \log p_n}{N}}$$
    and thus $K \leq \hat{K}$ with probability $1 - o(p_n^{-1})$. 
\end{proof}

    

Recall that $\hat{\Xi}$ contains the first $K$ eigenvectors of $G_{JJ}$ and $\Xi$ contains the first $K$ left singular vectors of $[D_0]_{J*}$, or equivalently the first $K$ eigenvectors of $[G_0]_{JJ}$. We will provide a coordinate-wise error bound for $\hat{\Xi}$ in Lemma \ref{lem:D8}. First we need a few lemmas. 

\begin{lemma} For any $j \in J$, $\|\Xi_{j*}\|_2 \leq Ch_j$. 
\end{lemma}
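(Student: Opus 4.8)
The plan is to exploit the factorization of $\Xi$ through $A_{J*}$ established in Lemma~\ref{lem:B1}, which reduces the claim to a one-line operator-norm estimate. Throughout I would work on the event $\cE$, since that is where the relevant rank and singular-value bounds are guaranteed; in particular $[D_0]_{J*}$ has rank $K$ there, so that $V := (\Xi^T A_{J*})^{-1}$ is well-defined.

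First I would invoke Lemma~\ref{lem:B1}(b)(i) to write $\Xi = A_{J*} V$. Reading off the $j$-th row (for $j \in J$) gives $\Xi_{j*} = A_{j*} V$, where $A_{j*}$ is the corresponding row of $A$. Treating $A_{j*}$ as a row vector in $\mathbb{R}^K$, submultiplicativity yields
\begin{equation*}
    \|\Xi_{j*}\|_2 = \|A_{j*} V\|_2 \leq \|A_{j*}\|_2 \, \sigma_1(V).
\end{equation*}
Next I would control the two factors separately. By Lemma~\ref{lem:B1}(b)(ii) the singular values of $V$ are the reciprocals of those of $A_{J*}$, so $\sigma_1(V) = \sigma_K(A_{J*})^{-1}$; and by Theorem~\ref{lem:A3}(e), on $\cE$ we have $\sigma_K(A_{J*}) \geq c\sqrt{K}$, whence $\sigma_1(V) \leq (c\sqrt{K})^{-1}$. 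For the other factor, since the entries of $A_{j*}$ are non-negative, $\|A_{j*}\|_2 \leq \|A_{j*}\|_1 = h_j$. Combining the two estimates gives $\|\Xi_{j*}\|_2 \leq h_j/(c\sqrt{K}) \leq C h_j$, as claimed.

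There is no substantial obstacle here: the statement is an immediate corollary of the structural identity $\Xi = A_{J*}V$ together with the well-conditioning of $A_{J*}$ on $\cE$. The only points that require minor care are (i) noting that the entire argument lives on $\cE$, where $V$ exists and $\sigma_K(A_{J*})$ is bounded below, and (ii) the elementary passage from the $\ell_2$ to the $\ell_1$ norm of the non-negative row $A_{j*}$, which is precisely what converts the bound into the frequency $h_j$. I would also remark that this estimate is the full-row analogue of the coordinate bound $\xi_1(j) \leq C h_j/\sqrt{K}$ from Lemma~\ref{lem:B1}(e), obtained by the same computation applied to all $K$ singular vectors at once rather than only to the leading one.
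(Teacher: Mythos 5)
Your proof is correct and follows essentially the same route as the paper's: both start from the identity $\Xi = A_{J*}V$, bound $\|\Xi_{j*}\|_2 \leq \|A_{j*}\|_2\,\|V\|_{\text{op}}$, and pass from $\|A_{j*}\|_2$ to $h_j = \|A_{j*}\|_1$. If anything, your handling of $\sigma_1(V) = \sigma_K(A_{J*})^{-1}$ via Theorem~\ref{lem:A3}(e) on $\cE$ is slightly more careful than the paper's shortcut through $\sigma_K^{-1}(A)$.
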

\begin{proof}
    Note $\Xi = A_{J*}V$ so $\Xi_{j*} = A_{j*}V$. Hence, 
    $$\|\Xi_{j*}\|_2 \leq \|V\|_\op \|A_{j*}\|_2 \leq \|V\|_\op \|A_{j*}\|_1 \leq \sigma_K^{-1}(A) h_j \leq Ch_j$$
    since we have shown before that the singular values of $V$ are just the inverses of the singular values of $A_{J*}$. 
\end{proof}
Note that on event $\cE$, $[D_0]_{J*}$ and hence $[G_0]_{JJ}$ has rank $K$. 

\begin{lemma}\label{lem: Lemma D.7}
    On event $\cE$, 
    $$cnK \leq \lambda_k([G_0]_{JJ}) \leq nK \text{ for all } k \in [K] \text{ and } \lambda_1([G_0]_{JJ}) \geq cn + \max_{2\leq k \leq K} \lambda_k([G_0]_{JJ}) $$
\end{lemma}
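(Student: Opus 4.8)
The plan is to reduce everything to the singular values of $[D_0]_{J*}$. The starting observation is that $[G_0]_{JJ} = \left(1-\tfrac{1}{N}\right)[D_0]_{J*}[D_0]_{J*}^T$, so its nonzero eigenvalues are exactly $\lambda_k([G_0]_{JJ}) = \left(1-\tfrac1N\right)\sigma_k^2([D_0]_{J*})$ for $k\in[K]$; there are precisely $K$ of them, since $[D_0]_{J*} = A_{J*}W$ has rank $K$ on $\cE$ by Theorem \ref{lem:A3}(e). Once this identity is in place, each claim becomes a statement about $\sigma_k([D_0]_{J*}) = \sigma_k(A_{J*}W)$, which I would control using the building blocks already established.

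For the two-sided bound, I would first handle the upper bound via $\sigma_1([D_0]_{J*}) \leq \|A_{J*}\|_\op\|W\|_\op \leq \sigma_1(A)\sigma_1(W)$. By Lemma \ref{lem:B1}(a), $\sigma_1(A)\leq\sqrt{K}$ and $\sigma_1(\Sigma_W)\leq 1$, and since $\sigma_1^2(W) = n\sigma_1(\Sigma_W) \leq n$, this gives $\sigma_1^2([D_0]_{J*})\leq Kn$ and hence $\lambda_k([G_0]_{JJ}) \leq \left(1-\tfrac1N\right)Kn \leq nK$ for every $k\in[K]$. For the lower bound I would invoke the product inequality $\sigma_K(A_{J*}W) \geq \sigma_K(A_{J*})\sigma_K(W)$, valid since $A_{J*}$ has full column rank $K$ and $W$ has full row rank $K$ on $\cE$; its short justification restricts to the orthogonal complement of $\ker W$ and applies $\|A_{J*}y\|_2 \geq \sigma_K(A_{J*})\|y\|_2$. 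Theorem \ref{lem:A3}(e) gives $\sigma_K(A_{J*})\geq c\sqrt{K}$ on $\cE$, and Assumption \ref{ass:1} gives $\sigma_K^2(W) = n\sigma_K(\Sigma_W)\geq cn$; combining yields $\sigma_K^2([D_0]_{J*}) \geq c'Kn$, so $\lambda_k([G_0]_{JJ}) \geq \lambda_K([G_0]_{JJ}) \geq cnK$ for all $k\in[K]$ once $N\geq 2$ (so that $1-\tfrac1N\geq\tfrac12$).

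For the eigengap claim I would pass through the matrix $\Theta := \Sigma_W A_{J*}^TA_{J*}$. By Lemma \ref{lem:B1}(b)(iii), $\lambda_k(\Theta) = \sigma_k^2([D_0]_{J*})/n$, so $\lambda_k([G_0]_{JJ}) = \left(1-\tfrac1N\right)n\lambda_k(\Theta)$; as the scaling factor is positive and common to all $k$, the eigenvalue ordering is preserved and $\max_{2\leq k\leq K}\lambda_k([G_0]_{JJ}) = \left(1-\tfrac1N\right)n\lambda_2(\Theta)$. Consequently $\lambda_1([G_0]_{JJ}) - \max_{2\leq k\leq K}\lambda_k([G_0]_{JJ}) = \left(1-\tfrac1N\right)n\bigl(\lambda_1(\Theta)-\lambda_2(\Theta)\bigr)$. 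Lemma \ref{lem:B1}(d) guarantees that on $\cE$ the gap $\lambda_1(\Theta)-\lambda_2(\Theta)$ is bounded below by an absolute constant, so the right-hand side is at least $cn$ (again using $1-\tfrac1N\geq\tfrac12$), which is exactly the desired inequality.

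The routine part is assembling these singular-value identities and inequalities; the only genuine content is imported rather than reproved here, namely the eigengap of $\Theta$ from Lemma \ref{lem:B1}(d)(ii), whose proof rests on a compactness/continuity argument together with Perron's theorem. The single technical point I would state carefully is the product singular-value inequality and the fact that both factors have full rank $K$ on $\cE$ (so that $\sigma_K$ is genuinely the smallest nonzero singular value of each); everything else follows directly from the identities relating the singular values of $[D_0]_{J*}$, the eigenvalues of $\Theta$, and those of $[G_0]_{JJ}$.
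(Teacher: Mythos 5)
Your proof is correct and follows essentially the same route as the paper: the upper bound via $\|A\|_\op^2\|\Sigma_W\|_\op \leq K$, the lower bound via $\sigma_K(A_{J*})^2\sigma_K(\Sigma_W)$ together with Theorem \ref{lem:A3}(e), and the eigengap by transferring to $\Theta = \Sigma_W A_{J*}^TA_{J*}$ (whose nonzero spectrum matches that of $n^{-1}[D_0D_0^T]_{JJ}$) and invoking Lemma \ref{lem:B1}(d). Your explicit handling of the $(1-\tfrac1N)$ factor and your justification of the product singular-value inequality are slightly more careful than the paper's, but the substance is identical.
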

\begin{proof}
    We note that $[D_0D_0^T]_{JJ} = A_{J*}WW^TA_{J*}^T = nA_{J*}\Sigma_W A_{J*}^T$. Hence, 
    $$\lambda_1([G_0]_{JJ}) \leq n\|A\|_\op^2 \|\Sigma_W\|_\op \leq nK$$
    $$\lambda_K([G_0]_{JJ}) \geq n[\sigma_K(A_{J*})]^2\sigma_K(\Sigma_W)  \geq n[\sigma_K(A)]^2\sigma_K(\Sigma_W) \geq cnK$$
    We also note that for any two matrices $P$ and $Q$, the nonzero eigenvalues of $PQ$ are the same as those of $QP$. Thus the nonzero eigenvalues of $[D_0D_0^T]_{JJ}$ are the same as the nonzero eigenvalues of $WW^TA_{J*}^TA_{J*} =: n\Theta$. We have already shown in Lemma \ref{lem:B1}(d) that the gap between the first two eigenvalues of $\Theta$ are at least an absolute constant on $\cE$. Hence, our result follows. 
    
\end{proof}
\begin{lemma}[Row-wise estimation error for $\hat{\Xi}$]\label{lem:D8} Denote $\{\Xi_j: j \in J\}$ as the rows of $\Xi$ and $\{\hat{\Xi}_j: j \in J\}$ as the rows of $\hat{\Xi}$. With probability $1 - o(p_n^{-1})$, there exist $\omega \in \{\pm 1\}$ and an orthonormal matrix $\Omega^* \in \mathbb{R}^{(K-1)\times (K-1)}$ such that, if $\Omega := \text{diag}(\omega, \Omega^*) \in \mathbb{R}^{K \times K}$, we have
$$\|\Omega \hat{\Xi}_j - \Xi_j\|_2 \leq  C\sqrt{\frac{h_j \log p_n}{nN}} \quad \text{for all } j \in J$$
    
\end{lemma}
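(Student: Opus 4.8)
The plan is to invoke the row-wise eigenvector perturbation bound of Lemma~\ref{lem:D1} with $B = G_{JJ}$ and $B_0 = [G_0]_{JJ}$, which on $\cE$ is positive semi-definite of rank exactly $K$. The block-diagonal shape $\Omega = \diag(\omega, \Omega^*)$ demanded by the statement is dictated by the spectral geometry recorded in Lemma~\ref{lem: Lemma D.7}: there the leading eigenvalue $\lambda_1([G_0]_{JJ})$ is separated from the cluster $\lambda_2, \dots, \lambda_K$ by a Perron-type gap of order $n$, whereas no gap is guaranteed \emph{within} that cluster. Consequently the first eigenvector is identifiable only up to a sign and the remaining $K-1$ eigenvectors only up to a rotation. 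I would therefore apply Lemma~\ref{lem:D1} twice: once with $s = k = 1$, whose orthonormal matrix is a $1\times 1$ sign $\omega \in \{\pm 1\}$, and once with $s = 2$, $k = K$, whose orthonormal matrix is the $(K-1)\times(K-1)$ matrix $\Omega^*$. Setting $\Omega := \diag(\omega, \Omega^*)$ then recombines the two estimates.

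Before applying the lemma I would verify its two hypotheses using Lemma~\ref{lem: Lemma D.7}. For the eigenvalue-gap condition, that lemma gives $\min_{l \in [K]} \lambda_l([G_0]_{JJ}) \geq cnK$ and $\lambda_1 - \lambda_2 \geq cn$, while $\|B_0\|_\op = \lambda_1([G_0]_{JJ}) \leq nK$; hence every relevant gap (the tail gap $\lambda_K - \lambda_{K+1} = \lambda_K$, the smallest eigenvalue, and the separating gap $\lambda_1 - \lambda_2$) exceeds a fixed, $K$-dependent multiple of $\|B_0\|_\op$. For the small-perturbation condition, Lemma~\ref{lem: D3-GG0} gives $\|(G-G_0)_{JJ}\|_\op \leq C^* K\sqrt{nK\log p_n / N}$, whose ratio to $\|B_0\|_\op \asymp nK$ is of order $\sqrt{\log p_n /(nN)} \to 0$, so $\|B-B_0\|_\op \leq (c/3)\|B_0\|_\op$ holds with probability $1 - o(p_n^{-1})$.

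With the hypotheses in force, each application of Lemma~\ref{lem:D1} bounds its row-wise error by $\tfrac{5}{c\|B_0\|_\op}\bigl(\|(G-G_0)_{JJ}\|_\op\,\|\Xi_{j*}\|_2 + \sqrt{K}\,\|(G-G_0)_{jJ}\|_2\bigr)$. I would substitute $\|B_0\|_\op \geq cnK$, the bound $\|\Xi_{j*}\|_2 \leq Ch_j$ established just above, and the row bound $\|(G-G_0)_{jJ}\|_2 \leq C^*K\sqrt{nh_j\log p_n/N}$ from Lemma~\ref{lem: D3-GG0}. The second term collapses directly to $C\sqrt{h_j\log p_n/(nN)}$. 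The first term initially scales like $h_j$ rather than $\sqrt{h_j}$; the key maneuver is to use $h_j = \sum_k A_{jk} \leq K$ to write $h_j \leq \sqrt{K}\sqrt{h_j}$, after which it too reduces to $C\sqrt{h_j \log p_n/(nN)}$ with the fixed $K$ absorbed into $C$. Finally, since $\Omega$ is block diagonal, $\|\Omega\hat{\Xi}_j - \Xi_j\|_2^2 = |\omega\hat{\xi}_1(j) - \xi_1(j)|^2 + \|\Omega^*\hat{\Xi}_{j,2:K} - \Xi_{j,2:K}\|_2^2$, so adding the two block bounds and taking a square root yields the claim; a union bound over the events of Lemmas~\ref{lem: D3-GG0}, \ref{lem: Lemma D.7} and $\cE$ keeps the overall probability at $1 - o(p_n^{-1})$.

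The step I expect to be most delicate is producing the block-diagonal recovery $\Omega = \diag(\omega, \Omega^*)$ together with the correct $\sqrt{h_j}$ (rather than $h_j$) dependence: one must argue the single-vector Perron gap $\lambda_1 - \lambda_2 \gtrsim n$ carefully (via Lemma~\ref{lem: Lemma D.7}) to license the $s=k=1$ application, and then deploy $h_j \leq K$ at exactly the right place to upgrade the $h_j$-rate coming from $\|\Xi_{j*}\|_2$ into the sharper $\sqrt{h_j}$-rate.
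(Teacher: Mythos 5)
Your proposal is correct and follows essentially the same route as the paper's proof: two applications of Lemma~\ref{lem:D1} (once for $\hat{\xi}_1$ with the Perron gap $\lambda_1 - \lambda_2 \geq cn$ from Lemma~\ref{lem: Lemma D.7}, once for the remaining $K-1$ eigenvectors using $\lambda_K \geq cnK$), with the operator- and row-norm bounds from Lemma~\ref{lem: D3-GG0}, the bound $\|\Xi_{j*}\|_2 \leq Ch_j$, and the same use of $h_j \leq K$ to upgrade the $h_j$-rate to $\sqrt{h_j}$. The block-diagonal assembly $\Omega = \diag(\omega,\Omega^*)$ is exactly how the paper concludes.
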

\begin{proof}
    Let $\hat{\xi}_1$ and $\xi_1$ be the first eigenvectors of $[G]_{JJ}$ and $[G_0]_{JJ}$ respectively. The gap between the first two eigenvalues of $[G_0]_{JJ}$ is at least $cn$, which is much greater than $C^*K\sqrt{\frac{nK\log p_n}{N}}$ (the high-probability bound on $\|(G-G_0)_{JJ}\|_\op$). By applying Lemma \ref{lem:D1}, there exists $\omega \in \{\pm 1\}$ such that with probability $1 - o(p_n^{-1})$, for all $j \in J$, 
    \begin{align*}
        |\omega \hat{\xi}_1(j) - \xi_1(j)| &\leq C\frac{h_j\|(G-G_0)_{JJ}\|_\op + \sqrt{K}\|(G-G_0)_{jJ}\|_2}{n}
        \\ 
        &\leq C\frac{h_j\sqrt{\frac{n \log p_n}{N}} + \sqrt{\frac{nh_j \log p_n}{N}}}{n} \\ 
        &\leq C\sqrt{\frac{h_j \log p_n}{nN}}
    \end{align*}
    where we applied $h_j \leq K$. 
    
    Let $\Xi^* = [\xi_2, \dots, \xi_K]$ contain the other $(K-1)$ eigenvectors of $[G_0]_{JJ}$, and define $\hat{\Xi}^*$ similarly. Again, since the smallest nonzero eigenvalue of $[G_0]_{JJ}$ is at least $cnK$, there exists an orthonormal matrix $\Omega^* \in \mathbb{R}^{(K-1)\times(K-1)}$ such that for all $j \in J$, 
    $$\|(\hat{\Xi}^*\Omega^* - \Xi^*)_{j*}\|_2 \leq C\sqrt{\frac{h_j \log p_n}{nN}} $$
    We then define $\Omega = \text{diag}(\omega, \Omega^*)$ and combine the above results. 
\end{proof}
\begin{lemma}[Estimation error for the point cloud]\label{lem: D.9}
   With probability $1- o(p_n^{-1})$, all entries of $\hat{\xi}_1$ have the same sign and there exists an orthonormal matrix $\Omega^* \in \mathbb{R}^{(K-1) \times (K-1)}$ such that 
    \begin{align*}
        \max_{j \in J}\|\Omega^* \hat{r}_j - r_j\|_2 \leq C \left(\frac{\log p_n}{nN}\right)^{1/4}
    \end{align*}
\end{lemma}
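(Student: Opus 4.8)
The plan is to obtain the stated bound as a direct consequence of the row-wise eigenvector estimate in Lemma~\ref{lem:D8}, propagating it through the elementwise division that produces the point cloud. All arguments will be conducted on the intersection of event $\cE$ with the event of Lemma~\ref{lem:D8}, which still has probability $1-o(p_n^{-1})$. Writing $\tau_n=\sqrt{\log p_n/(nN)}$, Lemma~\ref{lem:D8} supplies $\omega\in\{\pm1\}$ and an orthonormal $\Omega^*\in\mathbb{R}^{(K-1)\times(K-1)}$ with $\Omega=\diag(\omega,\Omega^*)$ and $\|\Omega\hat\Xi_j-\Xi_j\|_2\le C\sqrt{h_j\log p_n/(nN)}$ for all $j\in J$. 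Reading off the first coordinate and the last $K-1$ coordinates separately, I would record the two consequences
$$|\omega\hat\xi_1(j)-\xi_1(j)|\le C\sqrt{\tfrac{h_j\log p_n}{nN}}, \qquad \Big\|\Omega^*\big(\hat\xi_2(j),\dots,\hat\xi_K(j)\big)-\big(\xi_2(j),\dots,\xi_K(j)\big)\Big\|_2\le C\sqrt{\tfrac{h_j\log p_n}{nN}}.$$

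First I would settle the signs of $\hat\xi_1$. By Lemma~\ref{lem:B1}(e), $\xi_1(j)\ge ch_j/\sqrt K$ on $\cE$, and by Theorem~\ref{lem:A3}(b), $\min_{j\in J}h_j>\alpha_+\alpha\tau_n$. Dividing the first bound above by $\xi_1(j)$ therefore gives a relative error $|\omega\hat\xi_1(j)-\xi_1(j)|/\xi_1(j)\le C\sqrt K\,\tau_n/\sqrt{h_j}\le C\tau_n^{1/2}=o(1)$. Hence for $nN$ large this relative error is below $1/2$, so $\omega\hat\xi_1(j)$ shares the positive sign of $\xi_1(j)$ for every $j\in J$, and in fact $\omega\hat\xi_1(j)\ge\tfrac12\xi_1(j)$. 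This already shows all entries of $\hat\xi_1$ have a common sign; fixing the eigenvector sign convention so that $\hat\xi_1$ is positive amounts to taking $\omega=1$, which I adopt from here on.

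The core of the argument is a quotient perturbation. With $\omega=1$ we have $\Omega^*\hat r_j=\Omega^*(\hat\xi_2(j),\dots,\hat\xi_K(j))/\hat\xi_1(j)$, so setting $a=\Omega^*(\hat\xi_2(j),\dots,\hat\xi_K(j))$, $b=(\xi_2(j),\dots,\xi_K(j))$, $s=\hat\xi_1(j)$, $t=\xi_1(j)$ and using $a/s-b/t=(a-b)/s+b(t-s)/(st)$, I would bound
$$\|\Omega^*\hat r_j-r_j\|_2\le\frac{\|a-b\|_2}{|s|}+\frac{\|b\|_2\,|t-s|}{|s|\,|t|}.$$
Here $\|a-b\|_2$ and $|t-s|$ are each at most $C\sqrt{h_j\log p_n/(nN)}=C\sqrt{h_j}\,\tau_n$ by Lemma~\ref{lem:D8}; $\|b\|_2\le\|\Xi_j\|_2\le Ch_j$ by the lemma preceding Lemma~\ref{lem: Lemma D.7}; and $|s|\ge ch_j/(2\sqrt K)$, $|t|\ge ch_j/\sqrt K$ by Step~1 and Lemma~\ref{lem:B1}(e). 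Substituting, both terms are of order $\tau_n/\sqrt{h_j}$ (absorbing $K$-dependent constants into $C$), which is precisely the intermediate bound $\|\Omega^*\hat r_j-r_j\|_2\le C\sqrt{\log p_n/(h_j nN)}$ recorded in \eqref{eq: pt-cloud-error2}.

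To finish, I would uniformize: since $h_j>\alpha_+\alpha\tau_n$ on $\cE$, we get $\sqrt{h_j}\ge\sqrt{\alpha_+\alpha}\,\tau_n^{1/2}$, hence $\tau_n/\sqrt{h_j}\le C\tau_n^{1/2}=C(\log p_n/(nN))^{1/4}$ uniformly over $j\in J$, and taking the maximum yields the claim. The main obstacle is exactly the quotient step combined with a denominator $\xi_1(j)\asymp h_j$ that may be arbitrarily small: the factor $h_j$ from the numerator bound of Lemma~\ref{lem:D8} only partially cancels the $h_j$ in the denominator, leaving a residual $1/\sqrt{h_j}$, and it is the thresholding guarantee $h_j\gtrsim\tau_n$ that tames this residual — at the unavoidable cost of degrading the rate from $\tau_n$ to $\tau_n^{1/2}=(\log p_n/(nN))^{1/4}$. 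Keeping the sign bookkeeping of $\omega$ consistent with the division (Step~1) is the other point that demands care.
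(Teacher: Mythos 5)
Your proposal is correct and follows essentially the same route as the paper's proof: the same sign-fixing argument for $\hat{\xi}_1$ via $|\omega\hat{\xi}_1(j)-\xi_1(j)|\ll \xi_1(j)$, the same quotient-perturbation decomposition (the paper writes it as $\frac{1}{\hat{\xi}_1(j)}(\Omega\hat{\Xi}_j-\Xi_j)-\frac{\hat{\xi}_1(j)-\xi_1(j)}{\hat{\xi}_1(j)}r_j$, which is your $(a-b)/s+b(t-s)/(st)$ with $b/t=r_j$ and $\|r_j\|_2\leq C$ in place of your $\|b\|_2\leq Ch_j$, $|t|\geq ch_j$), and the same final uniformization via $\min_{j\in J}h_j\gtrsim \tau_n$.
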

\begin{proof}
    First, we note that WLOG, we can assume $\omega = 1$. This is because from the previous lemma, for any $j \in J$, since $h_j \geq c\sqrt{\frac{\log p_n}{nN}}$,
    $$|\omega \hat{\xi}_1(j) - \xi_1(j)| \leq C\sqrt{\frac{h_j \log p_n}{nN}} \leq  C h_j \left(\frac{\log p_n}{nN}\right)^{1/4}$$
    whereas we also know from Lemma \ref{lem:B1}(e) that 
    $$\xi_1(j) > ch_j > 0$$
    We can see that $|\omega \hat{\xi}_1(j) - \xi_1(j)| \ll \xi_1(j)$ with high probability as $nN$ is sufficiently large, and this implies $\omega \hat{\xi}_1(j) \geq \xi_1(j)/2$. If $\hat{\xi}_1$ is defined such that the majority of its entries are positive (and in fact its entries are all of the same sign with high probability), we can simply assume $\omega = 1$ from now on. 

    Denote $\{\Xi_j: j \in J\}$ as the rows of $\Xi$ and $\{\hat{\Xi}_j: j \in J\}$ as the rows of $\hat{\Xi}$. Now, since by definition, 
    $$\begin{pmatrix} 1\\ r_j \end{pmatrix} = [\xi_1(j)]^{-1}\Xi_{j}, \quad \begin{pmatrix}
        1 \\ \Omega^* \hat{r}_j 
    \end{pmatrix}  = [\hat{\xi}_1(j)]^{-1} \Omega \hat{\Xi}_{j}$$
    it follows that 
    \begin{align*}
        \|\Omega^*\hat{r}_j - r_j\|_2  &= \left\|\frac{1}{\hat{\xi}_1(j)} \Omega \hat\Xi_j - \frac{1}{\xi_1(j)} \Xi_j\right\|_2  \\ 
        &= \left\|\frac{1}{\hat{\xi}_1(j)} (\Omega \hat{\Xi} - \Xi_j) - \frac{\hat{\xi}_1(j) - \xi_1(j)}{\hat{\xi}_1(j)}r_j\right\|_2 \\ 
        &\leq |\hat{\xi}_1(j)|^{-1}(\|\Omega \hat{\Xi}_j - \Xi_j\|_2 + \|r_j\|_2 |\hat{\xi}_1(j) - \xi_1(j)|)
    \end{align*}
    We have noted in Lemma \ref{lem:B2} that the point cloud $\{r_j: j \in J\}$ lies entirely in the convex hull of $v_1^*, \dots, v_K^*$, and Lemma \ref{lem:B1}(f) shows that $\max_{k \in [K]} \|v_k^*\|_2 \leq C$, so we also have $\max_{j \in J}\|r_j\|_2 \leq C$. We have also noted before that 
    $$\hat{\xi}_1(j) \geq \frac{\xi_1(j)}{2} > ch_j$$ with high probability. Therefore, with probability $1 - o(p_n^{-1})$, for all $j \in J$:
    \begin{align*}
        \|\Omega^* \hat{r}_j - r_j\|_2 &\leq C \sqrt{\frac{\log p_n}{h_jnN}} \\ 
        &\leq C \left(\frac{\log p_n}{nN}\right)^{1/4}
    \end{align*}
    since $\min_{j \in J} h_j \geq c\sqrt{\frac{\log p_n}{nN}}$ with probability $1-o(p_n^{-1})$. 
    \end{proof}

\begin{lemma}\label{lem: pi-error}
    If we denote the rows of $\hat{\Pi}$ from our proposed procedure as $\{\hat{\pi}_j: j \in J\}$ and the rows of $\Pi$ from the oracle procedure as $\{\pi_j: j \in J\}$, then with probability $1- o(p_n^{-1})$, 
    $$ \max_{j \in J}\|\hat{\pi}_j - \pi_j\|_1 \leq C \left(\frac{\log p_n}{nN}\right)^{1/4}$$
\end{lemma}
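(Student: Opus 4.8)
The plan is to reduce the statement to a well-conditioned linear-system perturbation and then separately account for the truncation-and-normalization post-processing, which is the only operation in Step 5 of Definition \ref{actual-procedure} with no counterpart in the oracle. Write $Q = \begin{pmatrix} 1 & \cdots & 1 \\ v_1^* & \cdots & v_K^* \end{pmatrix}$ as in Lemma \ref{lem:B1}(f), so $\pi_j = Q^{-1}(1, r_j^T)^T$, and recall from that lemma that $c \leq \sigma_K(Q) \leq \sigma_1(Q) \leq C$ and $\max_{j\in J}\|r_j\|_2 \leq C$ on $\cE$. Let $\Omega^*$ be the $(K-1)\times(K-1)$ orthonormal matrix from Lemma \ref{lem: D.9}. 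The key observation is that barycentric coordinates are invariant under $\Omega^*$: since $\hat{\pi}_j^\diamond$ solves the system with vertices $\hat{v}_k^*$ and point $\hat{r}_j$, multiplying the vertex equations through by $\Omega^*$ shows it equally solves $\tilde{Q}\hat{\pi}_j^\diamond = (1, (\Omega^*\hat{r}_j)^T)^T$ with $\tilde{Q} = \begin{pmatrix} 1 & \cdots & 1 \\ \Omega^*\hat{v}_1^* & \cdots & \Omega^*\hat{v}_K^* \end{pmatrix}$, the coefficients $\hat{\pi}_j^\diamond$ being unchanged. This lets me compare $\hat{\pi}_j^\diamond$ and $\pi_j$ in a single aligned frame.

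First I would bound the vertex error. Applying Assumption \ref{ass:VH} (vertex-hunting stability, using that $\mathcal{V}$ is equivariant under orthogonal transformations so that $\mathcal{V}(\{\Omega^*\hat{r}_j\})_k = \Omega^*\hat{v}_k^*$) to the point clouds $\{\Omega^*\hat{r}_j\}$ and $\{r_j\}$, together with the point-cloud bound $\max_{j\in J}\|\Omega^*\hat{r}_j - r_j\|_2 \leq C(\log p_n/nN)^{1/4}$ of Lemma \ref{lem: D.9}, yields $\max_{k}\|\Omega^*\hat{v}_k^* - v_k^*\|_2 \leq C(\log p_n/nN)^{1/4}$. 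Consequently $\|\tilde{Q} - Q\|_\op \leq \|\tilde{Q}-Q\|_F \leq \sqrt{K}\max_k\|\Omega^*\hat{v}_k^*-v_k^*\|_2 = o(1)$ (the first rows cancel), so Weyl's inequality gives $\sigma_K(\tilde{Q}) \geq c/2$ and hence $\|\tilde{Q}^{-1}\|_\op \leq C$.

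Next I would carry out the linear-system perturbation. Writing
\begin{equation*}
\hat{\pi}_j^\diamond - \pi_j = \tilde{Q}^{-1}\begin{pmatrix} 0 \\ \Omega^*\hat{r}_j - r_j \end{pmatrix} + (\tilde{Q}^{-1} - Q^{-1})\begin{pmatrix} 1 \\ r_j \end{pmatrix},
\end{equation*}
and using $\tilde{Q}^{-1} - Q^{-1} = -\tilde{Q}^{-1}(\tilde{Q}-Q)Q^{-1}$ together with $\|\tilde{Q}^{-1}\|_\op,\|Q^{-1}\|_\op \leq C$, $\max_{j}\|r_j\|_2 \leq C$, and the two $(\log p_n/nN)^{1/4}$ bounds above, I obtain $\max_{j\in J}\|\hat{\pi}_j^\diamond - \pi_j\|_2 \leq C(\log p_n/nN)^{1/4}$, and then $\max_{j\in J}\|\hat{\pi}_j^\diamond - \pi_j\|_1 \leq C(\log p_n/nN)^{1/4}$ since $K$ is fixed. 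Note that, unlike the point-cloud step, there is no small-denominator issue here, because $\sigma_K(Q)$ is bounded below by an absolute constant rather than by a quantity shrinking like $h_j$.

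Finally I would handle the post-processing, which I expect to be the main point requiring care since it is absent from the oracle. The vector $\hat{\pi}_j^\diamond$ already satisfies $\sum_k \hat{\pi}_j^\diamond(k) = 1$ (the first row of the system), while $\pi_j$ is a genuine probability vector by Lemma \ref{lem:B2}. Let $\delta := \sum_{k:\hat{\pi}_j^\diamond(k)<0}|\hat{\pi}_j^\diamond(k)|$ be the negative mass and let $x^+$ be $\hat{\pi}_j^\diamond$ with negative entries zeroed, so that $\|x^+\|_1 = 1+\delta$ and $\hat{\pi}_j = x^+/(1+\delta)$. Since $\pi_j \geq 0$, zeroing a negative coordinate only decreases its distance to $\pi_j$, giving $\|x^+ - \pi_j\|_1 \leq \|\hat{\pi}_j^\diamond - \pi_j\|_1$; and each negative coordinate has magnitude at most its distance to the corresponding nonnegative entry of $\pi_j$, giving $\delta \leq \|\hat{\pi}_j^\diamond - \pi_j\|_1$. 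Then
\begin{equation*}
\|\hat{\pi}_j - \pi_j\|_1 = \frac{\|(x^+ - \pi_j) - \delta\pi_j\|_1}{1+\delta} \leq \|x^+ - \pi_j\|_1 + \delta\|\pi_j\|_1 \leq 2\|\hat{\pi}_j^\diamond - \pi_j\|_1,
\end{equation*}
using $\|\pi_j\|_1 = 1$. Combining with the previous bound and taking the maximum over $j \in J$ on the intersection of the high-probability events (each of probability $1 - o(p_n^{-1})$) completes the proof.
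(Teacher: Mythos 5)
Your proposal is correct and follows essentially the same route as the paper's proof: the same $\Omega^*$-aligned linear system $\hat{Q}\hat{\pi}_j^\diamond=(1,(\Omega^*\hat r_j)^T)^T$, the same perturbation decomposition via $\hat{Q}^{-1}-Q^{-1}=-\hat{Q}^{-1}(\hat{Q}-Q)Q^{-1}$ combined with Assumption \ref{ass:VH}, Lemma \ref{lem: D.9} and the conditioning of $Q$ from Lemma \ref{lem:B1}(f). Your treatment of the truncate-and-normalize step via the negative mass $\delta$ is a slightly more explicit bookkeeping of the same triangle-inequality argument the paper uses, arriving at the identical factor-of-two bound $\|\hat{\pi}_j-\pi_j\|_1\leq 2\|\hat{\pi}_j^\diamond-\pi_j\|_1$.
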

    
\begin{proof}
    Recall that $\hat{\pi}_j^\diamond \in \mathbb{R}^K$ is the unnormalized vector solving 
    \begin{equation*}
        \begin{pmatrix}
            1 & \dots & 1 \\ \hat{v}_1^* & \dots & \hat{v}_K^*
        \end{pmatrix} \hat{\pi}_j^\diamond = \begin{pmatrix}
            1 \\ \hat{r}_j
        \end{pmatrix} \iff \begin{pmatrix}
            1 & \dots & 1 \\ \Omega^* \hat{v}_1^* & \dots & \Omega^* \hat{v}_K^* 
        \end{pmatrix}\hat{\pi}_j^\diamond= \begin{pmatrix}
            1 \\ \Omega^* \hat{r}_j
        \end{pmatrix}
    \end{equation*}
    Therefore, 
    $$\hat{\pi}_j^\diamond = \hat{Q}^{-1} \begin{pmatrix}
        1 \\ \Omega^* \hat{r}_j
    \end{pmatrix} \quad \text{where} \quad \hat{Q} := \begin{pmatrix}
        1 & \dots & 1 \\ \Omega^* \hat{v}_1^* & \dots & \Omega^* \hat{v}_K^*
    \end{pmatrix}$$ 
    We also have $$\pi_j = Q^{-1} \begin{pmatrix}
        1 \\ r_j
    \end{pmatrix} \quad \text{where } Q = \begin{pmatrix}
        1 & \dots & 1 \\ v_1^* & \dots & v_K^* 
    \end{pmatrix}$$
    Consequently, 
    \begin{equation*}
        \|\hat{\pi}_j^\diamond - \pi_j\|_2 \leq \|\hat{Q}^{-1}\|_\op \|\Omega^* \hat{r}_j - r_j\|_2 + \|\hat{Q}^{-1} - Q^{-1}\|_\op \sqrt{\|r_j\|_2^2+1}
    \end{equation*} 
    Note that $\max_{j\in J}\|r_j\|_2 \leq C$ since the $r_j$'s are in the convex hull of $v_1^*, \dots, v_K^*$. Also, since $Q^T = [\diag(V_1)]^{-1}V$, we have $\|Q^{-1}\|_\op \leq C$ since 
    $$\max_{k \in [K]}V_1(k) \leq \frac{C}{\sqrt{K}} \quad \text{and} \quad \|V^{-1}\|_\op = \sigma_{1}(A)\leq \sqrt{K}$$
    Now, we note that with probability $1 - o(p_n^{-1})$,
    \begin{align*}
        \|\hat{Q} - Q\|_\op \leq \|\hat{Q} - Q\|_F &\leq \sqrt{K} \max_{k \in [K]}\|\Omega^*\hat{v}_k^* - v_k^*\|_2 \\ 
        &\leq \sqrt{K} \max_{j \in J}\|\Omega^* \hat{r}_j - r_j\|_2 \leq C \left( \frac{\log p_n}{nN}\right)^{1/4} = o(1)
    \end{align*}
    where we used Assumption \ref{ass:VH}, since in the oracle procedure the vertex hunting algorithm correctly returns $v_1^*, \dots, v_K^*$. Therefore, 
    $$\|\hat{Q}^{-1} - Q^{-1}\|_\op = \|\hat{Q}^{-1}(Q-\hat{Q}) Q^{-1}\|_\op \leq \|\hat{Q}^{-1}\|_\op \|\hat{Q}-Q\|_\op \|Q\|^{-1}_\op \leq C \left(\frac{\log p_n}{nN}\right)^{1/4} $$
    Here we note $\sigma_K(\hat{Q}) \geq \sigma_K(Q) - \|\hat{Q}- Q\|_\op \geq c - o(1) \geq c/2$ if $nN$ is large enough, so $\|\hat{Q}^{-1}\|_\op \leq C$. Therefore, we obtain 
    $$\|\hat{\pi}_j^\diamond - \pi_j\|_2 \leq C \left(\frac{\log p_n}{nN}\right)^{1/4}$$
    Now if we define $\hat{\pi}_j = \frac{\tilde{\pi}_j^\diamond}{\|\tilde{\pi}_j^\diamond\|_1}$ where $\tilde{\pi}_j^\diamond(k) = \max(\hat{\pi}_j^\diamond(k), 0)$, then since $\|\hat{\pi}_j\|_1 = \|\pi_j\|_1 = 1$, 
    \begin{align*}
        \|\hat{\pi}_j - \pi_j\|_1 &\leq \|\hat{\pi}_j - \tilde{\pi}_j^\diamond\|_1 + \|\tilde{\pi}_j^\diamond - \pi_j\|_1  \\ 
        &= |1-\|\tilde{\pi}_j^\diamond\|_1| \|\hat{\pi}_j\|_1 + \|\tilde{\pi}_j^\diamond - \pi_j\|_1 \\ 
        &= |\|\pi_j|_1 - \|\tilde{\pi}_j^\diamond\|_1| + \|\tilde{\pi}_j^\diamond - \pi_j\|_1 \\ 
        &\leq 2\|\pi_j - \tilde{\pi}_j^\diamond\|_1 \\ 
        &\leq 2\|\pi_j - \hat{\pi}_j^\diamond\|_1 \leq 2\sqrt{K} \|\hat{\pi}_j^\diamond - \pi_j\|_2 \\ 
        &\leq C\left(\frac{\log p_n}{nN}\right)^{1/4}
    \end{align*}
\end{proof}

\section{Estimation error of \texorpdfstring{$\hat{A}$}{hatA}}\label{appendix: E}
    In this section, we will additionally impose the $\ell_q$-sparsity assumption \eqref{eq: lq-sparsity-def} for $q \in (0,1)$. 

    \begin{lemma}
        Under Assumption \ref{ass:sparsity}, if $\beta := \frac{\alpha_-\alpha}{\sigma_K(\Sigma_W)}$ and $\tau_n := \sqrt{\frac{\log p_n}{nN}}$, on event $\cE$
        \begin{equation}\label{eq:1209}
        \|A_{J^c*}\|_1 \leq \frac{K}{1-q} s(\beta\tau_n)^{1-q}
        \end{equation}
    \end{lemma}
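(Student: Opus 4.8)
The plan is to reduce the claim to a deterministic $\ell_q$-tail computation on the columns of $A$, mirroring the argument used for $\|A_{J_-^c*}\|_F^2$ in Theorem~\ref{lem:A3}(c) but carried out in the $\ell_1$ norm. First I would use the event $\cE$: since on $\cE$ we have $J_- \subseteq J$, it follows that $J^c \subseteq J_-^c$, so it suffices to bound $\|A_{J_-^c*}\|_1$. For any $j \in J_-^c$ we have $M_0(j,j) \leq \alpha_-\alpha\tau_n$ by definition of $J_-$, and Lemma~\ref{lem:A2} gives $\sigma_K(\Sigma_W) h_j \leq M_0(j,j)$; combining these yields $h_j \leq \frac{\alpha_-\alpha}{\sigma_K(\Sigma_W)}\tau_n = \beta\tau_n$. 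In particular $A_{jk} \leq h_j \leq \beta\tau_n$ for every $k \in [K]$ and every $j \in J^c$.

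Next I would decompose the $\ell_1$ norm column by column and truncate. Using that $A_{jk} = \min(A_{jk},\beta\tau_n)$ for $j \in J^c$ together with the non-negativity of the remaining entries,
\begin{equation*}
\|A_{J^c*}\|_1 = \sum_{k=1}^K \sum_{j \in J^c} A_{jk} \leq \sum_{k=1}^K \sum_{j=1}^p \min(A_{jk},\beta\tau_n).
\end{equation*}
Fixing a column $k$ and ordering its entries as $A_{(1)k}\geq\dots\geq A_{(p)k}$, Assumption~\ref{ass:sparsity} gives $A_{(j)k} \leq (s/j)^{1/q}$, so the inner sum is at most $\sum_{j=1}^p \min\bigl((s/j)^{1/q},\beta\tau_n\bigr)$.

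Finally, since $t \mapsto \min\bigl((s/t)^{1/q},\beta\tau_n\bigr)$ is non-increasing, I would bound the sum by $\int_0^\infty \min\bigl((s/t)^{1/q},\beta\tau_n\bigr)\,dt$ and split the integral at the crossover point $t_0 = s(\beta\tau_n)^{-q}$, where $(s/t_0)^{1/q}=\beta\tau_n$. The flat part contributes $t_0\beta\tau_n = s(\beta\tau_n)^{1-q}$, while the tail $\int_{t_0}^\infty (s/t)^{1/q}\,dt = \frac{q}{1-q}\,s(\beta\tau_n)^{1-q}$ (this is where $q<1$ is essential for convergence). Adding these gives $\frac{1}{1-q}s(\beta\tau_n)^{1-q}$ for each of the $K$ columns, and summing over $k$ yields the claimed bound $\frac{K}{1-q}s(\beta\tau_n)^{1-q}$. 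The only delicate points are the legitimacy of the truncation at $\beta\tau_n$ (which rests on the reduction $J^c\subseteq J_-^c$ and hence $h_j\leq\beta\tau_n$) and checking that $t_0^{1-1/q}$ simplifies so that $s^{1/q}t_0^{1-1/q}=s(\beta\tau_n)^{1-q}$; both are routine once the crossover point is identified.
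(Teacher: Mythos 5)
Your proposal is correct and follows essentially the same route as the paper's proof: the same reduction via $J_-\subseteq J$ on $\cE$ to get $A_{jk}\leq h_j\leq\beta\tau_n$ for $j\notin J$, the same column-wise truncation $\sum_j\min(A_{(j)k},\beta\tau_n)$, and the same integral comparison split at $t_0=s(\beta\tau_n)^{-q}$. The arithmetic at the crossover point and the final summation over $k\in[K]$ match the paper exactly.
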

    \begin{remark}
                {\rm We assume from now on that $s$ does not grow too quickly relative to $nN$ so that the RHS of \eqref{eq:1209} is $o(1)$.}
    \end{remark}
    \begin{proof}
        On event $\mathcal{E}$ we have $J_- \subseteq J$, so $j \notin J$ implies $M_0(j,j) \leq \alpha_-\alpha\tau_n$ where $\tau_n := \sqrt{\frac{\log p_n}{nN}}$. Since $\sigma_K(\Sigma_W)h_j \leq M_0(j,j)$, $j \notin J$ implies $A_{jk} \leq h_j \leq \beta\tau_n$ for any $k \in [K]$ on $\mathcal{E}$. Then with probability $1 - o(p_n^{-1})$, for any $ k\in [K]$, 
        \begin{align*}
            \|A_{J^ck}\|_1 &= \sum_{j \not\in J}\min(A_{jk}, \beta \tau_n) \leq \sum_{j=1}^p \min(A_{(j)k}, \beta\tau_n) \\ 
            &\leq \sum_{j=1}^p \min (s^{1/q}j^{-1/q}, \beta \tau_n) \leq \int_0^\infty \min(s^{1/q}t^{-1/q}, \beta \tau_n)dt
        \end{align*}
        Now, let $t_0 := s(\beta\tau_n)^{-q}$ so that $ s^{1/q}t_0^{-1/q} = \beta\tau_n$. Then continuing from the above display, 
        \begin{align*}
            \|A_{J^ck}\|_1 &\leq t_0\beta\tau_n  + s^{1/q}\int_{t_0}^\infty t^{-1/q}dt \\ 
            &= t_0\beta\tau_n + \frac{q}{1-q} s^{1/q}t_0^{1-1/q} = \frac{1}{1-q} t_0\beta\tau_n \\ 
            &= \frac{1}{1-q} s(\beta\tau_n)^{1-q}
        \end{align*}
        and the result follows by summing up this bound across $k \in [K]$. Note that the assumption $q \in (0,1)$ ensures the integrals above converge. 
    \end{proof}
    \begin{lemma}
        On event $\cE$, if $\tilde{A}_{J*}$ is defined as in \eqref{eq:oracle}, 
        \begin{equation}\label{eq:1210}
        \|\tilde{A}_{J*} - A_{J*}\|_1 = \|A_{J^c*}\|_1 \leq \frac{K}{1-q}s(\beta\tau_n)^{1-q}  
        \end{equation}
    \end{lemma}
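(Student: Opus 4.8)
The plan is to split the statement into two independent pieces: the exact identity $\|\tilde{A}_{J*} - A_{J*}\|_1 = \|A_{J^c*}\|_1$, which follows purely from the explicit form of $\tilde{A}_{J*}$ given in \eqref{eq:oracle}, and the inequality, which is exactly the content of the preceding lemma, equation \eqref{eq:1209}. Since the inequality requires no additional argument, essentially all of the work lies in the equality, which I would establish column by column.

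Fix a topic $k \in [K]$. From \eqref{eq:oracle}, the $k$-th column of $\tilde{A}_{J*}$ is $\tilde{A}_{Jk} = A_{Jk}/\|A_{Jk}\|_1$, i.e.\ the $\ell_1$-renormalization of the retained subcolumn $A_{Jk}$. Because every column of $A$ is a probability vector, $\|A_{*k}\|_1 = 1$, so $\|A_{Jk}\|_1 = 1 - \|A_{J^ck}\|_1 \leq 1$. All entries of $A$ are non-negative, so each entry of $\tilde{A}_{Jk}$ is the corresponding entry of $A_{Jk}$ scaled by the common factor $\|A_{Jk}\|_1^{-1} \geq 1$; hence $\tilde{A}_{Jk} - A_{Jk} = (\|A_{Jk}\|_1^{-1} - 1)A_{Jk}$ is entrywise non-negative, and taking $\ell_1$-norms gives
$$\|\tilde{A}_{Jk} - A_{Jk}\|_1 = (\|A_{Jk}\|_1^{-1} - 1)\,\|A_{Jk}\|_1 = 1 - \|A_{Jk}\|_1 = \|A_{J^ck}\|_1.$$

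Summing this identity over $k \in [K]$ and using that the matrix $\ell_1$-norm decomposes as the sum of its column $\ell_1$-norms yields $\|\tilde{A}_{J*} - A_{J*}\|_1 = \sum_{k=1}^K \|A_{J^ck}\|_1 = \|A_{J^c*}\|_1$, which is the asserted equality; combining it with \eqref{eq:1209} gives the inequality at once. There is no genuine obstacle here: the only point to keep straight is that renormalizing each retained subcolumn to unit mass inflates its entries by precisely the factor needed to account for the mass $\|A_{J^ck}\|_1$ discarded by thresholding, so the $\ell_1$-discrepancy on $J$ coincides exactly with the discarded mass on $J^c$. All of these steps are deterministic on the event $\cE$, on which the bound \eqref{eq:1209} is available.
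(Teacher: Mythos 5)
Your proposal is correct and follows essentially the same route as the paper: both arguments observe that each column $\tilde{A}_{Jk}$ is the scalar rescaling $A_{Jk}/\|A_{Jk}\|_1$, so the columnwise difference is a nonnegative multiple of $A_{Jk}$ whose $\ell_1$-norm equals $1-\|A_{Jk}\|_1=\|A_{J^ck}\|_1$, and then both invoke the preceding lemma \eqref{eq:1209} for the final bound. The paper merely packages the same computation in matrix form, writing $\tilde{A}_{J*}-A_{J*}=\tilde{A}_{J*}\cdot\diag(\|A_{J^c1}\|_1,\dots,\|A_{J^cK}\|_1)$ and using $\|\tilde{A}_{Jk}\|_1=1$.
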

    \begin{proof}
        We note that the columns of $\tilde{A}_{J*}$ sum up to 1, the columns of $A$ sum up to 1, and as a result of the definition of $\tilde{A}_{J*}$ in \eqref{eq:oracle},
        $$\tilde{A}_{J*} - A_{J*} = \tilde{A}_{J*} \cdot \diag(\|A_{J^c1}\|_1, \dots, \|A_{J^cK}\|_1) $$
        Then $\|\tilde{A}_{J*} - A_{J*}\|_1 = \|A_{J^c*}\|_1$ and our result follows from the previous lemma.  
    \end{proof}
\begin{theorem}
With probability $1 - o(p_n^{-1})$, for some constant $C$ that may depend on $K$ and $q$, we have 
$$\|\hat{A} - A\|_1 \leq C\left[\left(\frac{\log p_n}{nN}\right)^{1/4} + s\left(\frac{\log p_n}{nN}\right)^{\frac{1-q}{2}}\right]  $$
\end{theorem}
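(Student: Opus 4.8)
The plan is to control $\|\hat A - A\|_1$ by splitting it into an \emph{estimation error} and an \emph{approximation error}, using the oracle matrix $\tilde A$ of Lemma \ref{lem:B2} as an intermediary. Since both $\hat A$ and $\tilde A$ vanish off the index set $J$, I would first write
\begin{equation*}
\|\hat A - A\|_1 = \|\hat A_{J*} - A_{J*}\|_1 + \|A_{J^c*}\|_1 \leq \|\hat A_{J*} - \tilde A_{J*}\|_1 + \|\tilde A_{J*} - A_{J*}\|_1 + \|A_{J^c*}\|_1 .
\end{equation*}
The two rightmost terms are already handled: by \eqref{eq:1210} we have $\|\tilde A_{J*} - A_{J*}\|_1 = \|A_{J^c*}\|_1$, and by \eqref{eq:1209} this is at most $\frac{K}{1-q}s(\beta\tau_n)^{1-q} = Cs\left(\frac{\log p_n}{nN}\right)^{(1-q)/2}$, which supplies the second term of the claimed bound. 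It therefore remains to prove the estimation bound $\|\hat A_{J*} - \tilde A_{J*}\|_1 \leq C\left(\frac{\log p_n}{nN}\right)^{1/4}$, and this is where the real work lies.

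Both $\hat A_{J*}$ and $\tilde A_{J*}$ are obtained by column-normalizing $\diag(\hat\xi_1)\hat\Pi$ and $\diag(\xi_1)\Pi$ respectively. I would first bound the \emph{unnormalized} column differences $\sum_{j\in J}|\hat\xi_1(j)\hat\pi_j(k) - \xi_1(j)\pi_j(k)|$ by splitting into $|\hat\xi_1(j)-\xi_1(j)|\,\hat\pi_j(k)$ and $\xi_1(j)\,|\hat\pi_j(k)-\pi_j(k)|$. The second piece is immediate from Lemma \ref{lem: pi-error} together with the bound $\xi_1(j)\leq Ch_j/\sqrt K$ of Lemma \ref{lem:B1}(e), since $\sum_{j\in J}h_j\leq K$. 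For the first piece, the row-wise eigenvector bound of Lemma \ref{lem:D8} gives $|\hat\xi_1(j)-\xi_1(j)|\leq C\sqrt{h_j\log p_n/(nN)}$; the key trick is then Cauchy--Schwarz, $\sum_{j\in J}\sqrt{h_j}\leq\sqrt{|J|\sum_{j\in J}h_j}\leq\sqrt{K|J|}$, combined with the size bound $|J|\lesssim\sqrt{Nn/\log p_n}$ from \eqref{J-size-bound}. This converts the per-coordinate rate $\sqrt{h_j\log p_n/(nN)}$ into the aggregate quartic-root rate $\left(\frac{\log p_n}{nN}\right)^{1/4}$.

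To pass from the unnormalized to the normalized matrices, I would control the column normalizing constants $c_k = \sum_{j\in J}\xi_1(j)\pi_j(k) = V_1(k)\|A_{Jk}\|_1$, using the oracle identity \eqref{lem:B2-result}. These are bounded below by $c/(2\sqrt K)$ because $V_1(k)\geq c/\sqrt K$ by Lemma \ref{lem:B1}(e) and $\|A_{Jk}\|_1 = 1-\|A_{J^ck}\|_1\geq 1/2$ once $nN$ is large, the latter using the $o(1)$ control of $\|A_{J^c*}\|_1$ from \eqref{eq:1209}. The empirical constants $\hat c_k$ differ from $c_k$ by the same $\left(\frac{\log p_n}{nN}\right)^{1/4}$ rate just established. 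Writing the ratio identity $\frac{a}{\hat c}-\frac{b}{c}=\frac{a-b}{\hat c}+\frac{b(c-\hat c)}{\hat c c}$ with $a=\hat\xi_1(j)\hat\pi_j(k)$, $b=\xi_1(j)\pi_j(k)$ and summing over $j$ (noting $\sum_j b = c_k$) then yields $\sum_{j\in J}|\hat A_{jk}-\tilde A_{jk}|\leq C\left(\frac{\log p_n}{nN}\right)^{1/4}$ per column; summing the $K$ columns and absorbing $K$ into $C$ completes the estimation bound. Combining the three pieces gives the stated inequality on the event of probability $1-o(p_n^{-1})$ on which all invoked lemmas hold.

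The main obstacle is the estimation-error step, and specifically pushing the bound through the column normalization. Two things must hold simultaneously: the normalizing constants $c_k$ and $\hat c_k$ must stay uniformly bounded away from zero, and the unnormalized errors must be at the quartic-root rate. Both hinge on the thresholding that defines $J$: without it $\min_{j}h_j$ could be arbitrarily small, so that $\hat\xi_1(j)$ (and hence the denominators) could not be controlled, and the Cauchy--Schwarz balancing of $\sum_{j\in J}\sqrt{h_j}$ against $|J|$ would fail. The sparsity bound \eqref{eq:1209} re-enters here not only as the approximation error but also to guarantee $\|A_{Jk}\|_1\to 1$, which is exactly what keeps the normalization well-posed.
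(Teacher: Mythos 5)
Your proposal is correct and follows essentially the same route as the paper's proof: the same three-way decomposition through the oracle $\tilde{A}$, the same treatment of the unnormalized matrices $\diag(\hat{\xi}_1)\hat{\Pi}$ versus $\diag(\xi_1)\Pi$ via Lemmas \ref{lem:D8} and \ref{lem: pi-error}, and the same control of the column-normalizing constants through $V_1(k)\|A_{Jk}\|_1$. The only (immaterial) difference is that you aggregate $\sum_{j\in J}\sqrt{h_j\log p_n/(nN)}$ by Cauchy--Schwarz against the bound on $|J|$, whereas the paper converts each term pointwise using $h_j\geq c\sqrt{\log p_n/(nN)}$ on $J$ and then sums $\sum_j h_j = K$; both rest on the same thresholding and give the same rate.
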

    
\begin{proof}
    Consider the unnormalized matrices $$\hat{A}_{J*}^\diamond := \diag(\hat{\xi}_1)\hat{\Pi} \quad \text{and} \quad \tilde{A}_{J*}^\diamond := \diag(\xi_1) \Pi$$
    Then with probability $1 - o(p_n^{-1})$, for any $j \in J$, 
    \begin{align}
        \|(\hat{A}^\diamond - \tilde{A}^\diamond)_{j*}\|_1  &= \|\hat{\xi}_1(j) \hat{\pi}_j  - \xi_1(j) \pi_j\|_1 \nonumber\\
        &\leq |\hat{\xi}_1(j)|\|\hat{\pi}_j - \pi_j\|_1 + |\hat{\xi}_1(j)-\xi_1(j)|\|\pi_j\|_1 \nonumber\\ 
        &\leq C \left[h_j \left(\frac{\log p_n}{nN}\right)^{1/4} + \sqrt{\frac{h_j\log p_n}{nN}}\right] \nonumber\\ 
        &\leq C h_j \left(\frac{\log p_n}{nN}\right)^{1/4}\label{eq:1212121}
    \end{align}
    where again we note that on event $\cE$, $h_j > \alpha_+\alpha\sqrt{\frac{\log p_n}{nN}}$ if $j \in J$. Since $\sum_{j=1}^p h_j =K$, with probability $1 - o(p_n^{-1})$, 
    \begin{equation}\label{eq:1212122}
            \|\hat{A}_{J*}^\diamond - \tilde{A}_{J*}^\diamond\|_1 \leq C \left(\frac{\log p_n}{nN}\right)^{1/4} = o(1)
    \end{equation}

    Now, $\hat{A}_{J*}$ and $\tilde{A}_{J*}$ are defined by normalizing the columns of $\hat{A}_{J*}^\diamond$ and $\tilde{A}_{J*}^\diamond$, so we have for each $j \in J$ and $k \in [K]$
    $$\hat{A}_{jk} = \frac{\hat{A}_{jk}^\diamond}{\|\hat{A}_{Jk}^\diamond\|_1}  \quad \text{and} \quad \tilde{A}_{jk} = \frac{\tilde{A}_{jk}^\diamond}{\|\tilde{A}_{Jk}^\diamond\|_1} = \frac{A_{jk}}{\|A_{Jk}\|_1}$$
    Therefore, for each $j \in J$ and $k \in [K]$, 
    \begin{align}
        |\hat{A}_{jk} - \tilde{A}_{jk}| &= \left|\frac{\hat{A}_{jk}^\diamond}{\|\hat{A}_{Jk}^\diamond\|_1} - \frac{\tilde{A}_{jk}^\diamond}{\|\tilde{A}_{Jk}^\diamond\|_1}\right| \nonumber\\ 
        &\leq \frac{|\hat{A}_{jk}^\diamond - \tilde{A}_{jk}^\diamond|}{\|\hat{A}_{Jk}^\diamond\|_1} + \tilde{A}_{jk}^\diamond \left|\frac{1}{\|\hat{A}_{Jk}^\diamond\|_1} - \frac{1}{\|\tilde{A}_{Jk}^\diamond\|_1}\right| \nonumber\\ 
        &\leq \frac{|\hat{A}_{jk}^\diamond - \tilde{A}_{jk}^\diamond| + \tilde{A}_{jk}\|\hat{A}_{Jk}^\diamond - \tilde{A}_{Jk}^\diamond\|_1}{\|\hat{A}_{Jk}^\diamond\|_1} \nonumber\\ 
        &= \frac{|\hat{A}_{jk}^\diamond - \tilde{A}_{jk}^\diamond|}{\|\hat{A}_{Jk}^\diamond\|_1} + \frac{A_{jk}\|\hat{A}_{Jk} - \tilde{A}_{Jk}^\diamond\|_1}{\|A_{Jk}\|_1 \|\hat{A}_{Jk}^\diamond\|_1} \label{eq:1212}
    \end{align}
    Now, 
    $$\|A_{Jk}\|_1 = 1-\|A_{J^ck}\|_1 \geq 1-\frac{1}{1-q}s(\beta\tau_n)^{1-q} \geq c$$ for some absolute constant $c \in (0,1)$ as $nN$ becomes sufficiently large. Furthermore, since by definition of $\Pi$ we have $\tilde{A}_{Jk}^\diamond = \diag(\xi_1)\Pi = A_{J*} \cdot \diag(V_1)$ and $\min_{k\in [K]}V_1(k) \geq \frac{c}{\sqrt{K}}$, so 
    $$\|\tilde{A}_{Jk}^\diamond\|_1 = V_1(k) \|A_{Jk}\|_1 \geq c $$
    and thus 
    $$\|\hat{A}_{Jk}^\diamond\|_1 \geq \|\tilde{A}_{Jk}^\diamond\|_1 - \|\tilde{A}_{Jk}^\diamond - \hat{A}_{Jk}^\diamond\|_1 \geq c - C\left(\frac{\log p_n}{nN}\right)^{1/4} \geq c/2$$ as $nN$ becomes sufficiently large. Hence, we have from \eqref{eq:1212}, \eqref{eq:1212121} and \eqref{eq:1212122} that 
    $$|\hat{A}_{jk} - \tilde{A}_{jk}| \leq C h_j \left(\frac{\log p_n}{nN}\right)^{1/4}$$
    and so for any $j \in J$, 
    $$\|\hat{A}_{j*} - \tilde{A}_{j*}\|_1 \leq C h_j \left(\frac{\log p_n}{nN}\right)^{1/4} $$
    which, since $\sum_{j=1}^p h_j = K$, implies
    \begin{equation}\label{eq:1213}
        \|\hat{A}_{J*} - \tilde{A}_{J*}\|_1 \leq C \left(\frac{\log p_n}{nN}\right)^{1/4} 
    \end{equation}
    We combine \eqref{eq:1209}, \eqref{eq:1210} and \eqref{eq:1213} to obtain what we need to prove.
\end{proof}

\section{Results on Archetype Analysis \texorpdfstring{\citep{javadi2020nonnegative}}{(Javadi and Montanari, 2020)}}\label{appendix: Javadi}
To facilitate our discussion on relaxing the separability assumption, we summarize the results of \cite{javadi2020nonnegative} in this section. 

We first introduce the notations in this paper. For a point $u \in \mathbb{R}^d$ and a matrix $V \in \mathbb{R}^{m \times d}$, let 
    $$\mathscr{D}(u; V) := \min\{\|u - V^T\pi\|_2^2: \pi \in \Delta^m\}, \text{ where}$$
    $$\Delta^m := \{x \in \mathbb{R}^m: x^T \mathbf{1}_m = 1 \text{ and } x_j \geq 0 \text{ for all } 
    j \in [m]\}$$ 
    In words, $\mathscr{D}(u; V)$ is the square of the distance between $u$ and $\text{conv}(V)$, where $\text{conv}(V)$ denotes the convex hull of the rows of $V$. If $U \in \mathbb{R}^{p \times d}$ is a matrix with rows $u_1, \dots, u_p \in \mathbb{R}^d$, we generalized the above definition by letting 
    \begin{equation}
        \mathscr{D}(U; V) := \sum_{l=1}^p \mathscr{D}(u_l; V)
    \end{equation}
    Now, consider a factorization of the form $X_0 = W_0 H_0$, where the rows of $X_0 \in \mathbb{R}^{m \times (K-1)}$ form a point cloud,  $W_0 \in \mathbb{R}^{m \times K}$ is a matrix of weights whose rows are in $\Delta^{K}$, and the rows of $H_0 \in \mathbb{R}^{K \times (K-1)}$ are the $K$ simplex vertices. 
    \begin{definition}[$\alpha$-uniqueness]\label{def: alpha-uniqueness} We say that the point cloud $X_0 = W_0 H_0$ satisfies \textit{uniqueness} with parameter $\alpha > 0$ (or $\alpha$-uniqueness) if for all $H \in \mathbb{R}^{K\times (K-1)}$ with $\text{conv}(X_0) \subseteq \text{conv}(H)$, we have
    \begin{equation}\label{alpha-uniqueness-appendix}
        \mathscr{D}(H; X_0)^{1/2} \geq \mathscr{D}(H_0; X_0)^{1/2} + \alpha[\mathscr{D}(H; H_0)^{1/2} + \mathscr{D}(H_0; H)^{1/2}]
    \end{equation}
    \end{definition}

        The motivation behind this assumption is quite clear. Any $H$ with $\text{conv}(X_0) \subseteq \text{conv}(H)$ is a plausible explanation of the data. For $H_0$ to be identifiable, we want $\mathscr{D}(H; X_0) > \mathscr{D}(H_0; X_0)$ if $H \neq H_0$, and so \eqref{alpha-uniqueness-appendix} is a quantitative formulation of this requirement. Note that if $X_0 = W_0H_0$ is a separable factorization, then it always satisfies uniqueness with $\alpha = 1$. Indeed, whenever $\text{conv}(H_0) = \text{conv}(X_0)$, one has $\mathscr{D}(H; X_0) = \mathscr{D}(H; H_0)$ and $\mathscr{D}(H_0; X_0) = \mathscr{D}(H_0; H) = 0$.

            The vertex hunting procedure considered in \cite{javadi2020nonnegative} is as follows. Suppose we observe $X$ which is a noisy version of $X_0$: 
            \begin{equation}\label{eq:javadi-noisy-pt-cloud}
                X = X_0 + Z = W_0 H_0 + Z 
            \end{equation}
    Let $x_1, \dots, x_m$ be the rows of $X$. We can obtain an estimator $\hat{H}$ of $H_0$ by solving the following optimization problem (Archetype Analysis): 
    \begin{equation}\label{javadi-VH}
    \text{minimize } \mathscr{D}(H; X) \text{ s.t. } \mathscr{D}(x_i; H) \leq \delta^2 \text{ for all } i \in [m]  
    \end{equation}
    where $\delta \geq \max_{i\in [m]}\|Z_{i*}\|_2$. In light of Corollary \ref{corollary:pt-cloud-error}, we want to choose $\delta \geq C\left(\frac{\log p_n}{nN}\right)^{1/4}$ in our context, where $C$ is the constant in \eqref{eq:pt-cloud-error} (replace $X_0$ in \eqref{eq:javadi-noisy-pt-cloud} with the point cloud matrix $R$ from out oracle procedure, and $X$ with the point cloud matrix $\hat{R}$ from Definition \ref{actual-procedure}). 

    The main theoretical result of \cite{javadi2020nonnegative} is that their vertex hunting procedure is robust to noise in the point cloud. 
    \begin{theorem}[Theorem 1 of \cite{javadi2020nonnegative}]\label{theorem: F2}
        Suppose $X_0$ satisfies the $\alpha$-uniqueness assumption, and $\text{conv}(X_0)$ contains a $(K-1)$-dimensional ball of radius $\mu > 0$. Consider the vertex hunting procedure defined by \eqref{javadi-VH}, with $\delta = \max_{i \in [m]}\|Z_{i*}\|_2$. If 
        $$\max_{i \in [m]}\|Z_{i*}\|_2 \leq \frac{\alpha \mu }{30K^{3/2}} $$
        then 
        \begin{equation}
            \|\hat{H} - H_0\|_F^2 \leq \frac{C^2K^5}{\alpha^2} \delta^2
        \end{equation}
        Here, the constant $C$ may depend on $\mu$ and the maximum/minimum singular values of $H_0$, and we ignore the vertex label permutation (by redefining $\hat{H}$ if necessary).
    \end{theorem}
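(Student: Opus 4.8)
The statement is Theorem 1 of \cite{javadi2020nonnegative}, so the plan is to reconstruct its robustness argument. Throughout, I will write $h_1^0,\dots,h_K^0$ for the rows of $H_0$ and $\hat h_1,\dots,\hat h_K$ for those of $\hat H$, and use that $\mathscr{D}(U;V)^{1/2}$ behaves like an $\ell_2$-aggregate of the Euclidean distances from the rows of $U$ to $\text{conv}(V)$. My first move is to certify that $H_0$ is itself feasible for \eqref{javadi-VH}: since $X_0=W_0H_0$ has every row inside $\text{conv}(H_0)$, we get $\mathscr{D}(x_i;H_0)^{1/2}\le \|x_i-[X_0]_{i*}\|_2=\|Z_{i*}\|_2\le\delta$, so the constraint holds at $H_0$. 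Optimality of $\hat H$ then yields $\mathscr{D}(\hat H;X)\le \mathscr{D}(H_0;X)$. Because $X$ and $X_0$ differ row-wise by at most $\delta$, the Hausdorff distance between $\text{conv}(X)$ and $\text{conv}(X_0)$ is at most $\delta$, so each of the $K$ vertex-to-hull distances defining $\mathscr{D}(\cdot\,;X)$ and $\mathscr{D}(\cdot\,;X_0)$ shifts by at most $\delta$. Chaining these perturbations through the optimality inequality gives the key estimate $\mathscr{D}(\hat H;X_0)^{1/2}\le \mathscr{D}(H_0;X_0)^{1/2}+2\sqrt K\,\delta$.

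To invoke $\alpha$-uniqueness I must pass from $\hat H$, which only approximately encloses $X_0$, to a genuine enclosing simplex. Feasibility together with the noise bound gives $\mathscr{D}([X_0]_{i*};\hat H)^{1/2}\le 2\delta$ for every $i$, i.e.\ $\text{conv}(X_0)$ lies within Hausdorff distance $2\delta$ of $\text{conv}(\hat H)$. I will then construct $\tilde H$ by translating each facet of $\text{conv}(\hat H)$ outward by $2\delta$ so that $\text{conv}(X_0)\subseteq\text{conv}(\tilde H)$ exactly. Here the hypothesis that $\text{conv}(X_0)$ contains a $(K-1)$-dimensional ball of radius $\mu$ is essential: it lower-bounds the simplex's dihedral angles, so a facet shift of $2\delta$ displaces each vertex by at most $CK^{b}\delta/\mu$ for a fixed power $b$, giving $\|\tilde H-\hat H\|_F\le CK^{b}\delta/\mu$; the assumption $\max_i\|Z_{i*}\|_2\le \alpha\mu/(30K^{3/2})$ ensures $\delta$ is small enough that $\tilde H$ remains non-degenerate and the inflation estimate is valid. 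Since moving vertices changes their distances to $\text{conv}(X_0)$ by at most the same Frobenius amount, $\mathscr{D}(\tilde H;X_0)^{1/2}\le\mathscr{D}(\hat H;X_0)^{1/2}+\|\tilde H-\hat H\|_F$. Applying $\alpha$-uniqueness \eqref{alpha-uniqueness-appendix} to $\tilde H$ and subtracting $\mathscr{D}(H_0;X_0)^{1/2}$ from both sides bounds the symmetric discrepancy:
\begin{equation}
\mathscr{D}(\tilde H;H_0)^{1/2}+\mathscr{D}(H_0;\tilde H)^{1/2}\le \frac{1}{\alpha}\Big(2\sqrt K\,\delta + \frac{CK^{b}\delta}{\mu}\Big).
\end{equation}

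The final step is a purely geometric lemma: if two non-degenerate $(K-1)$-simplices mutually contain each other's vertices up to error $\varepsilon$ in the $\mathscr{D}^{1/2}$ sense, then after matching vertices by a suitable permutation one has $\|\tilde H-H_0\|_F\le C'\varepsilon$, where $C'$ depends on $\mu$ and on $\sigma_{\min}(H_0)$. Combining this with $\|\hat H-H_0\|_F\le\|\hat H-\tilde H\|_F+\|\tilde H-H_0\|_F$ and squaring produces $\|\hat H-H_0\|_F^2\le C^2K^5\delta^2/\alpha^2$, the claimed bound, with the power $K^5$ arising as the product of the $K$-factors accumulated in the vertex-displacement step, the $\sqrt K$ from aggregating $K$ vertex distances, and the simplex-matching constant. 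I expect the main obstacle to be exactly this last geometric conversion together with the inflation construction: both require quantitative control of simplex degeneracy through $\mu$ and $\sigma_{\min}(H_0)$, and proving that approximate mutual containment forces a genuine one-to-one vertex correspondence (rather than, say, a near-collision of several estimated vertices onto one true vertex) is the delicate part. Every other step is triangle-inequality bookkeeping on the Hausdorff-type perturbations of $\mathscr{D}$.
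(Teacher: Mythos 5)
This statement is not proved in the paper at all: it is Theorem 1 of \cite{javadi2020nonnegative}, imported verbatim to support the discussion in Section \ref{sec: relax} (the companion Theorem \ref{theorem: F3} is likewise stated with its proof ``omitted for brevity''). So there is no in-paper proof to compare against; the only meaningful benchmark is the original argument of Javadi and Montanari, and your reconstruction tracks it faithfully. Your opening moves are exactly theirs and are correct as written: $H_0$ is feasible for \eqref{javadi-VH} because $\mathscr{D}(x_i;H_0)^{1/2}\le\|Z_{i*}\|_2\le\delta$, optimality gives $\mathscr{D}(\hat H;X)\le\mathscr{D}(H_0;X)$, and the row-wise $\delta$-perturbation between $X$ and $X_0$ yields both $\mathscr{D}(\hat H;X_0)^{1/2}\le\mathscr{D}(H_0;X_0)^{1/2}+2\sqrt{K}\,\delta$ and $\mathscr{D}([X_0]_{i*};\hat H)^{1/2}\le 2\delta$. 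The subsequent architecture --- inflate $\hat H$ to a simplex $\tilde H$ genuinely enclosing $\text{conv}(X_0)$, apply $\alpha$-uniqueness to $\tilde H$, then convert the symmetric $\mathscr{D}^{1/2}$ discrepancy into a Frobenius bound via a vertex-matching lemma --- is also the original route. The one substantive difference of detail is the inflation step: the original proof uses a homothety of $\text{conv}(\hat H)$ about the center of the $\mu$-ball with dilation factor $1+O(\delta/\mu)$, which gives the vertex-displacement bound immediately, whereas your outward translation of each facet requires an additional dihedral-angle/inradius argument to obtain the same control; both constructions lean on the interior ball in the same way.

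As a standalone proof, however, your write-up is incomplete exactly where you predict: the inflation estimate ($\|\tilde H-\hat H\|_F\le CK^{b}\delta/\mu$, with non-degeneracy preserved under $\delta\le\alpha\mu/(30K^{3/2})$) and the final geometric lemma (approximate mutual containment of two simplices forces, after a permutation, $\|\tilde H-H_0\|_F\le C'\varepsilon$, in particular ruling out several estimated vertices collapsing near a single true vertex) are asserted rather than proven, and these two lemmas carry essentially all of the quantitative content of the theorem --- the $K^{3/2}$ noise threshold and the $K^5$ in the final bound. Since the paper cites rather than reproves the result, this level of detail matches the paper's own treatment; a self-contained argument would need to supply those two lemmas, both of which are established in \cite{javadi2020nonnegative}, the matching lemma using the inscribed ball and the extreme singular values of $H_0$ (which is precisely where the stated dependence of the constant $C$ comes from).
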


    Using similar proof techniques as in the above theorem, we can also show the following robustness result for Archetype Analysis without using the $\alpha$-uniqueness condition (the proof is omitted for brevity). In \eqref{yating-javadi}, we do not need to assume separability (in which case one has $\mathscr{D}(H_0; X_0) = 0$), but we want the distance from the vertices in $H_0$ to the convex hull of the point cloud $X_0$ to be no larger than $\delta$. Again, $\delta \asymp \left(\frac{\log p_n}{nN}\right)^{1/4}$ when applied to our topic modeling setup. 

    \begin{theorem}\label{theorem: F3} Using the same assumptions as in Theorem \ref{theorem: F2} except the $\alpha$-uniqueness condition, if $\max_{i\in [m]}\|Z_{i*}\|_2 \leq  \delta \leq \frac{\mu}{2K+2}$, the vertex hunting procedure \eqref{javadi-VH} satisfies for some constants $C_1, C_2 > 0$:
    \begin{equation}\label{yating-javadi}
         \|\hat{H} - H_0\|_F^2 \leq C_1 \mathscr{D}(H_0; X_0) + C_2 \delta^2
    \end{equation}
    \end{theorem}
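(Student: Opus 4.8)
The plan is to mirror the proof of Theorem~\ref{theorem: F2} (Theorem~1 of \cite{javadi2020nonnegative}), replacing the leverage that $\alpha$-uniqueness provides with a direct control of the fit of $\hat{H}$ to the clean point cloud, plus the slack term $\mathscr{D}(H_0; X_0)$. Throughout, write $x_1, \dots, x_m$ and $x_{0,1}, \dots, x_{0,m}$ for the rows of $X$ and $X_0$, so that $\|x_i - x_{0,i}\|_2 = \|Z_{i*}\|_2 \leq \delta$. First I would verify that $H_0$ is itself feasible for \eqref{javadi-VH}. Since $X_0 = W_0 H_0$ with the rows of $W_0$ in $\Delta^K$, each $x_{0,i} \in \text{conv}(H_0)$, so $\mathscr{D}(x_{0,i}; H_0) = 0$ and hence $\mathscr{D}(x_i; H_0) \leq \|x_i - x_{0,i}\|_2^2 \leq \delta^2$. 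Thus $H_0$ satisfies every constraint, and the optimality of $\hat{H}$ gives $\mathscr{D}(\hat{H}; X) \leq \mathscr{D}(H_0; X)$.

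The next step transfers this inequality from the noisy hull to the clean hull. Because each point moves by at most $\delta$, the Hausdorff distance between $\text{conv}(X)$ and $\text{conv}(X_0)$ is at most $\delta$, so for any point $u$ one has $|\mathscr{D}(u;X)^{1/2} - \mathscr{D}(u;X_0)^{1/2}| \leq \delta$. Applying this to each of the $K$ vertices of $\hat{H}$ and of $H_0$, summing, and using $(a\pm b)^2 \leq 2a^2 + 2b^2$, I would obtain
\begin{equation*}
\mathscr{D}(\hat{H}; X_0) \leq C_0\big(\mathscr{D}(H_0; X_0) + K \delta^2\big),
\end{equation*}
so the vertices of $\hat{H}$ lie within total squared distance $O(\mathscr{D}(H_0;X_0) + K\delta^2)$ of $\text{conv}(X_0)$. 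I would then record the two containment facts that pin the simplices to the body $\text{conv}(X_0)$: the feasibility constraint $\mathscr{D}(x_i; \hat{H}) \leq \delta^2$ together with $\|x_i - x_{0,i}\|_2 \leq \delta$ shows every $x_{0,i}$ lies within $2\delta$ of $\text{conv}(\hat{H})$, i.e. $\text{conv}(X_0)$ is contained in the $2\delta$-enlargement of $\text{conv}(\hat{H})$, while $\text{conv}(X_0) \subseteq \text{conv}(H_0)$ exactly. Since $\text{conv}(X_0)$ contains a $(K-1)$-dimensional ball of radius $\mu$ and $\delta \leq \mu/(2K+2)$, the $2\delta$-enlargement still contains a ball of radius $\mu - 2\delta \geq \mu K/(K+1)$, so both simplices enclose a genuinely full-dimensional body.

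The crux is the final geometric rigidity step, which replaces the use of \eqref{alpha-uniqueness-appendix} in Theorem~\ref{theorem: F2}. I would invoke the same Lipschitz-type estimate used there: for two $K$-vertex simplices that each (nearly) contain a convex body harbouring a ball of radius $\mu$, and whose vertices each lie within distance $\rho$ of that body, the vertices can be matched so that $\|\hat{H} - H_0\|_F^2 \leq C(\mu, K)\,\rho^2$, where here $\rho^2 = O(\mathscr{D}(H_0; X_0) + K\delta^2)$ by the two preceding steps. The $\mu$-ball is precisely what prevents any vertex from drifting tangentially, and $\delta \leq \mu/(2K+2)$ guarantees the enlarged containment does not degrade this full-dimensionality; this yields \eqref{yating-javadi} after optimizing over the vertex-label permutation.

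I expect this rigidity lemma to be the main obstacle. Without $\alpha$-uniqueness one must argue directly, from the ball-of-radius-$\mu$ condition, that ``nearly enclosing and nearly touching'' forces each facet hyperplane of $\hat{H}$ to be close to the corresponding facet of $H_0$, and then translate facet closeness into vertex closeness using the lower bound on $\sigma_K(H_0)$ implicit in the full-dimensionality. The delicate point is that, absent uniqueness, the minimal enclosing simplex need not be pinned down in general; it is the slack $\mathscr{D}(H_0; X_0)$ on the right-hand side that absorbs exactly this ambiguity, so the rigidity estimate must be stated as a stability bound \emph{relative to the specified target} $H_0$ rather than an absolute one. Quantifying the constant $C(\mu,K)$ and handling the permutation matching are the routine-but-technical parts I would defer, consistent with the proof being omitted for brevity.
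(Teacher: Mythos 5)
The paper does not actually contain a proof of Theorem \ref{theorem: F3}: it states only that the result follows ``using similar proof techniques as in'' Theorem \ref{theorem: F2} and that the proof is omitted for brevity, so there is no written argument to compare yours against line by line. That said, your reconstruction is a correct instantiation of the intended route, and every step you make explicit checks out: feasibility of $H_0$ (so $\mathscr{D}(\hat{H};X)\leq \mathscr{D}(H_0;X)$), the $\delta$-Hausdorff transfer between $\text{conv}(X)$ and $\text{conv}(X_0)$ giving $\mathscr{D}(\hat{H};X_0)\leq C(\mathscr{D}(H_0;X_0)+K\delta^2)$, and the $2\delta$-containment of $\text{conv}(X_0)$ in $\text{conv}(\hat{H})$. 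The one place where you are more pessimistic than necessary is the final ``rigidity'' step, which you flag as the main obstacle requiring a new facet-stability argument. It does not: from $\text{conv}(X_0)\subseteq\text{conv}(H_0)$ you get $\mathscr{D}(\hat{H};H_0)\leq\mathscr{D}(\hat{H};X_0)$, and from the $2\delta$-containment you get $\mathscr{D}(H_0;\hat{H})\leq C(\mathscr{D}(H_0;X_0)+K\delta^2)$, so the \emph{symmetrized} $\mathscr{D}$-distance between the two simplices is already controlled by the right-hand side of \eqref{yating-javadi}. Converting that symmetrized distance into a matched-vertex Frobenius bound for two simplices that both contain a common ball of radius $\mu-2\delta\geq\mu K/(K+1)$ (which is exactly what the hypothesis $\delta\leq\mu/(2K+2)$ guarantees) is precisely the distance-equivalence lemma already invoked at the end of the proof of Theorem 1 in \cite{javadi2020nonnegative}; the $\alpha$-uniqueness condition there is used only to obtain the symmetrized bound, which here is supplied directly by the slack term $\mathscr{D}(H_0;X_0)$. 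So your sketch closes, with constants depending on $\mu$, $K$ and the conditioning of $H_0$ exactly as in Theorem \ref{theorem: F2}.
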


In practice, the vertex hunting procedure defined \eqref{javadi-VH} is difficult to use. When applied on real dataset, one may prefer to work with the Lagrangian form of \eqref{javadi-VH}: 
    \begin{equation}
        \hat{H}_\lambda = \argmin_{H}[\mathscr{D}(X; H) + \lambda \mathscr{D}(H;X)]
    \end{equation}
Algorithms to solve this non-convex optimization problem are available in Section 4 of \cite{javadi2020nonnegative}.
    
\newpage 

\section{Synthetic experiments: additional results}\label{appendix:experiment}
This appendix supplements the synthetic experiments presented in the main text. In particular, we discuss and illustrate here in more details the Zipf Law used in our  experiments and provide further plots to assess the impact of additional model's parameters (including the intensity of the anchor words, value of the Zipf Law coefficients, etc). 

\subsection{Zipf's Law: Illustration and Comparison} 

Most of the synthetic experiments we use in this paper rely on the generation of documents where word frequencies follow a Zipf's law distribution (see Equation~\ref{eq:zipf}). Figure~\ref{fig:zipfs} illustrates instances of such frequency distributions for a dictionary of size $p=10,000$ words as we vary the parameters of this distribution (namely, the values of $\alpha_{zipf}$ and $\beta_{zipf}$). 
Figure~\ref{fig:comp_zipf_unif} compares the frequency heterogeneity resulting from sampling frequencies $f_{(j)}$'s from a Zipf law distribution (Equation~\ref{eq:zipf}) to frequencies sampled from a Uniform distribution:
$$ f_{(j)} \propto \text{Uniform}(0,1).$$ These two figures illustrate in particular the fast decay in word frequencies under the Zipf Law. In fact, for the reference Zipf law ( $a_{\text{zipf}} = 1$ and $b_{\text{zipf}} = 2.7$), only $10\%$  of words have frequencies above $0.001$ -- making the rest of the words extremely rare. As per Figure~\ref{fig:zipfs}, this decay increases rapidly as the parameter $\alpha_{zipf}$ increases.  By comparison, for the uniform distribution, all word frequencies are of the same order of magnitude. One of the main assumptions in this paper is the weak sparsity of the row sums of the topic matrix $A$, described in Assumption \ref{ass:sparsity} of the main text: $\max_{k \in [K]} \max_{j \in [p]} j A_{(j)K}^q\leq s$, which, as argued in the main text,  is best reflected by the Zipf Law.

\begin{figure}[h!]
     \centering
     \begin{subfigure}[t]{0.50\textwidth}
         \centering
    \includegraphics[width=\textwidth]{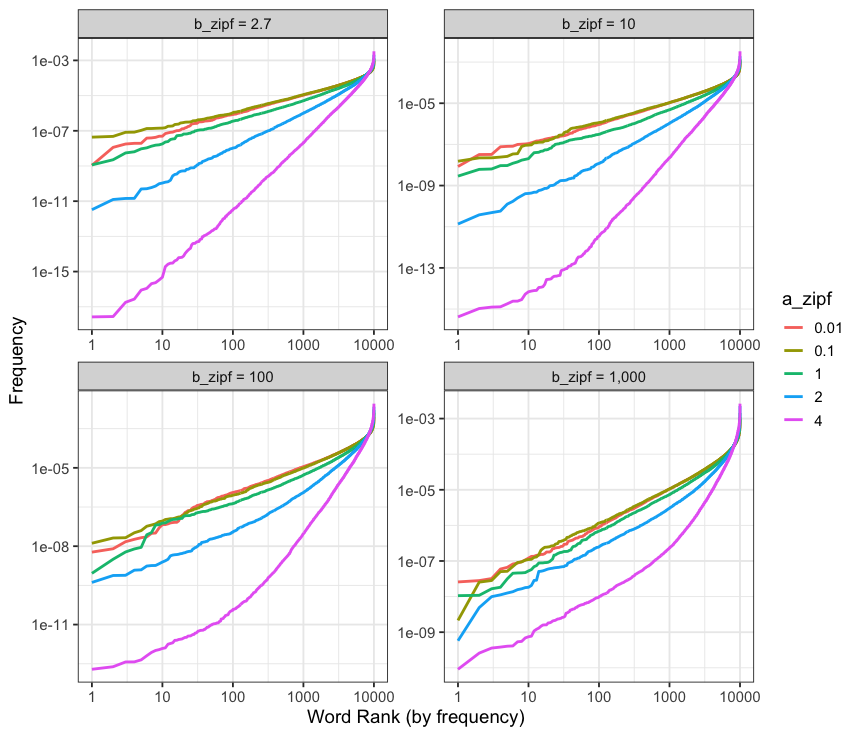}
    \caption{Examples of word frequency distributions under various Zipf law parameters. Note the quick decay of the frequencies as the value of $\alpha$ increases.}
    \label{fig:zipfs}
     \end{subfigure}
     \hfill
     \begin{subfigure}[t]{0.48\textwidth}
         \centering
         \includegraphics[width=\textwidth]{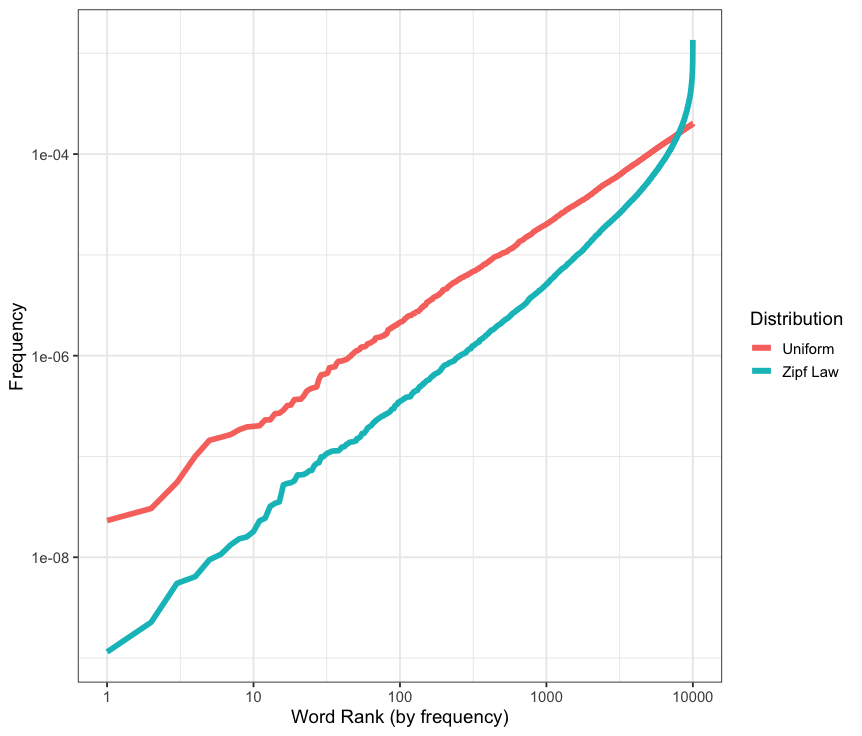}
         \caption{Comparison of the word frequencies generated under a Uniform distribution (red) and a Zipf law (blue) with parameter $a_{\text{zipf}} = 1$ and $b_{\text{zipf}} = 2.7$. }
         \label{fig:comp_zipf_unif}
     \end{subfigure}
     \caption{Comparison of word frequencies in a dictionary of size $p=10,000$ under various generation mechanisms. The $x$ and $y$ axis are presented on a log-scale.}\label{fig:word_frequencies}
\end{figure}

\subsection{Synthetic Experiment: Uniform Distribution of non-anchor words} 

In this paragraph, we describe the results of our synthetic experiments using a Uniform distribution for the generation of non-anchor words. 

\xhdr{Data Generation mechanism} In this setting, the data is generated as follows:
\begin{equation}\label{eq:uniform_gen_mech}
    \begin{split}
        \forall i \in \{1, \cdots, n\}, &\quad W_i \sim \text{Dirichlet}(\mathbf{1}_K)\\
      \forall j \in \{1, \cdots, 5\}, &\quad A_{j + k(i-1), k} =\delta\\
      \forall j \in \{5K, \cdots p\}, &\quad A_{jk}  \sim \text{Uniform}(0,1)\\\
    \end{split}
\end{equation}

This is thus an identical generation setup as in the main text, but using a Uniform distribution rather than a Zipf Law to generate the frequencies. As noted in the previous section, this corresponds to a setting in which all frequencies have roughly the same amplitude (homogeneous frequencies), which does not agree (a) with our Assumption~\ref{ass:sparsity} (weak-sparsity); and (b) observations or realistic models of word frequencies in real documents \citep{corral2015zipf}. Nonetheless, we run our simulation pipeline in this setting and report the results in Figure~\ref{fig:uniform_distribution}. Results are averaged over 50 experiments, using $K=5$, $\alpha_{dirichlet} = 1$, and 5 anchor words with intensity $\delta_{anchor} = 0.001$.

\begin{figure}
\begin{subfigure}[b]{\textwidth}
    \centering
    \includegraphics[width=\textwidth]{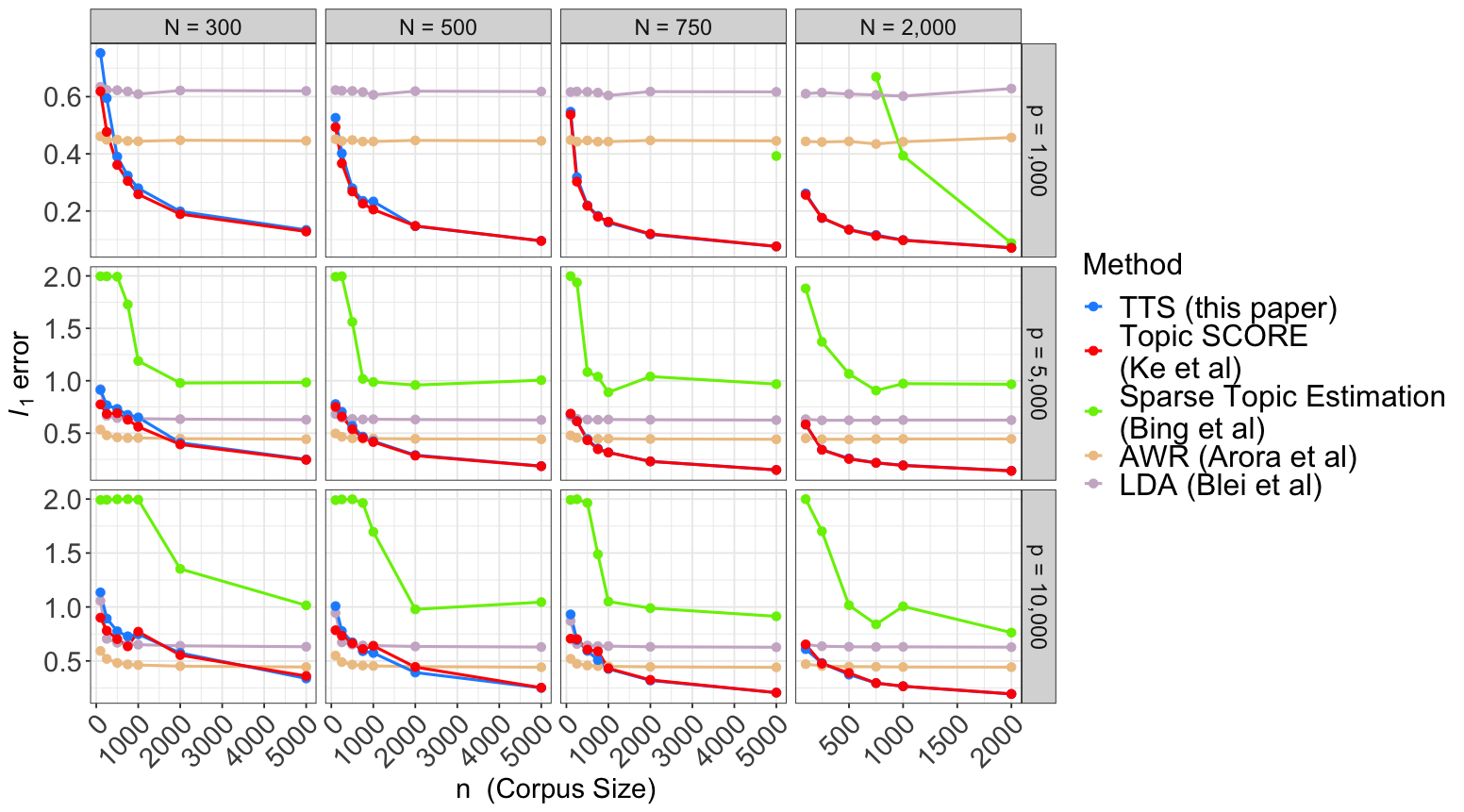}
    \caption{Median $\ell_1$ error  $\mathcal{L}_1(\hat{A}, A) = \min_{\Pi \in \mathcal{P}} \frac{1}{K}\| \hat{A}\Pi - A \|_1 $ for the different methods.  Rows indicate the size of the vocabulary $p$, while columns indicate the document length $N$. }
    \label{fig:uniform_distribution}    
\end{subfigure}
~
\begin{subfigure}[b]{\textwidth}
      \centering
    \includegraphics[width=\textwidth]{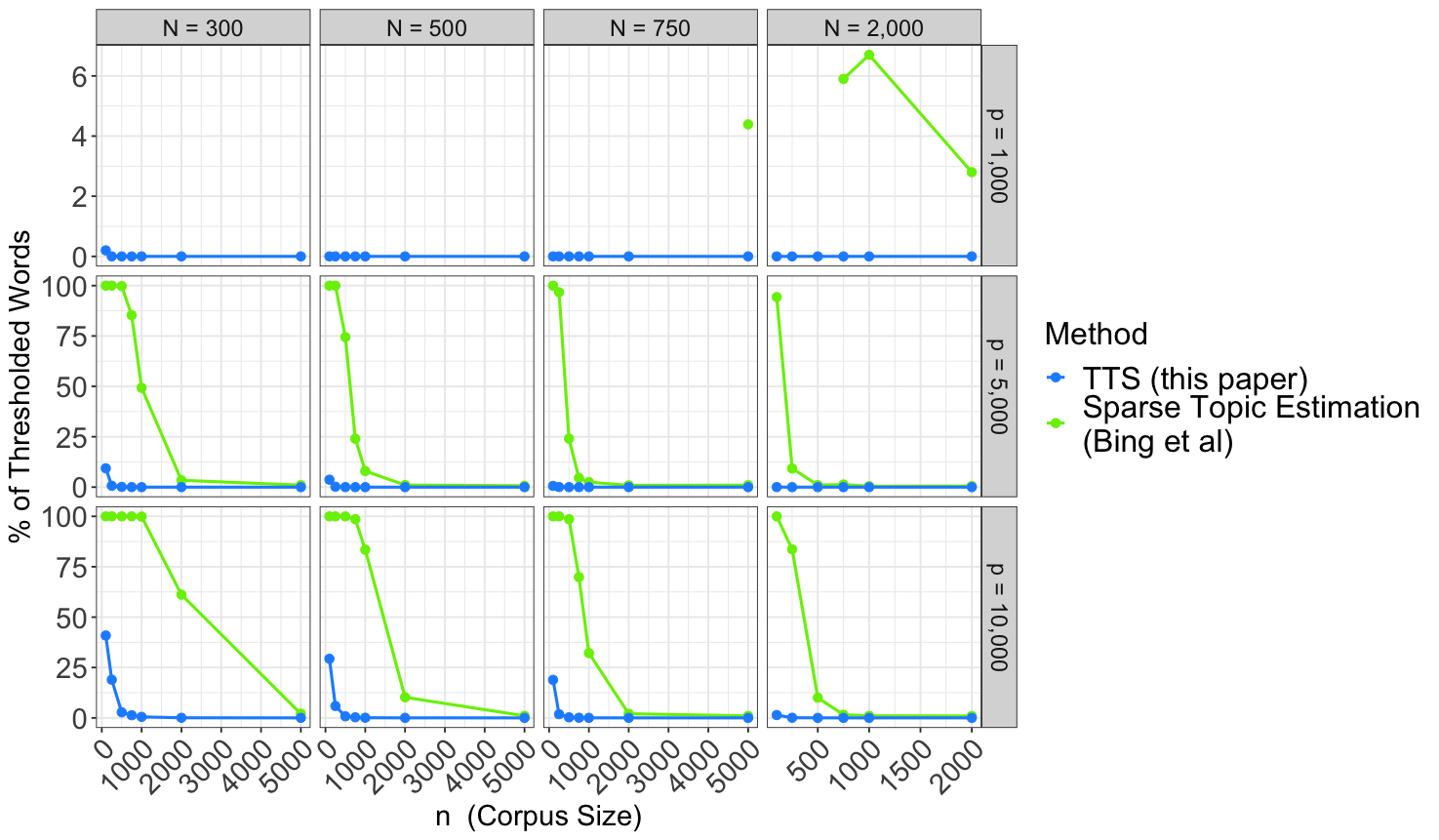}
    \caption{Percentage of thresholded words as a function of corpus length, dictionary size and document length.  }
    \label{fig:uniform_distribution_freq}  
\end{subfigure}

    \caption{Comparison of the different methods under the uniform frequency generation mechanism detailed in Equation \ref{eq:uniform_gen_mech}. Results are here shown for a fixed number of topics $K=5$, averaged over  50 independent trials, and plotted as a function of corpus size $n$. For low values of $p$, the ``Sparse Topic Estimation'' method of \cite{bing2020optimal} does not appear as the number of estimated topics that it estimated was less than the true value $K=5$; therefore we were unable to evaluate it and do not report it in the plots. }\label{fig:uniform_res}
\end{figure}

\begin{figure}
    \centering
    \includegraphics[width=\textwidth]{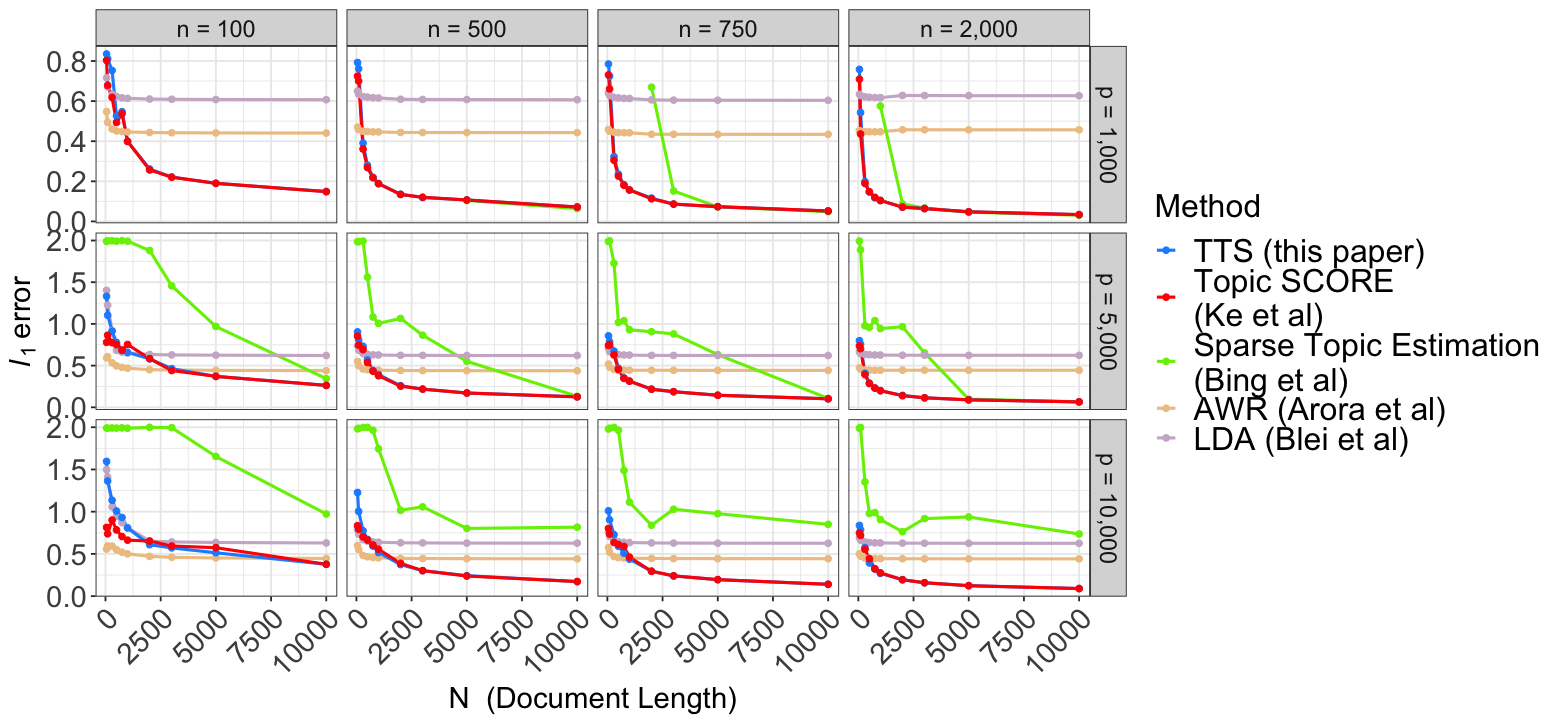}
    \caption{Same experiments as depicted in Figure ~\ref{fig:uniform_res}. Results are here displayed as a function of document length $N$.}
    \label{fig:enter-label}
\end{figure}

\xhdr{Analysis of the results} We note that Topic SCORE \citep{ke2022using} and our method perform similarly in these conditions. As shown in Figure~\ref{fig:uniform_distribution_freq}, our method does not threshold much in this regime --- as is probably to be expected, since the frequencies of all the words are of the same order. As highlighted in the main text, this suggests that, whilst designed to leverage the weak sparsity of the matrix $A$, our method is robust to various frequency regimes. Our method is, in fact, able to perform satisfactorily even in homogeneous (and less sparse) settings. Interestingly, in these regimes, AWR becomes an alternative approach (in particular, as the number of words $p$ increases).

\subsection{Varying additional parameters}

In this subsection, we propose evaluating the effect of other parameters in our data generation mechanism detailed in Section~\ref{sec:experiments} a. In particular, we assess here (a) the effect of the anchor word frequency $\delta_{\text{anchor}}$ (Figure~\ref{fig:AW-freq}); and (b) the impact of the Zipf law coefficient $a_{zipf}$.  

\begin{figure}
    \centering
    \includegraphics[width=\textwidth]{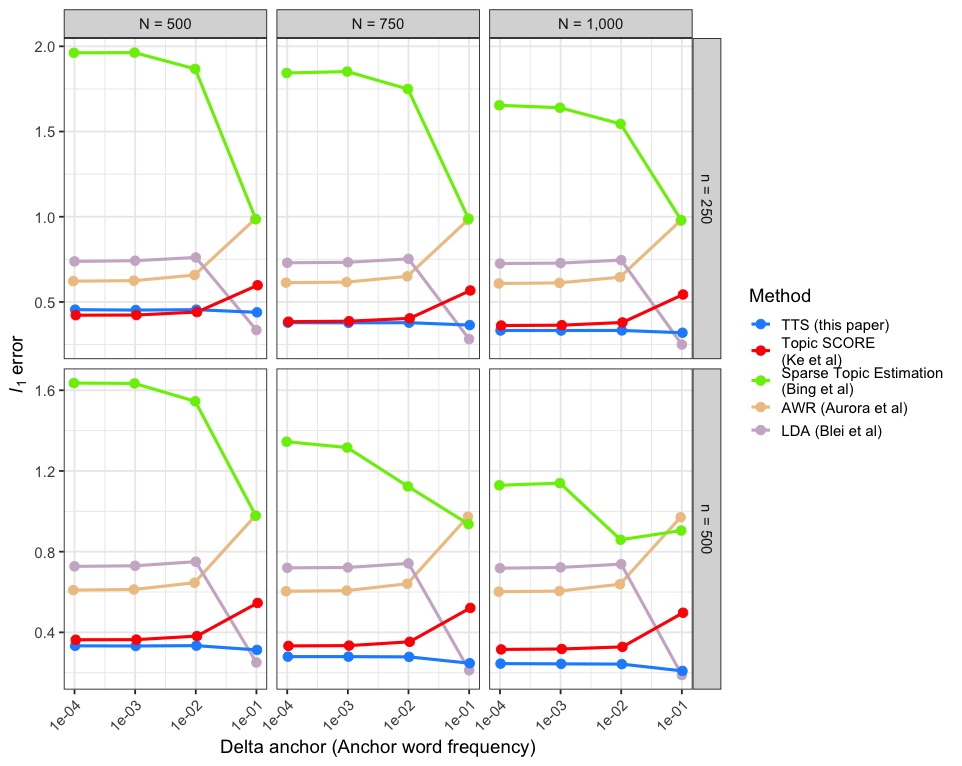}
    \caption{Effect of the anchor word frequency $\delta_{anchor}$ on the results. Here the number of topics is fixed to $K=3$ and the dictionary has size $p=5,000.$ Consistently with the experiments presented in the main text, each topic has 5 anchor words.}
    \label{fig:AW-freq}
\end{figure}

\xhdr{Discussion: Effect of the frequency of the anchor words} As observed in Figure~\ref{fig:AW-freq}, the frequency of the anchor words does not appear to have a great impact on the results of the SCORE-based methods. Increasing the frequency of anchor words does seem to improve the performance of the Sparse Topic Estimation Method of \cite{bing2020fast} and for LDA\cite{blei2003latent}.

\xhdr{Discussion: Effect of the parameter $a_{zipf}$} As observed in Figure~\ref{fig:effect_azipf}, our method offers significant improvement over others as the word frequency heterogeneity increases. In fact, both the Sparse Topic Estimation of \cite{bing2020fast} and our method's $\ell_1$ errors decrease significantly as the heterogeneity of the word frequency (high $a_{zipf}$) increases. This is reassuring, since both methods are the only out of the 5 actively leveraging the sparsity of the topic matrix to improve estimation.

\begin{figure}
    \centering
    \includegraphics[width=\textwidth]{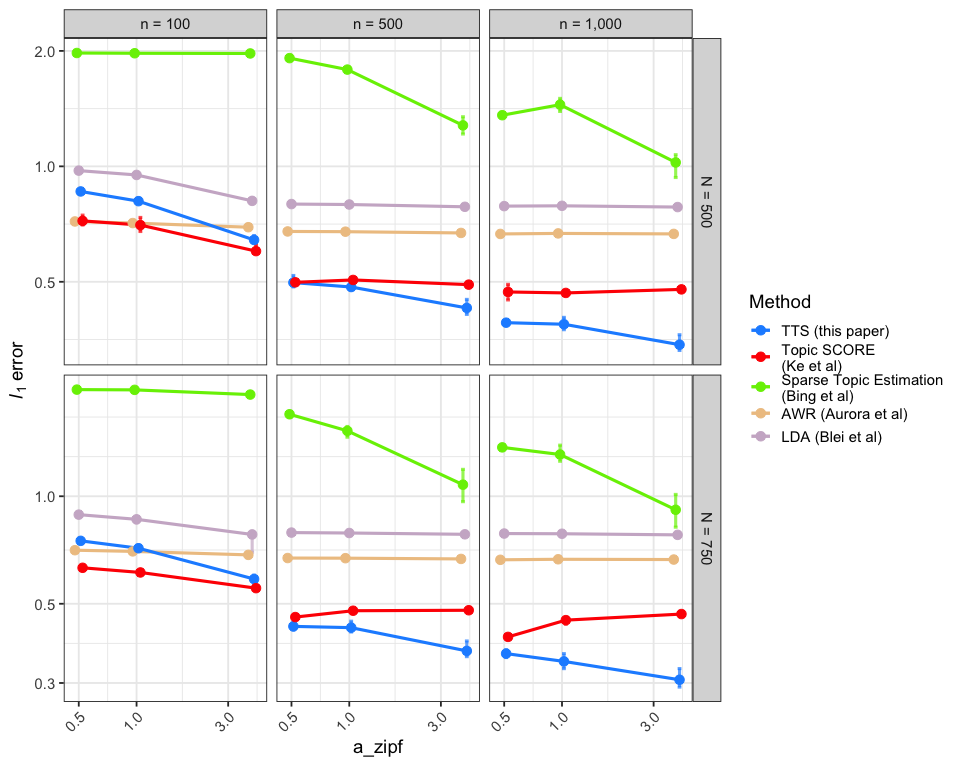}
    \caption{Effect of the parameter $a_{zipf}$ (see equation~\ref{eq:zipf}) on the results. The number of topics is fixed to $K=5$ and the dictionary has size $p=10,000.$ Consistently with the experiments presented in the main text, each topic has 5 anchor words with frequency $\delta = 0.001$.}
    \label{fig:effect_azipf}
\end{figure}

\subsection{Semi-synthetic experiments}

\xhdr{Data Generation Mechanism} To ensure the realism of our experiments, we now propose to evaluate the performance of our method on a semi-synthetic dataset. To this end, we consider the Associated Press Dataset, a dataset of 2,246 documents with 10,473 terms. We fit an LDA model on the induced Document-term matrix and vary the number of topics to obtain an underlying $A_0$ and $W_0$.
For each experiment, we select a subset of documents by drawing without resampling $i \in \{1,\cdots, n\}$, and concatenate the selected columns of $W_i$ to construct the topic-to-document matrix $W$. Having obtained $A$ and $W$, we then sample each column of the matrix $D$ from a multinomial distribution as before. We note that: (a) this data generation mechanism matches the one assumed by LDA. It thus comes at no surprise that LDA performs better in this set of experiments, compared to in the previous subsection; and (b) this data generation mechanism does not impose as much sparsity as the Zipf law. Indeed, the $A$s imputed by the LDA and which serve as ground truth here, and is less heterogeneous than in our other synthetic experiments. Nonetheless, our method improves slightly over Topic SCORE for corpora of moderate size, across most data regimes. We also note that both SCORE methods outperform AWR across all data regimes. Interestingly, LDA exhibits impressive accuracy in this particular regime.

\begin{figure}[h!]
    \centering
    \includegraphics[width=1.1\textwidth]{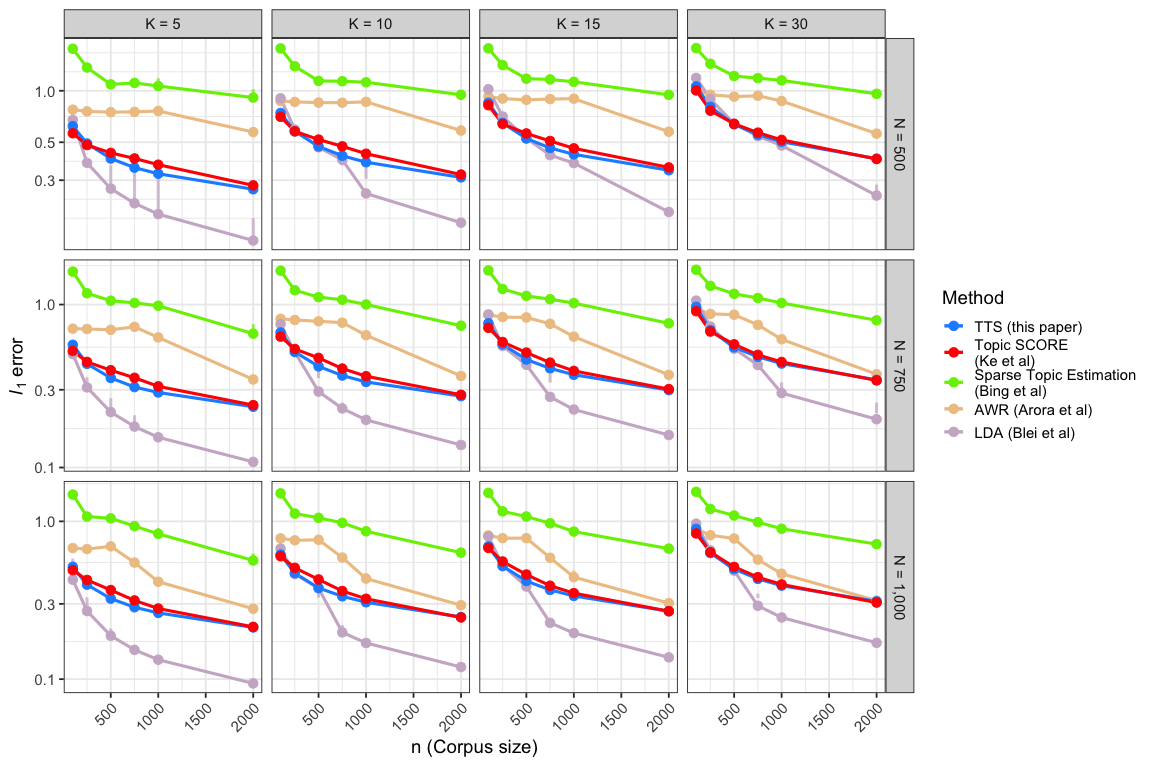}
    \caption{Semi-synthetic experiment: Comparison of the $\ell_1$ error in the reconstruction of $A$ averaged over 50 independent experiments. The estimation error is plotted as a function of corpus size $n$ (on the x-axis), as a function of document length $N$ (one per column) and number of topics $K$ (one per row). }
    \label{fig:semi-synthetic}
\end{figure}

\newpage

\xhdr{Computational Speed}

Finally, we report here the computational speed measured in our experiments. All experiments here were performed in the University of Chicago Cluster.  We observe a substantial increase in the time required by Topic SCORE as the number of documents $n$ increases.

\begin{figure}[h!]
    \centering
    \includegraphics[width=\textwidth]{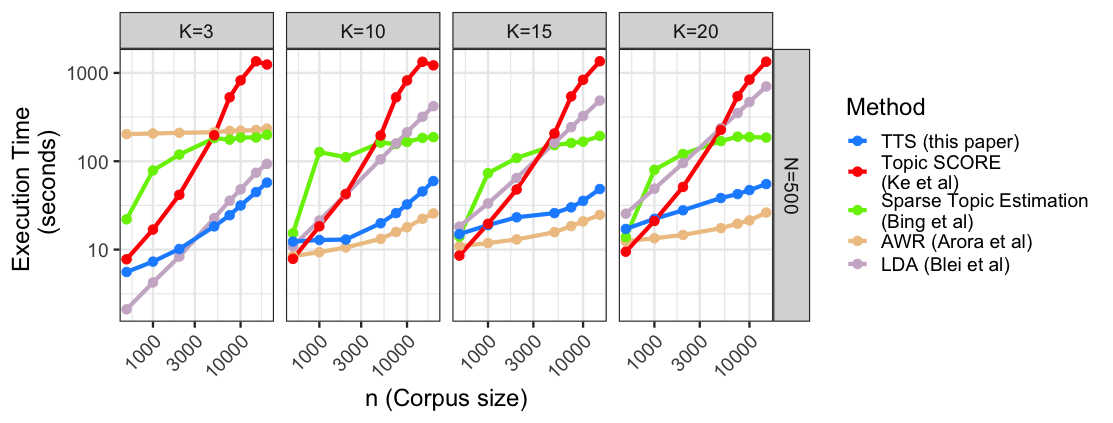}
    \caption{Median computational time for each of the methods as a function of the dictionary length $p$ over 50 independent synthetic experiments.}
    \label{fig:time}
\end{figure}

\newpage

\section{Real-World experiments: additional results}\label{app: real data experiment}

In this section, we provide additional visualization plots in microbiome data analysis to evaluate the estimated topic representatives by LDA, Topic SCORE, and our methods. We focus on two microbiome datasets, one is the colon dataset of \cite{yachida2019metagenomic} as before with data regime of low $p$, low $n$, high $N$, another comes from the vaginal microbiome data of \cite{callahan2017replication} with data regime of mid $p$, mid $n$, and high $N$.
\subsection{Microbiome dataset from \texorpdfstring{\cite{yachida2019metagenomic}}{Yachida et al. (2019)} (low \texorpdfstring{$p$}{p}, low \texorpdfstring{$n$}{n}, high \texorpdfstring{$N$}{N})}
As depicted in Figure \ref{fig:resolution-microbiome-cosine}, our approach consistently demonstrates a higher average topic resolution compared to Topic SCORE. 
To understand the reasons for the gap in performance between the Topic SCORE method \citep{ke2022using} and ours, we compare their respective point clouds in  Figure~\ref{fig:point-cloud-mic}. The left plot in figure \ref{fig:sub2} shows the adverse impact of pre-SVD normalization on the Topic SCORE's point cloud. It is evident that the outliers from small $h_j$ skew the point cloud, and thereby compromising quality of its subsequent vertex hunting. In comparison, our point cloud in Figure \ref{fig:sub1} shows an equilateral triangle formed by the three estimated vertices. Our method combined the thresholding phase with the spectral decomposition guarantees a higher signal-to-noise ratio.  
\begin{figure}[htbp]
    \centering
    \begin{subfigure}[b]{0.45\textwidth}
        \includegraphics[width=0.8\textwidth]{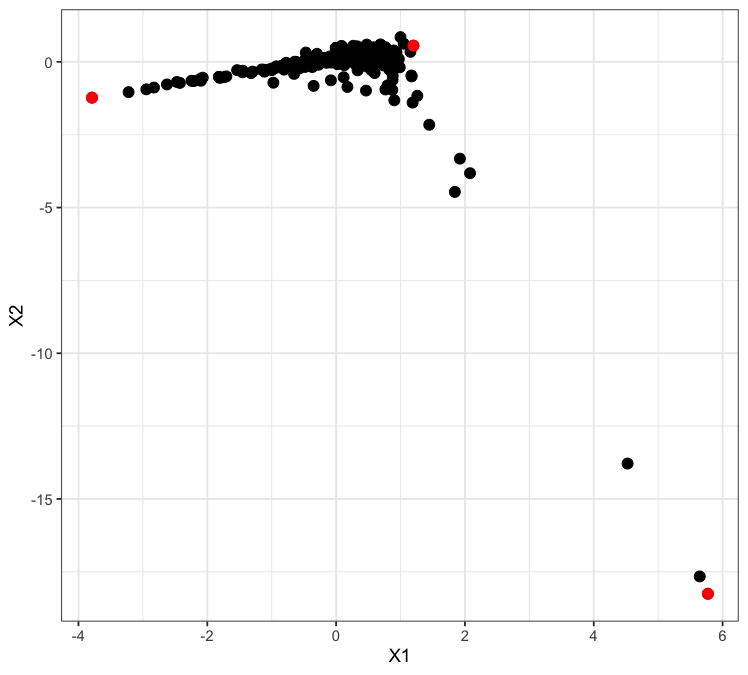}
        \caption{Point Cloud for $K=3$ in Topic Score}
        \label{fig:sub1}
    \end{subfigure}
    \hfill
    \begin{subfigure}[b]{0.45\textwidth}
        \includegraphics[width=0.8\textwidth]{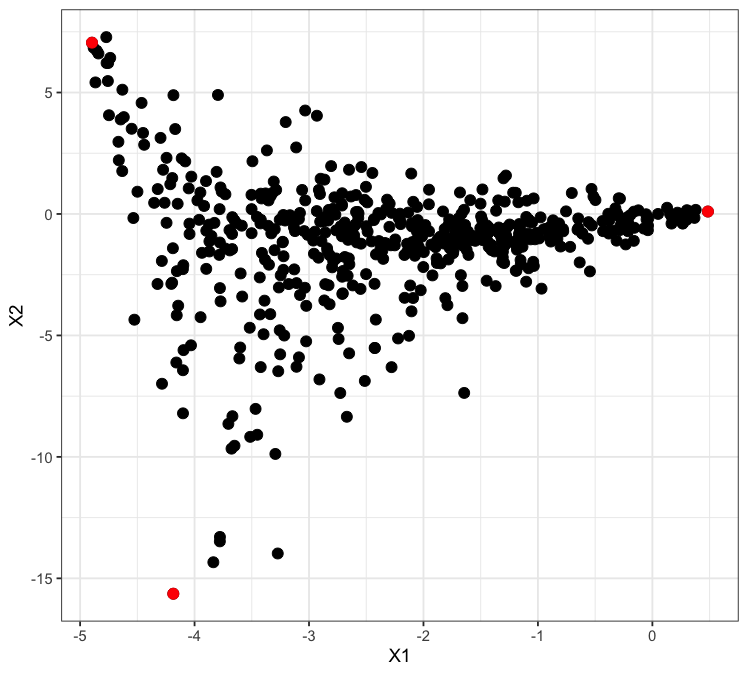}
        \caption{Point Cloud for our method}
        \label{fig:sub2}
    \end{subfigure}
    \caption{Comparison of the point clouds $R$ obtained by our method (right) and Topic Score (left). Vertices are colored red. Note in particular how the point cloud on the left is stretched by a few outliers.}
    \label{fig:point-cloud-mic}
\end{figure}
\vspace{1cm}

\newpage    

\subsection{Microbiome dataset from \texorpdfstring{\cite{callahan2017replication}}{Callahan et al. (2017)}(high \texorpdfstring{$N$}{N}, mid \texorpdfstring{$p$}{p}, mid \texorpdfstring{$n$}{n})}
We also reanalyse the dataset of \cite{callahan2017replication}, which serves as an example in \cite{fukuyama2021multiscale} to exemplify their topic refinement procedure. This dataset comprises ASV counts for 2,699 different bacterial species from 2,179 longitudinal samples collected throughout pregnancy in 135 individuals \citep{callahan2017replication}. In this case, the average sample length is around $N=157,500$. In \cite{fukuyama2021multiscale}, based on the refinement results of the LDA, the authors conclude that the topic analysis should be done using $K=7$ topics, or with up to $K=12$ if one allows for the possibility of spurious topics. We thus fit up to 12 topics and plot the average resolution (Figure~\ref{fig:resolution-vm-cosine}) and refinement of the methods in Figure~\ref{fig:callahan-res}. We find that our method compares favorably to Topic SCORE in terms of its average resolution and similarly to LDA as a whole. For a low number of topics ($K\leq7$), our method seems even preferable to LDA in terms of topic resolution, achieving better resolution at much greater speed. The topics found by The topic coherence seem a little higher for LDA at $K=7$ (the recommended choice of $K$ by \cite{fukuyama2021multiscale}) than for the others.  For $K \geq 8$, our method seems to yield topics that are more recombined from one level of the hierarchy to the next than what is observed in LDA. This concurs with the choice of $K=7$ by the authors, but seems to highlight the fact that for high values of $K$, in datasets of moderate sizes, LDA appears a preferable choice. However, we note that Topic SCORE seems to  (1) exhibit more recombination of topics early on in the hierarchy (see bottom topics in Figure~\ref{fig:sub1vaginal}); and (2) put very little mass on topics as $K$ increases (indicated by the small size of the rectangles in Figure~\ref{fig:sub1vaginal}): as $K$  increases, most of the dataset's mass is distributed along roughly 5 or  topics. This could in particular mean that the method does not really identify more than 6 topics. 

\begin{figure}[h!]
    \centering
    \includegraphics[width=0.8\textwidth]{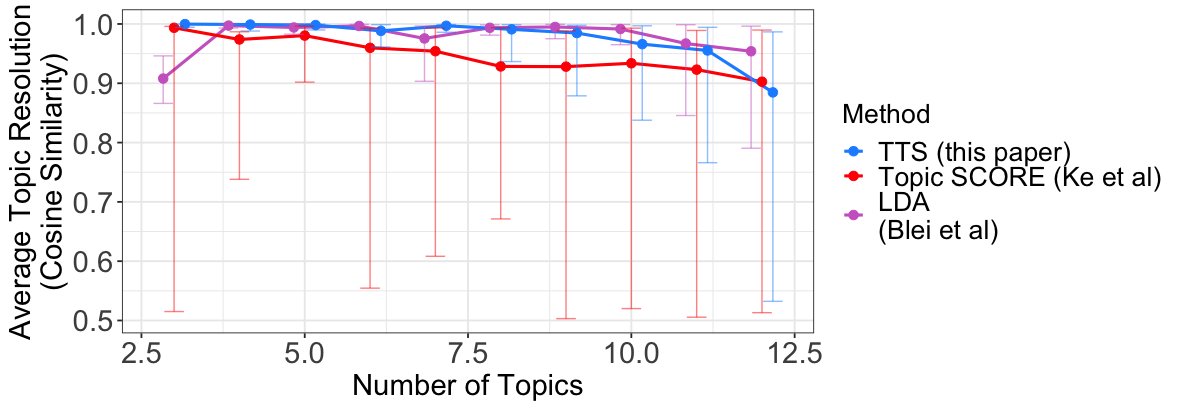}
    \caption{Topic Resolution (measured by the average cosine similarity between halves of the data) of our method (in blue), Topic Score (red), and LDA (purple).}
    \label{fig:resolution-vm-cosine}
\end{figure}

\begin{figure}
    \centering
    \begin{subfigure}[b]{0.62\textwidth}
        \includegraphics[width=\textwidth]{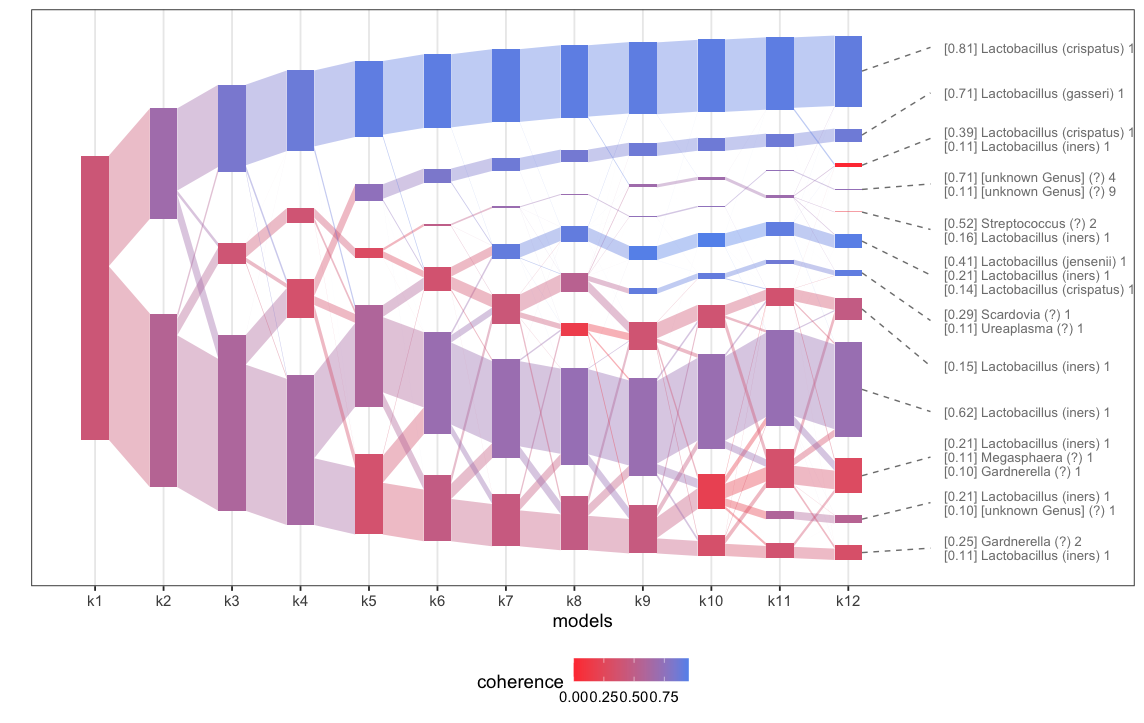}
        \caption{Coherence between topics for Topic SCORE.}
        \label{fig:sub1vaginal}
    \end{subfigure}
    \hfill
    \begin{subfigure}[b]{0.62\textwidth}
        \includegraphics[width=\textwidth]{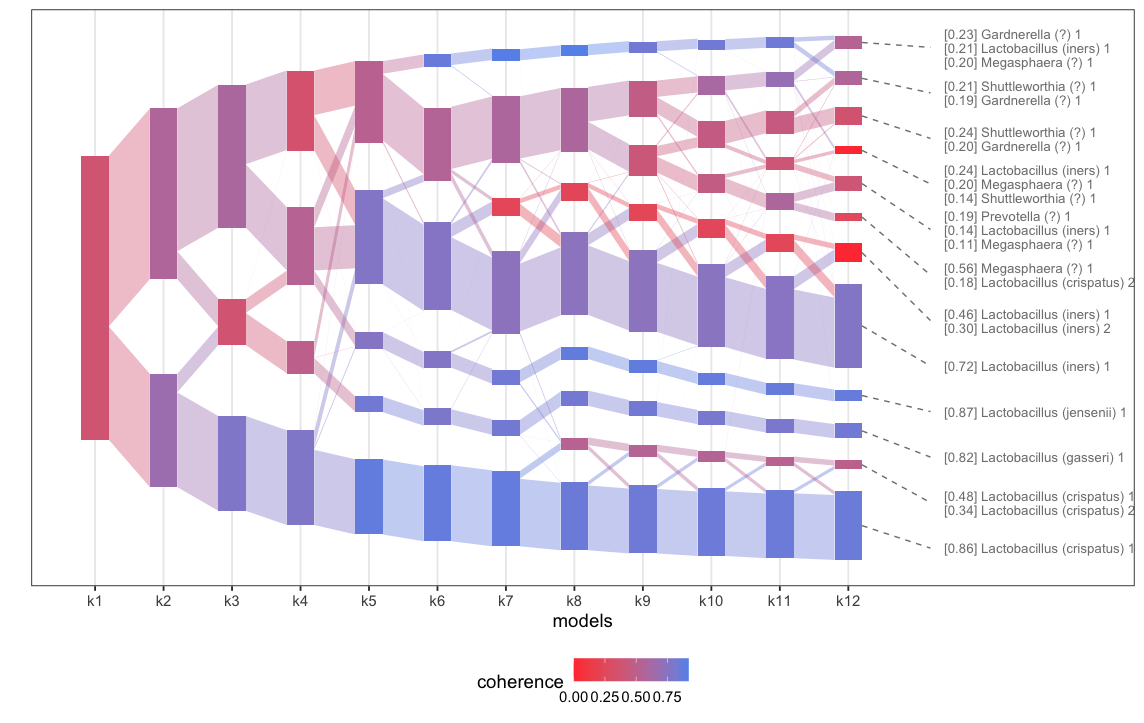}
 \caption{Coherence between topics for TTS (our method).}
        \label{fig:sub2vaginal}
    \end{subfigure}
    \hfill
    \begin{subfigure}[b]{0.62\textwidth}
        \includegraphics[width=\textwidth]{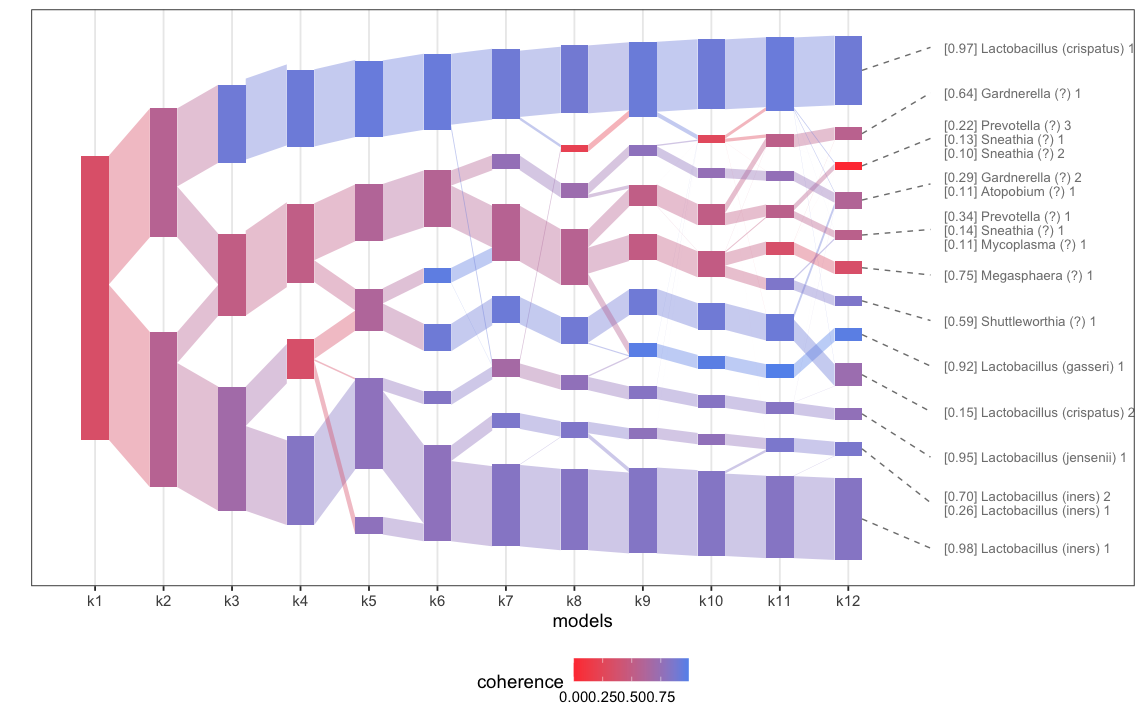}
 \caption{Coherence between topics for LDA.}
        \label{fig:sub3vaginal}
    \end{subfigure}
    \caption{Topic Coherence and refinement (as computed by the method of \cite{fukuyama2021multiscale}) for the different methods in the Vaginal Microbiome of \cite{callahan2017replication}. Topics are here colored by coherence.}
    \label{fig:callahan-res}
\end{figure}
\newpage

\end{appendix}

\end{document}